%
%
%
%
%
%
%
\documentclass[%
reprint,
superscriptaddress,
frontmatterverbose, 
preprintnumbers,
 amsmath,amssymb,
 aps,
pra,
notitlepage,
longbibliography
]{revtex4-1}

\usepackage{tabularx}
\usepackage{ragged2e}

\usepackage{graphicx}
\usepackage{dcolumn}
\usepackage{bm}
\usepackage{xcolor}
\definecolor{darkblue}{rgb}{0,0,0.5}
\usepackage{hyperref}
\hypersetup{
    bookmarksnumbered=true, 
    unicode=false, 
    pdfstartview={FitH}, 
    pdftitle={}, 
    pdfauthor={}, 
    pdfsubject={}, 
    pdfcreator={}, 
    pdfproducer={}, 
    pdfkeywords={}, 
    pdfnewwindow=true, 
    colorlinks=true, 
    linkcolor=darkblue, 
    citecolor=darkblue, 
    filecolor=blue, 
    urlcolor=darkblue 
}
\usepackage{subfigure}


\usepackage{amsthm}

\usepackage{enumerate}

\usepackage{comment}

\usepackage{mathrsfs}

\theoremstyle{plain}
\newtheorem{thm}{Theorem}
\newtheorem{lem}[thm]{Lemma}
\newtheorem{prop}[thm]{Proposition}
\newtheorem{cor}[thm]{Corollary}

\theoremstyle{definition}
\newtheorem{defn}{Definition}

\newtheorem*{condffr}{Condition\,(FFR)}
\newtheorem*{condch}{Condition\,(CH)}
\newtheorem*{condct}{Condition\,(CT)}

\newcommand{\eq}[1]{(\hyperref[eq:#1]{\ref*{eq:#1}})}

\renewcommand{\sec}[1]{\hyperref[sec:#1]{Section~\ref*{sec:#1}}}
\newcommand{\thrm}[1]{\hyperref[thrm:#1]{Theorem~\ref*{thrm:#1}}}
\newcommand{\lemm}[1]{\hyperref[lemm:#1]{Lemma~\ref*{lemm:#1}}}
\newcommand{\pro}[1]{\hyperref[pro:#1]{Proposition~\ref*{pro:#1}}}
\newcommand{\corr}[1]{\hyperref[corr:#1]{Corollary~\ref*{corr:#1}}}
\newcommand{\fig}[1]{\hyperref[fig:#1]{~\ref*{fig:#1}}}
\newcommand{\deff}[1]{\hyperref[deff:#1]{~\ref*{deff:#1}}}

\newcommand{\ket}[1]{|#1\rangle}
\newcommand{\bra}[1]{\langle #1|}
\newcommand{\bracket}[2]{\langle #1|#2\rangle}
\newcommand{\ketbra}[2]{|#1\rangle\langle #2|}
\newcommand{\dm}[1]{\ketbra{#1}{#1}}

\newcommand{\norm}[1]{\left\lVert#1\right\rVert}

\DeclareMathOperator{\Tr}{Tr}

\newcommand{\barx}{\mathscr{F}_{{\rm NG}}}

\newcommand{\eball}{\mathcal{B}^\epsilon}

\newcommand{\dmax}{D_{\mathrm{max}}}
\newcommand{\dmin}{D_{\mathrm{min}}}

\newcommand{\rdmax}{\mathfrak{D}_{\mathrm{max}}}
\newcommand{\rdmaxl}{\mathfrak{D}_{\mathrm{max},\lambda}}

\newcommand{\mH}{\mathcal{H}}
\newcommand{\mL}{\mathcal{L}}
\newcommand{\mT}{\mathcal{T}}
\newcommand{\mF}{\mathcal{F}}
\newcommand{\mD}{\mathcal{D}}

\newcommand{\ba}{\begin{eqnarray}}
\newcommand{\ea}{\end{eqnarray}}


\begin{document}


\title{One-Shot Operational Quantum Resource Theory}

\author{Zi-Wen Liu}\email{zliu1@perimeterinstitute.ca}
\affiliation{Perimeter Institute for Theoretical Physics, Waterloo, Ontario N2L 2Y5, Canada}
\affiliation{Center for Theoretical Physics and Department of Physics, Massachusetts Institute of Technology, Cambridge, Massachusetts 02139, USA}

\author{Kaifeng Bu}\email{kfbu@fas.harvard.edu}
 \affiliation{School of Mathematical Sciences, Zhejiang University, Hangzhou, Zhejiang 310027, China}
\affiliation{Department of Physics, Harvard University, Cambridge, Massachusetts 02138, USA}

\author{Ryuji Takagi}\email{rtakagi@mit.edu}
\affiliation{Center for Theoretical Physics and Department of Physics, Massachusetts Institute of Technology, Cambridge, Massachusetts 02139, USA}

\date{\today}

\begin{abstract}
A fundamental approach for the characterization and quantification of all kinds of resources is to study the conversion between different resource objects under certain constraints.
Here we analyze, from a resource-non-specific standpoint, the optimal efficiency of resource formation and distillation tasks with only a single copy of the given quantum state, thereby establishing a unified framework of one-shot quantum resource manipulation. We find general bounds on the optimal rates characterized by resource measures based on the smooth max/min-relative entropies and hypothesis testing relative entropy, as well as the free robustness measure, providing them with general operational meanings in terms of optimal state conversion. Our results encompass a wide class of resource theories via the theory-dependent coefficients we introduce, and the discussions are solidified by important examples, such as entanglement, coherence, superposition, magic states, asymmetry, and thermal non-equilibrium.

\end{abstract}

\pacs{}
\maketitle


\section{Introduction}
The manipulation and characterization of resources are ubiquitous subjects of concern. 
In recent years, substantial research effort originated from the quantum information community has been devoted to a framework known as resource theory, which significantly advances the study of quantum physics and quantum technologies  (see Ref.~\cite{qrtreview} for a recent overview).
The framework centers around the task of quantifying the value of certain resource features (e.g.~quantum entanglement) in various scenarios, in order to rigorously understand the essence of these resources and how to best utilize them.
Resource theory is particularly interesting and powerful because of its versatility --- similar  methodologies are successfully applied to a plethora of important resource entities, such as entanglement \cite{virmani,entrevrt}, coherence \cite{Aberg2006,PhysRevLett.113.140401,Streltsov2017rev}, superposition \cite{Theurer2017}, magic states \cite{magic,howard_2017}, asymmetry \cite{frame,speakable}, purity \cite{Horodecki2003purity,Chiribella_2017}, thermal non-equilibrium \cite{HorodeckiOppenheim13,thermort,Brandao_secondlaws2015}, non-Gaussianity \cite{Genoni2008,Takagi2018,albarelli2018resource}. 
Therefore, a research line of fundamental importance is to investigate the unified, non-resource-specific aspects of resource theory and how they fit into different contexts \cite{Horodecki2013quantumness,bg,2015arXiv151108818D,COECKE201659,fritz_2017,rdm,PhysRevA.95.062314,2018arXiv180604937S,2018arXiv180802607G,2018arXiv180901672T,anshu_2018catalytic,regula,lami_2018,2018arXiv181202572L,uola2018quantifying,2019arXiv190108127T,Oszmaniec2019,LiuYuan:channel,LiuWinter2018}.

In this work, we establish such a general scheme for operationally quantifying the resource content of quantum states through their value in fundamental ``resource trading'' tasks.  More specifically, we are interested in the optimal rate of forming a quantum state using some standard resource states that serve as the ``currency'', and conversely that of using the given state to distill standard states, under typical free operations.
Many specific forms of such tasks are of independent interest; for example, the task of entanglement formation induces the entanglement cost, an important entanglement measure \cite{PhysRevA.54.3824,Hayden_2001}, and the tasks of entanglement distillation \cite{PhysRevA.53.2046,PhysRevLett.76.722,PhysRevA.54.3824} and magic state distillation \cite{bravyikitaev} play key roles in quantum information and computation.
Here we focus on the practical scenario where only one copy (or finite copies) of the state is available (i.e.~the one-shot setting), and some amount of error is allowed. 
Unlike the asymptotic theory (the limit of infinite i.i.d.~copies) \cite{bg},
only a few resource-specific results about entanglement \cite{brandaodatta,BuscemiDatta11}, coherence \cite{coh_dilution,PhysRevLett.121.010401,fullstory}, and (generalized) quantum thermodynamics \cite{Dahlsten_2011,HorodeckiOppenheim13,GOUR2015inf_oneshot,Halpern2016_beyond1,Halpern_2018_beyond2} (and magic states in a very recent work \cite{wang_wilde_su}) are known.
Here we consider two important classes of free operations easily characterized by the theory of resource destroying (RD) maps \cite{rdm}: the maximal free operations (e.g.~non-entangling operations for entanglement, maximal incoherent operations (MIO) for coherence, Gibbs-preserving maps for thermodynamics),  and the commuting operations (e.g.~dephasing-covariant incoherent operations (DIO) for coherence \cite{examination}, isotropic channels for discord when restricted to local operations and qudit systems \cite{dqd}), which induce general distance monotones without optimization.   We prove highly generic limits to the optimal rates of standard one-shot formation and distillation tasks under the above free operations, and show that they can be nearly achieved in many cases.  These general bounds take unified and simple forms in terms of resource monotones based on the smoothed max-relative entropy or the free (also called ``standard'') robustness for formation, and the smoothed min-relative entropy or the hypothesis testing relative entropy for distillation, divided by a certain modification coefficient that encodes the resource value of the standard states. 
To put it another way, the results endow these resource monotones with operational meanings in terms of ``normalized'' one-shot resource conversion tasks, providing a general operational interpretation to the min-relative entropy measure and supplementing those of the max-relative entropy and free robustness measures recently unveiled via erasure \cite{anshu_2018catalytic} and discrimination tasks \cite{2018arXiv180901672T,2019arXiv190108127T}.  
In particular, we find that taking maximum resource states as the currency not only makes the most sense out of formation/distillation tasks conceptually, but also leads to nice mathematical structures of the results.
For example, we show that several key resource measures (and therefore the corresponding modification coefficients) of \emph{golden states} (a notion of max-resource states we introduce) collapse to the same value in generic convex theories, which leads to nearly tight bounds.
Our results generalize the existing resource-specific ones, and we shall also elucidate the results by suitable new examples.

\section{Preliminaries}
Let $\mH_d$ be the Hilbert space of dimension $d<\infty$, and $\mathcal{D}(\mH_d)$ be the set of density operators acting on $\mH_d$.  Also let $\mF(\mH_d)\subseteq\mathcal{D}(\mH_d)$ be the set of free states in the resource theory under consideration
(the brackets are dropped onwards when the Hilbert space is clear from the context).  We assume that the set of free states is topologically closed, so that the maxima or minima over it are well-defined.

We first formally define several information-theoretic quantities and resource measures.
Let $\rho,\sigma$ be density operators \footnote{In fact, the definitions of max- and min-relative entropies only require $\sigma$ to be a positive semidefinite operator.}.  
The Uhlmann fidelity of $\rho$ and $\sigma$ is given by $f(\rho,\sigma):=\left(\Tr\sqrt{\sqrt{\sigma}\rho\sqrt{\sigma}}\right)^2 = \norm{\sqrt{\rho}\sqrt{\sigma}}_1^2$.  The free fidelity of $\rho$, which measures the maximum overlap with free states, is defined as $\mathfrak{f}(\rho) := \max_{\sigma\in\mF}f(\rho,\sigma)$. 
The \emph{max-relative entropy} and \emph{min-relative entropy} between $\rho$ and $\sigma$ are respectively given by \cite{dattarel}
\begin{equation*}
    \dmax(\rho\|\sigma) :=\log \min \{\lambda: \rho\leq \lambda \sigma\},
    \end{equation*}
    which is well-defined when $\mathrm{supp}(\rho)\subseteq\mathrm{supp}(\sigma)$, and
    \begin{equation*}
\dmin(\rho\|\sigma) := -\log\Tr\{\Pi_\rho\sigma\}
\end{equation*}
where $\Pi_\rho$ denotes the projector onto $\mathrm{supp}(\rho)$, which is well-defined when $\mathrm{supp}(\rho)\cap\mathrm{supp}(\sigma)\backslash\{0\}$ is non-empty. They roughly represent two ends of the spectrum of quantum R\'enyi relative entropy (see Appendix A \cite{sm} for more rigorous statements).  
To account for finite accuracy, the smoothed versions are needed. Let
$\mathcal{B}^\epsilon(\rho) := \{\rho': f(\rho',\rho)\geq 1-\epsilon\}$. The smoothed max- (min-) relative entropy between $\rho$ and $\sigma$ is then given by minimizing (maximizing) over this $\epsilon$-vicinity of $\rho$:
\begin{eqnarray*}
   D_{\max}^\epsilon(\rho\|\sigma) &:=& \mathop{\min}_{\rho'\in\mathcal{B}^\epsilon(\rho)} \,D_{\max}(\rho'\|\sigma),\\
   D_{\min}^\epsilon(\rho\|\sigma) &:=& \mathop{\max}_{\rho'\in\mathcal{B}^\epsilon(\rho)} \,D_{\min}(\rho'\|\sigma).
\end{eqnarray*}
For the min-relative entropy we also consider a slightly different type of smoothing known as the operator-smoothing:
       \begin{equation*}
   D_H^\epsilon(\rho\|\sigma) := \max_{0\leq P\leq I, \Tr\{P\rho\}\geq 1-\epsilon} (-\log\Tr\{P\sigma\}).
\end{equation*}
We use the notation $D_H^\epsilon$ since this is equivalent to the hypothesis testing relative entropy defined in Ref.~\cite{hypothesis}. 

One can then define corresponding resource measures by the minimum divergence with free states:
\begin{eqnarray*}
    \mathfrak{D}_{\max(\min)}(\rho) &:=& \min_{\sigma\in\mF}D_{\max(\min)}(\rho\|\sigma).
\end{eqnarray*}
Due to the data processing inequalities for $\dmax$ \cite{qrenyi}, $\dmin$ \cite{LIEB1973267,uhlmann1977,PETZ198657} and the purified distance $P(\rho,\sigma) = \sqrt{1-f(\rho,\sigma)}$ \cite{marco_thesis}, it holds that $\mathfrak{D}_{\max,\min}$ are monotonically non-increasing ($\mathfrak{f}$ is non-decreasing) under all free operations. 
The smoothed versions of these resource measures are simply defined by replacing the divergences with smoothed ones.
Another important type of resource measure is the \emph{free robustness/log-robustness}:
\begin{eqnarray*}
    R(\rho) &:=& \min\{s\geq 0: \exists\sigma\in \mF, \frac{1}{1+s}\rho + \frac{s}{1+s}\sigma \in \mF\},\\
    {LR}(\rho) &:=& \log(1+ {R}(\rho)).
\end{eqnarray*}
The smoothed versions are similarly given by minimizing over $\mathcal{B}^\epsilon(\rho)$.
By definition, if $\mF$ is an affine set, i.e.~any state expressed by an affine combination of free states is free (in e.g.~coherence, asymmetry theories), then any resource state $\rho\notin \mF$ does not have finite free robustness, although infinite free robustness does not necessarily indicate that $\mF$ is affine (see Appendix B \cite{sm}). 
We formally introduce the following condition for $\mF$ for convenience of later discussions:
\begin{condffr}
All states have finite free robustness, i.e. $R(\rho)<\infty, \forall \rho$.
\end{condffr}
By allowing $\sigma$ to be any state (instead of a free state) in the definition of free robustness, one obtains the so-called generalized robustness/log-robustness, $R_G$/$LR_G$. It can be easily verified that $LR_G(\rho) = \mathfrak{D}_{\max}(\rho)$.


We next briefly overview the theory of \emph{resource destroying (RD) maps} \cite{rdm}.   A map $\lambda$ from states to states is an RD map if it satisfies the following conditions: i) mapping all
non-free states to free states, i.e.~$\forall\rho\not\in \mF, \lambda(\rho)\in \mF$; ii) preserving free states,
i.e.~$\forall\sigma\in \mF, \lambda(\sigma)=\sigma$. 
Two types of RD maps are of particular importance:  i) \emph{Exact RD maps}, which output the closest free state as measured by the relative entropy.  Simple forms are known in e.g.~coherence, asymmetry and non-Gaussianity theories. (See Appendix C \cite{sm} for a detailed introduction.) ii) \emph{RD channels}.  They often induce desirable features, e.g.~the output free state is continuous under variation of the input state due to data processing inequalities. Examples include the dephasing channel for coherence theory and the twirling channel for asymmetry theory.
In Appendix D \cite{sm}, we show that if any state takes finite free robustness, then there does not exist an RD channel in that theory. 

An RD map $\lambda$ induces typical classes of quantum channels via a collection of simple, general conditions.  This work focuses on the following two important ones:  i) the resource non-generating operations $\mathscr{F}_{{\rm NG}}:=\{\mathcal{E}\,|\,\lambda\circ\mathcal{E}\circ\lambda = \mathcal{E}\circ\lambda\}$ \footnote{Note that this class of free operations is uniquely determined by the set of free states and do not depend on the choice of RD maps in a certain theory}, which induces the maximal set of free operations in the sense that any other operation can create resource from a free state; ii) the commuting operations $\mathscr{F}_{\lambda,{\rm Comm}}=\{\mathcal{E}\,|\,\lambda\circ\mathcal{E} = \mathcal{E}\circ\lambda\}$. One can then construct simple resource measures $\delta_{\lambda}(\cdot)=\delta(\cdot,\lambda(\cdot))$ where $\delta$ is any contractive distance measure, which is monotonically non-increasing under the commuting operations \cite{rdm}. 
Here we shall use $\mathfrak{D}_{\max(\min),\lambda}(\rho) := D_{\max(\min)}(\rho\|\lambda(\rho))$, with smoothed versions defined by minimizing (or maximizing) over $\mathcal{B}^\epsilon(\rho)$.  The counterpart for free fidelity is similarly given by $\mathfrak{f}_{\lambda}(\rho):=f(\rho,\lambda(\rho))$. 

We implicitly assume that the resource measures appearing throughout the paper are well-defined (the free robustness case is highlighted since it is of crucial importance in resource theories). 

\section{Resource currencies and modification coefficients}\label{sec:currency}
Resource manipulation tasks are commonly defined relative to some standard or unit resource that serve as the ``currency'': the formation task is about preparing the target state with a supply of standard resource, while the distillation task is about producing standard resource from the given state.  
More generally, consider some family of states $\{\phi_d\}$ consisting of a state $\phi_d\in\mD(\mH_d)$ for each different $d\in\mathbb{D}$ where $\mathbb{D}\subseteq\mathbb{Z}_+$ is a set of valid dimensions as a definition of a resource currency (e.g.~$\mathbb{D}=\{2^k\},k\in\mathbb{Z}_+$ for multi-qubit theories), and call them \emph{reference states}.  
Also let $d^{\downarrow(\uparrow)}\in\mathbb{D}$ be some dimension smaller (greater) than $d$ (e.g.~take $d^{\uparrow} = d+1$ when $\mathbb{D} = \mathbb{Z}_+$).
We then introduce the following \emph{modification coefficients}, which will naturally emerge in the later discussions on one-shot rates:
\begin{eqnarray}
m_f(\phi_d) &:=& {-\log\mathfrak{f}(\phi_d)}/{\log d}, \label{eq:modification fidelity}\\
m_{\max(\min)}(\phi_d) &:=&   {\mathfrak{D}_{\max(\min)}(\phi_d)}/{\log d}, \label{eq:modification minmax}\\
m_{LR}(\phi_d)&:=& LR(\phi_d)/{\log d}.
\end{eqnarray}
Similarly, $m_{f,\lambda}$ and $m_{\max(\min),\lambda}$ are defined by using $\mathfrak{f}_{\lambda}$ and $\mathfrak{D}_{\max(\min),\lambda}$ for Eqs.~\eqref{eq:modification fidelity} and \eqref{eq:modification minmax}.   

It is common to consider certain notions of ``maximum'' resource states as the reference states (so that the formation and distillation tasks essentially achieve the effect of dilution and concentration of resource respectively), although one can in principle choose more general classes of states.
The modification coefficients of max-resource states may encode key features of the resource theory, such as the ``size'' of the set of free states. For example, compare the qubit coherence and magic state theories: in the Bloch representation, the incoherent states only form a zero-measure axis, while the stabilizer (non-magic) states form an octahedron which occupies a significant chunk of the Bloch sphere \cite{bravyikitaev};  Loosely speaking, the maximum magic state is thus much ``closer'' to stabilizer states, which leads to a smaller modification coefficient, as compared with the case of coherence.

Now, we point out the remarkable fact that there is a family of pure max-resource states such that the different types of modification coefficients may collapse to the same value, in a generic class of theories.
To this end, we introduce the following condition.
\begin{condch}
The set of free states $\mF$ is formed by a convex hull of pure free states.
\end{condch}
This property is quite lenient and holds for many theories such as entanglement, coherence, superposition, magic states. We then obtain the following:
\begin{thm}\label{thm:collapse_main}
 Suppose the resource theory satisfies Condition (CH). Then, for any $d$, there exists a pure state $\hat\Phi_d\in\mD(\mH_d)$ such that $m_f(\hat\Phi_d) = m_{\min}(\hat\Phi_d) = m_{\max}(\hat\Phi_d) := g_d$ where $\hat\Phi_d$ achieves the maxima of $m_f,m_{\min},m_{\max}$. 
Furthermore, if $\tilde\lambda$ is an exact RD map, then $m_{f,\tilde\lambda}(\hat\Phi_d) = m_{\min,\tilde\lambda}(\hat\Phi_d) = m_{\max,\tilde\lambda}(\hat\Phi_d) = g_d$.
\end{thm}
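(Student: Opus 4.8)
The plan is to take $\hat\Phi_d$ to be a pure state minimizing the free fidelity $\mathfrak{f}$ (equivalently maximizing $m_f$); such a minimizer exists by compactness of the set of pure states and continuity of $\mathfrak{f}$. First I would record the elementary but crucial identity that for a pure state $\dm\psi$ and any $\sigma$, $\dmin(\dm\psi\|\sigma)=-\log\bra\psi\sigma\ket\psi=-\log f(\dm\psi,\sigma)$, since $\Pi_{\dm\psi}=\dm\psi$. Minimizing over $\sigma\in\mF$ gives $\mathfrak{D}_{\min}(\dm\psi)=-\log\mathfrak{f}(\dm\psi)$, so $m_{\min}=m_f$ holds for \emph{every} pure state; in particular $\hat\Phi_d$ maximizes both $m_f$ and $m_{\min}$ with common value $g_d:=-\log\mathfrak{f}(\hat\Phi_d)/\log d$. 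The general ordering $\dmin\le\dmax$ gives $m_{\min}\le m_{\max}$ pointwise, so the remaining task is to produce the reverse bound at $\hat\Phi_d$.

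The heart of the argument is to exhibit a single free state that is simultaneously optimal for $\mathfrak{D}_{\min}$ and $\mathfrak{D}_{\max}$ at $\hat\Phi_d$. Writing $\mathfrak{f}(\dm\psi)=\max_{\sigma\in\mF}\bra\psi\sigma\ket\psi$ as a maximum of functionals that are linear in $\sigma$, I would invoke the first-order (KKT) condition for minimizing this max-function over the unit sphere: at the minimizer the convex subdifferential must meet the normal direction, which produces a convex combination $\bar\sigma\in\mF$ of the free states attaining the maximal overlap with $\hat\Phi_d$ such that $\bar\sigma\ket{\hat\Phi_d}=\mathfrak{f}(\hat\Phi_d)\ket{\hat\Phi_d}$. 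Thus $\hat\Phi_d$ is an \emph{eigenvector} of the free state $\bar\sigma$ with eigenvalue $\mathfrak{f}(\hat\Phi_d)$. Because of this, the operator inequality $\dm{\hat\Phi_d}\le\lambda\,\bar\sigma$ holds with $\lambda=1/\mathfrak{f}(\hat\Phi_d)$ and with no smaller $\lambda$ (the difference is block-diagonal and vanishes on $\hat\Phi_d$), so $\dmax(\hat\Phi_d\|\bar\sigma)=-\log\mathfrak{f}(\hat\Phi_d)$. Hence $\mathfrak{D}_{\max}(\hat\Phi_d)\le\dmax(\hat\Phi_d\|\bar\sigma)=-\log\mathfrak{f}(\hat\Phi_d)=\mathfrak{D}_{\min}(\hat\Phi_d)$, and with the reverse inequality we obtain the collapse $m_{\max}(\hat\Phi_d)=m_{\min}(\hat\Phi_d)=m_f(\hat\Phi_d)=g_d$.

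The step I expect to be the main obstacle is showing that $\hat\Phi_d$ also \emph{maximizes} $m_{\max}$, so that $g_d$ is genuinely the common maximum value. This amounts to the bound $\mathfrak{D}_{\max}(\dm\psi)\le-\log\mathfrak{f}(\hat\Phi_d)$ for every pure $\psi$, i.e.\ that for each $\psi$ there is a free $\sigma$ with $\sigma\ge\mathfrak{f}(\hat\Phi_d)\,\dm\psi$. When the pure minimum of $\mathfrak{f}$ happens to coincide with $\max_{\sigma\in\mF}\lambda_{\min}(\sigma)$, this is immediate, since the $\lambda_{\min}$-maximal free state $\sigma^*$ yields $\dmax(\rho\|\sigma^*)\le-\log\lambda_{\min}(\sigma^*)$ for \emph{all} $\rho$ at once. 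In general the two minima differ, and the estimate must instead come from a minimax/duality argument using Condition (CH): the fact that the extreme points of $\mF$ are pure lets one write $\mathfrak{f}(\psi)=\max_{\mu}|\bracket\psi\mu|^2$ over pure free $\mu$ and set up the saddle point relating $\min_\psi\max_\sigma\bra\psi\sigma\ket\psi$ to $\max_\psi\min_\sigma\bra\psi\sigma^{-1}\ket\psi$. Establishing this reciprocity is the crux, and I would isolate it as a separate lemma.

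For the exact-RD-map statement I would reuse the witness $\bar\sigma$. Since $\dmin(\hat\Phi_d\|\bar\sigma)=\dmax(\hat\Phi_d\|\bar\sigma)=-\log\mathfrak{f}(\hat\Phi_d)$ and $\dmin\le D\le\dmax$ for the ordinary relative entropy $D$, the value of $D(\hat\Phi_d\|\bar\sigma)$ is squeezed to $-\log\mathfrak{f}(\hat\Phi_d)$; together with $\mathfrak{D}_{\min}(\hat\Phi_d)\le\min_{\sigma}D(\hat\Phi_d\|\sigma)\le D(\hat\Phi_d\|\bar\sigma)$ this shows $\bar\sigma$ attains the minimal relative entropy, i.e.\ it is the relative-entropy projection of $\hat\Phi_d$ onto $\mF$. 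By strict convexity of $\sigma\mapsto D(\hat\Phi_d\|\sigma)$ the projection is unique, so an exact RD map must satisfy $\tilde\lambda(\hat\Phi_d)=\bar\sigma$. Evaluating the three coefficients at the fixed state $\sigma=\bar\sigma$ — via the pure-state identity for $\dmin$ and the fidelity, and the saturated operator inequality for $\dmax$ — then gives $m_{f,\tilde\lambda}(\hat\Phi_d)=m_{\min,\tilde\lambda}(\hat\Phi_d)=m_{\max,\tilde\lambda}(\hat\Phi_d)=g_d$.
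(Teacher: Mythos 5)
Your construction of the witness $\bar\sigma$ via the first-order condition at the pure minimizer of $\mathfrak{f}$ is sound and is a genuinely different (and arguably more elementary) route to the collapse $m_f(\hat\Phi_d)=m_{\min}(\hat\Phi_d)=m_{\max}(\hat\Phi_d)$ than the paper's, which instead invokes the dual SDP characterization $2^{\mathfrak{D}_{\max}(\rho)}=\max\{\Tr\{\rho X\}: X\geq 0,\ \Tr\{\delta X\}\leq 1\ \forall\delta\in\mF\}$ together with the fact (from Regula, using Condition (CH)) that the optimal witness for a pure state is rank one; your eigenvector identity $\bar\sigma\ket{\hat\Phi_d}=\mathfrak{f}(\hat\Phi_d)\ket{\hat\Phi_d}$ plays exactly the role of that rank-one witness. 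However, the step you flag as "the crux" --- that $\hat\Phi_d$ actually \emph{maximizes} $m_{\max}$ over all of $\mD(\mH_d)$ --- is left unproven, and it is a genuine part of the theorem's statement, not an optional refinement. The paper closes it precisely with the rank-one witness structure: for any pure $\psi$ with optimal witness $\tilde{c}_{\ket{w}}\dm{w}$ one has $1+R_G(\dm{\psi})=\tilde{c}_{\ket{w}}|\bracket{\psi}{w}|^2\leq\tilde{c}_{\ket{\hat\Phi_d}}=1+R_G(\hat\Phi_d)$, and then convexity of $R_G$ lifts this from pure to all states. Your proposed "reciprocity" between $\min_\psi\max_\sigma\bra{\psi}\sigma\ket{\psi}$ and a $\sigma^{-1}$-type quantity is not substantiated and, as you note, fails to be immediate outside the special case where the fidelity minimizer coincides with the $\lambda_{\min}$-maximal free state. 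Relatedly, you only take the minimizer of $\mathfrak{f}$ over pure states; to conclude that $\hat\Phi_d$ maximizes $m_f$ and $m_{\min}$ over \emph{all} states you still need the concavity of $\sqrt{\mathfrak{f}}$ (the paper's Lemma) and the convexity of $\mathfrak{D}_{\min}$.

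The exact-RD-map part also has a gap: you deduce $\tilde\lambda(\hat\Phi_d)=\bar\sigma$ from "strict convexity of $\sigma\mapsto D(\hat\Phi_d\|\sigma)$", but this map is not strictly convex when $\hat\Phi_d$ is not full rank. For instance, in dimension $3$ take $\sigma_1={\rm diag}(1/2,1/4,1/4)$ and $\sigma_2=U\sigma_1U^\dagger$ with $U$ acting only on ${\rm span}\{\ket{1},\ket{2}\}$; then $\bra{0}\log(t\sigma_1+(1-t)\sigma_2)\ket{0}=\log(1/2)$ for all $t$, so the relative-entropy projection of $\dm{0}$ need not be unique. An exact RD map is only required to output \emph{some} closest free state $\sigma'$ with respect to $D$, so you must show the collapse at $\sigma'$, not at your $\bar\sigma$. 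The $\dmin$ half survives by sandwiching ($\dmin\leq D\leq\dmax$ forces $\dmin(\hat\Phi_d\|\sigma')=D(\hat\Phi_d\|\sigma')=-\log\mathfrak{f}(\hat\Phi_d)$), but the bound $\dmax(\hat\Phi_d\|\sigma')\leq\dmin(\hat\Phi_d\|\sigma')$ requires an extra argument. The paper supplies it by writing $\sigma'$ in its spectral decomposition and exploiting the equality case of Jensen's inequality in $\log\sum_ip_iq_i=\sum_iq_i\log p_i$, which forces either $\hat\Phi_d$ to be an eigenvector of $\sigma'$ or all eigenvalues on the relevant support to coincide; in either case $\dmax\leq-\log p_1=\dmin$. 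You would need to add this (or an equivalent) argument to make the second half of the theorem go through.
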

See Appendices E and F \cite{sm} for proofs and discussions concerning this result.
We call such $\hat\Phi_d$ a \emph{golden state} and $g_d$ the \emph{golden coefficient} for dimension $d$.   

Now we briefly discuss a few important examples of golden states and coefficients.  The coherence theory comes with golden states $\ket{\hat\Phi_d}=\frac{1}{\sqrt{d}}\sum_j\ket{j}$, and the complete dephasing channel is an exact RD map; Theorem~\ref{thm:collapse_main} fully applies, and $g_d = 1$ for all $d\in\mathbb{D}=\mathbb{Z}_+$.   
For entanglement theory, golden states can take the form $\ket{\hat\Phi_d}=\frac{1}{d^{1/4}}\sum_{j=1}^{\sqrt{d}}\ket{j}\ket{j}$ where $d$ is the dimension of the bipartite system with local dimension $\sqrt{d}$, and  $g_d = 1/2$ for all $d\in\mathbb{D} = \{k^2|k\in\mathbb{Z}_+\}$.  Note that the simple forms of one-shot entanglement/coherence manipulation results \cite{brandaodatta,coh_dilution,PhysRevLett.121.010401,fullstory} rely heavily on this specific property of the golden coefficients being constant for any valid dimension. 
The theory of magic states has golden state $\hat\Phi_2=\frac12(I+(X+Y+Z)/\sqrt{3})$ with $g_2 = \log(3-\sqrt{3})\approx 0.34$ \cite{regula} for a single-qubit system, where $X,Y,Z$ are Pauli matrices. 
Another interesting case is the theory of quantum thermodynamics, where the only free state is the Gibbs state and Condition (CH) is not satisfied. But it can be shown that golden states with the same maximal resource and collapsing properties still exist and the $g_d$ can be easily calculated (see Appendix E \cite{sm}). In particular, the infinite-temperature case (i.e.~the purity theory) has $g_d=1$ for all $d$ (where every pure state is a golden state). 



\section{Optimal rates of one-shot resource manipulation}
Before stating the results, we define another useful condition defined for the set of free states $\mF$ and pure reference states $\{\Phi_d\}$, which we call Condition (CT).
\begin{condct} 
For any given $d\in\mathbb{D}$, $\Tr\{\Phi_d\sigma\}$ is constant for any $\sigma\in\mF$ (equivalently, $\Phi_d$ belongs to the dual set of $\mF$ as introduced in Ref.~\cite{PhysRevA.95.062314}).
\end{condct}
For instance, for the theory of coherence, it can be easily verified that $\Tr\{\hat{\Phi}_d\sigma\} = 1/d$ for any incoherent state $\sigma$, so that the Condition (CT) is satisfied when the golden states are chosen as reference states. 
In Appendix G \cite{sm}, we provide a new example based on a multi-qubit superposition theory.  A diagram that illustrates the classification of resource theories relevant to this work can be found in Appendix H \cite{sm}.

For state $\rho$, given reference states $\{\phi_d\}$, the optimal rate of one-shot formation task with $\epsilon$ error tolerance i.e.~the \emph{one-shot $\epsilon$-formation cost}, under the set of operations $\mathscr{F}$, is defined to be the minimum size of reference state that achieves the task:
\begin{align*}
    &\Omega_{C,\mathscr{F}}^\epsilon(\rho\leftarrow\{\phi_d\}) \\&\quad:= \log\min\{d\in\mathbb{D}: \exists \mathcal{E}\in \mathscr{F}, \mathcal{E}(\phi_d)\in\mathcal{B}^\epsilon(\rho)\}.
\end{align*}
Below let $\mathfrak{R}$ be some resource measure and $m$ be some type of modification coefficient that will be specified.
The following theorem establishes general bounds for the one-shot $\epsilon$-formation cost under the two aforementioned classes of free operations (proofs in Appendices I, J \cite{sm}):   
\begin{thm}\label{thm:cost}
 For reference states $\{\phi_d\}$, let $d_0 = \min\{d\in\mathbb{D}: \mathfrak{R}(\phi_d) \geq \mathfrak{R}^\epsilon(\rho)\}$. Then
\begin{equation}\label{eq:1}
    \Omega_{C,\mathscr{F}}^\epsilon(\rho\leftarrow\{\phi_d\}) \geq \frac{\mathfrak{R}^\epsilon(\rho)}{m(\phi_{d_0})}.
\end{equation}
for
i) $\mathscr{F}=\mathscr{F}_{{\rm NG}}$,  $\mathfrak{R}=\mathfrak{D}_{\max}$, $m=m_{\max}$ for any $\mF$;
ii) $\mathscr{F}=\mathscr{F}_{{\rm NG}}$, $\mathfrak{R}=LR$, $m=m_{LR}$ for $\mF$ satisfying Condition (FFR); 
iii) $\mathscr{F}=\mathscr{F}_{\lambda,{\rm Comm}}$, $\mathfrak{R}=\mathfrak{D}_{\max,\lambda}$, $m=m_{\max,\lambda}$ for any $\mF$ and $\lambda$.

On the other hand, for pure reference states $\{\Phi_d\}$, let $d'_0 = \min\{d\in\mathbb{D}:-\log\mathfrak{f}(\Phi_d) \geq \mathfrak{R}^\epsilon(\rho)\}$.   
Then 
\begin{equation}\label{eq:2}
    \Omega_{C,\mathscr{F}}^\epsilon(\rho\leftarrow\{\Phi_d\}) < \frac{\mathfrak{R}^\epsilon(\rho)}{m_f(\Phi_{{d'_0}^{\downarrow}})} + \log\frac{d'_0}{{d'_0}^{\downarrow}}.
\end{equation}
for i) $\mathscr{F}=\mathscr{F}_{{\rm NG}}$, $\mathfrak{R}=\mathfrak{D}_{\max}$ for $\mF$ satisfying Condition (CT);
ii) $\mathscr{F}=\mathscr{F}_{{\rm NG}}$, $\mathfrak{R}=LR$ for convex $\mF$ satisfying Condition (FFR);
iii) $\mathscr{F}=\mathscr{F}_{\lambda,{\rm Comm}}$, $\mathfrak{R}=\mathfrak{D}_{\max,\lambda}$ for $\mF$ satisfying Condition (CT) and any $\lambda$.
\end{thm}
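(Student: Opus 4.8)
The plan is to establish the lower bound \eqref{eq:1} from monotonicity of the relevant resource measure, and the achievability bound \eqref{eq:2} by exhibiting an explicit free operation that dilutes the reference state at dimension $d_0'$. For the lower bound I would argue uniformly over the three cases. Suppose $\mathcal{E}\in\mathscr{F}$ achieves the task, so $\mathcal{E}(\phi_d)=:\rho'\in\mathcal{B}^\epsilon(\rho)$. Each pairing $(\mathscr{F},\mathfrak{R})$ is chosen precisely so that $\mathfrak{R}$ is non-increasing under $\mathscr{F}$: in (i), $\mathfrak{D}_{\max}$ decreases under $\mathscr{F}_{{\rm NG}}$ because such maps send free states to free states and $D_{\max}$ obeys data processing; in (ii) the same holds for $LR$, since a resource-non-generating map carries a free pseudomixture realizing $R(\rho)$ to one realizing $R(\mathcal{E}(\rho))$ with the same weight; in (iii), $\mathfrak{D}_{\max,\lambda}(\rho)=D_{\max}(\rho\|\lambda(\rho))$ decreases under $\mathscr{F}_{\lambda,{\rm Comm}}$ by combining $\lambda\circ\mathcal{E}=\mathcal{E}\circ\lambda$ with data processing. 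Hence $\mathfrak{R}(\phi_d)\geq\mathfrak{R}(\rho')\geq\mathfrak{R}^\epsilon(\rho)$, the last step because $\rho'\in\mathcal{B}^\epsilon(\rho)$ and $\mathfrak{R}^\epsilon$ is the extremum of $\mathfrak{R}$ over that ball. By the definition of $d_0$ this forces $d\geq d_0$, so $\Omega^\epsilon_{C,\mathscr{F}}\geq\log d_0$; writing $\log d_0=\mathfrak{R}(\phi_{d_0})/m(\phi_{d_0})$ and using $\mathfrak{R}(\phi_{d_0})\geq\mathfrak{R}^\epsilon(\rho)$ gives \eqref{eq:1}.

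For the upper bound the key object is a measure-and-prepare channel. Set $\mu=\mathfrak{R}^\epsilon(\rho)$, fix $\rho'\in\mathcal{B}^\epsilon(\rho)$ attaining the smoothed value, write $\Phi=\Phi_{d_0'}$ and $c=\mathfrak{f}(\Phi)$, so that the defining inequality for $d_0'$ reads $c\leq 2^{-\mu}$. I would take
\[ \mathcal{E}(X)=\Tr\{\Phi X\}\,\rho'+\Tr\{(I-\Phi)X\}\,\omega_2, \]
which is a legitimate channel for any state $\omega_2$ and satisfies $\mathcal{E}(\Phi)=\rho'\in\mathcal{B}^\epsilon(\rho)$ since $\Phi$ is a rank-one projector. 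Everything then reduces to choosing $\omega_2$ so that $\mathcal{E}$ lands in the appropriate free class.

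The choice of $\omega_2$ and its freeness certification are what separate the three cases. In case (i), from $\mathfrak{D}_{\max}^\epsilon(\rho)=\mu$ I have $\rho'\leq 2^\mu\sigma^*$ with $\sigma^*\in\mF$, and I set $\omega_2=(\sigma^*-c\rho')/(1-c)$, a valid state because $c\rho'\leq 2^{-\mu}2^\mu\sigma^*=\sigma^*$; Condition (CT) gives $\Tr\{\Phi\sigma\}=c$ for every $\sigma\in\mF$, so $\mathcal{E}(\sigma)=c\rho'+(1-c)\omega_2=\sigma^*\in\mF$ uniformly, certifying $\mathcal{E}\in\mathscr{F}_{{\rm NG}}$. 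In case (ii), I use the robustness decomposition $\rho'=(1+s)\tau'-s\sigma'$ with $\tau',\sigma'\in\mF$ and $1+s=2^\mu$, and set $\omega_2=\sigma'$; then for any $\sigma\in\mF$ one finds $\mathcal{E}(\sigma)=q\tau'+(1-q)\sigma'$ with $q=(1+s)\Tr\{\Phi\sigma\}\in[0,1]$ because $\Tr\{\Phi\sigma\}\leq c\leq 2^{-\mu}$, so convexity of $\mF$ yields $\mathcal{E}(\sigma)\in\mF$ (note that here only $\Tr\{\Phi\sigma\}\leq\mathfrak{f}(\Phi)$ is used, not (CT)). In case (iii), from $\mathfrak{D}_{\max,\lambda}^\epsilon(\rho)=\mu$ I have $\rho'\leq 2^\mu\lambda(\rho')$ and take $\omega_2=(\lambda(\rho')-c\rho')/(1-c)$; here I must verify the stronger condition $\lambda\circ\mathcal{E}=\mathcal{E}\circ\lambda$. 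Using (CT) and that $\lambda$ always outputs free states, $\mathcal{E}\circ\lambda$ is the constant map to $c\rho'+(1-c)\omega_2=\lambda(\rho')$, and for a linear $\lambda$ the automatic idempotency of RD maps gives $\lambda(\omega_2)=\lambda(\rho')$, so that $\lambda\circ\mathcal{E}$ is likewise constantly $\lambda(\rho')$.

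Once $\mathcal{E}$ is shown to be free, the task succeeds at $d_0'$, i.e. $\Omega^\epsilon_{C,\mathscr{F}}\leq\log d_0'$. To reach the displayed form I split $\log d_0'=\log {d_0'}^{\downarrow}+\log(d_0'/{d_0'}^{\downarrow})$ and invoke minimality of $d_0'$: the next smaller admissible dimension violates the defining inequality, $-\log\mathfrak{f}(\Phi_{{d_0'}^{\downarrow}})<\mu$, whence $\log {d_0'}^{\downarrow}=-\log\mathfrak{f}(\Phi_{{d_0'}^{\downarrow}})/m_f(\Phi_{{d_0'}^{\downarrow}})<\mathfrak{R}^\epsilon(\rho)/m_f(\Phi_{{d_0'}^{\downarrow}})$, which gives the strict bound \eqref{eq:2}. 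I expect the main obstacle to be precisely the freeness verification of the constructed channel: it is immediate from (CT) for $\mathscr{F}_{{\rm NG}}$ in case (i), requires convexity in case (ii), but is most delicate in the commuting case (iii), where membership in $\mathscr{F}_{\lambda,{\rm Comm}}$ is strictly stronger than being resource-non-generating. Indeed the measure-and-prepare channel commutes with $\lambda$ only when $\lambda$ is linear (so that $\lambda$ acts affinely along the segment $[\rho',\omega_2]$), and handling a general, possibly nonlinear, RD map is the point that will require the most care.
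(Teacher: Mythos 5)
Your proposal is correct and follows essentially the same route as the paper: the lower bounds come from monotonicity of each paired measure under the paired operation class together with the definition of $d_0$, and the upper bounds use exactly the same measure-and-prepare channel (your $\omega_2$ reproduces the paper's affine coefficients upon expansion), with the same positivity/freeness certifications and the same ${d'_0}^{\downarrow}$ manipulation at the end. The one point you flag as delicate --- commutation with a possibly nonlinear $\lambda$ in case (iii) --- is not handled any differently in the paper, whose proof likewise distributes $\lambda$ term-by-term over the pseudomixture of $\rho_\epsilon$ and $\lambda(\rho_\epsilon)$ (i.e.\ tacitly uses affinity of $\lambda$ along that segment plus idempotency), so you have not missed an idea that the paper actually supplies.
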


By combining the above results with Theorem~\ref{thm:collapse_main}, we can reduce the modification coefficients to golden ones and obtain roughly matching bounds:
\begin{cor}
For golden states $\{\hat\Phi_d\}$, suppose Conditions (CH) and (CT) are satisfied, and let $d_0 = \min\{d\in\mathbb{D}: g_d\log{d} \geq \mathfrak{R}^\epsilon(\rho)\}$.  Then
\begin{equation}
  \frac{\mathfrak{R}^\epsilon(\rho)}{g_{d_0}}  \leq \Omega_{C,\mathscr{F}}^\epsilon(\rho\leftarrow\{\hat\Phi_d\})  < \frac{\mathfrak{R}^\epsilon(\rho)}{g_{d_0^{\downarrow}}} + \log\frac{d_0}{d_0^{\downarrow}}.
\end{equation}
for i) $\mathscr{F}=\mathscr{F}_{{\rm NG}}$, $\mathfrak{R}=\mathfrak{D}_{\max}$;
ii) $\mathscr{F}=\mathscr{F}_{\tilde{\lambda},{\rm Comm}}$, $\mathfrak{R}=\mathfrak{D}_{\max,\tilde{\lambda}}$ for exact RD map $\tilde{\lambda}$.
\end{cor}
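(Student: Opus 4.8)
The plan is to obtain this corollary as a direct specialization of Theorem~\ref{thm:cost}, using the coefficient collapse of Theorem~\ref{thm:collapse_main} to identify all the threshold dimensions that appear. First I would record the consequences of the collapse in the form actually needed. Under Condition (CH), Theorem~\ref{thm:collapse_main} supplies golden states $\hat\Phi_d$ with $m_f(\hat\Phi_d)=m_{\min}(\hat\Phi_d)=m_{\max}(\hat\Phi_d)=g_d$, which unwinds to the two equalities $\mathfrak{D}_{\max}(\hat\Phi_d)=g_d\log d$ and $-\log\mathfrak{f}(\hat\Phi_d)=g_d\log d$. When $\tilde\lambda$ is an exact RD map, the second part of the theorem additionally gives $m_{\max,\tilde\lambda}(\hat\Phi_d)=m_{f,\tilde\lambda}(\hat\Phi_d)=g_d$, i.e.\ $\mathfrak{D}_{\max,\tilde\lambda}(\hat\Phi_d)=g_d\log d$. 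These are precisely the quantities entering the two threshold definitions of Theorem~\ref{thm:cost}.

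The key observation, which is really the whole content of the argument, is that the collapse forces the lower-bound threshold $d_0=\min\{d:\mathfrak{R}(\hat\Phi_d)\geq\mathfrak{R}^\epsilon(\rho)\}$ and the upper-bound threshold $d'_0=\min\{d:-\log\mathfrak{f}(\hat\Phi_d)\geq\mathfrak{R}^\epsilon(\rho)\}$ of Theorem~\ref{thm:cost} to coincide, and moreover to equal the single $d_0=\min\{d:g_d\log d\geq\mathfrak{R}^\epsilon(\rho)\}$ of the corollary. Indeed, for case i) we have $\mathfrak{R}=\mathfrak{D}_{\max}$ and $\mathfrak{D}_{\max}(\hat\Phi_d)=-\log\mathfrak{f}(\hat\Phi_d)=g_d\log d$, so both thresholds reduce to this common $d_0$; for case ii) the lower-bound threshold collapses via $\mathfrak{D}_{\max,\tilde\lambda}(\hat\Phi_d)=g_d\log d$ (the exact-RD part of Theorem~\ref{thm:collapse_main}), while the upper-bound threshold, which is phrased through the ordinary free fidelity $\mathfrak{f}$, collapses via the always-available first part $-\log\mathfrak{f}(\hat\Phi_d)=g_d\log d$, again giving the same $d_0$. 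I would at this point verify that the hypotheses align throughout: Condition (CH) to invoke Theorem~\ref{thm:collapse_main}, Condition (CT) as required by the upper half of Theorem~\ref{thm:cost}, and, for case ii), that $\tilde\lambda$ is an exact RD map, which is needed both to instantiate case iii of Theorem~\ref{thm:cost} with $\lambda=\tilde\lambda$ and to trigger the second part of Theorem~\ref{thm:collapse_main}.

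It then remains only to substitute. Plugging $d_0=d'_0$ into the lower bound \eqref{eq:1} for case i yields $\Omega_{C,\mathscr{F}}^\epsilon(\rho\leftarrow\{\hat\Phi_d\})\geq\mathfrak{R}^\epsilon(\rho)/m_{\max}(\hat\Phi_{d_0})=\mathfrak{R}^\epsilon(\rho)/g_{d_0}$, and into the upper bound \eqref{eq:2} yields $\Omega_{C,\mathscr{F}}^\epsilon(\rho\leftarrow\{\hat\Phi_d\})<\mathfrak{R}^\epsilon(\rho)/m_f(\hat\Phi_{d_0^{\downarrow}})+\log(d_0/d_0^{\downarrow})=\mathfrak{R}^\epsilon(\rho)/g_{d_0^{\downarrow}}+\log(d_0/d_0^{\downarrow})$; concatenating the two gives the claimed two-sided estimate. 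Case ii is identical after replacing $\mathfrak{D}_{\max},m_{\max}$ by $\mathfrak{D}_{\max,\tilde\lambda},m_{\max,\tilde\lambda}$ and using case iii of Theorem~\ref{thm:cost}. Since the substantive work is carried entirely by the two theorems being combined, I expect no genuine mathematical obstacle; the one place demanding care is the bookkeeping of the threshold dimensions, namely confirming that the corollary's single $g_d\log d$ threshold legitimately plays the role of both the lower-bound $d_0$ and the upper-bound $d'_0$ of Theorem~\ref{thm:cost}, which is exactly what the coefficient collapse guarantees.
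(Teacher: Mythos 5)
Your proposal is correct and follows exactly the route the paper intends: the corollary is obtained by specializing Theorem~2 to golden states and using the collapse $m_f(\hat\Phi_d)=m_{\max}(\hat\Phi_d)=m_{\max,\tilde\lambda}(\hat\Phi_d)=g_d$ from Theorem~1 to identify the lower- and upper-bound threshold dimensions with the single $d_0=\min\{d: g_d\log d\geq\mathfrak{R}^\epsilon(\rho)\}$. The hypothesis bookkeeping (Condition (CH) for the collapse, Condition (CT) for the achievability half, exactness of $\tilde\lambda$ for case ii) matches the paper's as well.
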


The constructions used for showing the achievable formation costs provide interesting implications to the existence of \emph{root states}, max-resource states in the strongest sense, from which any state defined on the same Hilbert space can be obtained by some free operation:
\begin{cor}\label{cor:root}
 For any $\mF(\mH_d)$ such that the maxima of $m\in\{m_f,m_{\min},m_{\max}\}$ coincide at some pure (golden) state $\hat\Phi_d$ (e.g. $\mF(\mH_d)$ satisfying Condition (CH)), $\hat\Phi_d$ serves as a root state if $\mF(\mH_d)$ further satisfies either of the following: i) Condition (CT), ii) Condition (FFR) and $m_{\max}(\Phi_d)=m_{LR}(\Phi_d)$ for any pure state $\Phi_d\in \mD(\mH_d)$. 
\end{cor}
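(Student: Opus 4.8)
The plan is to prove the root-state property constructively: for every target $\rho\in\mD(\mH_d)$ I would exhibit an explicit operation $\mathcal{E}\in\mathscr{F}_{{\rm NG}}$ with $\mathcal{E}(\hat\Phi_d)=\rho$, reusing the measure-and-prepare channels that underlie the achievability direction \eqref{eq:2} of Theorem~\ref{thm:cost}, now specialized to zero error and to the golden reference state of dimension exactly $d$. Concretely I take
\[
\mathcal{E}(\,\cdot\,)=\Tr\{\hat\Phi_d\,(\cdot)\}\,\rho+\Tr\{(I-\hat\Phi_d)(\cdot)\}\,\omega
\]
for a state $\omega$ to be fixed. This is CPTP (entanglement-breaking), and because $\hat\Phi_d$ is a rank-one projector one has $\mathcal{E}(\hat\Phi_d)=\rho$ immediately. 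The whole problem thus collapses to choosing $\omega$ so that $\mathcal{E}(\sigma)\in\mF$ for every free $\sigma$, i.e.\ so that $\mathcal{E}$ is resource non-generating.

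For part i) I would use Condition (CT): it forces $\Tr\{\hat\Phi_d\sigma\}=\mathfrak{f}(\hat\Phi_d)=:c$ to be the \emph{same} for all $\sigma\in\mF$, so that $\mathcal{E}(\sigma)=c\rho+(1-c)\omega$ is a single fixed operator. Taking the generalized-robustness witness $\sigma^\ast:=\frac{1}{1+R_G(\rho)}\rho+\frac{R_G(\rho)}{1+R_G(\rho)}\tau\in\mF$ and setting $\omega:=\frac{1}{1-c}(\sigma^\ast-c\rho)$ makes $\mathcal{E}(\sigma)=\sigma^\ast\in\mF$ by design; the only check is $\omega\ge 0$, which holds exactly when $c\le\frac{1}{1+R_G(\rho)}$. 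This last inequality is delivered by golden-state maximality: since $m_f$ and $m_{\max}$ coincide and are maximal at $\hat\Phi_d$, $-\log c=\mathfrak{D}_{\max}(\hat\Phi_d)\ge\mathfrak{D}_{\max}(\rho)=LR_G(\rho)=\log(1+R_G(\rho))$, which is precisely $c\le\frac{1}{1+R_G(\rho)}$.

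For part ii) the weight $t:=\Tr\{\hat\Phi_d\sigma\}$ ranges over $[0,c]$ rather than being pinned to $c$, so now $\omega$ itself must be free and I would additionally exploit convexity of $\mF$ (inherited from the corresponding achievability setting). I would choose the \emph{free} witness coming from the free-robustness (rather than generalized-robustness) decomposition, so that both extreme outputs $\mathcal{E}(\sigma)|_{t=0}=\omega$ and $\mathcal{E}(\sigma)|_{t=c}=c\rho+(1-c)\omega$ lie in $\mF$; every intermediate $t\in[0,c]$ is then a convex combination of these two and hence free. The inequality now required is $c\le\frac{1}{1+R(\rho)}$, i.e.\ $LR(\rho)\le-\log c=\mathfrak{D}_{\max}(\hat\Phi_d)$ for \emph{all} $\rho$. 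To secure it I would first note that $R$ is convex---mixing the robustness witnesses of $\rho_1,\rho_2$ with weight $s=pR(\rho_1)+(1-p)R(\rho_2)$ produces a valid witness for $p\rho_1+(1-p)\rho_2$, since convex combinations of free states are free---so that the spectral decomposition gives $LR(\rho)\le\max_j LR(\psi_j)$ and the maximum of $LR$ reduces to pure states. On pure states the hypothesis $m_{\max}=m_{LR}$ yields $LR(\Phi)=\mathfrak{D}_{\max}(\Phi)\le\mathfrak{D}_{\max}(\hat\Phi_d)$, closing the bound.

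The step I expect to be the genuine obstacle is this part ii) inequality $LR(\rho)\le-\log\mathfrak{f}(\hat\Phi_d)$ for all, including mixed, $\rho$. Unlike $\mathfrak{D}_{\max}=LR_G$, the free log-robustness $LR$ is not directly controlled by the golden state, and both extra hypotheses of part ii) are there exactly to bridge this gap: convexity of $\mF$ reduces the maximization of $LR$ to pure states (and keeps the intermediate channel outputs free), while the pure-state identity $m_{\max}=m_{LR}$ transfers the already-available bound on $\mathfrak{D}_{\max}$ to $LR$. Part i), by contrast, is comparatively routine once the measure-and-prepare channel is written, since (CT) collapses the freeness test to the single operator $\sigma^\ast$.
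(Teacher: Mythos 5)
Your proposal is correct and follows essentially the same route as the paper: it specializes the measure-and-prepare maps from the achievability parts of Theorem~\ref{thm:cost} (Eqs.~(I2) and (I8) in the appendices) to $\epsilon=0$ and reference dimension $d$, and closes each case with the same key inequalities, namely $-\log\mathfrak{f}(\hat\Phi_d)=\mathfrak{D}_{\max}(\hat\Phi_d)\geq\mathfrak{D}_{\max}(\rho)$ under Condition~(CT) and $\mathfrak{D}_{\max}(\hat\Phi_d)=LR(\hat\Phi_d)\geq LR(\rho)$ via the pure-state reduction of $LR$ together with $m_{\max}=m_{LR}$ under Condition~(FFR). Your explicit identification of $\omega$ with the (generalized/free) robustness witness and the positivity check $c\leq\frac{1}{1+R_G(\rho)}$ are exactly the CPTP and freeness conditions verified in the paper's Appendix~I constructions, so there is no substantive difference.
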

We provide the proof, as well as an extensive discussion on root states, in Appendix K \cite{sm}. 
This in particular implies that if there exist no root states, then the free and generalized robustness measures do not coincide at pure states in general. (see Ref. \cite{Contreras2019entanglement} for a related discussion for the theory of multipartite entanglement).

As for distillation, we consider the standard version with error tolerance on the output.
The optimal rate, namely the \emph{one-shot $\epsilon$-distillation yield}, under free operations $\mathscr{F}$, is defined to be the maximum size of the target reference state:
\begin{align*}
    &\Omega_{D,\mathscr{F}}^\epsilon(\rho\rightarrow\{\phi_d\})\\ &\quad:= \log\max\{d\in\mathbb{D}: \exists \mathcal{E}\in \mathscr{F}, \mathcal{E}(\rho)\in\mathcal{B}^\epsilon(\phi_d)\}. 
\end{align*}

We first provide the following bounds for the one-shot $\epsilon$-distillation yield under resource non-generating operations (proofs and additional results in Appendix L \cite{sm}): 
\begin{thm}\label{thm:dist_ng}
For pure reference states $\{\Phi_d\}$,  let $d_0 = \max\{d\in\mathbb{D}: -\log\mathfrak{f}(\Phi_d)  \leq \mathfrak{D}_H^\epsilon(\rho)\}$.  Then for any $\mF$,
  \begin{equation}
  \Omega_{D,\mathbb{\barx}}^{\epsilon}(\rho\rightarrow\{\Phi_d\}) \leq \frac{\mathfrak{D}_{H}^\epsilon(\rho)}{m_f(\Phi_{d_0})}.  \label{eq:opsm_main} 
 \end{equation}

 Suppose further that $\mF$ satisfies Condition (FFR).
For reference states $\{\phi_d\}$, let $d_0 = \max\{d\in\mathbb{D}: LR(\phi_d) \leq \mathfrak{D}_H^\epsilon(\rho)\}$. 
Then
 \begin{equation}
     \Omega_{D,\mathbb{\barx}}^{\epsilon}(\rho\rightarrow\{\phi_d\})  > \frac{\mathfrak{D}_{H}^\epsilon(\rho)}{m_{LR}(\phi_{d_0^{\uparrow}})}-\log{\frac{d_0^{\uparrow}}{d_0}}.  \label{eq:opsmlr_main} 
 \end{equation}
 \end{thm}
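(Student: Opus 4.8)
The plan is to prove the upper bound \eqref{eq:opsm_main} by a monotonicity argument and the lower bound \eqref{eq:opsmlr_main} by an explicit measure-and-prepare construction.

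For the upper bound, I would first establish that $\mathfrak{D}_H^\epsilon$ is non-increasing under $\barx$: since any $\mathcal{E}\in\barx$ maps free states to free states, choosing the optimal $\sigma^*\in\mF$ for $\rho$ (a minimizer exists by compactness of $\mF$) and invoking the data-processing inequality for $D_H^\epsilon$ gives $\mathfrak{D}_H^\epsilon(\rho)\geq D_H^\epsilon(\mathcal{E}(\rho)\|\mathcal{E}(\sigma^*))\geq\mathfrak{D}_H^\epsilon(\mathcal{E}(\rho))$. Next, suppose $d$ is achievable, i.e.\ $\mathcal{E}(\rho)\in\mathcal{B}^\epsilon(\Phi_d)$ for some $\mathcal{E}\in\barx$. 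Because $\Phi_d$ is pure, $\Tr\{\Phi_d\,\mathcal{E}(\rho)\}=f(\mathcal{E}(\rho),\Phi_d)\geq 1-\epsilon$, so $P=\Phi_d$ is a feasible test in $D_H^\epsilon(\mathcal{E}(\rho)\|\tau)$ for every $\tau\in\mF$; using purity again, this yields $\mathfrak{D}_H^\epsilon(\mathcal{E}(\rho))\geq -\log\max_{\tau\in\mF}\Tr\{\Phi_d\tau\}=-\log\mathfrak{f}(\Phi_d)$. Combining the two estimates, every achievable $d$ satisfies $-\log\mathfrak{f}(\Phi_d)\leq\mathfrak{D}_H^\epsilon(\rho)$, hence $d\leq d_0$; the defining inequality of $d_0$ then gives $\log d_0\leq \mathfrak{D}_H^\epsilon(\rho)/m_f(\Phi_{d_0})$, which is \eqref{eq:opsm_main}.

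For the lower bound, the first step is a minimax argument to produce a single test $P^*$ that is good against all free states at once. Since $\Tr\{P\sigma\}$ is bilinear and both the test set $\{0\leq P\leq I:\Tr\{P\rho\}\geq 1-\epsilon\}$ and $\mF$ are convex and compact, exchanging the two optimizations in $\mathfrak{D}_H^\epsilon(\rho)=\min_{\sigma}\max_{P}(-\log\Tr\{P\sigma\})$ yields $P^*$ with $\Tr\{P^*\rho\}\geq 1-\epsilon$ and $\Tr\{P^*\sigma\}\leq 2^{-\mathfrak{D}_H^\epsilon(\rho)}$ for all $\sigma\in\mF$. The second step invokes Condition (FFR): the finite free robustness of the target $\phi_{d_0}$ supplies $\sigma_a,\sigma_b\in\mF$ with $\sigma_b=\tfrac{1}{1+R(\phi_{d_0})}\phi_{d_0}+\tfrac{R(\phi_{d_0})}{1+R(\phi_{d_0})}\sigma_a$. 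I then define the channel $\mathcal{E}(X):=\Tr\{P^*X\}\,\phi_{d_0}+\Tr\{(I-P^*)X\}\,\sigma_a$.

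The crux is verifying $\mathcal{E}\in\barx$. For any $\sigma\in\mF$ put $p:=\Tr\{P^*\sigma\}\leq 2^{-\mathfrak{D}_H^\epsilon(\rho)}$; since $LR(\phi_{d_0})\leq\mathfrak{D}_H^\epsilon(\rho)$ we have $1+R(\phi_{d_0})\leq 2^{\mathfrak{D}_H^\epsilon(\rho)}$, so $p\leq 1/(1+R(\phi_{d_0}))$, and with $q:=p(1+R(\phi_{d_0}))\in[0,1]$ one checks $\mathcal{E}(\sigma)=p\phi_{d_0}+(1-p)\sigma_a=q\sigma_b+(1-q)\sigma_a$, a convex combination of free states and hence free (using convexity of $\mF$, consistent with the FFR setting). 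Meanwhile $\Tr\{P^*\rho\}\geq 1-\epsilon$ together with joint concavity of fidelity gives $f(\mathcal{E}(\rho),\phi_{d_0})\geq 1-\epsilon$, so $\phi_{d_0}$ is distillable and $\Omega_{D,\barx}^\epsilon(\rho\rightarrow\{\phi_d\})\geq\log d_0$. To reach the stated form, maximality of $d_0$ forces $LR(\phi_{d_0^\uparrow})>\mathfrak{D}_H^\epsilon(\rho)$, i.e.\ $\log d_0^\uparrow>\mathfrak{D}_H^\epsilon(\rho)/m_{LR}(\phi_{d_0^\uparrow})$, and subtracting $\log(d_0^\uparrow/d_0)$ yields \eqref{eq:opsmlr_main}. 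I expect the main obstacle to be the non-generating verification of $\mathcal{E}$ for general (possibly mixed) reference states $\phi_d$, which is precisely where the robustness decomposition and convexity are indispensable; the minimax exchange is the other delicate point, since $-\log\Tr\{P\sigma\}$ is not concave–convex and one must pass through the bilinear form $\Tr\{P\sigma\}$ before applying the logarithm.
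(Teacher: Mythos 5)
Your proof is correct and follows essentially the same route as the paper's (Appendix L): the upper bound via monotonicity of $\mathfrak{D}_{H}^\epsilon$ under $\barx$ plus feasibility of the test $P=\Phi_d$, and the lower bound via the measure-and-prepare channel $\mathcal{E}(X)=\Tr\{P^*X\}\,\phi_{d_0}+\Tr\{(I-P^*)X\}\,\sigma_a$ whose freeness is certified by the robustness decomposition---you even make explicit the minimax exchange through the bilinear form that the paper leaves implicit. One small correction: to conclude $f(\mathcal{E}(\rho),\phi_{d_0})\geq 1-\epsilon$ you should invoke concavity of the fidelity in its \emph{first argument} rather than joint concavity (only $\sqrt{f}$ is jointly concave, which would yield the weaker $(1-\epsilon)^2$), but this is exactly the property underlying the paper's ``it can be verified'' step and the conclusion stands.
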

 
 For commuting operations, we find the following upper bound (proof in Appendix M \cite{sm}):
\begin{thm}\label{thm:dist_comm}
For pure reference states $\{\Phi_d\}$ and RD channel $\Lambda$, let $d_0 = \max\{d\in\mathbb{D}: \mathfrak{f}_\Lambda(\Phi_d)  \geq 2^{-\mathfrak{D}_{H,\Lambda}^\epsilon(\rho)}-2\sqrt{\epsilon}\}$.  Then for any $\mF$,
  \begin{equation}
  \Omega_{D,\mathscr{F}_{{\Lambda},{\rm Comm}}}^{\epsilon}(\rho\rightarrow\{\Phi_d\})\leq \frac{-\log(2^{-\mathfrak{D}_{H,\Lambda}^\epsilon(\rho)}-2\sqrt{\epsilon})}{m_{f,\Lambda}(\Phi_{d_0})}.  \label{eq:opsm2_main} 
 \end{equation}
\end{thm}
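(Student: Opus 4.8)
The plan is to show that whenever a commuting operation $\mathcal{E}\in\mathscr{F}_{{\Lambda},{\rm Comm}}$ distills $\rho$ into the $\epsilon$-ball of a pure reference state $\Phi_d$, the free fidelity $\mathfrak{f}_\Lambda(\Phi_d)$ cannot drop below $2^{-\mathfrak{D}_{H,\Lambda}^\epsilon(\rho)}-2\sqrt{\epsilon}$; this caps the attainable dimension at $d_0$, and rewriting through $m_{f,\Lambda}$ gives the stated bound. Writing $\sigma:=\mathcal{E}(\rho)$, I would first record a monotonicity fact for $\mathfrak{D}_{H,\Lambda}^\epsilon$ under commuting operations. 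Since $\Lambda\circ\mathcal{E}=\mathcal{E}\circ\Lambda$ gives $\Lambda(\sigma)=\mathcal{E}(\Lambda(\rho))$, the data-processing inequality for the hypothesis testing relative entropy under the channel $\mathcal{E}$ yields
\[
\mathfrak{D}_{H,\Lambda}^\epsilon(\sigma)=D_H^\epsilon(\sigma\|\Lambda(\sigma))=D_H^\epsilon\!\big(\mathcal{E}(\rho)\big\|\mathcal{E}(\Lambda(\rho))\big)\leq D_H^\epsilon(\rho\|\Lambda(\rho))=\mathfrak{D}_{H,\Lambda}^\epsilon(\rho).
\]

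The core step is to lower-bound $\mathfrak{D}_{H,\Lambda}^\epsilon(\sigma)$ by $\mathfrak{f}_\Lambda(\Phi_d)$. Because $\Phi_d$ is a pure-state projector it is a feasible test operator in the definition of $D_H^\epsilon(\sigma\|\Lambda(\sigma))$: indeed $\Tr\{\Phi_d\sigma\}=f(\Phi_d,\sigma)\geq 1-\epsilon$ since $\sigma\in\mathcal{B}^\epsilon(\Phi_d)$. Plugging in $P=\Phi_d$ gives $\mathfrak{D}_{H,\Lambda}^\epsilon(\sigma)\geq-\log\Tr\{\Phi_d\Lambda(\sigma)\}=-\log f(\Phi_d,\Lambda(\sigma))$, using again that $\Phi_d$ is pure. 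It then remains to replace $\Lambda(\sigma)$ by $\Lambda(\Phi_d)$ up to an $O(\sqrt{\epsilon})$ error. I would use contractivity of the purified distance under the channel $\Lambda$, so $f(\Lambda(\sigma),\Lambda(\Phi_d))\geq f(\sigma,\Phi_d)\geq 1-\epsilon$; Fuchs--van de Graaf then gives $\norm{\Lambda(\sigma)-\Lambda(\Phi_d)}_1\leq 2\sqrt{\epsilon}$, and since $\norm{\Phi_d}_\infty=1$, the matrix Hölder inequality yields $|f(\Phi_d,\Lambda(\sigma))-\mathfrak{f}_\Lambda(\Phi_d)|\leq 2\sqrt{\epsilon}$. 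Combining, $\mathfrak{D}_{H,\Lambda}^\epsilon(\sigma)\geq-\log(\mathfrak{f}_\Lambda(\Phi_d)+2\sqrt{\epsilon})$.

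Finally I would assemble the pieces. Chaining the two estimates gives $\mathfrak{D}_{H,\Lambda}^\epsilon(\rho)\geq-\log(\mathfrak{f}_\Lambda(\Phi_d)+2\sqrt{\epsilon})$, equivalently $\mathfrak{f}_\Lambda(\Phi_d)\geq 2^{-\mathfrak{D}_{H,\Lambda}^\epsilon(\rho)}-2\sqrt{\epsilon}$, for every $\epsilon$-distillable $d$. Thus each attainable $d$ lies in the set defining $d_0$, so the optimal yield obeys $\Omega_{D,\mathscr{F}_{{\Lambda},{\rm Comm}}}^{\epsilon}(\rho\rightarrow\{\Phi_d\})\leq\log d_0$. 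Since $d_0$ itself satisfies $\mathfrak{f}_\Lambda(\Phi_{d_0})\geq 2^{-\mathfrak{D}_{H,\Lambda}^\epsilon(\rho)}-2\sqrt{\epsilon}$, the identity $m_{f,\Lambda}(\Phi_{d_0})\log d_0=-\log\mathfrak{f}_\Lambda(\Phi_{d_0})$ together with monotonicity of $-\log$ gives
\[
\log d_0=\frac{-\log\mathfrak{f}_\Lambda(\Phi_{d_0})}{m_{f,\Lambda}(\Phi_{d_0})}\leq\frac{-\log(2^{-\mathfrak{D}_{H,\Lambda}^\epsilon(\rho)}-2\sqrt{\epsilon})}{m_{f,\Lambda}(\Phi_{d_0})},
\]
which is exactly the claimed inequality. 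The step I expect to be the main obstacle is the perturbation argument: one must confirm that $P=\Phi_d$ stays feasible after smoothing and, crucially, apply channel contraction and Fuchs--van de Graaf in the correct direction to transfer closeness of $\sigma,\Phi_d$ into closeness of $\Lambda(\sigma),\Lambda(\Phi_d)$ with the right $2\sqrt{\epsilon}$ constant. One should also note that positivity of $2^{-\mathfrak{D}_{H,\Lambda}^\epsilon(\rho)}-2\sqrt{\epsilon}$, needed for the final logarithm and for $d_0$ to be well-defined, is implicit; in the degenerate regime where it is nonpositive the fidelity constraint is vacuous and the statement requires separate (trivial) handling.
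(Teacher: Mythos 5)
Your proposal is correct and follows essentially the same route as the paper's proof in Appendix M: data processing for $D_H^\epsilon$ combined with the commutation relation, the feasibility of $P=\Phi_d$ as a test operator, and the transfer from $\Lambda(\mathcal{E}(\rho))$ to $\Lambda(\Phi_d)$ via contractivity of the purified distance, Fuchs--van de Graaf, and H\"older with the $2\sqrt{\epsilon}$ constant. The only differences are organizational (you package the first two steps as a monotonicity lemma for $\mathfrak{D}_{H,\Lambda}^\epsilon$) plus your explicit remark on the positivity of $2^{-\mathfrak{D}_{H,\Lambda}^\epsilon(\rho)}-2\sqrt{\epsilon}$, which the paper leaves implicit.
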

For now we are only able to obtain lower bounds for some special notions of commuting operations (see Appendix M \cite{sm}).

Moreover, in Appendix N \cite{sm}, we instead consider distillation with error tolerance on the input, for which a greater collection of bounds in slightly different forms (using e.g.~state-smoothing $\mathfrak{D}_{\min}^\epsilon$ or continuity bounds) can be established.

These results allow us to obtain nontrivial bounds for resource trading in specific theories by computing the modification coefficients (which can be efficiently done in many cases \cite{vidal_1999,Steiner2003robustness,Harrow2003robustness,jafarizadeh_2005,howard_2017,ringbauer_2018,johnston_2018}). For example, the golden coefficients of coherence, entanglement and purity theories induce bounds directly given by the smooth resource measures without modification, which is consistent with previous results \cite{brandaodatta,coh_dilution,PhysRevLett.121.010401,Streltsov_2018}. 
As a more informative example, we briefly remark on the theory of magic states.
It can be inferred from recent results in \cite{bravyi2018simulation} that $m_f(\Phi)=m_{\min}(\Phi)=m_{\max}(\Phi)$ holds for the so-called ``Clifford magic states'' $\Phi$, and $m(\Phi_2^{\otimes m})= m(\Phi_2)$ where $m\in\{m_f, m_{\min}, m_{\max}\}$ for any qubit pure state $\Phi_2$  \footnote{Logarithm of the stabilizer extent introduced in \cite{bravyi2018simulation} is equivalent to the max-relative entropy of magic \cite{regula}.}  (meanwhile, $m_{LR}$ is generically larger and non-constant).  
This in particular is relevant to the conventional magic state distillation where the reference states are copies of $\ket{T}:=(\ket{0}+e^{i\pi/4}\ket{1})/\sqrt{2}$. 
By using $m(T^{\otimes m})=\log(4-2\sqrt{2})\approx 0.23$ as can be easily verified and the known values of $m_{LR}(T^{\otimes m})$ \cite{howard_2017,Heinrich2019robustnessofmagic}, one can obtain several bounds for manipulating multiple $T$-states/gates under all stabilizer-preserving operations, which complements the recent results in Ref.~\cite{wang_wilde_su} for a slightly different setup. 
We leave extended discussions on the implications to magic states and quantum computation for follow-up works.

We also note that the resource measures considered in this work often admit efficient SDP formulation \cite{howard_2017,ringbauer_2018} as well as analytical expressions \cite{vidal_1999,Steiner2003robustness,Harrow2003robustness,jafarizadeh_2005,howard_2017,johnston_2018}, which make our bounds of practical use in many important circumstances.


\section{Concluding remarks}
This work establishes general bounds that relate the optimal rates of typical one-shot resource formation and distillation tasks to resource monotones based on one-shot divergences and log-robustness, without specifying the resource theory.   
We introduce the modification coefficients to take into account the resourcefulness of the currency, and find that they exhibit the remarkable collapsing property for a simple notion of max-resource states.  
We examined two important classes of free operations, namely the resource non-generating operations and operations that commute with the RD map.  

Our results not only provide nontrivial and practically useful bounds for these tasks, but also characterize the resourcefulness of quantum states defined in general resource theories in terms of direct one-shot resource conversion, providing general operational meanings to the resource measures discussed in this work. 
They are potentially applicable to a large class of theories beyond the specific ones studied earlier (e.g.~entanglement, coherence, thermal non-equilibrium), allowing one to obtain nontrivial bounds for optimal resource manipulation in specific contexts.
Our results also complement the studies on the complete set of monotones \cite{PhysRevA.95.062314,girard_2017,gour_2018_majorization,2019arXiv190108127T}, which provide the necessary and sufficient conditions for state transformations between two states under free operations.  
A complete set of monotones generally consists of infinite number of resource monotones \cite{COECKE201659}, which makes the computation impractical. Therefore, the simpler expressions obtained in this work would give clearer insights into resource manipulation tasks.


For future work, it would be intriguing to further investigate the achievability of these fundamental limits (especially for distillation), apply this framework to specific contexts such as magic states and superposition to gain new insights into these theories, explore the connections and implications to the asymptotic theory, and extend the ideas to resource theory settings beyond quantum states, in accordance with \cite{2018arXiv181202572L,2018arXiv180802607G,2019arXiv190108127T,2015arXiv151108818D,COECKE201659,fritz_2017,LiuWinter2018,LiuYuan:channel,uola2018quantifying,Oszmaniec2019,rosset_2018}.

\smallskip

\begin{acknowledgments}
We thank Mil\'an Mosonyi for discussions on R\'enyi divergences and Bartosz Regula for pointing out an error on Fig.~2 in Supplemental Material.
ZWL acknowledges support by AFOSR, ARO, and Perimeter Institute for Theoretical Physics.  Research at Perimeter Institute is supported by the Government
of Canada through Industry Canada and by the Province of Ontario through the Ministry
of Research and Innovation.
KB acknowledges the Templeton Religion Trust for the partial support of this research under grant TRT0159
and  Zhejiang University for the support of an Academic Award for
Outstanding Doctoral Candidates.
RT acknowledges the support
of NSF, ARO, IARPA, and the Takenaka Scholarship Foundation.
\end{acknowledgments}

\smallskip

\emph{Note added.}  Recently, we became aware of a related work by M.\ K.\ Vijayan et al.\ \cite{Vijayan19}.

\bibliography{one-shot}

\begin{thebibliography}{96}%
\makeatletter
\providecommand \@ifxundefined [1]{%
 \@ifx{#1\undefined}
}%
\providecommand \@ifnum [1]{%
 \ifnum #1\expandafter \@firstoftwo
 \else \expandafter \@secondoftwo
 \fi
}%
\providecommand \@ifx [1]{%
 \ifx #1\expandafter \@firstoftwo
 \else \expandafter \@secondoftwo
 \fi
}%
\providecommand \natexlab [1]{#1}%
\providecommand \enquote  [1]{``#1''}%
\providecommand \bibnamefont  [1]{#1}%
\providecommand \bibfnamefont [1]{#1}%
\providecommand \citenamefont [1]{#1}%
\providecommand \href@noop [0]{\@secondoftwo}%
\providecommand \href [0]{\begingroup \@sanitize@url \@href}%
\providecommand \@href[1]{\@@startlink{#1}\@@href}%
\providecommand \@@href[1]{\endgroup#1\@@endlink}%
\providecommand \@sanitize@url [0]{\catcode `\\12\catcode `\$12\catcode
  `\&12\catcode `\#12\catcode `\^12\catcode `\_12\catcode `\%12\relax}%
\providecommand \@@startlink[1]{}%
\providecommand \@@endlink[0]{}%
\providecommand \url  [0]{\begingroup\@sanitize@url \@url }%
\providecommand \@url [1]{\endgroup\@href {#1}{\urlprefix }}%
\providecommand \urlprefix  [0]{URL }%
\providecommand \Eprint [0]{\href }%
\providecommand \doibase [0]{http://dx.doi.org/}%
\providecommand \selectlanguage [0]{\@gobble}%
\providecommand \bibinfo  [0]{\@secondoftwo}%
\providecommand \bibfield  [0]{\@secondoftwo}%
\providecommand \translation [1]{[#1]}%
\providecommand \BibitemOpen [0]{}%
\providecommand \bibitemStop [0]{}%
\providecommand \bibitemNoStop [0]{.\EOS\space}%
\providecommand \EOS [0]{\spacefactor3000\relax}%
\providecommand \BibitemShut  [1]{\csname bibitem#1\endcsname}%
\let\auto@bib@innerbib\@empty
\bibitem [{\citenamefont {Chitambar}\ and\ \citenamefont
  {Gour}(2019)}]{qrtreview}%
  \BibitemOpen
  \bibfield  {author} {\bibinfo {author} {\bibfnamefont {Eric}\ \bibnamefont
  {Chitambar}}\ and\ \bibinfo {author} {\bibfnamefont {Gilad}\ \bibnamefont
  {Gour}},\ }\bibfield  {title} {\enquote {\bibinfo {title} {Quantum resource
  theories},}\ }\href {\doibase 10.1103/RevModPhys.91.025001} {\bibfield
  {journal} {\bibinfo  {journal} {Rev. Mod. Phys.}\ }\textbf {\bibinfo {volume}
  {91}},\ \bibinfo {pages} {025001} (\bibinfo {year} {2019})}\BibitemShut
  {NoStop}%
\bibitem [{\citenamefont {Plenio}\ and\ \citenamefont
  {Virmani}(2007)}]{virmani}%
  \BibitemOpen
  \bibfield  {author} {\bibinfo {author} {\bibfnamefont {Martin~B.}\
  \bibnamefont {Plenio}}\ and\ \bibinfo {author} {\bibfnamefont {Shashank}\
  \bibnamefont {Virmani}},\ }\bibfield  {title} {\enquote {\bibinfo {title} {An
  introduction to entanglement measures},}\ }\href
  {http://dl.acm.org/citation.cfm?id=2011706.2011707} {\bibfield  {journal}
  {\bibinfo  {journal} {Quantum Info. Comput.}\ }\textbf {\bibinfo {volume}
  {7}},\ \bibinfo {pages} {1--51} (\bibinfo {year} {2007})}\BibitemShut
  {NoStop}%
\bibitem [{\citenamefont {Horodecki}\ \emph {et~al.}(2009)\citenamefont
  {Horodecki}, \citenamefont {Horodecki}, \citenamefont {Horodecki},\ and\
  \citenamefont {Horodecki}}]{entrevrt}%
  \BibitemOpen
  \bibfield  {author} {\bibinfo {author} {\bibfnamefont {Ryszard}\ \bibnamefont
  {Horodecki}}, \bibinfo {author} {\bibfnamefont {Pawe\l{}}\ \bibnamefont
  {Horodecki}}, \bibinfo {author} {\bibfnamefont {Micha\l{}}\ \bibnamefont
  {Horodecki}}, \ and\ \bibinfo {author} {\bibfnamefont {Karol}\ \bibnamefont
  {Horodecki}},\ }\bibfield  {title} {\enquote {\bibinfo {title} {Quantum
  entanglement},}\ }\href {\doibase 10.1103/RevModPhys.81.865} {\bibfield
  {journal} {\bibinfo  {journal} {Rev. Mod. Phys.}\ }\textbf {\bibinfo {volume}
  {81}},\ \bibinfo {pages} {865--942} (\bibinfo {year} {2009})}\BibitemShut
  {NoStop}%
\bibitem [{\citenamefont {\AA{}berg}(2006)}]{Aberg2006}%
  \BibitemOpen
  \bibfield  {author} {\bibinfo {author} {\bibfnamefont {J.}~\bibnamefont
  {\AA{}berg}},\ }\bibfield  {title} {\enquote {\bibinfo {title} {{Quantifying
  Superposition}},}\ }\href@noop {} {\bibfield  {journal} {\bibinfo  {journal}
  {eprint arXiv:quant-ph/0612146}\ } (\bibinfo {year} {2006})},\ \Eprint
  {http://arxiv.org/abs/quant-ph/0612146} {quant-ph/0612146} \BibitemShut
  {NoStop}%
\bibitem [{\citenamefont {Baumgratz}\ \emph {et~al.}(2014)\citenamefont
  {Baumgratz}, \citenamefont {Cramer},\ and\ \citenamefont
  {Plenio}}]{PhysRevLett.113.140401}%
  \BibitemOpen
  \bibfield  {author} {\bibinfo {author} {\bibfnamefont {T.}~\bibnamefont
  {Baumgratz}}, \bibinfo {author} {\bibfnamefont {M.}~\bibnamefont {Cramer}}, \
  and\ \bibinfo {author} {\bibfnamefont {M.~B.}\ \bibnamefont {Plenio}},\
  }\bibfield  {title} {\enquote {\bibinfo {title} {Quantifying coherence},}\
  }\href {\doibase 10.1103/PhysRevLett.113.140401} {\bibfield  {journal}
  {\bibinfo  {journal} {Phys. Rev. Lett.}\ }\textbf {\bibinfo {volume} {113}},\
  \bibinfo {pages} {140401} (\bibinfo {year} {2014})}\BibitemShut {NoStop}%
\bibitem [{\citenamefont {Streltsov}\ \emph
  {et~al.}(2017{\natexlab{a}})\citenamefont {Streltsov}, \citenamefont
  {Adesso},\ and\ \citenamefont {Plenio}}]{Streltsov2017rev}%
  \BibitemOpen
  \bibfield  {author} {\bibinfo {author} {\bibfnamefont {Alexander}\
  \bibnamefont {Streltsov}}, \bibinfo {author} {\bibfnamefont {Gerardo}\
  \bibnamefont {Adesso}}, \ and\ \bibinfo {author} {\bibfnamefont {Martin~B.}\
  \bibnamefont {Plenio}},\ }\bibfield  {title} {\enquote {\bibinfo {title}
  {Colloquium: Quantum coherence as a resource},}\ }\href {\doibase
  10.1103/RevModPhys.89.041003} {\bibfield  {journal} {\bibinfo  {journal}
  {Rev. Mod. Phys.}\ }\textbf {\bibinfo {volume} {89}},\ \bibinfo {pages}
  {041003} (\bibinfo {year} {2017}{\natexlab{a}})}\BibitemShut {NoStop}%
\bibitem [{\citenamefont {Theurer}\ \emph {et~al.}(2017)\citenamefont
  {Theurer}, \citenamefont {Killoran}, \citenamefont {Egloff},\ and\
  \citenamefont {Plenio}}]{Theurer2017}%
  \BibitemOpen
  \bibfield  {author} {\bibinfo {author} {\bibfnamefont {T.}~\bibnamefont
  {Theurer}}, \bibinfo {author} {\bibfnamefont {N.}~\bibnamefont {Killoran}},
  \bibinfo {author} {\bibfnamefont {D.}~\bibnamefont {Egloff}}, \ and\ \bibinfo
  {author} {\bibfnamefont {M.~B.}\ \bibnamefont {Plenio}},\ }\bibfield  {title}
  {\enquote {\bibinfo {title} {Resource theory of superposition},}\ }\href
  {\doibase 10.1103/PhysRevLett.119.230401} {\bibfield  {journal} {\bibinfo
  {journal} {Phys. Rev. Lett.}\ }\textbf {\bibinfo {volume} {119}},\ \bibinfo
  {pages} {230401} (\bibinfo {year} {2017})}\BibitemShut {NoStop}%
\bibitem [{\citenamefont {Veitch}\ \emph {et~al.}(2014)\citenamefont {Veitch},
  \citenamefont {Mousavian}, \citenamefont {Gottesman},\ and\ \citenamefont
  {Emerson}}]{magic}%
  \BibitemOpen
  \bibfield  {author} {\bibinfo {author} {\bibfnamefont {Victor}\ \bibnamefont
  {Veitch}}, \bibinfo {author} {\bibfnamefont {S~A~Hamed}\ \bibnamefont
  {Mousavian}}, \bibinfo {author} {\bibfnamefont {Daniel}\ \bibnamefont
  {Gottesman}}, \ and\ \bibinfo {author} {\bibfnamefont {Joseph}\ \bibnamefont
  {Emerson}},\ }\bibfield  {title} {\enquote {\bibinfo {title} {The resource
  theory of stabilizer quantum computation},}\ }\href
  {http://stacks.iop.org/1367-2630/16/i=1/a=013009} {\bibfield  {journal}
  {\bibinfo  {journal} {New J. Phys.}\ }\textbf {\bibinfo {volume} {16}},\
  \bibinfo {pages} {013009} (\bibinfo {year} {2014})}\BibitemShut {NoStop}%
\bibitem [{\citenamefont {Howard}\ and\ \citenamefont
  {Campbell}(2017)}]{howard_2017}%
  \BibitemOpen
  \bibfield  {author} {\bibinfo {author} {\bibfnamefont {Mark}\ \bibnamefont
  {Howard}}\ and\ \bibinfo {author} {\bibfnamefont {Earl}\ \bibnamefont
  {Campbell}},\ }\bibfield  {title} {\enquote {\bibinfo {title} {Application of
  a {{Resource Theory}} for {{Magic States}} to {{Fault}}-{{Tolerant Quantum
  Computing}}},}\ }\href {\doibase 10.1103/PhysRevLett.118.090501} {\bibfield
  {journal} {\bibinfo  {journal} {Phys. Rev. Lett.}\ }\textbf {\bibinfo
  {volume} {118}},\ \bibinfo {pages} {090501} (\bibinfo {year}
  {2017})}\BibitemShut {NoStop}%
\bibitem [{\citenamefont {Gour}\ and\ \citenamefont {Spekkens}(2008)}]{frame}%
  \BibitemOpen
  \bibfield  {author} {\bibinfo {author} {\bibfnamefont {Gilad}\ \bibnamefont
  {Gour}}\ and\ \bibinfo {author} {\bibfnamefont {Robert~W}\ \bibnamefont
  {Spekkens}},\ }\bibfield  {title} {\enquote {\bibinfo {title} {The resource
  theory of quantum reference frames: manipulations and monotones},}\ }\href
  {http://stacks.iop.org/1367-2630/10/i=3/a=033023} {\bibfield  {journal}
  {\bibinfo  {journal} {New J. Phys.}\ }\textbf {\bibinfo {volume} {10}},\
  \bibinfo {pages} {033023} (\bibinfo {year} {2008})}\BibitemShut {NoStop}%
\bibitem [{\citenamefont {Marvian}\ and\ \citenamefont
  {Spekkens}(2016)}]{speakable}%
  \BibitemOpen
  \bibfield  {author} {\bibinfo {author} {\bibfnamefont {Iman}\ \bibnamefont
  {Marvian}}\ and\ \bibinfo {author} {\bibfnamefont {Robert~W.}\ \bibnamefont
  {Spekkens}},\ }\bibfield  {title} {\enquote {\bibinfo {title} {How to
  quantify coherence: Distinguishing speakable and unspeakable notions},}\
  }\href {\doibase 10.1103/PhysRevA.94.052324} {\bibfield  {journal} {\bibinfo
  {journal} {Phys. Rev. A}\ }\textbf {\bibinfo {volume} {94}},\ \bibinfo
  {pages} {052324} (\bibinfo {year} {2016})}\BibitemShut {NoStop}%
\bibitem [{\citenamefont {Horodecki}\ \emph
  {et~al.}(2003{\natexlab{a}})\citenamefont {Horodecki}, \citenamefont
  {Horodecki},\ and\ \citenamefont {Oppenheim}}]{Horodecki2003purity}%
  \BibitemOpen
  \bibfield  {author} {\bibinfo {author} {\bibfnamefont {Micha\l{}}\
  \bibnamefont {Horodecki}}, \bibinfo {author} {\bibfnamefont {Pawe\l{}}\
  \bibnamefont {Horodecki}}, \ and\ \bibinfo {author} {\bibfnamefont
  {Jonathan}\ \bibnamefont {Oppenheim}},\ }\bibfield  {title} {\enquote
  {\bibinfo {title} {Reversible transformations from pure to mixed states and
  the unique measure of information},}\ }\href {\doibase
  10.1103/PhysRevA.67.062104} {\bibfield  {journal} {\bibinfo  {journal} {Phys.
  Rev. A}\ }\textbf {\bibinfo {volume} {67}},\ \bibinfo {pages} {062104}
  (\bibinfo {year} {2003}{\natexlab{a}})}\BibitemShut {NoStop}%
\bibitem [{\citenamefont {Chiribella}\ and\ \citenamefont
  {Scandolo}(2017)}]{Chiribella_2017}%
  \BibitemOpen
  \bibfield  {author} {\bibinfo {author} {\bibfnamefont {Giulio}\ \bibnamefont
  {Chiribella}}\ and\ \bibinfo {author} {\bibfnamefont {Carlo~Maria}\
  \bibnamefont {Scandolo}},\ }\bibfield  {title} {\enquote {\bibinfo {title}
  {Microcanonical thermodynamics in general physical theories},}\ }\href
  {\doibase 10.1088/1367-2630/aa91c7} {\bibfield  {journal} {\bibinfo
  {journal} {New Journal of Physics}\ }\textbf {\bibinfo {volume} {19}},\
  \bibinfo {pages} {123043} (\bibinfo {year} {2017})}\BibitemShut {NoStop}%
\bibitem [{\citenamefont {Horodecki}\ and\ \citenamefont
  {Oppenheim}(2013{\natexlab{a}})}]{HorodeckiOppenheim13}%
  \BibitemOpen
  \bibfield  {author} {\bibinfo {author} {\bibfnamefont {Micha{\l}}\
  \bibnamefont {Horodecki}}\ and\ \bibinfo {author} {\bibfnamefont {Jonathan}\
  \bibnamefont {Oppenheim}},\ }\bibfield  {title} {\enquote {\bibinfo {title}
  {Fundamental limitations for quantum and nanoscale thermodynamics},}\ }\href
  {https://doi.org/10.1038/ncomms3059} {\bibfield  {journal} {\bibinfo
  {journal} {Nature Communications}\ }\textbf {\bibinfo {volume} {4}},\
  \bibinfo {pages} {2059} (\bibinfo {year} {2013}{\natexlab{a}})}\BibitemShut
  {NoStop}%
\bibitem [{\citenamefont {Brand\~ao}\ \emph {et~al.}(2013)\citenamefont
  {Brand\~ao}, \citenamefont {Horodecki}, \citenamefont {Oppenheim},
  \citenamefont {Renes},\ and\ \citenamefont {Spekkens}}]{thermort}%
  \BibitemOpen
  \bibfield  {author} {\bibinfo {author} {\bibfnamefont {Fernando G. S.~L.}\
  \bibnamefont {Brand\~ao}}, \bibinfo {author} {\bibfnamefont {Micha\l{}}\
  \bibnamefont {Horodecki}}, \bibinfo {author} {\bibfnamefont {Jonathan}\
  \bibnamefont {Oppenheim}}, \bibinfo {author} {\bibfnamefont {Joseph~M.}\
  \bibnamefont {Renes}}, \ and\ \bibinfo {author} {\bibfnamefont {Robert~W.}\
  \bibnamefont {Spekkens}},\ }\bibfield  {title} {\enquote {\bibinfo {title}
  {Resource theory of quantum states out of thermal equilibrium},}\ }\href
  {\doibase 10.1103/PhysRevLett.111.250404} {\bibfield  {journal} {\bibinfo
  {journal} {Phys. Rev. Lett.}\ }\textbf {\bibinfo {volume} {111}},\ \bibinfo
  {pages} {250404} (\bibinfo {year} {2013})}\BibitemShut {NoStop}%
\bibitem [{\citenamefont {Brand{\~a}o}\ \emph {et~al.}(2015)\citenamefont
  {Brand{\~a}o}, \citenamefont {Horodecki}, \citenamefont {Ng}, \citenamefont
  {Oppenheim},\ and\ \citenamefont {Wehner}}]{Brandao_secondlaws2015}%
  \BibitemOpen
  \bibfield  {author} {\bibinfo {author} {\bibfnamefont {Fernando}\
  \bibnamefont {Brand{\~a}o}}, \bibinfo {author} {\bibfnamefont {Micha{\l}}\
  \bibnamefont {Horodecki}}, \bibinfo {author} {\bibfnamefont {Nelly}\
  \bibnamefont {Ng}}, \bibinfo {author} {\bibfnamefont {Jonathan}\ \bibnamefont
  {Oppenheim}}, \ and\ \bibinfo {author} {\bibfnamefont {Stephanie}\
  \bibnamefont {Wehner}},\ }\bibfield  {title} {\enquote {\bibinfo {title} {The
  second laws of quantum thermodynamics},}\ }\href {\doibase
  10.1073/pnas.1411728112} {\bibfield  {journal} {\bibinfo  {journal}
  {Proceedings of the National Academy of Sciences}\ }\textbf {\bibinfo
  {volume} {112}},\ \bibinfo {pages} {3275--3279} (\bibinfo {year}
  {2015})}\BibitemShut {NoStop}%
\bibitem [{\citenamefont {Genoni}\ \emph {et~al.}(2008)\citenamefont {Genoni},
  \citenamefont {Paris},\ and\ \citenamefont {Banaszek}}]{Genoni2008}%
  \BibitemOpen
  \bibfield  {author} {\bibinfo {author} {\bibfnamefont {Marco~G.}\
  \bibnamefont {Genoni}}, \bibinfo {author} {\bibfnamefont {Matteo G.~A.}\
  \bibnamefont {Paris}}, \ and\ \bibinfo {author} {\bibfnamefont {Konrad}\
  \bibnamefont {Banaszek}},\ }\bibfield  {title} {\enquote {\bibinfo {title}
  {Quantifying the non-{G}aussian character of a quantum state by quantum
  relative entropy},}\ }\href {\doibase 10.1103/PhysRevA.78.060303} {\bibfield
  {journal} {\bibinfo  {journal} {Phys. Rev. A}\ }\textbf {\bibinfo {volume}
  {78}},\ \bibinfo {pages} {060303} (\bibinfo {year} {2008})}\BibitemShut
  {NoStop}%
\bibitem [{\citenamefont {Takagi}\ and\ \citenamefont
  {Zhuang}(2018)}]{Takagi2018}%
  \BibitemOpen
  \bibfield  {author} {\bibinfo {author} {\bibfnamefont {Ryuji}\ \bibnamefont
  {Takagi}}\ and\ \bibinfo {author} {\bibfnamefont {Quntao}\ \bibnamefont
  {Zhuang}},\ }\bibfield  {title} {\enquote {\bibinfo {title} {Convex resource
  theory of non-gaussianity},}\ }\href {\doibase 10.1103/PhysRevA.97.062337}
  {\bibfield  {journal} {\bibinfo  {journal} {Phys. Rev. A}\ }\textbf {\bibinfo
  {volume} {97}},\ \bibinfo {pages} {062337} (\bibinfo {year}
  {2018})}\BibitemShut {NoStop}%
\bibitem [{\citenamefont {Albarelli}\ \emph {et~al.}(2018)\citenamefont
  {Albarelli}, \citenamefont {Genoni}, \citenamefont {Paris},\ and\
  \citenamefont {Ferraro}}]{albarelli2018resource}%
  \BibitemOpen
  \bibfield  {author} {\bibinfo {author} {\bibfnamefont {Francesco}\
  \bibnamefont {Albarelli}}, \bibinfo {author} {\bibfnamefont {Marco~G.}\
  \bibnamefont {Genoni}}, \bibinfo {author} {\bibfnamefont {Matteo G.~A.}\
  \bibnamefont {Paris}}, \ and\ \bibinfo {author} {\bibfnamefont {Alessandro}\
  \bibnamefont {Ferraro}},\ }\bibfield  {title} {\enquote {\bibinfo {title}
  {Resource theory of quantum non-{G}aussianity and wigner negativity},}\
  }\href {\doibase 10.1103/PhysRevA.98.052350} {\bibfield  {journal} {\bibinfo
  {journal} {Phys. Rev. A}\ }\textbf {\bibinfo {volume} {98}},\ \bibinfo
  {pages} {052350} (\bibinfo {year} {2018})}\BibitemShut {NoStop}%
\bibitem [{\citenamefont {Horodecki}\ and\ \citenamefont
  {Oppenheim}(2013{\natexlab{b}})}]{Horodecki2013quantumness}%
  \BibitemOpen
  \bibfield  {author} {\bibinfo {author} {\bibfnamefont {Michal}\ \bibnamefont
  {Horodecki}}\ and\ \bibinfo {author} {\bibfnamefont {Jonathan}\ \bibnamefont
  {Oppenheim}},\ }\bibfield  {title} {\enquote {\bibinfo {title}
  {({Q}uantumness in the context of) resource theories},}\ }\href {\doibase
  10.1142/S0217979213450197} {\bibfield  {journal} {\bibinfo  {journal}
  {International Journal of Modern Physics B}\ }\textbf {\bibinfo {volume}
  {27}},\ \bibinfo {pages} {1345019} (\bibinfo {year}
  {2013}{\natexlab{b}})}\BibitemShut {NoStop}%
\bibitem [{\citenamefont {Brand\~ao}\ and\ \citenamefont {Gour}(2015)}]{bg}%
  \BibitemOpen
  \bibfield  {author} {\bibinfo {author} {\bibfnamefont {Fernando G. S.~L.}\
  \bibnamefont {Brand\~ao}}\ and\ \bibinfo {author} {\bibfnamefont {Gilad}\
  \bibnamefont {Gour}},\ }\bibfield  {title} {\enquote {\bibinfo {title}
  {Reversible framework for quantum resource theories},}\ }\href {\doibase
  10.1103/PhysRevLett.115.070503} {\bibfield  {journal} {\bibinfo  {journal}
  {Phys. Rev. Lett.}\ }\textbf {\bibinfo {volume} {115}},\ \bibinfo {pages}
  {070503} (\bibinfo {year} {2015})}\BibitemShut {NoStop}%
\bibitem [{\citenamefont {{del Rio}}\ \emph {et~al.}(2015)\citenamefont {{del
  Rio}}, \citenamefont {{Kraemer}},\ and\ \citenamefont
  {{Renner}}}]{2015arXiv151108818D}%
  \BibitemOpen
  \bibfield  {author} {\bibinfo {author} {\bibfnamefont {L.}~\bibnamefont {{del
  Rio}}}, \bibinfo {author} {\bibfnamefont {L.}~\bibnamefont {{Kraemer}}}, \
  and\ \bibinfo {author} {\bibfnamefont {R.}~\bibnamefont {{Renner}}},\
  }\bibfield  {title} {\enquote {\bibinfo {title} {{Resource theories of
  knowledge}},}\ }\href@noop {} {\bibfield  {journal} {\bibinfo  {journal}
  {ArXiv e-prints}\ } (\bibinfo {year} {2015})},\ \Eprint
  {http://arxiv.org/abs/1511.08818} {arXiv:1511.08818 [quant-ph]} \BibitemShut
  {NoStop}%
\bibitem [{\citenamefont {Coecke}\ \emph {et~al.}(2016)\citenamefont {Coecke},
  \citenamefont {Fritz},\ and\ \citenamefont {Spekkens}}]{COECKE201659}%
  \BibitemOpen
  \bibfield  {author} {\bibinfo {author} {\bibfnamefont {Bob}\ \bibnamefont
  {Coecke}}, \bibinfo {author} {\bibfnamefont {Tobias}\ \bibnamefont {Fritz}},
  \ and\ \bibinfo {author} {\bibfnamefont {Robert~W.}\ \bibnamefont
  {Spekkens}},\ }\bibfield  {title} {\enquote {\bibinfo {title} {A mathematical
  theory of resources},}\ }\href {\doibase
  https://doi.org/10.1016/j.ic.2016.02.008} {\bibfield  {journal} {\bibinfo
  {journal} {Information and Computation}\ }\textbf {\bibinfo {volume} {250}},\
  \bibinfo {pages} {59 -- 86} (\bibinfo {year} {2016})},\ \bibinfo {note}
  {{Q}uantum Physics and Logic}\BibitemShut {NoStop}%
\bibitem [{\citenamefont {Fritz}(2017)}]{fritz_2017}%
  \BibitemOpen
  \bibfield  {author} {\bibinfo {author} {\bibfnamefont {Tobias}\ \bibnamefont
  {Fritz}},\ }\bibfield  {title} {\enquote {\bibinfo {title} {Resource
  convertibility and ordered commutative monoids},}\ }\href {\doibase
  10.1017/S0960129515000444} {\bibfield  {journal} {\bibinfo  {journal}
  {Mathematical Structures in Computer Science}\ }\textbf {\bibinfo {volume}
  {27}},\ \bibinfo {pages} {850--938} (\bibinfo {year} {2017})}\BibitemShut
  {NoStop}%
\bibitem [{\citenamefont {Liu}\ \emph {et~al.}(2017)\citenamefont {Liu},
  \citenamefont {Hu},\ and\ \citenamefont {Lloyd}}]{rdm}%
  \BibitemOpen
  \bibfield  {author} {\bibinfo {author} {\bibfnamefont {Zi-Wen}\ \bibnamefont
  {Liu}}, \bibinfo {author} {\bibfnamefont {Xueyuan}\ \bibnamefont {Hu}}, \
  and\ \bibinfo {author} {\bibfnamefont {Seth}\ \bibnamefont {Lloyd}},\
  }\bibfield  {title} {\enquote {\bibinfo {title} {Resource destroying maps},}\
  }\href {\doibase 10.1103/PhysRevLett.118.060502} {\bibfield  {journal}
  {\bibinfo  {journal} {Phys. Rev. Lett.}\ }\textbf {\bibinfo {volume} {118}},\
  \bibinfo {pages} {060502} (\bibinfo {year} {2017})}\BibitemShut {NoStop}%
\bibitem [{\citenamefont {Gour}(2017)}]{PhysRevA.95.062314}%
  \BibitemOpen
  \bibfield  {author} {\bibinfo {author} {\bibfnamefont {Gilad}\ \bibnamefont
  {Gour}},\ }\bibfield  {title} {\enquote {\bibinfo {title} {Quantum resource
  theories in the single-shot regime},}\ }\href {\doibase
  10.1103/PhysRevA.95.062314} {\bibfield  {journal} {\bibinfo  {journal} {Phys.
  Rev. A}\ }\textbf {\bibinfo {volume} {95}},\ \bibinfo {pages} {062314}
  (\bibinfo {year} {2017})}\BibitemShut {NoStop}%
\bibitem [{\citenamefont {{Sparaciari}}\ \emph {et~al.}(2018)\citenamefont
  {{Sparaciari}}, \citenamefont {{del Rio}}, \citenamefont {{Scandolo}},
  \citenamefont {{Faist}},\ and\ \citenamefont
  {{Oppenheim}}}]{2018arXiv180604937S}%
  \BibitemOpen
  \bibfield  {author} {\bibinfo {author} {\bibfnamefont {Carlo}\ \bibnamefont
  {{Sparaciari}}}, \bibinfo {author} {\bibfnamefont {Lidia}\ \bibnamefont {{del
  Rio}}}, \bibinfo {author} {\bibfnamefont {Carlo~Maria}\ \bibnamefont
  {{Scandolo}}}, \bibinfo {author} {\bibfnamefont {Philippe}\ \bibnamefont
  {{Faist}}}, \ and\ \bibinfo {author} {\bibfnamefont {Jonathan}\ \bibnamefont
  {{Oppenheim}}},\ }\bibfield  {title} {\enquote {\bibinfo {title} {{The first
  law of general quantum resource theories}},}\ }\href@noop {} {\bibfield
  {journal} {\bibinfo  {journal} {arXiv e-prints}\ ,\ \bibinfo {eid}
  {arXiv:1806.04937}} (\bibinfo {year} {2018})},\ \Eprint
  {http://arxiv.org/abs/1806.04937} {arXiv:1806.04937 [quant-ph]} \BibitemShut
  {NoStop}%
\bibitem [{\citenamefont {{Gour}}(2018)}]{2018arXiv180802607G}%
  \BibitemOpen
  \bibfield  {author} {\bibinfo {author} {\bibfnamefont {Gilad}\ \bibnamefont
  {{Gour}}},\ }\bibfield  {title} {\enquote {\bibinfo {title} {{Comparison of
  Quantum Channels with Superchannels}},}\ }\href@noop {} {\bibfield  {journal}
  {\bibinfo  {journal} {arXiv e-prints}\ ,\ \bibinfo {eid} {arXiv:1808.02607}}
  (\bibinfo {year} {2018})},\ \Eprint {http://arxiv.org/abs/1808.02607}
  {arXiv:1808.02607 [quant-ph]} \BibitemShut {NoStop}%
\bibitem [{\citenamefont {Takagi}\ \emph {et~al.}(2019)\citenamefont {Takagi},
  \citenamefont {Regula}, \citenamefont {Bu}, \citenamefont {Liu},\ and\
  \citenamefont {Adesso}}]{2018arXiv180901672T}%
  \BibitemOpen
  \bibfield  {author} {\bibinfo {author} {\bibfnamefont {Ryuji}\ \bibnamefont
  {Takagi}}, \bibinfo {author} {\bibfnamefont {Bartosz}\ \bibnamefont
  {Regula}}, \bibinfo {author} {\bibfnamefont {Kaifeng}\ \bibnamefont {Bu}},
  \bibinfo {author} {\bibfnamefont {Zi-Wen}\ \bibnamefont {Liu}}, \ and\
  \bibinfo {author} {\bibfnamefont {Gerardo}\ \bibnamefont {Adesso}},\
  }\bibfield  {title} {\enquote {\bibinfo {title} {Operational advantage of
  quantum resources in subchannel discrimination},}\ }\href {\doibase
  10.1103/PhysRevLett.122.140402} {\bibfield  {journal} {\bibinfo  {journal}
  {Phys. Rev. Lett.}\ }\textbf {\bibinfo {volume} {122}},\ \bibinfo {pages}
  {140402} (\bibinfo {year} {2019})}\BibitemShut {NoStop}%
\bibitem [{\citenamefont {Anshu}\ \emph {et~al.}(2018)\citenamefont {Anshu},
  \citenamefont {Hsieh},\ and\ \citenamefont {Jain}}]{anshu_2018catalytic}%
  \BibitemOpen
  \bibfield  {author} {\bibinfo {author} {\bibfnamefont {Anurag}\ \bibnamefont
  {Anshu}}, \bibinfo {author} {\bibfnamefont {Min-Hsiu}\ \bibnamefont {Hsieh}},
  \ and\ \bibinfo {author} {\bibfnamefont {Rahul}\ \bibnamefont {Jain}},\
  }\bibfield  {title} {\enquote {\bibinfo {title} {Quantifying resources in
  general resource theory with catalysts},}\ }\href {\doibase
  10.1103/PhysRevLett.121.190504} {\bibfield  {journal} {\bibinfo  {journal}
  {Phys. Rev. Lett.}\ }\textbf {\bibinfo {volume} {121}},\ \bibinfo {pages}
  {190504} (\bibinfo {year} {2018})}\BibitemShut {NoStop}%
\bibitem [{\citenamefont {Regula}(2018)}]{regula}%
  \BibitemOpen
  \bibfield  {author} {\bibinfo {author} {\bibfnamefont {Bartosz}\ \bibnamefont
  {Regula}},\ }\bibfield  {title} {\enquote {\bibinfo {title} {Convex geometry
  of quantum resource quantification},}\ }\href
  {http://stacks.iop.org/1751-8121/51/i=4/a=045303} {\bibfield  {journal}
  {\bibinfo  {journal} {Journal of Physics A: Mathematical and Theoretical}\
  }\textbf {\bibinfo {volume} {51}},\ \bibinfo {pages} {045303} (\bibinfo
  {year} {2018})}\BibitemShut {NoStop}%
\bibitem [{\citenamefont {Lami}\ \emph {et~al.}(2018)\citenamefont {Lami},
  \citenamefont {Regula}, \citenamefont {Wang}, \citenamefont {Nichols},
  \citenamefont {Winter},\ and\ \citenamefont {Adesso}}]{lami_2018}%
  \BibitemOpen
  \bibfield  {author} {\bibinfo {author} {\bibfnamefont {Ludovico}\
  \bibnamefont {Lami}}, \bibinfo {author} {\bibfnamefont {Bartosz}\
  \bibnamefont {Regula}}, \bibinfo {author} {\bibfnamefont {Xin}\ \bibnamefont
  {Wang}}, \bibinfo {author} {\bibfnamefont {Rosanna}\ \bibnamefont {Nichols}},
  \bibinfo {author} {\bibfnamefont {Andreas}\ \bibnamefont {Winter}}, \ and\
  \bibinfo {author} {\bibfnamefont {Gerardo}\ \bibnamefont {Adesso}},\
  }\bibfield  {title} {\enquote {\bibinfo {title} {Gaussian quantum resource
  theories},}\ }\href {\doibase 10.1103/PhysRevA.98.022335} {\bibfield
  {journal} {\bibinfo  {journal} {Phys. Rev. A}\ }\textbf {\bibinfo {volume}
  {98}},\ \bibinfo {pages} {022335} (\bibinfo {year} {2018})}\BibitemShut
  {NoStop}%
\bibitem [{\citenamefont {{Li}}\ \emph {et~al.}(2018)\citenamefont {{Li}},
  \citenamefont {{Bu}},\ and\ \citenamefont {{Liu}}}]{2018arXiv181202572L}%
  \BibitemOpen
  \bibfield  {author} {\bibinfo {author} {\bibfnamefont {Lu}~\bibnamefont
  {{Li}}}, \bibinfo {author} {\bibfnamefont {Kaifeng}\ \bibnamefont {{Bu}}}, \
  and\ \bibinfo {author} {\bibfnamefont {Zi-Wen}\ \bibnamefont {{Liu}}},\
  }\bibfield  {title} {\enquote {\bibinfo {title} {{Quantifying the resource
  content of quantum channels: An operational approach}},}\ }\href@noop {}
  {\bibfield  {journal} {\bibinfo  {journal} {arXiv e-prints}\ ,\ \bibinfo
  {eid} {arXiv:1812.02572}} (\bibinfo {year} {2018})},\ \Eprint
  {http://arxiv.org/abs/1812.02572} {arXiv:1812.02572 [quant-ph]} \BibitemShut
  {NoStop}%
\bibitem [{\citenamefont {Uola}\ \emph {et~al.}(2019)\citenamefont {Uola},
  \citenamefont {Kraft}, \citenamefont {Shang}, \citenamefont {Yu},\ and\
  \citenamefont {G\"uhne}}]{uola2018quantifying}%
  \BibitemOpen
  \bibfield  {author} {\bibinfo {author} {\bibfnamefont {Roope}\ \bibnamefont
  {Uola}}, \bibinfo {author} {\bibfnamefont {Tristan}\ \bibnamefont {Kraft}},
  \bibinfo {author} {\bibfnamefont {Jiangwei}\ \bibnamefont {Shang}}, \bibinfo
  {author} {\bibfnamefont {Xiao-Dong}\ \bibnamefont {Yu}}, \ and\ \bibinfo
  {author} {\bibfnamefont {Otfried}\ \bibnamefont {G\"uhne}},\ }\bibfield
  {title} {\enquote {\bibinfo {title} {Quantifying quantum resources with conic
  programming},}\ }\href {\doibase 10.1103/PhysRevLett.122.130404} {\bibfield
  {journal} {\bibinfo  {journal} {Phys. Rev. Lett.}\ }\textbf {\bibinfo
  {volume} {122}},\ \bibinfo {pages} {130404} (\bibinfo {year}
  {2019})}\BibitemShut {NoStop}%
\bibitem [{\citenamefont {{Takagi}}\ and\ \citenamefont
  {{Regula}}(2019)}]{2019arXiv190108127T}%
  \BibitemOpen
  \bibfield  {author} {\bibinfo {author} {\bibfnamefont {Ryuji}\ \bibnamefont
  {{Takagi}}}\ and\ \bibinfo {author} {\bibfnamefont {Bartosz}\ \bibnamefont
  {{Regula}}},\ }\bibfield  {title} {\enquote {\bibinfo {title} {{General
  resource theories in quantum mechanics and beyond: operational
  characterization via discrimination tasks}},}\ }\href@noop {} {\bibfield
  {journal} {\bibinfo  {journal} {arXiv e-prints}\ ,\ \bibinfo {eid}
  {arXiv:1901.08127}} (\bibinfo {year} {2019})},\ \Eprint
  {http://arxiv.org/abs/1901.08127} {arXiv:1901.08127 [quant-ph]} \BibitemShut
  {NoStop}%
\bibitem [{\citenamefont {Oszmaniec}\ and\ \citenamefont
  {Biswas}(2019)}]{Oszmaniec2019}%
  \BibitemOpen
  \bibfield  {author} {\bibinfo {author} {\bibfnamefont {Micha{\l{}}}\
  \bibnamefont {Oszmaniec}}\ and\ \bibinfo {author} {\bibfnamefont {Tanmoy}\
  \bibnamefont {Biswas}},\ }\bibfield  {title} {\enquote {\bibinfo {title}
  {Operational relevance of resource theories of quantum measurements},}\
  }\href {\doibase 10.22331/q-2019-04-26-133} {\bibfield  {journal} {\bibinfo
  {journal} {{Quantum}}\ }\textbf {\bibinfo {volume} {3}},\ \bibinfo {pages}
  {133} (\bibinfo {year} {2019})}\BibitemShut {NoStop}%
\bibitem [{\citenamefont {{Liu}}\ and\ \citenamefont
  {{Yuan}}(2019)}]{LiuYuan:channel}%
  \BibitemOpen
  \bibfield  {author} {\bibinfo {author} {\bibfnamefont {Yunchao}\ \bibnamefont
  {{Liu}}}\ and\ \bibinfo {author} {\bibfnamefont {Xiao}\ \bibnamefont
  {{Yuan}}},\ }\bibfield  {title} {\enquote {\bibinfo {title} {{Operational
  Resource Theory of Quantum Channels}},}\ }\href@noop {} {\bibfield  {journal}
  {\bibinfo  {journal} {arXiv e-prints}\ ,\ \bibinfo {eid} {arXiv:1904.02680}}
  (\bibinfo {year} {2019})},\ \Eprint {http://arxiv.org/abs/1904.02680}
  {arXiv:1904.02680 [quant-ph]} \BibitemShut {NoStop}%
\bibitem [{\citenamefont {{Liu}}\ and\ \citenamefont
  {{Winter}}(2019)}]{LiuWinter2018}%
  \BibitemOpen
  \bibfield  {author} {\bibinfo {author} {\bibfnamefont {Zi-Wen}\ \bibnamefont
  {{Liu}}}\ and\ \bibinfo {author} {\bibfnamefont {Andreas}\ \bibnamefont
  {{Winter}}},\ }\bibfield  {title} {\enquote {\bibinfo {title} {{Resource
  theories of quantum channels and the universal role of resource erasure}},}\
  }\href@noop {} {\bibfield  {journal} {\bibinfo  {journal} {arXiv e-prints}\
  ,\ \bibinfo {eid} {arXiv:1904.04201}} (\bibinfo {year} {2019})},\ \Eprint
  {http://arxiv.org/abs/1904.04201} {arXiv:1904.04201 [quant-ph]} \BibitemShut
  {NoStop}%
\bibitem [{\citenamefont {Bennett}\ \emph
  {et~al.}(1996{\natexlab{a}})\citenamefont {Bennett}, \citenamefont
  {DiVincenzo}, \citenamefont {Smolin},\ and\ \citenamefont
  {Wootters}}]{PhysRevA.54.3824}%
  \BibitemOpen
  \bibfield  {author} {\bibinfo {author} {\bibfnamefont {Charles~H.}\
  \bibnamefont {Bennett}}, \bibinfo {author} {\bibfnamefont {David~P.}\
  \bibnamefont {DiVincenzo}}, \bibinfo {author} {\bibfnamefont {John~A.}\
  \bibnamefont {Smolin}}, \ and\ \bibinfo {author} {\bibfnamefont {William~K.}\
  \bibnamefont {Wootters}},\ }\bibfield  {title} {\enquote {\bibinfo {title}
  {Mixed-state entanglement and quantum error correction},}\ }\href {\doibase
  10.1103/PhysRevA.54.3824} {\bibfield  {journal} {\bibinfo  {journal} {Phys.
  Rev. A}\ }\textbf {\bibinfo {volume} {54}},\ \bibinfo {pages} {3824--3851}
  (\bibinfo {year} {1996}{\natexlab{a}})}\BibitemShut {NoStop}%
\bibitem [{\citenamefont {Hayden}\ \emph {et~al.}(2001)\citenamefont {Hayden},
  \citenamefont {Horodecki},\ and\ \citenamefont {Terhal}}]{Hayden_2001}%
  \BibitemOpen
  \bibfield  {author} {\bibinfo {author} {\bibfnamefont {Patrick~M}\
  \bibnamefont {Hayden}}, \bibinfo {author} {\bibfnamefont {Michal}\
  \bibnamefont {Horodecki}}, \ and\ \bibinfo {author} {\bibfnamefont
  {Barbara~M}\ \bibnamefont {Terhal}},\ }\bibfield  {title} {\enquote {\bibinfo
  {title} {The asymptotic entanglement cost of preparing a quantum state},}\
  }\href {\doibase 10.1088/0305-4470/34/35/314} {\bibfield  {journal} {\bibinfo
   {journal} {Journal of Physics A: Mathematical and General}\ }\textbf
  {\bibinfo {volume} {34}},\ \bibinfo {pages} {6891--6898} (\bibinfo {year}
  {2001})}\BibitemShut {NoStop}%
\bibitem [{\citenamefont {Bennett}\ \emph
  {et~al.}(1996{\natexlab{b}})\citenamefont {Bennett}, \citenamefont
  {Bernstein}, \citenamefont {Popescu},\ and\ \citenamefont
  {Schumacher}}]{PhysRevA.53.2046}%
  \BibitemOpen
  \bibfield  {author} {\bibinfo {author} {\bibfnamefont {Charles~H.}\
  \bibnamefont {Bennett}}, \bibinfo {author} {\bibfnamefont {Herbert~J.}\
  \bibnamefont {Bernstein}}, \bibinfo {author} {\bibfnamefont {Sandu}\
  \bibnamefont {Popescu}}, \ and\ \bibinfo {author} {\bibfnamefont {Benjamin}\
  \bibnamefont {Schumacher}},\ }\bibfield  {title} {\enquote {\bibinfo {title}
  {Concentrating partial entanglement by local operations},}\ }\href {\doibase
  10.1103/PhysRevA.53.2046} {\bibfield  {journal} {\bibinfo  {journal} {Phys.
  Rev. A}\ }\textbf {\bibinfo {volume} {53}},\ \bibinfo {pages} {2046--2052}
  (\bibinfo {year} {1996}{\natexlab{b}})}\BibitemShut {NoStop}%
\bibitem [{\citenamefont {Bennett}\ \emph
  {et~al.}(1996{\natexlab{c}})\citenamefont {Bennett}, \citenamefont
  {Brassard}, \citenamefont {Popescu}, \citenamefont {Schumacher},
  \citenamefont {Smolin},\ and\ \citenamefont {Wootters}}]{PhysRevLett.76.722}%
  \BibitemOpen
  \bibfield  {author} {\bibinfo {author} {\bibfnamefont {Charles~H.}\
  \bibnamefont {Bennett}}, \bibinfo {author} {\bibfnamefont {Gilles}\
  \bibnamefont {Brassard}}, \bibinfo {author} {\bibfnamefont {Sandu}\
  \bibnamefont {Popescu}}, \bibinfo {author} {\bibfnamefont {Benjamin}\
  \bibnamefont {Schumacher}}, \bibinfo {author} {\bibfnamefont {John~A.}\
  \bibnamefont {Smolin}}, \ and\ \bibinfo {author} {\bibfnamefont {William~K.}\
  \bibnamefont {Wootters}},\ }\bibfield  {title} {\enquote {\bibinfo {title}
  {Purification of noisy entanglement and faithful teleportation via noisy
  channels},}\ }\href {\doibase 10.1103/PhysRevLett.76.722} {\bibfield
  {journal} {\bibinfo  {journal} {Phys. Rev. Lett.}\ }\textbf {\bibinfo
  {volume} {76}},\ \bibinfo {pages} {722--725} (\bibinfo {year}
  {1996}{\natexlab{c}})}\BibitemShut {NoStop}%
\bibitem [{\citenamefont {Bravyi}\ and\ \citenamefont
  {Kitaev}(2005)}]{bravyikitaev}%
  \BibitemOpen
  \bibfield  {author} {\bibinfo {author} {\bibfnamefont {Sergey}\ \bibnamefont
  {Bravyi}}\ and\ \bibinfo {author} {\bibfnamefont {Alexei}\ \bibnamefont
  {Kitaev}},\ }\bibfield  {title} {\enquote {\bibinfo {title} {Universal
  quantum computation with ideal clifford gates and noisy ancillas},}\ }\href
  {\doibase 10.1103/PhysRevA.71.022316} {\bibfield  {journal} {\bibinfo
  {journal} {Phys. Rev. A}\ }\textbf {\bibinfo {volume} {71}},\ \bibinfo
  {pages} {022316} (\bibinfo {year} {2005})}\BibitemShut {NoStop}%
\bibitem [{\citenamefont {Brandao}\ and\ \citenamefont
  {Datta}(2011)}]{brandaodatta}%
  \BibitemOpen
  \bibfield  {author} {\bibinfo {author} {\bibfnamefont {F.~G. S.~L.}\
  \bibnamefont {Brandao}}\ and\ \bibinfo {author} {\bibfnamefont
  {N.}~\bibnamefont {Datta}},\ }\bibfield  {title} {\enquote {\bibinfo {title}
  {One-shot rates for entanglement manipulation under non-entangling maps},}\
  }\href {\doibase 10.1109/TIT.2011.2104531} {\bibfield  {journal} {\bibinfo
  {journal} {IEEE Transactions on Information Theory}\ }\textbf {\bibinfo
  {volume} {57}},\ \bibinfo {pages} {1754--1760} (\bibinfo {year}
  {2011})}\BibitemShut {NoStop}%
\bibitem [{\citenamefont {Buscemi}\ and\ \citenamefont
  {Datta}(2011)}]{BuscemiDatta11}%
  \BibitemOpen
  \bibfield  {author} {\bibinfo {author} {\bibfnamefont {Francesco}\
  \bibnamefont {Buscemi}}\ and\ \bibinfo {author} {\bibfnamefont {Nilanjana}\
  \bibnamefont {Datta}},\ }\bibfield  {title} {\enquote {\bibinfo {title}
  {Entanglement cost in practical scenarios},}\ }\href {\doibase
  10.1103/PhysRevLett.106.130503} {\bibfield  {journal} {\bibinfo  {journal}
  {Phys. Rev. Lett.}\ }\textbf {\bibinfo {volume} {106}},\ \bibinfo {pages}
  {130503} (\bibinfo {year} {2011})}\BibitemShut {NoStop}%
\bibitem [{\citenamefont {Zhao}\ \emph {et~al.}(2018)\citenamefont {Zhao},
  \citenamefont {Liu}, \citenamefont {Yuan}, \citenamefont {Chitambar},\ and\
  \citenamefont {Ma}}]{coh_dilution}%
  \BibitemOpen
  \bibfield  {author} {\bibinfo {author} {\bibfnamefont {Qi}~\bibnamefont
  {Zhao}}, \bibinfo {author} {\bibfnamefont {Yunchao}\ \bibnamefont {Liu}},
  \bibinfo {author} {\bibfnamefont {Xiao}\ \bibnamefont {Yuan}}, \bibinfo
  {author} {\bibfnamefont {Eric}\ \bibnamefont {Chitambar}}, \ and\ \bibinfo
  {author} {\bibfnamefont {Xiongfeng}\ \bibnamefont {Ma}},\ }\bibfield  {title}
  {\enquote {\bibinfo {title} {One-shot coherence dilution},}\ }\href {\doibase
  10.1103/PhysRevLett.120.070403} {\bibfield  {journal} {\bibinfo  {journal}
  {Phys. Rev. Lett.}\ }\textbf {\bibinfo {volume} {120}},\ \bibinfo {pages}
  {070403} (\bibinfo {year} {2018})}\BibitemShut {NoStop}%
\bibitem [{\citenamefont {Regula}\ \emph {et~al.}(2018)\citenamefont {Regula},
  \citenamefont {Fang}, \citenamefont {Wang},\ and\ \citenamefont
  {Adesso}}]{PhysRevLett.121.010401}%
  \BibitemOpen
  \bibfield  {author} {\bibinfo {author} {\bibfnamefont {Bartosz}\ \bibnamefont
  {Regula}}, \bibinfo {author} {\bibfnamefont {Kun}\ \bibnamefont {Fang}},
  \bibinfo {author} {\bibfnamefont {Xin}\ \bibnamefont {Wang}}, \ and\ \bibinfo
  {author} {\bibfnamefont {Gerardo}\ \bibnamefont {Adesso}},\ }\bibfield
  {title} {\enquote {\bibinfo {title} {One-shot coherence distillation},}\
  }\href {\doibase 10.1103/PhysRevLett.121.010401} {\bibfield  {journal}
  {\bibinfo  {journal} {Phys. Rev. Lett.}\ }\textbf {\bibinfo {volume} {121}},\
  \bibinfo {pages} {010401} (\bibinfo {year} {2018})}\BibitemShut {NoStop}%
\bibitem [{\citenamefont {{Zhao}}\ \emph {et~al.}(2018)\citenamefont {{Zhao}},
  \citenamefont {{Liu}}, \citenamefont {{Yuan}}, \citenamefont {{Chitambar}},\
  and\ \citenamefont {{Winter}}}]{fullstory}%
  \BibitemOpen
  \bibfield  {author} {\bibinfo {author} {\bibfnamefont {Qi}~\bibnamefont
  {{Zhao}}}, \bibinfo {author} {\bibfnamefont {Yunchao}\ \bibnamefont {{Liu}}},
  \bibinfo {author} {\bibfnamefont {Xiao}\ \bibnamefont {{Yuan}}}, \bibinfo
  {author} {\bibfnamefont {Eric}\ \bibnamefont {{Chitambar}}}, \ and\ \bibinfo
  {author} {\bibfnamefont {Andreas}\ \bibnamefont {{Winter}}},\ }\bibfield
  {title} {\enquote {\bibinfo {title} {{One-Shot Coherence Distillation:
  Towards Completing the Picture}},}\ }\href@noop {} {\bibfield  {journal}
  {\bibinfo  {journal} {arXiv e-prints}\ ,\ \bibinfo {eid} {arXiv:1808.01885}}
  (\bibinfo {year} {2018})},\ \Eprint {http://arxiv.org/abs/1808.01885}
  {arXiv:1808.01885 [quant-ph]} \BibitemShut {NoStop}%
\bibitem [{\citenamefont {Dahlsten}\ \emph {et~al.}(2011)\citenamefont
  {Dahlsten}, \citenamefont {Renner}, \citenamefont {Rieper},\ and\
  \citenamefont {Vedral}}]{Dahlsten_2011}%
  \BibitemOpen
  \bibfield  {author} {\bibinfo {author} {\bibfnamefont {Oscar C~O}\
  \bibnamefont {Dahlsten}}, \bibinfo {author} {\bibfnamefont {Renato}\
  \bibnamefont {Renner}}, \bibinfo {author} {\bibfnamefont {Elisabeth}\
  \bibnamefont {Rieper}}, \ and\ \bibinfo {author} {\bibfnamefont {Vlatko}\
  \bibnamefont {Vedral}},\ }\bibfield  {title} {\enquote {\bibinfo {title}
  {Inadequacy of von neumann entropy for characterizing extractable work},}\
  }\href {\doibase 10.1088/1367-2630/13/5/053015} {\bibfield  {journal}
  {\bibinfo  {journal} {New Journal of Physics}\ }\textbf {\bibinfo {volume}
  {13}},\ \bibinfo {pages} {053015} (\bibinfo {year} {2011})}\BibitemShut
  {NoStop}%
\bibitem [{\citenamefont {Gour}\ \emph {et~al.}(2015)\citenamefont {Gour},
  \citenamefont {Müller}, \citenamefont {Narasimhachar}, \citenamefont
  {Spekkens},\ and\ \citenamefont {Halpern}}]{GOUR2015inf_oneshot}%
  \BibitemOpen
  \bibfield  {author} {\bibinfo {author} {\bibfnamefont {Gilad}\ \bibnamefont
  {Gour}}, \bibinfo {author} {\bibfnamefont {Markus~P.}\ \bibnamefont
  {Müller}}, \bibinfo {author} {\bibfnamefont {Varun}\ \bibnamefont
  {Narasimhachar}}, \bibinfo {author} {\bibfnamefont {Robert~W.}\ \bibnamefont
  {Spekkens}}, \ and\ \bibinfo {author} {\bibfnamefont {Nicole~Yunger}\
  \bibnamefont {Halpern}},\ }\bibfield  {title} {\enquote {\bibinfo {title}
  {The resource theory of informational nonequilibrium in thermodynamics},}\
  }\href {\doibase https://doi.org/10.1016/j.physrep.2015.04.003} {\bibfield
  {journal} {\bibinfo  {journal} {Physics Reports}\ }\textbf {\bibinfo {volume}
  {583}},\ \bibinfo {pages} {1 -- 58} (\bibinfo {year} {2015})},\ \bibinfo
  {note} {the resource theory of informational nonequilibrium in
  thermodynamics}\BibitemShut {NoStop}%
\bibitem [{\citenamefont {Yunger~Halpern}\ and\ \citenamefont
  {Renes}(2016)}]{Halpern2016_beyond1}%
  \BibitemOpen
  \bibfield  {author} {\bibinfo {author} {\bibfnamefont {Nicole}\ \bibnamefont
  {Yunger~Halpern}}\ and\ \bibinfo {author} {\bibfnamefont {Joseph~M.}\
  \bibnamefont {Renes}},\ }\bibfield  {title} {\enquote {\bibinfo {title}
  {Beyond heat baths: Generalized resource theories for small-scale
  thermodynamics},}\ }\href {\doibase 10.1103/PhysRevE.93.022126} {\bibfield
  {journal} {\bibinfo  {journal} {Phys. Rev. E}\ }\textbf {\bibinfo {volume}
  {93}},\ \bibinfo {pages} {022126} (\bibinfo {year} {2016})}\BibitemShut
  {NoStop}%
\bibitem [{\citenamefont {Halpern}(2018)}]{Halpern_2018_beyond2}%
  \BibitemOpen
  \bibfield  {author} {\bibinfo {author} {\bibfnamefont {Nicole~Yunger}\
  \bibnamefont {Halpern}},\ }\bibfield  {title} {\enquote {\bibinfo {title}
  {Beyond heat baths {II}: framework for generalized thermodynamic resource
  theories},}\ }\href {\doibase 10.1088/1751-8121/aaa62f} {\bibfield  {journal}
  {\bibinfo  {journal} {Journal of Physics A: Mathematical and Theoretical}\
  }\textbf {\bibinfo {volume} {51}},\ \bibinfo {pages} {094001} (\bibinfo
  {year} {2018})}\BibitemShut {NoStop}%
\bibitem [{\citenamefont {{Wang}}\ \emph {et~al.}(2018)\citenamefont {{Wang}},
  \citenamefont {{Wilde}},\ and\ \citenamefont {{Su}}}]{wang_wilde_su}%
  \BibitemOpen
  \bibfield  {author} {\bibinfo {author} {\bibfnamefont {Xin}\ \bibnamefont
  {{Wang}}}, \bibinfo {author} {\bibfnamefont {Mark~M.}\ \bibnamefont
  {{Wilde}}}, \ and\ \bibinfo {author} {\bibfnamefont {Yuan}\ \bibnamefont
  {{Su}}},\ }\bibfield  {title} {\enquote {\bibinfo {title} {{Efficiently
  computable bounds for magic state distillation}},}\ }\href@noop {} {\bibfield
   {journal} {\bibinfo  {journal} {arXiv e-prints}\ ,\ \bibinfo {eid}
  {arXiv:1812.10145}} (\bibinfo {year} {2018})},\ \Eprint
  {http://arxiv.org/abs/1812.10145} {arXiv:1812.10145 [quant-ph]} \BibitemShut
  {NoStop}%
\bibitem [{\citenamefont {Chitambar}\ and\ \citenamefont
  {Gour}(2016)}]{examination}%
  \BibitemOpen
  \bibfield  {author} {\bibinfo {author} {\bibfnamefont {Eric}\ \bibnamefont
  {Chitambar}}\ and\ \bibinfo {author} {\bibfnamefont {Gilad}\ \bibnamefont
  {Gour}},\ }\bibfield  {title} {\enquote {\bibinfo {title} {Critical
  examination of incoherent operations and a physically consistent resource
  theory of quantum coherence},}\ }\href {\doibase
  10.1103/PhysRevLett.117.030401} {\bibfield  {journal} {\bibinfo  {journal}
  {Phys. Rev. Lett.}\ }\textbf {\bibinfo {volume} {117}},\ \bibinfo {pages}
  {030401} (\bibinfo {year} {2016})}\BibitemShut {NoStop}%
\bibitem [{\citenamefont {Liu}\ \emph {et~al.}(2019)\citenamefont {Liu},
  \citenamefont {Takagi},\ and\ \citenamefont {Lloyd}}]{dqd}%
  \BibitemOpen
  \bibfield  {author} {\bibinfo {author} {\bibfnamefont {Zi-Wen}\ \bibnamefont
  {Liu}}, \bibinfo {author} {\bibfnamefont {Ryuji}\ \bibnamefont {Takagi}}, \
  and\ \bibinfo {author} {\bibfnamefont {Seth}\ \bibnamefont {Lloyd}},\
  }\bibfield  {title} {\enquote {\bibinfo {title} {Diagonal quantum discord},}\
  }\href {\doibase 10.1088/1751-8121/ab0774} {\bibfield  {journal} {\bibinfo
  {journal} {Journal of Physics A: Mathematical and Theoretical}\ }\textbf
  {\bibinfo {volume} {52}},\ \bibinfo {pages} {135301} (\bibinfo {year}
  {2019})}\BibitemShut {NoStop}%
\bibitem [{Note1()}]{Note1}%
  \BibitemOpen
  \bibinfo {note} {In fact, the definitions of max- and min-relative entropies
  only require $\sigma $ to be a positive semidefinite operator.}\BibitemShut
  {Stop}%
\bibitem [{\citenamefont {Datta}(2009)}]{dattarel}%
  \BibitemOpen
  \bibfield  {author} {\bibinfo {author} {\bibfnamefont {N.}~\bibnamefont
  {Datta}},\ }\bibfield  {title} {\enquote {\bibinfo {title} {Min- and
  max-relative entropies and a new entanglement monotone},}\ }\href {\doibase
  10.1109/TIT.2009.2018325} {\bibfield  {journal} {\bibinfo  {journal} {IEEE
  Transactions on Information Theory}\ }\textbf {\bibinfo {volume} {55}},\
  \bibinfo {pages} {2816--2826} (\bibinfo {year} {2009})}\BibitemShut {NoStop}%
\bibitem [{sm()}]{sm}%
  \BibitemOpen
  \href@noop {} {}\bibinfo {note} {See Supplemental Material for additional
  results, detailed proofs, and extended discussions, which includes
  Refs.\,\cite{Datta_2014,Wilde2014,PhysRevA.80.012307,marian,heat,Wilde:2013:QIT:2505455,watrous,brandao05,PhysRevLett.119.150405,nielsen,hho,PhysRevA.59.4206,Fuchs-vandeGraaf}}\BibitemShut
  {NoStop}%
\bibitem [{\citenamefont {Wang}\ and\ \citenamefont
  {Renner}(2012)}]{hypothesis}%
  \BibitemOpen
  \bibfield  {author} {\bibinfo {author} {\bibfnamefont {Ligong}\ \bibnamefont
  {Wang}}\ and\ \bibinfo {author} {\bibfnamefont {Renato}\ \bibnamefont
  {Renner}},\ }\bibfield  {title} {\enquote {\bibinfo {title} {One-shot
  classical-quantum capacity and hypothesis testing},}\ }\href {\doibase
  10.1103/PhysRevLett.108.200501} {\bibfield  {journal} {\bibinfo  {journal}
  {Phys. Rev. Lett.}\ }\textbf {\bibinfo {volume} {108}},\ \bibinfo {pages}
  {200501} (\bibinfo {year} {2012})}\BibitemShut {NoStop}%
\bibitem [{\citenamefont {M{\"u}ller-Lennert}\ \emph
  {et~al.}(2013)\citenamefont {M{\"u}ller-Lennert}, \citenamefont {Dupuis},
  \citenamefont {Szehr}, \citenamefont {Fehr},\ and\ \citenamefont
  {Tomamichel}}]{qrenyi}%
  \BibitemOpen
  \bibfield  {author} {\bibinfo {author} {\bibfnamefont {Martin}\ \bibnamefont
  {M{\"u}ller-Lennert}}, \bibinfo {author} {\bibfnamefont {Fr{\'e}d{\'e}ric}\
  \bibnamefont {Dupuis}}, \bibinfo {author} {\bibfnamefont {Oleg}\ \bibnamefont
  {Szehr}}, \bibinfo {author} {\bibfnamefont {Serge}\ \bibnamefont {Fehr}}, \
  and\ \bibinfo {author} {\bibfnamefont {Marco}\ \bibnamefont {Tomamichel}},\
  }\bibfield  {title} {\enquote {\bibinfo {title} {On quantum {R}{\'e}nyi
  entropies: A new generalization and some properties},}\ }\href {\doibase
  10.1063/1.4838856} {\bibfield  {journal} {\bibinfo  {journal} {Journal of
  Mathematical Physics}\ }\textbf {\bibinfo {volume} {54}},\ \bibinfo {pages}
  {122203} (\bibinfo {year} {2013})}\BibitemShut {NoStop}%
\bibitem [{\citenamefont {Lieb}(1973)}]{LIEB1973267}%
  \BibitemOpen
  \bibfield  {author} {\bibinfo {author} {\bibfnamefont {Elliott~H}\
  \bibnamefont {Lieb}},\ }\bibfield  {title} {\enquote {\bibinfo {title}
  {Convex trace functions and the {W}igner-{Y}anase-{D}yson conjecture},}\
  }\href {\doibase https://doi.org/10.1016/0001-8708(73)90011-X} {\bibfield
  {journal} {\bibinfo  {journal} {Advances in Mathematics}\ }\textbf {\bibinfo
  {volume} {11}},\ \bibinfo {pages} {267 -- 288} (\bibinfo {year}
  {1973})}\BibitemShut {NoStop}%
\bibitem [{\citenamefont {Uhlmann}(1977)}]{uhlmann1977}%
  \BibitemOpen
  \bibfield  {author} {\bibinfo {author} {\bibfnamefont {A.}~\bibnamefont
  {Uhlmann}},\ }\bibfield  {title} {\enquote {\bibinfo {title} {Relative
  entropy and the {W}igner-{Y}anase-{D}yson-{L}ieb concavity in an
  interpolation theory},}\ }\href
  {https://projecteuclid.org:443/euclid.cmp/1103900757} {\bibfield  {journal}
  {\bibinfo  {journal} {Comm. Math. Phys.}\ }\textbf {\bibinfo {volume} {54}},\
  \bibinfo {pages} {21--32} (\bibinfo {year} {1977})}\BibitemShut {NoStop}%
\bibitem [{\citenamefont {Petz}(1986)}]{PETZ198657}%
  \BibitemOpen
  \bibfield  {author} {\bibinfo {author} {\bibfnamefont {D{\'e}nes}\
  \bibnamefont {Petz}},\ }\bibfield  {title} {\enquote {\bibinfo {title}
  {Quasi-entropies for finite quantum systems},}\ }\href {\doibase
  https://doi.org/10.1016/0034-4877(86)90067-4} {\bibfield  {journal} {\bibinfo
   {journal} {Reports on Mathematical Physics}\ }\textbf {\bibinfo {volume}
  {23}},\ \bibinfo {pages} {57 -- 65} (\bibinfo {year} {1986})}\BibitemShut
  {NoStop}%
\bibitem [{\citenamefont {Tomamichel}(2012)}]{marco_thesis}%
  \BibitemOpen
  \bibfield  {author} {\bibinfo {author} {\bibfnamefont {Marco}\ \bibnamefont
  {Tomamichel}},\ }\emph {\bibinfo {title} {A Framework for Non-Asymptotic
  Quantum Information Theory}},\ \href@noop {} {Ph.D. thesis},\ \bibinfo
  {school} {ETH Z\"urich} (\bibinfo {year} {2012})\BibitemShut {NoStop}%
\bibitem [{Note2()}]{Note2}%
  \BibitemOpen
  \bibinfo {note} {Note that this class of free operations is uniquely
  determined by the set of free states and do not depend on the choice of RD
  maps in a certain theory}\BibitemShut {NoStop}%
\bibitem [{\citenamefont {Contreras-Tejada}\ \emph {et~al.}(2019)\citenamefont
  {Contreras-Tejada}, \citenamefont {Palazuelos},\ and\ \citenamefont
  {de~Vicente}}]{Contreras2019entanglement}%
  \BibitemOpen
  \bibfield  {author} {\bibinfo {author} {\bibfnamefont {Patricia}\
  \bibnamefont {Contreras-Tejada}}, \bibinfo {author} {\bibfnamefont {Carlos}\
  \bibnamefont {Palazuelos}}, \ and\ \bibinfo {author} {\bibfnamefont
  {Julio~I.}\ \bibnamefont {de~Vicente}},\ }\bibfield  {title} {\enquote
  {\bibinfo {title} {Resource theory of entanglement with a unique multipartite
  maximally entangled state},}\ }\href {\doibase
  10.1103/PhysRevLett.122.120503} {\bibfield  {journal} {\bibinfo  {journal}
  {Phys. Rev. Lett.}\ }\textbf {\bibinfo {volume} {122}},\ \bibinfo {pages}
  {120503} (\bibinfo {year} {2019})}\BibitemShut {NoStop}%
\bibitem [{\citenamefont {Vidal}\ and\ \citenamefont
  {Tarrach}(1999)}]{vidal_1999}%
  \BibitemOpen
  \bibfield  {author} {\bibinfo {author} {\bibfnamefont {Guifr\'e}\
  \bibnamefont {Vidal}}\ and\ \bibinfo {author} {\bibfnamefont {Rolf}\
  \bibnamefont {Tarrach}},\ }\bibfield  {title} {\enquote {\bibinfo {title}
  {Robustness of entanglement},}\ }\href {\doibase 10.1103/PhysRevA.59.141}
  {\bibfield  {journal} {\bibinfo  {journal} {Phys. Rev. A}\ }\textbf {\bibinfo
  {volume} {59}},\ \bibinfo {pages} {141--155} (\bibinfo {year}
  {1999})}\BibitemShut {NoStop}%
\bibitem [{\citenamefont {Steiner}(2003)}]{Steiner2003robustness}%
  \BibitemOpen
  \bibfield  {author} {\bibinfo {author} {\bibfnamefont {Michael}\ \bibnamefont
  {Steiner}},\ }\bibfield  {title} {\enquote {\bibinfo {title} {Generalized
  robustness of entanglement},}\ }\href {\doibase 10.1103/PhysRevA.67.054305}
  {\bibfield  {journal} {\bibinfo  {journal} {Phys. Rev. A}\ }\textbf {\bibinfo
  {volume} {67}},\ \bibinfo {pages} {054305} (\bibinfo {year}
  {2003})}\BibitemShut {NoStop}%
\bibitem [{\citenamefont {Harrow}\ and\ \citenamefont
  {Nielsen}(2003)}]{Harrow2003robustness}%
  \BibitemOpen
  \bibfield  {author} {\bibinfo {author} {\bibfnamefont {Aram~W.}\ \bibnamefont
  {Harrow}}\ and\ \bibinfo {author} {\bibfnamefont {Michael~A.}\ \bibnamefont
  {Nielsen}},\ }\bibfield  {title} {\enquote {\bibinfo {title} {Robustness of
  quantum gates in the presence of noise},}\ }\href {\doibase
  10.1103/PhysRevA.68.012308} {\bibfield  {journal} {\bibinfo  {journal} {Phys.
  Rev. A}\ }\textbf {\bibinfo {volume} {68}},\ \bibinfo {pages} {012308}
  (\bibinfo {year} {2003})}\BibitemShut {NoStop}%
\bibitem [{\citenamefont {Jafarizadeh}\ \emph {et~al.}(2005)\citenamefont
  {Jafarizadeh}, \citenamefont {Mirzaee},\ and\ \citenamefont
  {Rezaee}}]{jafarizadeh_2005}%
  \BibitemOpen
  \bibfield  {author} {\bibinfo {author} {\bibfnamefont {M.~A.}\ \bibnamefont
  {Jafarizadeh}}, \bibinfo {author} {\bibfnamefont {M.}~\bibnamefont
  {Mirzaee}}, \ and\ \bibinfo {author} {\bibfnamefont {M.}~\bibnamefont
  {Rezaee}},\ }\bibfield  {title} {\enquote {\bibinfo {title} {Exact
  calculation of robustness of entanglement via convex semi-definite
  programming},}\ }\href {\doibase 10.1142/S0219749905001043} {\bibfield
  {journal} {\bibinfo  {journal} {Int. J. Quantum Inform.}\ }\textbf {\bibinfo
  {volume} {03}},\ \bibinfo {pages} {511--533} (\bibinfo {year}
  {2005})}\BibitemShut {NoStop}%
\bibitem [{\citenamefont {Ringbauer}\ \emph {et~al.}(2018)\citenamefont
  {Ringbauer}, \citenamefont {Bromley}, \citenamefont {Cianciaruso},
  \citenamefont {Lami}, \citenamefont {Lau}, \citenamefont {Adesso},
  \citenamefont {White}, \citenamefont {Fedrizzi},\ and\ \citenamefont
  {Piani}}]{ringbauer_2018}%
  \BibitemOpen
  \bibfield  {author} {\bibinfo {author} {\bibfnamefont {Martin}\ \bibnamefont
  {Ringbauer}}, \bibinfo {author} {\bibfnamefont {Thomas~R.}\ \bibnamefont
  {Bromley}}, \bibinfo {author} {\bibfnamefont {Marco}\ \bibnamefont
  {Cianciaruso}}, \bibinfo {author} {\bibfnamefont {Ludovico}\ \bibnamefont
  {Lami}}, \bibinfo {author} {\bibfnamefont {W.~Y.~Sarah}\ \bibnamefont {Lau}},
  \bibinfo {author} {\bibfnamefont {Gerardo}\ \bibnamefont {Adesso}}, \bibinfo
  {author} {\bibfnamefont {Andrew~G.}\ \bibnamefont {White}}, \bibinfo {author}
  {\bibfnamefont {Alessandro}\ \bibnamefont {Fedrizzi}}, \ and\ \bibinfo
  {author} {\bibfnamefont {Marco}\ \bibnamefont {Piani}},\ }\bibfield  {title}
  {\enquote {\bibinfo {title} {Certification and {{Quantification}} of
  {{Multilevel Quantum Coherence}}},}\ }\href {\doibase
  10.1103/PhysRevX.8.041007} {\bibfield  {journal} {\bibinfo  {journal} {Phys.
  Rev. X}\ }\textbf {\bibinfo {volume} {8}},\ \bibinfo {pages} {041007}
  (\bibinfo {year} {2018})}\BibitemShut {NoStop}%
\bibitem [{\citenamefont {Johnston}\ \emph {et~al.}(2018)\citenamefont
  {Johnston}, \citenamefont {Li}, \citenamefont {Plosker}, \citenamefont
  {Poon},\ and\ \citenamefont {Regula}}]{johnston_2018}%
  \BibitemOpen
  \bibfield  {author} {\bibinfo {author} {\bibfnamefont {Nathaniel}\
  \bibnamefont {Johnston}}, \bibinfo {author} {\bibfnamefont {Chi-Kwong}\
  \bibnamefont {Li}}, \bibinfo {author} {\bibfnamefont {Sarah}\ \bibnamefont
  {Plosker}}, \bibinfo {author} {\bibfnamefont {Yiu-Tung}\ \bibnamefont
  {Poon}}, \ and\ \bibinfo {author} {\bibfnamefont {Bartosz}\ \bibnamefont
  {Regula}},\ }\bibfield  {title} {\enquote {\bibinfo {title} {Evaluating the
  robustness of {$k$}-coherence and {$k$}-entanglement},}\ }\href {\doibase
  10.1103/PhysRevA.98.022328} {\bibfield  {journal} {\bibinfo  {journal} {Phys.
  Rev. A}\ }\textbf {\bibinfo {volume} {98}},\ \bibinfo {pages} {022328}
  (\bibinfo {year} {2018})}\BibitemShut {NoStop}%
\bibitem [{\citenamefont {Streltsov}\ \emph {et~al.}(2018)\citenamefont
  {Streltsov}, \citenamefont {Kampermann}, \citenamefont {Wölk}, \citenamefont
  {Gessner},\ and\ \citenamefont {Bru{\ss}}}]{Streltsov_2018}%
  \BibitemOpen
  \bibfield  {author} {\bibinfo {author} {\bibfnamefont {Alexander}\
  \bibnamefont {Streltsov}}, \bibinfo {author} {\bibfnamefont {Hermann}\
  \bibnamefont {Kampermann}}, \bibinfo {author} {\bibfnamefont {Sabine}\
  \bibnamefont {Wölk}}, \bibinfo {author} {\bibfnamefont {Manuel}\
  \bibnamefont {Gessner}}, \ and\ \bibinfo {author} {\bibfnamefont {Dagmar}\
  \bibnamefont {Bru{\ss}}},\ }\bibfield  {title} {\enquote {\bibinfo {title}
  {Maximal coherence and the resource theory of purity},}\ }\href {\doibase
  10.1088/1367-2630/aac484} {\bibfield  {journal} {\bibinfo  {journal} {New
  Journal of Physics}\ }\textbf {\bibinfo {volume} {20}},\ \bibinfo {pages}
  {053058} (\bibinfo {year} {2018})}\BibitemShut {NoStop}%
\bibitem [{\citenamefont {{Bravyi}}\ \emph {et~al.}(2018)\citenamefont
  {{Bravyi}}, \citenamefont {{Browne}}, \citenamefont {{Calpin}}, \citenamefont
  {{Campbell}}, \citenamefont {{Gosset}},\ and\ \citenamefont
  {{Howard}}}]{bravyi2018simulation}%
  \BibitemOpen
  \bibfield  {author} {\bibinfo {author} {\bibfnamefont {S.}~\bibnamefont
  {{Bravyi}}}, \bibinfo {author} {\bibfnamefont {D.}~\bibnamefont {{Browne}}},
  \bibinfo {author} {\bibfnamefont {P.}~\bibnamefont {{Calpin}}}, \bibinfo
  {author} {\bibfnamefont {E.}~\bibnamefont {{Campbell}}}, \bibinfo {author}
  {\bibfnamefont {D.}~\bibnamefont {{Gosset}}}, \ and\ \bibinfo {author}
  {\bibfnamefont {M.}~\bibnamefont {{Howard}}},\ }\bibfield  {title} {\enquote
  {\bibinfo {title} {{Simulation of quantum circuits by low-rank stabilizer
  decompositions}},}\ }\href@noop {} {\bibfield  {journal} {\bibinfo  {journal}
  {arXiv e-prints}\ } (\bibinfo {year} {2018})},\ \Eprint
  {http://arxiv.org/abs/1808.00128} {arXiv:1808.00128 [quant-ph]} \BibitemShut
  {NoStop}%
\bibitem [{Note3()}]{Note3}%
  \BibitemOpen
  \bibinfo {note} {Logarithm of the stabilizer extent introduced in \cite
  {bravyi2018simulation} is equivalent to the max-relative entropy of magic
  \cite {regula}.}\BibitemShut {Stop}%
\bibitem [{\citenamefont {Heinrich}\ and\ \citenamefont
  {Gross}(2019)}]{Heinrich2019robustnessofmagic}%
  \BibitemOpen
  \bibfield  {author} {\bibinfo {author} {\bibfnamefont {Markus}\ \bibnamefont
  {Heinrich}}\ and\ \bibinfo {author} {\bibfnamefont {David}\ \bibnamefont
  {Gross}},\ }\bibfield  {title} {\enquote {\bibinfo {title} {Robustness of
  {M}agic and {S}ymmetries of the {S}tabiliser {P}olytope},}\ }\href {\doibase
  10.22331/q-2019-04-08-132} {\bibfield  {journal} {\bibinfo  {journal}
  {{Quantum}}\ }\textbf {\bibinfo {volume} {3}},\ \bibinfo {pages} {132}
  (\bibinfo {year} {2019})}\BibitemShut {NoStop}%
\bibitem [{\citenamefont {Girard}(2017)}]{girard_2017}%
  \BibitemOpen
  \bibfield  {author} {\bibinfo {author} {\bibfnamefont {Mark~W.}\ \bibnamefont
  {Girard}},\ }\emph {\bibinfo {title} {Convex Analysis in Quantum
  Information}},\ \href
  {https://prism.ucalgary.ca/bitstream/handle/11023/4001/ucalgary_2017_girard_mark.pdf}
  {Ph.D. thesis},\ \bibinfo  {school} {University of Calgary} (\bibinfo {year}
  {2017})\BibitemShut {NoStop}%
\bibitem [{\citenamefont {Gour}\ \emph {et~al.}(2018)\citenamefont {Gour},
  \citenamefont {Jennings}, \citenamefont {Buscemi}, \citenamefont {Duan},\
  and\ \citenamefont {Marvian}}]{gour_2018_majorization}%
  \BibitemOpen
  \bibfield  {author} {\bibinfo {author} {\bibfnamefont {Gilad}\ \bibnamefont
  {Gour}}, \bibinfo {author} {\bibfnamefont {David}\ \bibnamefont {Jennings}},
  \bibinfo {author} {\bibfnamefont {Francesco}\ \bibnamefont {Buscemi}},
  \bibinfo {author} {\bibfnamefont {Runyao}\ \bibnamefont {Duan}}, \ and\
  \bibinfo {author} {\bibfnamefont {Iman}\ \bibnamefont {Marvian}},\ }\bibfield
   {title} {\enquote {\bibinfo {title} {Quantum majorization and a complete set
  of entropic conditions for quantum thermodynamics},}\ }\href {\doibase
  10.1038/s41467-018-06261-7} {\bibfield  {journal} {\bibinfo  {journal} {Nat.
  Commun.}\ }\textbf {\bibinfo {volume} {9}},\ \bibinfo {pages} {5352}
  (\bibinfo {year} {2018})}\BibitemShut {NoStop}%
\bibitem [{\citenamefont {Rosset}\ \emph {et~al.}(2018)\citenamefont {Rosset},
  \citenamefont {Buscemi},\ and\ \citenamefont {Liang}}]{rosset_2018}%
  \BibitemOpen
  \bibfield  {author} {\bibinfo {author} {\bibfnamefont {Denis}\ \bibnamefont
  {Rosset}}, \bibinfo {author} {\bibfnamefont {Francesco}\ \bibnamefont
  {Buscemi}}, \ and\ \bibinfo {author} {\bibfnamefont {Yeong-Cherng}\
  \bibnamefont {Liang}},\ }\bibfield  {title} {\enquote {\bibinfo {title}
  {Resource {{Theory}} of {{Quantum Memories}} and {{Their Faithful
  Verification}} with {{Minimal Assumptions}}},}\ }\href {\doibase
  10.1103/PhysRevX.8.021033} {\bibfield  {journal} {\bibinfo  {journal} {Phys.
  Rev. X}\ }\textbf {\bibinfo {volume} {8}},\ \bibinfo {pages} {021033}
  (\bibinfo {year} {2018})}\BibitemShut {NoStop}%
\bibitem [{\citenamefont {{Krishnan Vijayan}}\ \emph
  {et~al.}(2019)\citenamefont {{Krishnan Vijayan}}, \citenamefont
  {{Chitambar}},\ and\ \citenamefont {{Hsieh}}}]{Vijayan19}%
  \BibitemOpen
  \bibfield  {author} {\bibinfo {author} {\bibfnamefont {Madhav}\ \bibnamefont
  {{Krishnan Vijayan}}}, \bibinfo {author} {\bibfnamefont {Eric}\ \bibnamefont
  {{Chitambar}}}, \ and\ \bibinfo {author} {\bibfnamefont {Min-Hsiu}\
  \bibnamefont {{Hsieh}}},\ }\bibfield  {title} {\enquote {\bibinfo {title}
  {{One-shot Distillation in a General Resource Theory}},}\ }\href@noop {}
  {\bibfield  {journal} {\bibinfo  {journal} {arXiv e-prints}\ ,\ \bibinfo
  {eid} {arXiv:1906.04959}} (\bibinfo {year} {2019})},\ \Eprint
  {http://arxiv.org/abs/1906.04959} {arXiv:1906.04959 [quant-ph]} \BibitemShut
  {NoStop}%
\bibitem [{\citenamefont {Datta}\ and\ \citenamefont
  {Leditzky}(2014)}]{Datta_2014}%
  \BibitemOpen
  \bibfield  {author} {\bibinfo {author} {\bibfnamefont {Nilanjana}\
  \bibnamefont {Datta}}\ and\ \bibinfo {author} {\bibfnamefont {Felix}\
  \bibnamefont {Leditzky}},\ }\bibfield  {title} {\enquote {\bibinfo {title} {A
  limit of the quantum r{\'{e}}nyi divergence},}\ }\href {\doibase
  10.1088/1751-8113/47/4/045304} {\bibfield  {journal} {\bibinfo  {journal}
  {Journal of Physics A: Mathematical and Theoretical}\ }\textbf {\bibinfo
  {volume} {47}},\ \bibinfo {pages} {045304} (\bibinfo {year}
  {2014})}\BibitemShut {NoStop}%
\bibitem [{\citenamefont {Wilde}\ \emph {et~al.}(2014)\citenamefont {Wilde},
  \citenamefont {Winter},\ and\ \citenamefont {Yang}}]{Wilde2014}%
  \BibitemOpen
  \bibfield  {author} {\bibinfo {author} {\bibfnamefont {Mark~M.}\ \bibnamefont
  {Wilde}}, \bibinfo {author} {\bibfnamefont {Andreas}\ \bibnamefont {Winter}},
  \ and\ \bibinfo {author} {\bibfnamefont {Dong}\ \bibnamefont {Yang}},\
  }\bibfield  {title} {\enquote {\bibinfo {title} {Strong converse for the
  classical capacity of entanglement-breaking and hadamard channels via a
  sandwiched r{\'e}nyi relative entropy},}\ }\href {\doibase
  10.1007/s00220-014-2122-x} {\bibfield  {journal} {\bibinfo  {journal}
  {Communications in Mathematical Physics}\ }\textbf {\bibinfo {volume}
  {331}},\ \bibinfo {pages} {593--622} (\bibinfo {year} {2014})}\BibitemShut
  {NoStop}%
\bibitem [{\citenamefont {Gour}\ \emph {et~al.}(2009)\citenamefont {Gour},
  \citenamefont {Marvian},\ and\ \citenamefont
  {Spekkens}}]{PhysRevA.80.012307}%
  \BibitemOpen
  \bibfield  {author} {\bibinfo {author} {\bibfnamefont {Gilad}\ \bibnamefont
  {Gour}}, \bibinfo {author} {\bibfnamefont {Iman}\ \bibnamefont {Marvian}}, \
  and\ \bibinfo {author} {\bibfnamefont {Robert~W.}\ \bibnamefont {Spekkens}},\
  }\bibfield  {title} {\enquote {\bibinfo {title} {Measuring the quality of a
  quantum reference frame: The relative entropy of frameness},}\ }\href
  {\doibase 10.1103/PhysRevA.80.012307} {\bibfield  {journal} {\bibinfo
  {journal} {Phys. Rev. A}\ }\textbf {\bibinfo {volume} {80}},\ \bibinfo
  {pages} {012307} (\bibinfo {year} {2009})}\BibitemShut {NoStop}%
\bibitem [{\citenamefont {Marian}\ and\ \citenamefont {Marian}(2013)}]{marian}%
  \BibitemOpen
  \bibfield  {author} {\bibinfo {author} {\bibfnamefont {Paulina}\ \bibnamefont
  {Marian}}\ and\ \bibinfo {author} {\bibfnamefont {Tudor~A.}\ \bibnamefont
  {Marian}},\ }\bibfield  {title} {\enquote {\bibinfo {title} {Relative entropy
  is an exact measure of non-{G}aussianity},}\ }\href {\doibase
  10.1103/PhysRevA.88.012322} {\bibfield  {journal} {\bibinfo  {journal} {Phys.
  Rev. A}\ }\textbf {\bibinfo {volume} {88}},\ \bibinfo {pages} {012322}
  (\bibinfo {year} {2013})}\BibitemShut {NoStop}%
\bibitem [{\citenamefont {{Lloyd}}\ \emph {et~al.}(2015)\citenamefont
  {{Lloyd}}, \citenamefont {{Liu}}, \citenamefont {{Pirandola}}, \citenamefont
  {{Chiloyan}}, \citenamefont {{Hu}}, \citenamefont {{Huberman}},\ and\
  \citenamefont {{Chen}}}]{heat}%
  \BibitemOpen
  \bibfield  {author} {\bibinfo {author} {\bibfnamefont {Seth}\ \bibnamefont
  {{Lloyd}}}, \bibinfo {author} {\bibfnamefont {Zi-Wen}\ \bibnamefont {{Liu}}},
  \bibinfo {author} {\bibfnamefont {Stefano}\ \bibnamefont {{Pirandola}}},
  \bibinfo {author} {\bibfnamefont {Vazrik}\ \bibnamefont {{Chiloyan}}},
  \bibinfo {author} {\bibfnamefont {Yongjie}\ \bibnamefont {{Hu}}}, \bibinfo
  {author} {\bibfnamefont {Samuel}\ \bibnamefont {{Huberman}}}, \ and\ \bibinfo
  {author} {\bibfnamefont {Gang}\ \bibnamefont {{Chen}}},\ }\bibfield  {title}
  {\enquote {\bibinfo {title} {{No energy transport without discord}},}\
  }\href@noop {} {\bibfield  {journal} {\bibinfo  {journal} {arXiv e-prints}\
  ,\ \bibinfo {eid} {arXiv:1510.05035}} (\bibinfo {year} {2015})},\ \Eprint
  {http://arxiv.org/abs/1510.05035} {arXiv:1510.05035 [quant-ph]} \BibitemShut
  {NoStop}%
\bibitem [{\citenamefont {Wilde}(2013)}]{Wilde:2013:QIT:2505455}%
  \BibitemOpen
  \bibfield  {author} {\bibinfo {author} {\bibfnamefont {Mark~M.}\ \bibnamefont
  {Wilde}},\ }\href@noop {} {\emph {\bibinfo {title} {Quantum Information
  Theory}}},\ \bibinfo {edition} {1st}\ ed.\ (\bibinfo  {publisher} {Cambridge
  University Press},\ \bibinfo {address} {New York, NY, USA},\ \bibinfo {year}
  {2013})\BibitemShut {NoStop}%
\bibitem [{\citenamefont {Watrous}(2016)}]{watrous}%
  \BibitemOpen
  \bibfield  {author} {\bibinfo {author} {\bibfnamefont {John}\ \bibnamefont
  {Watrous}},\ }\href@noop {} {\emph {\bibinfo {title} {Theory of Quantum
  Information}}} (\bibinfo {year} {2016}),\ \bibinfo {note}
  {https://cs.uwaterloo.ca/~watrous/TQI/TQI.pdf}\BibitemShut {NoStop}%
\bibitem [{\citenamefont {Brand\~ao}(2005)}]{brandao05}%
  \BibitemOpen
  \bibfield  {author} {\bibinfo {author} {\bibfnamefont {Fernando G. S.~L.}\
  \bibnamefont {Brand\~ao}},\ }\bibfield  {title} {\enquote {\bibinfo {title}
  {Quantifying entanglement with witness operators},}\ }\href {\doibase
  10.1103/PhysRevA.72.022310} {\bibfield  {journal} {\bibinfo  {journal} {Phys.
  Rev. A}\ }\textbf {\bibinfo {volume} {72}},\ \bibinfo {pages} {022310}
  (\bibinfo {year} {2005})}\BibitemShut {NoStop}%
\bibitem [{\citenamefont {Bu}\ \emph {et~al.}(2017)\citenamefont {Bu},
  \citenamefont {Singh}, \citenamefont {Fei}, \citenamefont {Pati},\ and\
  \citenamefont {Wu}}]{PhysRevLett.119.150405}%
  \BibitemOpen
  \bibfield  {author} {\bibinfo {author} {\bibfnamefont {Kaifeng}\ \bibnamefont
  {Bu}}, \bibinfo {author} {\bibfnamefont {Uttam}\ \bibnamefont {Singh}},
  \bibinfo {author} {\bibfnamefont {Shao-Ming}\ \bibnamefont {Fei}}, \bibinfo
  {author} {\bibfnamefont {Arun~Kumar}\ \bibnamefont {Pati}}, \ and\ \bibinfo
  {author} {\bibfnamefont {Junde}\ \bibnamefont {Wu}},\ }\bibfield  {title}
  {\enquote {\bibinfo {title} {Maximum relative entropy of coherence: An
  operational coherence measure},}\ }\href {\doibase
  10.1103/PhysRevLett.119.150405} {\bibfield  {journal} {\bibinfo  {journal}
  {Phys. Rev. Lett.}\ }\textbf {\bibinfo {volume} {119}},\ \bibinfo {pages}
  {150405} (\bibinfo {year} {2017})}\BibitemShut {NoStop}%
\bibitem [{\citenamefont {Nielsen}(1999)}]{nielsen}%
  \BibitemOpen
  \bibfield  {author} {\bibinfo {author} {\bibfnamefont {M.~A.}\ \bibnamefont
  {Nielsen}},\ }\bibfield  {title} {\enquote {\bibinfo {title} {Conditions for
  a class of entanglement transformations},}\ }\href {\doibase
  10.1103/PhysRevLett.83.436} {\bibfield  {journal} {\bibinfo  {journal} {Phys.
  Rev. Lett.}\ }\textbf {\bibinfo {volume} {83}},\ \bibinfo {pages} {436--439}
  (\bibinfo {year} {1999})}\BibitemShut {NoStop}%
\bibitem [{\citenamefont {Horodecki}\ \emph
  {et~al.}(2003{\natexlab{b}})\citenamefont {Horodecki}, \citenamefont
  {Horodecki},\ and\ \citenamefont {Oppenheim}}]{hho}%
  \BibitemOpen
  \bibfield  {author} {\bibinfo {author} {\bibfnamefont {Micha\l{}}\
  \bibnamefont {Horodecki}}, \bibinfo {author} {\bibfnamefont {Pawe\l{}}\
  \bibnamefont {Horodecki}}, \ and\ \bibinfo {author} {\bibfnamefont
  {Jonathan}\ \bibnamefont {Oppenheim}},\ }\bibfield  {title} {\enquote
  {\bibinfo {title} {Reversible transformations from pure to mixed states and
  the unique measure of information},}\ }\href {\doibase
  10.1103/PhysRevA.67.062104} {\bibfield  {journal} {\bibinfo  {journal} {Phys.
  Rev. A}\ }\textbf {\bibinfo {volume} {67}},\ \bibinfo {pages} {062104}
  (\bibinfo {year} {2003}{\natexlab{b}})}\BibitemShut {NoStop}%
\bibitem [{\citenamefont {Horodecki}\ and\ \citenamefont
  {Horodecki}(1999)}]{PhysRevA.59.4206}%
  \BibitemOpen
  \bibfield  {author} {\bibinfo {author} {\bibfnamefont {Micha\l{}}\
  \bibnamefont {Horodecki}}\ and\ \bibinfo {author} {\bibfnamefont {Pawe\l{}}\
  \bibnamefont {Horodecki}},\ }\bibfield  {title} {\enquote {\bibinfo {title}
  {Reduction criterion of separability and limits for a class of distillation
  protocols},}\ }\href {\doibase 10.1103/PhysRevA.59.4206} {\bibfield
  {journal} {\bibinfo  {journal} {Phys. Rev. A}\ }\textbf {\bibinfo {volume}
  {59}},\ \bibinfo {pages} {4206--4216} (\bibinfo {year} {1999})}\BibitemShut
  {NoStop}%
\bibitem [{\citenamefont {Fuchs}\ and\ \citenamefont {{van de
  Graaf}}(1999)}]{Fuchs-vandeGraaf}%
  \BibitemOpen
  \bibfield  {author} {\bibinfo {author} {\bibfnamefont {Chris~A.}\
  \bibnamefont {Fuchs}}\ and\ \bibinfo {author} {\bibfnamefont {Jeroen}\
  \bibnamefont {{van de Graaf}}},\ }\bibfield  {title} {\enquote {\bibinfo
  {title} {Cryptographic distinguishability measures for quantum-mechanical
  states},}\ }\href {\doibase 10.1109/18.761271} {\bibfield  {journal}
  {\bibinfo  {journal} {IEEE Transactions on Information Theory}\ }\textbf
  {\bibinfo {volume} {45}},\ \bibinfo {pages} {1216--1227} (\bibinfo {year}
  {1999})}\BibitemShut {NoStop}%
\bibitem [{\citenamefont {Dupuis}\ \emph {et~al.}(2013)\citenamefont {Dupuis},
  \citenamefont {Kr\"amer}, \citenamefont {Faist}, \citenamefont {Renes},\ and\
  \citenamefont {Renner}}]{dupuis}%
  \BibitemOpen
  \bibfield  {author} {\bibinfo {author} {\bibfnamefont {F.}~\bibnamefont
  {Dupuis}}, \bibinfo {author} {\bibfnamefont {L.}~\bibnamefont {Kr\"amer}},
  \bibinfo {author} {\bibfnamefont {P.}~\bibnamefont {Faist}}, \bibinfo
  {author} {\bibfnamefont {J.~M.}\ \bibnamefont {Renes}}, \ and\ \bibinfo
  {author} {\bibfnamefont {R.}~\bibnamefont {Renner}},\ }\enquote {\bibinfo
  {title} {Generalized entropies},}\ in\ \href {\doibase
  10.1142/9789814449243_0008} {\emph {\bibinfo {booktitle} {XVIIth
  International Congress on Mathematical Physics}}}\ (\bibinfo  {publisher}
  {World Scientific},\ \bibinfo {year} {2013})\ pp.\ \bibinfo {pages}
  {134--153}\BibitemShut {NoStop}%
\bibitem [{\citenamefont {Streltsov}\ \emph
  {et~al.}(2017{\natexlab{b}})\citenamefont {Streltsov}, \citenamefont
  {Adesso},\ and\ \citenamefont {Plenio}}]{coh_rev}%
  \BibitemOpen
  \bibfield  {author} {\bibinfo {author} {\bibfnamefont {Alexander}\
  \bibnamefont {Streltsov}}, \bibinfo {author} {\bibfnamefont {Gerardo}\
  \bibnamefont {Adesso}}, \ and\ \bibinfo {author} {\bibfnamefont {Martin~B.}\
  \bibnamefont {Plenio}},\ }\bibfield  {title} {\enquote {\bibinfo {title}
  {Colloquium: Quantum coherence as a resource},}\ }\href {\doibase
  10.1103/RevModPhys.89.041003} {\bibfield  {journal} {\bibinfo  {journal}
  {Rev. Mod. Phys.}\ }\textbf {\bibinfo {volume} {89}},\ \bibinfo {pages}
  {041003} (\bibinfo {year} {2017}{\natexlab{b}})}\BibitemShut {NoStop}%
\bibitem [{Note4()}]{Note4}%
  \BibitemOpen
  \bibinfo {note} {Note that the noise parameter $p$ here does not necessarily
  represent a probability: in the following we may consider $p>1$, where the
  depolarizing map induces a \protect \emph {pseudomixture} of original state
  and the maximally mixed state, for mathematical purposes}\BibitemShut
  {NoStop}%
\end{thebibliography}%

\appendix
\widetext



\section{Max- and min-relative entropies as quantum R\'enyi relative entropies}\label{app:renyi}

Here we elaborate on the statement that the max- and min-relative entropies, whose close relatives emerge from one-shot formation and distillation tasks respectively, correspond to two extremes of the quantum R\'enyi relative entropy. 
(We refer interested readers to \cite{Datta_2014,qrenyi} for more comprehensive discussions on the limits of quantum R\'enyi entropies.)
To this end, we first formally define two types of quantum R\'enyi relative entropies, which serve as the parent quantities of many important types of divergences between quantum states:
\begin{defn}
Let $\rho$ be a density operator and $\sigma$ be a positive semidefinite operator.  For $\alpha\in(0,1)\cup(1,\infty)$, the \emph{quantum R\'enyi-$\alpha$ relative entropy} between $\rho$ and $\sigma$ is given by
\begin{equation}
    D_\alpha(\rho\|\sigma) := \frac{1}{1-\alpha}\log\Tr\{\rho^\alpha\sigma^{1-\alpha}\},
\end{equation}
while the \emph{sandwiched quantum R\'enyi-$\alpha$ relative entropy} (also frequently referred to as the \emph{quantum R\'enyi-$\alpha$ divergence}) between $\rho$ and $\sigma$ is given by \cite{Wilde2014,qrenyi}
\begin{equation}
    \widetilde{D}_\alpha(\rho\|\sigma) := \frac{1}{1-\alpha}\log\Tr\{(\sigma^{\frac{1-\alpha}{2\alpha}}\rho\sigma^{\frac{1-\alpha}{2\alpha}})^\alpha\},
\end{equation}
which reduces to ${D}_\alpha(\rho\|\sigma)$ when $[\rho,\sigma]=0$.  The values are set to $\infty$ when they are not well-defined.
\end{defn}
We state the following facts:
\begin{itemize}
    \item The max-relative entropy $\dmax(\rho\|\sigma) = \log \min \{\lambda: \rho\leq \lambda \sigma\}$ is equivalent to $\widetilde{D}_\alpha(\rho\|\sigma)$ in the limit $\alpha\rightarrow\infty$;
    \item The min-relative entropy $\dmin(\rho\|\sigma) = -\log\Tr\{\Pi_\rho\sigma\}$ is equivalent to ${D}_0(\rho\|\sigma)$ since $\rho^0 = \Pi_\rho$ (hence it is often called the ``0-relative entropy'').
\end{itemize}

Note that the term ``min-relative entropy'' in some literature may refer to a different quantity which is equivalent to $\widetilde{D}_{1/2}(\rho\|\sigma)$ and also directly related to fidelity \cite{dupuis}: $\dmin'(\rho\|\sigma):= -2\log\norm{\sqrt{\rho}\sqrt{\sigma}}_1 = -\log f(\rho,\sigma)$.

Also note that the quantum R\'enyi relative entropies are monotonically non-decreasing in the order $\alpha$.  Therefore, loosely speaking, the optimality bounds for formation and distillation, in terms of (variants of) the max- and min-relative entropy measures respectively, are normally in their strongest forms.


\section{Non-affine theory without free robustness}\label{app:fr_affine}

It is clear from the definition of the free robustness that if the set of free states $\mF$ is an affine set ($\sigma_1, \sigma_2\in \mF\Rightarrow t\sigma_1 + (1-t)\sigma_2\in\mF$ for any $t\in \mathbb{R}$ such that $t\sigma_1 + (1-t)\sigma_2\geq 0$), then any resource state $\rho\notin\mF$ does not have finite free robustness. 
Here, we show that the non-affinity of $\mF$ does not necessarily imply that all states $\rho\in\mD(\mH)$ possess finite free robustness, which, to our knowledge, had not been explicitly presented in the literature.

For resource theories such that the set of free states is the convex hull of the set of pure free states, the condition for all states to have finite free robustness was shown in Ref.~\cite{regula}: for $\mF={\rm conv}(\mathcal{S}_+)$ where $\mathcal{S}_+=\big\{\dm{\phi}\,\big|\,\ket{\phi}\in\mathcal{V}(\mH),\ \mathcal{V}(\mH)\mbox{ is the set of pure free states}\big\}$, all states $\rho\in\mD(\mH)$ have finite free robustness iff $\mD(\mH)\subseteq {\rm span}(\mathcal{S}_+)$.
A corresponding result for general convex theories can be obtained in a simple manner, where we find that all states $\rho\in\mD(\mH)$ possess finite free robustness iff $\mD(\mH)\subseteq {\rm span}(\mF)$. 
To show this, first observe that if all states possess finite free robustness, then there exists a decomposition $\rho=(1+s)\tau-s\sigma$ where $\tau,\sigma\in\mF$ for all $\rho\in\mD(\mH)$, and thus $\mD(\mH)\subseteq {\rm span}(\mF)$.
On the other hand, if $\mD(\mH)\subseteq {\rm span}(\mF)$, then any state $\rho\in\mD(\mH)$ can be written as a linear combination of free states:
\ba
\rho=\sum_j c_j\sigma_j=\sum_{j\in \mathcal{I}^+} c_j\sigma_j - \sum_{j'\in\mathcal{I}^-} |c_{j'}|\sigma_{j'},
\ea
where $\sigma_j\in\mF$, $c_j\in\mathbb{R}\setminus\{0\}$ for all $j$, and $\mathcal{I}^{\pm}$ denotes the set of indices for which $c_j$ are positive and negative. 
Due to the convexity of $\mF$, we have that $\sum_{j\in \mathcal{I}^{\pm}} |c_j|\sigma_j\in {\rm cone}(\mF)$, and due to the fact that $\Tr\rho=1$, one can write $\sum_{j\in \mathcal{I}^+} c_j\sigma_j=(1+s)\,\tau$ and $\sum_{j\in \mathcal{I}^-} |c_j|\sigma_j=s\,\sigma$ for some $s\geq0$, $\tau,\sigma\in\mF$. 

We now give an example of non-affine theories for which there exist states without finite free robustness. Consider the resource theory defined on qubit systems such that the set of free states is the convex hull of $\{\dm{0},\dm{1},\dm{+},\dm{-}\}$.
It is not affine because any state on the $X$-$Z$ plane of the Bloch sphere can be constructed by an affine combination of these four states.
However, clearly $\mD(\mH)\not\subseteq {\rm span}(\mF)$ because the span of the free states can only involve real coefficients with respect to the computational basis.
Hence, there are states (more explicitly, the states with imaginary coefficients) which do not have finite free robustness.

\section{Exact resource destroying maps}\label{app:erdm}

We identify a special kind of resource destroying map, which ``picks out'' the closest free state with respect to the standard relative entropy of resource:
\begin{defn}[Exact resource destroying map]
A resource destroying map $\lambda$ is called an \emph{exact resource destroying map} if it maps all states $\rho$ to a closest free state 
with respect to the relative entropy, i.e.
\begin{equation}
    D(\rho\|\lambda(\rho)) = \min_{\sigma\in\mF}D(\rho\|\sigma),
\end{equation}
where $D(\rho\|\sigma):=\Tr\{\rho\log\rho-\rho\log\sigma\}$ denotes the relative entropy between $\rho$ and $\sigma$.
\end{defn}

All resource theories have an exact resource destroying map by definition.
However, of particular interest are the cases where the exact resource destroying map is ``simple''---meaning that, for example, it is a quantum channel, or can be characterized by some other natural principles.  There are several good examples (in addition to the trivial example of quantum thermodynamics, where the unique resource destroying map is the constant channel that outputs the Gibbs state for the given Hamiltonian and temperature):
\begin{table}[h]
\caption{\label{tab:example}Notable resource theories with a natural exact resource destroying map.}
\begin{ruledtabular}
\begin{tabularx}{\textwidth}{lp{0.8\textwidth}}
Theory & Exact resource destroying map\\  \hline
Coherence & Complete dephasing channel: $\Delta(\rho) = \sum_i\ketbra{i}{i}\rho\ketbra{i}{i}$ where $\{\ketbra{i}{i}\}$ is the reference basis \cite{coh_rev}.\\
Asymmetry & Uniform twirling channel (over symmetry group $G$): $\mathcal{G}(\rho) = \int_{G} d\mu(U) U\rho U^\dagger$ where the integral is taken over the normalized Haar measure on $G$  (for finite groups, the integral is replaced by the average over the action of group elements) \cite{PhysRevA.80.012307}. \\
Non-Gaussianity\footnote{Note that the Hilbert space of continuous-variable systems is infinite-dimensional, so the discussions in this paper do not directly apply to the theory of non-Gaussianity.  But it still serves as an interesting example where a ``simple'' resource destroying map is exact.} & 
$\lambda(\rho) = \tau_G$ where $\tau_G$ is the Gaussian state with the same mean displacement and covariance matrix as $\rho$ \cite{marian}.
\end{tabularx}
\end{ruledtabular}
\end{table}

The closest free states are not easily identified for many other resource theories.  Or in other words, the exact resource destroying map does not admit a simple description. 
The above theories exhibit the nice feature that the intuitive resource destroying map turns out to be exact. However, for various other theories (such as magic states), it is not immediately clear what a nontrivial resource destroying map is.
A particularly interesting case is quantum discord, where there is a very natural resource destroying map (namely, measurement in the local eigenbasis) \cite{rdm} that leads to physically meaningful and mathematically well-behaved discord measures \cite{dqd,heat}.  Unfortunately, this resource destroying map is not exact, and furthermore, the closest free states in the theory of discord do not seem to exhibit any generic feature (see Ref.~\cite{dqd} for more detailed discussions).

\section{Free robustness implies nonlinearity of resource destroying maps}\label{app:freer}
\begin{prop}
If free robustness always exists in the resource theory, that is, for any resource state $\rho\not\in \mF$ there exists some free states $\sigma,\tau\in \mF$ such that $\rho+s\sigma=(1+s)\tau$ for some $s\geq 0$, then the resource destroying maps of this theory must be nonlinear (and thus resource destroying channels do not exist).
\end{prop}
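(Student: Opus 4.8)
The plan is to argue by contradiction: assume the theory admits a resource destroying map $\lambda$ that is linear (equivalently, affine on the set of density operators, which is the relevant notion since every quantum channel restricts to an affine map on states), and show that this forces $\lambda$ to act trivially on some resource state, contradicting the defining property of an RD map.

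First I would fix an arbitrary resource state $\rho\notin\mF$, which exists in any nontrivial theory. By the hypothesis that free robustness always exists, there are free states $\sigma,\tau\in\mF$ and a finite $s\geq 0$ with $\rho+s\sigma=(1+s)\tau$. Dividing by $1+s$ rewrites this as the convex combination
\begin{equation*}
\frac{1}{1+s}\,\rho + \frac{s}{1+s}\,\sigma = \tau,
\end{equation*}
which is the key structural input: the resource state $\rho$ lies on a line segment between two free states $\sigma$ and $\tau$.

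Next I would apply $\lambda$ to both sides and exploit its two defining properties together with affinity. Since $\sigma,\tau\in\mF$, the RD map fixes them, $\lambda(\sigma)=\sigma$ and $\lambda(\tau)=\tau$; affinity then yields $\frac{1}{1+s}\lambda(\rho) + \frac{s}{1+s}\sigma = \tau$. Comparing this with the displayed identity and using that the coefficient $\frac{1}{1+s}$ is strictly positive (guaranteed by $s<\infty$), I can cancel the common term $\frac{s}{1+s}\sigma$ and conclude $\lambda(\rho)=\rho$. But $\rho\notin\mF$, whereas the RD map must send every resource state into $\mF$, so $\lambda(\rho)=\rho\in\mF$ is a contradiction. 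Hence no linear RD map can exist, and in particular there is no resource destroying channel.

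The argument is short, so there is no genuine computational obstacle; the only point requiring care is the precise reading of \textbf{linear}. Because an RD map is a priori specified only on states, the notion actually used is affinity on the convex set of density operators (into which every quantum channel embeds), and it is exactly this affinity that I invoke after dividing the robustness identity by $1+s$. I would also flag that the finiteness of $s$ — i.e.\ that free robustness genuinely \emph{exists} rather than being infinite — is what makes $1/(1+s)>0$ and licenses the cancellation; were robustness allowed to be infinite the decomposition would degenerate and the conclusion could fail.
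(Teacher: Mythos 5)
Your proof is correct and follows essentially the same route as the paper's: both apply the (affine) resource destroying map to the robustness decomposition of $\rho$ in terms of the free states $\sigma,\tau$, use the free-state-fixing property to deduce $\lambda(\rho)=\rho$, and contradict the requirement that $\lambda(\rho)\in\mF$. Your additional remark on reading ``linear'' as affinity on the state space is a fair clarification but does not change the argument.
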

\begin{proof}
Suppose there exists some linear resource destroying map $\Lambda$. Then 
\begin{equation}
    \Lambda(\rho) = (1+s)\Lambda(\tau) -s\Lambda(\sigma) = (1+s)\tau -s\sigma,
\end{equation}
where the second equality follows from the fact that resource destroying maps are nonresource fixing.   Therefore, we have $\rho = \Lambda(\rho) \in \mF$, which contradicts the assumption that $\rho$ is a resource state.
\end{proof}
This result, which had not been explicitly pointed out in the past, is essentially a combination of the facts that the set of free states must be non-affine for any state to possess finite free robustness (as mentioned in the main text) and that the set of free states must be affine for a resource destroying channel to exist as shown in Ref.~\cite{PhysRevA.95.062314}. 
This improves the basic observation in Ref.~\cite{rdm} that the nonconvexity of $\mF$ implies nonlinearity of resource destroying maps.  
Also note that the absence of finite free robustness for some states is a necessary but not sufficient condition for the existence of resource destroying channels.  The sufficient condition is also characterized in Ref.~\cite{PhysRevA.95.062314}.

\section{Collapse of the optimized modification coefficients (part of Theorem 1)} \label{app:collapse}

Here, we prove the first half of Theorem 1, collapse of the modification coefficients based on resource measures. We prove the other half of the Theorem in the next section. 

We first show that the minimum fidelity with respect to convex and closed sets of free states can always be realized at a pure state. 

\begin{lem}\label{lem:pure}
 Suppose $\mF\subset {\mD}(\mH_d)$ is convex and closed. Then, for any $d$, there exists a pure state $\hat\Phi_d\in{\mD}(\mH_d)$ that achieves the minimum of the free fidelity.

\end{lem}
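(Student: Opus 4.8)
The plan is to prove that the \emph{free fidelity} $\mathfrak{f}(\rho)=\max_{\sigma\in\mF}f(\rho,\sigma)$ is a concave function of $\rho$ on $\mD(\mH_d)$; once this is established, the minimum of $\mathfrak{f}$ over the compact convex set $\mD(\mH_d)$ must be attained at an extreme point, and the extreme points of $\mD(\mH_d)$ are exactly the pure states, which yields the desired $\hat\Phi_d$.

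First I would record two standard ingredients. The maximum defining $\mathfrak{f}$ is attained because $\mF$ is closed, hence compact in finite dimension, while $f(\rho,\cdot)$ is continuous; this also makes $\mathfrak{f}$ continuous on $\mD(\mH_d)$, so a minimizer exists in the first place. The key analytic input is the joint concavity of the Uhlmann fidelity $f$ in its two arguments (equivalently, one may work with the root fidelity $\sqrt{f}=\norm{\sqrt{\rho}\sqrt{\sigma}}_1$, which is jointly concave, and note that minimizing $\mathfrak{f}$ is the same as minimizing $\sqrt{\mathfrak{f}}=\max_{\sigma\in\mF}\sqrt{f(\rho,\sigma)}$).

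The central step combines this with the convexity of $\mF$. Given $\rho=\lambda\rho_1+(1-\lambda)\rho_2$, let $\sigma_i\in\mF$ attain $\mathfrak{f}(\rho_i)$ and put $\bar\sigma=\lambda\sigma_1+(1-\lambda)\sigma_2$, which lies in $\mF$ by convexity. Then
\begin{equation*}
\mathfrak{f}(\rho)\ \ge\ f(\rho,\bar\sigma)\ \ge\ \lambda f(\rho_1,\sigma_1)+(1-\lambda)f(\rho_2,\sigma_2)\ =\ \lambda\,\mathfrak{f}(\rho_1)+(1-\lambda)\,\mathfrak{f}(\rho_2),
\end{equation*}
where the first inequality is the suboptimality of $\bar\sigma$ and the second is the joint concavity of $f$. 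Hence $\mathfrak{f}$ is concave. By Bauer's minimum principle, a continuous concave function on a compact convex set attains its minimum at an extreme point, so the minimizer may be chosen to be a pure state $\hat\Phi_d$.

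I expect the only real subtlety to be this concavity step: naively $\mathfrak{f}$ is a pointwise \emph{maximum} of the concave maps $f(\cdot,\sigma)$, and such maxima are generally not concave. The resolution is that the maximization ranges over the \emph{convex} set $\mF$, which is precisely what allows averaging the optimizers $\sigma_1,\sigma_2$ into $\bar\sigma\in\mF$ and thereby restores concavity. The remaining ingredients (joint concavity of fidelity, Bauer's principle, and the identification of the extreme points of $\mD(\mH_d)$ with pure states) are standard, so I do not anticipate further difficulty.
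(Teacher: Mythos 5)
Your overall strategy---establish concavity of the free fidelity, then invoke Bauer's minimum principle to push the minimizer to an extreme point of $\mD(\mH_d)$, i.e.\ a pure state---is the same as the paper's (the paper replaces Bauer's principle by an explicit contradiction via a pure-state decomposition of a putative mixed minimizer, but that is cosmetic). However, your central displayed inequality rests on a false premise: the Uhlmann fidelity as defined in this paper, $f(\rho,\sigma)=\bigl(\Tr\sqrt{\sqrt{\sigma}\rho\sqrt{\sigma}}\bigr)^2$, is \emph{not} jointly concave. Only the root fidelity $\sqrt{f}=\norm{\sqrt{\rho}\sqrt{\sigma}}_1$ is, and the two statements are not ``equivalent'' as you assert, since squaring a nonnegative concave function does not preserve concavity. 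A standard counterexample (essentially the one in Watrous's book): on a qutrit take $\rho_1=\sigma_1=\dm{0}$, $\rho_2=\dm{1}$, $\sigma_2=\dm{2}$, $\lambda=1/2$; then $\lambda f(\rho_1,\sigma_1)+(1-\lambda)f(\rho_2,\sigma_2)=1/2$, while the averaged states are commuting diagonal states with $f(\bar\rho,\bar\sigma)=1/4$. So the step
\begin{equation*}
f(\rho,\bar\sigma)\ \ge\ \lambda f(\rho_1,\sigma_1)+(1-\lambda)f(\rho_2,\sigma_2)
\end{equation*}
is unjustified, and concavity of $\mathfrak{f}$ itself does not follow from your argument.

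The gap is easily repaired, and the repair is exactly what the paper does: run the identical chain of inequalities for the root quantities, concluding that $\sqrt{\mathfrak{f}}(\rho)=\max_{\sigma\in\mF}\sqrt{f(\rho,\sigma)}$ is concave (here the joint concavity of $\sqrt{f}$ together with the convexity of $\mF$ is correctly applicable). Since $x\mapsto x^2$ is increasing on $[0,1]$, minimizing $\mathfrak{f}$ is the same as minimizing $\sqrt{\mathfrak{f}}$, and Bauer's principle applied to the concave function $\sqrt{\mathfrak{f}}$ then delivers a pure minimizer. Your observation that the pointwise maximum over the convex set $\mF$ is what rescues concavity is correct and is indeed the heart of the argument---but it rescues concavity of $\sqrt{\mathfrak{f}}$, not of $\mathfrak{f}$.
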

\begin{proof}
First we show that the square root of the free fidelity $\sqrt{\mathfrak{f}}$ is concave.
Let $\rho=\sum_i p_i\rho_i,p_i\in[0,1]$ be some probabilistic decomposition of state $\rho$, and $\tilde{\delta}_i$ be the optimal free state that achieves $\mathfrak{f}(\rho_i) = \max_{\delta\in \mF}f(\rho_i, \delta)$. We have
\begin{equation}
    \sqrt{\mathfrak{f}(\rho)} = \max_{\delta\in \mF}\sqrt{f(\rho, \delta)} \geq \sqrt{f(\rho,\sum_i p_i\tilde{\delta}_i)}
    \geq \sum_i p_i \sqrt{f(\rho_i,\tilde{\delta}_i)} = \sum_i p_i \sqrt{\mathfrak{f}(\rho_i)},  \label{eq:concavity_f}
\end{equation}
where the first inequality follows from the fact that $\sum_i p_i \tilde{\delta}_i \in \mF$ due to the convexity of $\mF$, and the second inequality follows from the joint concavity of the square root of the Uhlmann fidelity $\sqrt{f}$ \cite{Wilde:2013:QIT:2505455,watrous}.
Suppose for some dimension, the minimum of $\mathfrak{f}(\rho)$ (equivalently, the minimum of $\sqrt{\mathfrak{f}(\rho)}$) is only achieved by mixed states.  Let $\sigma$ be such a mixed state which has a pure state decomposition $\sigma = \sum_i p_i\ketbra{\lambda_i}{\lambda_i}$.  Then since $\mathfrak{f}(\sigma)< \mathfrak{f}(\ketbra{\lambda_i}{\lambda_i})$ for all $i$, we have $\sqrt{\mathfrak{f}(\sigma)} < \sum_i p_i \sqrt{\mathfrak{f}(\ketbra{\lambda_i}{\lambda_i})}$,
which contradicts the concavity of $\sqrt{\mathfrak{f}}$, Eq.~(\ref{eq:concavity_f}).   In conclusion, the minimum-free-fidelity state can be pure for all dimensions.
\end{proof}
The similar proof easily applies to other max-resource states given by convex resource monotones.
We are now in a position to prove the first half of the Theorem. 
\begin{prop}\label{thm:collapse}
Suppose the resource theory satisfies Condition (CH). Then, for any $d$, there exists a pure state $\hat\Phi_d\in\mD(\mH_d)$ such that $m_f(\hat\Phi_d) = m_{\min}(\hat\Phi_d) = m_{\max}(\hat\Phi_d) := g_d$ where $\hat\Phi_d$ achieves the maxima of $m_f,m_{\min},m_{\max}$.
\end{prop}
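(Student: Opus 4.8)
The plan is to reduce all three equalities to a single global inequality on the generalized robustness and then settle that inequality by an SDP–duality (witness) argument in which Condition (CH) is the essential ingredient. First I would record two elementary facts that need neither (CH) nor optimality. For a pure state $\rho=\dm\psi$ one has $\Pi_\rho=\dm\psi$, so $\dmin(\dm\psi\|\sigma)=-\log\bra\psi\sigma\ket\psi=-\log f(\dm\psi,\sigma)$; minimizing over $\sigma\in\mF$ gives $\mathfrak{D}_{\min}(\dm\psi)=-\log\mathfrak{f}(\dm\psi)$, i.e.\ $m_{\min}(\dm\psi)=m_f(\dm\psi)$ for \emph{every} pure state. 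For a general state, the inequality $\Tr\{\Pi_\rho\sigma\}\ge f(\rho,\sigma)$ (equivalently $D_0\le\widetilde D_{1/2}$ from the R\'enyi ordering of Appendix A) gives $m_{\min}(\rho)\le m_f(\rho)$, while monotonicity of the R\'enyi relative entropies in $\alpha$ yields $\dmin(\rho\|\sigma)\le\dmax(\rho\|\sigma)$ for each $\sigma$, hence $m_{\min}(\rho)\le m_{\max}(\rho)$. By Lemma~\ref{lem:pure} the free fidelity attains its minimum at a pure state $\hat\Phi_d$; I set $g_d:=m_f(\hat\Phi_d)=\max_\rho m_f(\rho)$ and $F_d:=\mathfrak{f}(\hat\Phi_d)=\min_\rho\mathfrak{f}(\rho)$, so that $g_d=-\log F_d/\log d$. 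These facts already give $m_{\min}(\hat\Phi_d)=g_d$, that $\hat\Phi_d$ maximizes $m_{\min}$ (since $m_{\min}\le m_f\le g_d$ everywhere), and the lower bound $m_{\max}(\hat\Phi_d)\ge m_{\min}(\hat\Phi_d)=g_d$. Thus the whole statement reduces to the single global bound $m_{\max}(\rho)\le g_d$ for all $\rho$.

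For this crux I would pass to the generalized robustness. Using $\mathfrak{D}_{\max}(\rho)=\log(1+R_G(\rho))$, the target is $1+R_G(\rho)\le 1/F_d$ for every $\rho$. Maximizing the standard dual SDP $1+R_G(\rho)=\max\{\Tr\{W\rho\}:W\ge0,\ \Tr\{W\sigma\}\le1\ \forall\sigma\in\mF\}$ over $\rho$ as well, and using $\max_\rho\Tr\{W\rho\}=\norm{W}_\infty$ for $W\ge0$, gives $\max_\rho(1+R_G(\rho))=\max\{\norm{W}_\infty:W\ge0,\ \Tr\{W\sigma\}\le1\ \forall\sigma\in\mF\}$. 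Here Condition (CH) is what tames the constraint: since $\mF$ is the convex hull of pure free states $\{\dm{\phi_i}\}$, the linear functional $\sigma\mapsto\Tr\{W\sigma\}$ attains its maximum at an extreme point, so $\Tr\{W\sigma\}\le1\ \forall\sigma\in\mF$ is equivalent to $\bra{\phi_i}W\ket{\phi_i}\le1$ for all generating pure free states. Taking an optimal witness $W^\star$ with top eigenvalue $\norm{W^\star}_\infty$ and eigenvector $\ket\xi$, and a pure free state $\ket{\phi_{i^\star}}$ realizing $\mathfrak{f}(\dm\xi)=\max_{\sigma\in\mF}\bra\xi\sigma\ket\xi=|\bracket{\xi}{\phi_{i^\star}}|^2$, the operator inequality $W^\star\ge\norm{W^\star}_\infty\dm\xi$ yields
\[
1\ \ge\ \bra{\phi_{i^\star}}W^\star\ket{\phi_{i^\star}}\ \ge\ \norm{W^\star}_\infty\,|\bracket{\xi}{\phi_{i^\star}}|^2\ =\ \norm{W^\star}_\infty\,\mathfrak{f}(\dm\xi)\ \ge\ \norm{W^\star}_\infty\,F_d,
\]
where the last step uses $\mathfrak{f}(\dm\xi)\ge\min_\rho\mathfrak{f}(\rho)=F_d$. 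Hence $\max_\rho(1+R_G(\rho))=\norm{W^\star}_\infty\le 1/F_d$, i.e.\ $m_{\max}(\rho)\le g_d$ for all $\rho$, as needed.

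Combining the two directions, $m_{\max}(\hat\Phi_d)$ is squeezed between the lower bound $g_d$ and the global upper bound $g_d$, so $m_f(\hat\Phi_d)=m_{\min}(\hat\Phi_d)=m_{\max}(\hat\Phi_d)=g_d$; the same global bound shows $\hat\Phi_d$ simultaneously maximizes all three coefficients. I expect the main obstacle to be the crux inequality of the second paragraph: it is the only place where (CH) is genuinely used (to collapse the dual constraint onto the pure free states), and the only place demanding care about regularity—finiteness of $R_G$ and strong duality for the robustness SDP require a full-rank free state, i.e.\ that the free states span $\mH_d$. This holds automatically under (CH) once the generating pure states span, and is consistent with the finite-robustness discussion of Appendix B. Everything else is bookkeeping with the definitions and the R\'enyi ordering.
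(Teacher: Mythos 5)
Your proof is correct, and the key step is handled by a genuinely different argument than the paper's. The paper, after invoking Lemma~\ref{lem:pure} and the identity $m_f=m_{\min}$ on pure states just as you do, establishes $m_{\max}(\hat\Phi_d)=m_f(\hat\Phi_d)$ by leaning on the result of Ref.~\cite{regula} that under Condition (CH) the optimal robustness witness for a \emph{pure} state is rank-one; it then explicitly identifies the optimal witness for $\hat\Phi_d$ as $\mathfrak{f}(\hat\Phi_d)^{-1}\dm{\hat\Phi_d}$, and reuses the rank-one structure together with convexity of $R_G$ to show $\hat\Phi_d$ is the global maximizer. You instead prove the single global bound $\max_\rho\bigl(1+R_G(\rho)\bigr)\le 1/\mathfrak{f}(\hat\Phi_d)$ by showing that \emph{every} feasible dual witness $W$ satisfies $\|W\|_\infty\le 1/\mathfrak{f}(\hat\Phi_d)$, and squeeze $m_{\max}(\hat\Phi_d)$ between this and the lower bound $m_{\max}\ge m_{\min}=g_d$. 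This buys you independence from the rank-one-optimal-witness lemma (you never need to know the form of the optimizer), and it packages the "collapse at $\hat\Phi_d$" and "global maximality of $m_{\max}$" into one inequality; the paper's route, in exchange, pins down the optimal witness exactly, which it reuses elsewhere (e.g.\ in Proposition~\ref{thm:collapse_thermo} and Corollary~\ref{cor:root}). Two minor points, neither fatal: (i) your justification of $f(\rho,\sigma)\le\Tr\{\Pi_\rho\sigma\}$ as "the R\'enyi ordering of Appendix A" is imprecise, since $D_0$ and $\widetilde D_{1/2}$ belong to different families and monotonicity in $\alpha$ within one family does not compare them; the inequality is nevertheless a standard fact (Cauchy--Schwarz applied to $\Tr\{U\sqrt\rho\,\Pi_\rho\sqrt\sigma\}$ gives it directly), and alternatively the paper's route via joint convexity of $D_{\min}$ works. (ii) In your crux, Condition (CH) is actually only used through the convexity of $\mF$ (needed for Lemma~\ref{lem:pure} and the duality), not through the extremal pure-state structure: the chain $1\ge\Tr\{W\sigma^\star\}\ge\|W\|_\infty\bra{\xi}\sigma^\star\ket{\xi}=\|W\|_\infty\,\mathfrak{f}(\dm{\xi})\ge\|W\|_\infty\,\mathfrak{f}(\hat\Phi_d)$ goes through with $\sigma^\star$ any (not necessarily pure) optimizer of $\mathfrak f(\dm\xi)$, so your argument is in fact slightly more general than stated. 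Both proofs rest equally on the cited dual SDP characterization of $1+R_G$, so your reliance on strong duality is not an additional gap relative to the paper.
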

\begin{proof}
Due to Lemma \ref{lem:pure}, there always exists a pure state $\hat\Phi_d$ which achieves the maximum of $m_f$. 
It is also straightforward to see that $m_f(\hat\Phi_d) = m_{\min}(\hat\Phi_d)$ because they coincide at any pure state. 
We now prove that $m_f(\hat\Phi_d) = m_{\max}(\hat\Phi_d)$.   Recall that $2^{\rdmax(\rho)}$, which is equivalent to $1+R_G(\rho)$, can be computed by the following convex optimization problem \cite{brandao05,regula}:
\begin{eqnarray}
\text{maximize}\quad && \Tr\{\rho X\}\\
\text{subject to}\quad && X\geq 0\\
                    &&  \Tr\{\delta X\}\leq 1, \forall\delta\in \mF  \label{tr_constraint}
\end{eqnarray}
It is shown in \cite{regula} that if $\mF$ satisfies Condition (CH), then the optimal witness operator $X$ for pure state $\rho$ always takes a rank-one form, i.e.~can be written as $X = c\ketbra{w}{w}$ for some $c\geq 0$ and pure state $\ket{w}$.
We shall show that when $\rho = \hat\Phi_d = \ketbra{\hat\Phi_d}{\hat\Phi_d}$, the optimal $X$ is identified by $c^{-1}=\mathfrak{f}(\hat\Phi_d) = \max_{\delta\in \mF}\Tr\{\hat\Phi_d\delta\}$ and $\ket{w}=\ket{\hat\Phi_d}$.
By (\ref{tr_constraint}), for all $\delta\in \mF$, $c\leq\bra{w}\delta\ket{w}^{-1}$.  
So the maximal $c$ as a function of $\ket{w}$ under this constraint is given by $\tilde{c}_{\ket{w}}=(\max_{\delta\in \mF}\bra{w}\delta\ket{w})^{-1} = \mathfrak{f}^{-1}(w)$.  Then, notice that $\max_{\ket{\psi}}\tilde{c}_{\ket{\psi}} = \tilde{c}_{\ket{\hat\Phi_d}}$ simply because $\mathfrak{f}$ achieves the minimum at $\hat\Phi_d$.  Therefore, the target function $\Tr\{\hat\Phi_d X\} = c|\bracket{\hat\Phi_d}{w}|^2$ achieves the maximum when $\ket{w} = \ket{\hat\Phi_d}$ since $c$ and $|\bracket{\hat\Phi_d}{w}|^2$ achieve the maximum simultaneously when $\ket{w} = \ket{\hat\Phi_d}$.   So we have $2^{\rdmax(\hat\Phi_d)} = \tilde{c}_{\ket{\hat\Phi_d}} = \mathfrak{f}^{-1}(\hat\Phi_d)$, which implies that $m_f(\hat\Phi_d) = m_{\max}(\hat\Phi_d)$.

Next, let us see that $\hat\Phi_d$ achieves the maximum of $m_{\min}$ and $m_{\max}$ for the same dimension $d$.
Since $m_f(\Phi_d)=m_{\min}(\Phi_d)$ for pure states $\Phi_d$, $\hat\Phi_d$ gives the maximum value of $m_{\min}$ among all pure states of dimension $d$. 
Recall that $D_{\rm min}(\rho\|\sigma)$ is jointly convex \cite{dattarel}. 
By a similar argument as we saw above, one can show that $\mathfrak{D}_{\rm min}(\rho)=\min_{\delta\in \mF}D_{\rm min}(\rho\|\delta)$ is convex.
Therefore, the maximum of $\mathfrak{D}_{\rm min}(\rho)$ can be achieved by some pure state. 
Since $\hat\Phi_d$ is the state that gives the maximum value among all pure states, it also maximizes $\mathfrak{D}_{\rm min}(\rho)$ among all states.  
To see that it also maximizes $m_{\max}$, recall that $R_G(\rho)$ is convex~\cite{regula}, and so the maximum can be realized at a pure state. 
Let $\ket{\psi}$ be such a pure state and $X=\tilde{c}_{\ket{w}}\dm{w}$ be an optimal witness realizing the generalized robustness. 
Then, $1+R_G(\dm{\psi})=\tilde{c}_{\ket{w}}|\bracket{\psi}{w}|^2\leq \tilde{c}_{\ket{\hat\Phi_d}}\cdot 1 =1+ R_G(\hat\Phi_d)$, which implies that $\hat\Phi_d$ gives the maximum $R_G$ among all pure states, and hence all states.
\end{proof}

We remark that the collapse can still happen for more theories which does not satisfy Condition (CH).
An important example of such a theory is quantum thermodynamics, where the only free state is the Gibbs state. 
One can nevertheless show that the corresponding result still holds for the theory of quantum thermodynamics.

\begin{prop}[Quantum thermodynamics]\label{thm:collapse_thermo}
Consider the theory with $\mF=\{\tau_d\}\subset {\mD}(\mH_d)$ where $\tau_d$ is the Gibbs state defined on the system of dimension $d$ with temperature $T$ and Hamiltonian $H=\sum_i E_i\dm{i}$.
Then for pure minimum-free-fidelity state $\hat\Phi_d$ (whose existence is guaranteed by Lemma \ref{lem:pure}), we have $m_f(\hat\Phi_d) = m_{\min}(\hat\Phi_d) = m_{\max}(\hat\Phi_d)$.
Moreover, $\hat\Phi_d$ also achieves the maximum of $m_{\min}$ and $m_{\max}$.
\end{prop}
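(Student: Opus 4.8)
The plan is to leverage the fact that in quantum thermodynamics the free set $\mF=\{\tau_d\}$ is a singleton, so all three measures collapse onto divergences against the \emph{same} fixed Gibbs state, and then to identify the extremal pure state explicitly as the least-populated energy eigenstate. Write $\tau_d=\sum_i p_i\dm{i}$ with Boltzmann weights $p_i=e^{-E_i/T}/Z$, which are strictly positive at finite temperature so that $\tau_d$ is full rank and $\tau_d^{-1}$ exists; let $\ket{i_\star}$ be an eigenvector of minimal weight $p_\star:=\min_i p_i$. Because $\mF$ is a single state, the measures reduce to $\mathfrak{f}(\rho)=f(\rho,\tau_d)$, $\mathfrak{D}_{\min}(\rho)=D_{\min}(\rho\|\tau_d)$, and $\mathfrak{D}_{\max}(\rho)=D_{\max}(\rho\|\tau_d)$. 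For a pure state $\dm{\psi}$ one has $f(\dm{\psi},\tau_d)=\bra{\psi}\tau_d\ket{\psi}$ and, since the support projector of a pure state is the state itself, $D_{\min}(\dm{\psi}\|\tau_d)=-\log\bra{\psi}\tau_d\ket{\psi}$; hence $m_f=m_{\min}$ holds identically on pure states, exactly as in the proof of Proposition~\ref{thm:collapse}. Minimizing the free fidelity over pure states (the state whose existence is guaranteed by Lemma~\ref{lem:pure}) therefore amounts to minimizing $\bra{\psi}\tau_d\ket{\psi}$, whose minimum over unit vectors is $p_\star$, attained at $\hat\Phi_d=\dm{i_\star}$. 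This pins down the minimum-free-fidelity state as the highest-energy eigenstate.

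The collapse then follows from a one-line eigenstate computation. Since $\tau_d\ket{i_\star}=p_\star\ket{i_\star}$, also $\tau_d^{-1}\ket{i_\star}=p_\star^{-1}\ket{i_\star}$, so for this pure eigenstate $D_{\max}(\dm{i_\star}\|\tau_d)=\log\min\{\lambda:\dm{i_\star}\le\lambda\tau_d\}=\log\bra{i_\star}\tau_d^{-1}\ket{i_\star}=-\log p_\star$, which equals $D_{\min}(\dm{i_\star}\|\tau_d)$. Dividing by $\log d$ gives $m_f(\hat\Phi_d)=m_{\min}(\hat\Phi_d)=m_{\max}(\hat\Phi_d)=-\log p_\star/\log d$.

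It remains to verify that this same $\hat\Phi_d$ is \emph{globally} maximal for $m_{\min}$ and $m_{\max}$, not merely among pure states. For $m_{\min}$ I would argue directly from $\mathfrak{D}_{\min}(\rho)=-\log\Tr\{\Pi_\rho\tau_d\}$: the trace $\Tr\{\Pi_\rho\tau_d\}=\sum_i p_i\bra{i}\Pi_\rho\ket{i}$ is minimized, over projectors of any rank, by the rank-one projector onto the smallest-weight eigenvector, giving value $p_\star$, so $\max_\rho\mathfrak{D}_{\min}(\rho)=-\log p_\star$ is attained at $\hat\Phi_d$. For $m_{\max}$ I would invoke convexity of the generalized robustness $R_G$~\cite{regula}, together with the identity $2^{\mathfrak{D}_{\max}(\rho)}=1+R_G(\rho)$: since $R_G$ is convex on the compact convex set of density operators and $\mathfrak{D}_{\max}$ is monotone in $R_G$, the maximum of $\mathfrak{D}_{\max}$ is attained at a pure state, and among pure states $\mathfrak{D}_{\max}(\dm{\psi})=\log\bra{\psi}\tau_d^{-1}\ket{\psi}$ is the quadratic form of $\tau_d^{-1}$, maximized by its top eigenvector $\ket{i_\star}$ at value $-\log p_\star$. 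Thus all three maxima are realized by the same $\hat\Phi_d$.

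The only genuinely delicate point — the part I would call the main obstacle — is recognizing that the minimum-free-fidelity pure state must be an \emph{energy eigenstate} (the maximal-energy one) rather than some generic superposition; once this is established, the equalities $D_{\min}=D_{\max}=-\log p_\star$ on eigenstates make the collapse essentially automatic and the global-maximality arguments reduce to standard eigenvalue extremizations. I would close by flagging the boundary behavior: finite $T>0$ is what guarantees $\tau_d$ is full rank so that $\tau_d^{-1}$ and hence $\mathfrak{D}_{\max}$ are well-defined, while the infinite-temperature limit $\tau_d=I/d$ renders every pure state least-weighted, recovering $g_d=1$ with every pure state golden, consistent with the purity-theory remark in the main text.
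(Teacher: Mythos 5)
Your proposal is correct. The first half --- identifying $\hat\Phi_d=\dm{i_\star}$ as the least-weighted (maximal-energy) eigenstate by minimizing $\bra{\psi}\tau_d\ket{\psi}$, noting $m_f=m_{\min}$ on pure states, and computing $\dmax(\dm{i_\star}\|\tau_d)=-\log p_\star$ --- matches the paper's proof essentially step for step (the paper evaluates $\min\{\lambda:\lambda\tau_d-\hat\Phi_d\geq 0\}$ directly rather than via $\bra{i_\star}\tau_d^{-1}\ket{i_\star}$, a cosmetic difference). Where you genuinely diverge is in the global-maximality claims, and your route is the more elementary one. For $m_{\min}$ the paper reduces to pure states via joint convexity of $\dmin$ and then invokes the pure-state collapse, whereas you minimize $\Tr\{\Pi\tau_d\}$ directly over projectors of arbitrary rank, observing that the minimum over rank-$k$ projectors is the sum of the $k$ smallest Boltzmann weights, which is increasing in $k$ since all weights are positive; this bypasses the convexity machinery entirely. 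For $m_{\max}$ the paper runs a contradiction argument through the witness/SDP characterization of the generalized robustness (spectral decomposition of the optimal $X$, showing a violation of $\Tr\{\tau_d X\}\leq 1$), whereas you reduce to pure states by convexity of $R_G$ and then maximize the explicit quadratic form $\bra{\psi}\tau_d^{-1}\ket{\psi}$, which is just the top eigenvector of $\tau_d^{-1}$. What your approach buys is brevity and transparency, by exploiting the singleton structure of $\mF$ so that $\mathfrak{D}_{\max}$ has a closed form; what the paper's approach buys is uniformity with the proof of Proposition~\ref{thm:collapse}, where the witness formulation is needed because $\mF$ is not a singleton there. Your closing remark that finite $T>0$ guarantees $\tau_d$ is full rank (so $\tau_d^{-1}$ and $\mathfrak{D}_{\max}$ are well defined) is a point the paper leaves implicit and is worth stating.
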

\begin{proof}
 Let $\ket{i_{\max}}$ be an eigenstate corresponding to the maximum $E_i$.
 We first observe that $\hat{\Phi}_d=\dm{i_{\max}}$.
 Let $\ket{\psi}\in\mH_d$ be a pure state and consider the fidelity with the Gibbs state: $f(\dm{\psi},\tau_d)=\bra{\psi}\tau_d\ket{\psi}=\sum_i\tau_{d,i}\,|\bracket{\psi}{i}|^2$ where $\tau_{d,i}=\frac{\exp(-E_i/T)}{Z}$ and $Z=\sum_i\exp(-E_i/T)$.
 It clearly takes the minimum when $\ket{\psi}=\ket{i_{\max}}$ because $\tau_{d,i_{\max}}=\min_i \tau_{d,i}$. So we identify  $\hat{\Phi}_d=\dm{i_{\max}}$ and $m_f(\hat{\Phi}_d)=\frac{\log(1/\tau_{d,i_{\max}})}{\log d}$.
 
Recall that $m_f(\hat\Phi_d)=m_{\min}(\hat\Phi_d)$ since $\hat{\Phi}_d$ is pure.  We now show that $m_f(\hat\Phi_d)=m_{\max}(\hat\Phi_d)$. 
 Recalling the definition of max-relative entropy, we have
 \ba
  2^{\mathfrak{D}_{\max}(\hat\Phi_d)} = \min\{\lambda\geq 0\,|\,\lambda\,\tau_d - \hat\Phi_d\geq 0\}.
 \ea
 Since $\hat\Phi_d=\dm{i_{\max}}$, we obtain $2^{\mathfrak{D}_{\max}(\hat\Phi_d)}=1/\tau_{d,i_{\max}}$, and hence $m_{\max}(\hat\Phi_d)=m_f(\hat\Phi_d)$.
 
 Finally, we show that $\hat\Phi_d$ also achieves the maxima of $m_{\min}$ and $m_{\max}$.
 Since the maxima of $\mathfrak{D}_{\min,\max}$ occur at pure states due to their convexity, it suffices to show that $\hat\Phi_d$ achieves the maxima among all pure states. 
 The $m_{\min}$ case follows from the same argument as in the proof of Proposition \ref{thm:collapse}.
 To see that $\hat\Phi_d$ achieves the maximum of $m_{\max}$, the above optimization form of the generalized robustness is useful. 
 Let $\ket{\psi}\in \mH_d$ be a pure state, and suppose that $\ket{\psi}$ has larger generalized robustness (equivalently, the max-relative entropy measure) than $\hat\Phi_d$, i.e. $1+R_G(\dm{\psi})>1+R_G(\hat\Phi_d)=1/\tau_{d,i_{\max}}$.
 Let $X$ be a positive semidefinite operator satisfying Eq.~\eqref{tr_constraint} and $\Tr\{\dm{\psi} X\}=1+R_G(\dm{\psi})$. Writing $X$ in the spectral decomposition form $X=\sum_ix_i\dm{e_i}$ where $x_i\geq 0,\forall i$ and $\{\ket{e_i}\}$ is some orthonormal basis, we get $1+R_G(\dm{\psi})=\sum_j x_j |\bracket{\psi}{e_j}|^2$. 
 Let $p_j:= |\bracket{\psi}{e_j}|^2$. Then, $\{p_j\}$ is a probability distribution and thus $1+R_G(\dm{\psi})=\sum_j x_j p_j\leq x_{\max}$ where $x_{\max}:= \max_i x_i$. By the assumption that $1+R_G(\dm{\psi})>1/\tau_{d,i_{\max}}$, we have $x_{\max}>1/\tau_{d,i_{\max}}$. 
 However, by setting $x_J=x_{\max}$, we also get $\Tr\{\tau_d X\} = \sum_{ji}x_j\tau_{d,i}|\bracket{i}{e_j}|^2\geq\sum_i x_{\max}\tau_{d,i}|\bracket{i}{e_J}|^2>\sum_i|\bracket{i}{e_J}|^2=1$ where for the second inequality we used that $x_{\max}>1/\tau_{d,i_{\max}}$ and that $\tau_{d,i}/\tau_{d,i_{\max}}\geq 1,\forall i$.  
 This means that $X$ violates Eq.~\eqref{tr_constraint}, which is a contradiction. 
 Therefore, $\hat\Phi_d$ achieves the maximum of $R_G$, and hence $m_{\max}$. 
\end{proof}

\section{Collapse of the modification coefficients induced by exact resource destroying maps (part of Theorem 1)}\label{app:collapserd}

Now we prove that, for golden states such that the modification coefficients collapse, the exact resource destroying map defined by the relative entropy also just outputs the closest free state in terms of min- and max-relative entropies, and therefore the corresponding modification coefficients also collapse to the golden coefficient.  The case of min-relative entropy is rather straightforward:
\begin{prop}\label{lemma:exact_min}
 Let $\lambda$ be an exact resource destroying map. Suppose the set of free states $\mF$ satisfies Condition (CH), i.e.~is formed by a convex hull of pure states. Then for pure minimum-free-fidelity state $\hat{\Phi}_d$, $\lambda(\hat{\Phi}_d)\in\mF$ is the closest free state as measured by the min-relative entropy, i.e.
 \begin{equation}\label{eq:min_lem}
     \min_{\sigma\in\mF}\dmin(\hat{\Phi}_d\|\sigma) = \dmin(\hat{\Phi}_d\|\lambda(\hat{\Phi}_d)).
 \end{equation}
 That is to say,
  \begin{equation}
     m_{\min,\lambda}(\hat{\Phi}_d) = m_{\min}(\hat{\Phi}_d) = g_d.
 \end{equation}
\end{prop}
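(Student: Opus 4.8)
The plan is to reduce the entire statement to a single squeeze between the min- and max-relative entropy measures, exploiting the fact that these have already been shown to collapse at the golden state in Proposition~\ref{thm:collapse}. The one genuine subtlety is that the exactness of $\lambda$ is defined with respect to the \emph{standard} relative entropy $D$, not the min-relative entropy appearing in the claim, so the argument must bridge from $D$ to $\dmin$.

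First I would evaluate the left-hand side of Eq.~\eqref{eq:min_lem} directly. Since $\hat\Phi_d$ is pure, $\Pi_{\hat\Phi_d}=\dm{\hat\Phi_d}$, so $\dmin(\hat\Phi_d\|\sigma)=-\log\Tr\{\hat\Phi_d\sigma\}$, and minimizing over $\sigma\in\mF$ amounts to maximizing the overlap $\Tr\{\hat\Phi_d\sigma\}$. For a pure state this overlap equals the Uhlmann fidelity $f(\hat\Phi_d,\sigma)$, so $\min_{\sigma\in\mF}\dmin(\hat\Phi_d\|\sigma)=-\log\mathfrak{f}(\hat\Phi_d)=\mathfrak{D}_{\min}(\hat\Phi_d)$. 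The second ingredient is the ordering of Rényi relative entropies recalled in Appendix~\ref{app:renyi}: for every fixed $\sigma$ one has $\dmin(\rho\|\sigma)\le D(\rho\|\sigma)\le\dmax(\rho\|\sigma)$ (as $D_0\le D_1\le \widetilde D_\infty$). Minimizing each term over $\mF$ sandwiches the ordinary relative entropy of resource $D_r(\hat\Phi_d):=\min_{\sigma\in\mF}D(\hat\Phi_d\|\sigma)$ between $\mathfrak{D}_{\min}(\hat\Phi_d)$ and $\mathfrak{D}_{\max}(\hat\Phi_d)$.

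Next I would invoke Proposition~\ref{thm:collapse}, which gives $\mathfrak{D}_{\min}(\hat\Phi_d)=\mathfrak{D}_{\max}(\hat\Phi_d)=g_d\log d$; the sandwich then forces $D_r(\hat\Phi_d)=\mathfrak{D}_{\min}(\hat\Phi_d)=-\log\mathfrak{f}(\hat\Phi_d)$. The final step chains these facts through the exact RD map: because $\lambda(\hat\Phi_d)\in\mF$, the quantity $\dmin(\hat\Phi_d\|\lambda(\hat\Phi_d))$ is at least $\mathfrak{D}_{\min}(\hat\Phi_d)$; by the Rényi ordering it is at most $D(\hat\Phi_d\|\lambda(\hat\Phi_d))$; and by exactness of $\lambda$ the latter equals $D_r(\hat\Phi_d)$. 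Since the two ends of this chain both equal $-\log\mathfrak{f}(\hat\Phi_d)$, every inequality collapses to an equality, which is precisely Eq.~\eqref{eq:min_lem}. Dividing through by $\log d$ then yields $m_{\min,\lambda}(\hat\Phi_d)=m_{\min}(\hat\Phi_d)=g_d$.

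I expect the main obstacle to be purely conceptual rather than computational: recognizing that exactness is a statement about $D$ while the target is a statement about $\dmin$, and that the collapse $\mathfrak{D}_{\min}=\mathfrak{D}_{\max}$ from Proposition~\ref{thm:collapse}, combined with the monotonicity of $D_\alpha$ in $\alpha$, is exactly what pins $D_r$ to the common value and lets the squeeze close. Once that bridge is in place, the remaining steps are routine evaluations of the pure-state min-relative entropy and the free fidelity.
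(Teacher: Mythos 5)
Your proof is correct and follows essentially the same route as the paper's: both invoke the collapse $\mathfrak{D}_{\min}(\hat\Phi_d)=\mathfrak{D}_{\max}(\hat\Phi_d)$ from Proposition~\ref{thm:collapse}, use the ordering $\dmin\leq D\leq \dmax$ to pin $\min_{\sigma\in\mF}D(\hat\Phi_d\|\sigma)$ to that common value, and then close the squeeze $\min_{\sigma\in\mF}\dmin(\hat\Phi_d\|\sigma)\leq\dmin(\hat\Phi_d\|\lambda(\hat\Phi_d))\leq D(\hat\Phi_d\|\lambda(\hat\Phi_d))$ via the exactness of $\lambda$. The extra explicit evaluation of the pure-state min-relative entropy as $-\log\mathfrak{f}(\hat\Phi_d)$ is harmless but not needed for the argument.
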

 \begin{proof}
Due to the collapse of modification coefficients shown in Theorem \ref{thm:collapse}, we have $m_{\min}(\hat{\Phi}_d) = m_{\max}(\hat{\Phi}_d)$.  That is, $\min_{\sigma\in\mF}\dmin(\hat{\Phi}_d\|\sigma) = \min_{\sigma\in\mF}\dmax(\hat{\Phi}_d\|\sigma)$.
Since $\dmin(\rho\|\sigma)\leq D(\rho\|\sigma) \leq \dmax(\rho\|\sigma)$ for any states $\rho,\sigma$ \cite{dattarel}, it must hold that
\begin{equation}
    \min_{\sigma\in\mF}\dmin(\hat{\Phi}_d\|\sigma) = \min_{\sigma\in\mF}D(\hat{\Phi}_d\|\sigma) = D(\hat{\Phi}_d\|\lambda(\hat{\Phi}_d)).  \label{eq:minlam}
\end{equation}
The last equality follows from the definition of the exact resource destroying map. Notice that $\min_{\sigma\in\mF}\dmin(\hat{\Phi}_d\|\sigma)\leq \dmin(\hat{\Phi}_d\|\lambda(\hat{\Phi}_d)) \leq D(\hat{\Phi}_d\|\lambda(\hat{\Phi}_d))$.  Then due to Eq.~(\ref{eq:minlam}), all the inequalities become equalities, and thus the statement follows. 
 \end{proof}

To prove the case of max-relative entropy we need to make use of the specific forms of the divergences:
 {
\begin{prop}\label{lemma:exact_max}
 Let $\lambda$ be an exact resource destroying map. Suppose the set of free states $\mF$ satisfies Condition (CH), i.e.~is formed by a convex hull of pure states. Then for pure minimum-free-fidelity state $\hat{\Phi}_d$, $\lambda(\hat{\Phi}_d)\in\mF$ is the closest free state as measured by the max-relative entropy if $\dmax(\hat{\Phi}_d\|\lambda(\hat{\Phi}_d))$ is well defined, i.e.
 \begin{equation}
     \min_{\sigma\in\mF}\dmax(\hat{\Phi}_d\|\sigma) = \dmax(\hat{\Phi}_d\|\lambda(\hat{\Phi}_d)).
 \end{equation}
  That is to say,
  \begin{equation}
     m_{\max,\lambda}(\hat{\Phi}_d) = m_{\max}(\hat{\Phi}_d) = g_d.
 \end{equation}
\end{prop}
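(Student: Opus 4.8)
The plan is to exploit that $\hat{\Phi}_d=\dm{\psi}$ is pure, so that both $D(\hat{\Phi}_d\|\sigma)$ and $\dmax(\hat{\Phi}_d\|\sigma)$ against a \emph{fixed} state $\sigma$ admit explicit spectral expressions, and then to run a squeezing argument analogous to the min-relative entropy case of Proposition~\ref{lemma:exact_min}, but with an extra tightening step forced on us by the fact that the ordering between $D$ and $\dmax$ now points the wrong way.

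First I would harvest what the min case and the collapse already hand us. Taking the minimum over $\sigma\in\mF$ preserves the pointwise chain $\dmin\le D\le\dmax$, and since Proposition~\ref{thm:collapse} makes the two outer minima equal, all three minima coincide. Evaluating the min-relative entropy one at the pure state gives $\min_{\sigma\in\mF}D(\hat{\Phi}_d\|\sigma)=-\log\mathfrak{f}(\hat{\Phi}_d)=\rdmax(\hat{\Phi}_d)$, which is exactly the content of Eq.~\eqref{eq:minlam}. Because $\lambda$ is exact, $\sigma:=\lambda(\hat{\Phi}_d)$ attains this relative-entropy minimum, so $D(\hat{\Phi}_d\|\sigma)=-\log\mathfrak{f}(\hat{\Phi}_d)$. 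Writing $\sigma=\sum_i s_i\dm{e_i}$ and $p_i=|\bracket{\psi}{e_i}|^2$ (a genuine probability distribution on $\mathrm{supp}(\sigma)$ precisely because the hypothesis $\dmax(\hat{\Phi}_d\|\lambda(\hat{\Phi}_d))<\infty$ means $\ket{\psi}\in\mathrm{supp}(\sigma)$), I would record the two pure-state identities $D(\hat{\Phi}_d\|\sigma)=-\bra{\psi}\log\sigma\ket{\psi}=-\sum_i p_i\log s_i$ and $2^{\dmax(\hat{\Phi}_d\|\sigma)}=\bra{\psi}\sigma^{-1}\ket{\psi}=\sum_i p_i s_i^{-1}$, so that the previous line becomes $\sum_i p_i\log s_i=\log\mathfrak{f}(\hat{\Phi}_d)$.

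The hard part will be that, unlike the min case where $\dmin\le D\le\dmax$ squeezes from both sides, here $D\le\dmax$ is the wrong direction, so I must independently upper-bound $\dmax(\hat{\Phi}_d\|\sigma)$. The trick is to use concavity of the logarithm together with the definition of the free fidelity. Jensen gives $\log\mathfrak{f}(\hat{\Phi}_d)=\sum_i p_i\log s_i\le\log\sum_i p_i s_i=\log\bra{\psi}\sigma\ket{\psi}$, while $\bra{\psi}\sigma\ket{\psi}\le\mathfrak{f}(\hat{\Phi}_d)$ by definition of $\mathfrak{f}$. Chaining these forces both inequalities to be equalities: the equality case of Jensen pins $s_i$ to a single common value $s$ over all $i$ with $p_i>0$, and equality in the fidelity bound gives $s=\bra{\psi}\sigma\ket{\psi}=\mathfrak{f}(\hat{\Phi}_d)$.

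Finally I would substitute this flat spectrum into the max-relative entropy formula to close the argument: $2^{\dmax(\hat{\Phi}_d\|\sigma)}=\sum_i p_i s^{-1}=s^{-1}=\mathfrak{f}(\hat{\Phi}_d)^{-1}$, hence $\dmax(\hat{\Phi}_d\|\lambda(\hat{\Phi}_d))=-\log\mathfrak{f}(\hat{\Phi}_d)=\rdmax(\hat{\Phi}_d)=\min_{\sigma'\in\mF}\dmax(\hat{\Phi}_d\|\sigma')$, which is the assertion; dividing by $\log d$ then yields $m_{\max,\lambda}(\hat{\Phi}_d)=m_{\max}(\hat{\Phi}_d)=g_d$. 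Throughout, I would be careful that every sum ranges over $\mathrm{supp}(\sigma)$, noting that the well-definedness hypothesis $\ket{\psi}\in\mathrm{supp}(\sigma)$ is exactly what guarantees $\sum_i p_i=1$ there and makes all the manipulations legitimate.
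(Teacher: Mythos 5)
Your proposal is correct and follows essentially the same route as the paper's proof: both reduce the problem to the observation that exactness of $\lambda$ forces $D(\hat\Phi_d\|\lambda(\hat\Phi_d))$ to equal $\dmin(\hat\Phi_d\|\lambda(\hat\Phi_d))=-\log\mathfrak f(\hat\Phi_d)$, and then invoke the equality case of Jensen's inequality for the strictly concave logarithm to conclude that the spectrum of $\lambda(\hat\Phi_d)$ is flat on the support of the overlap distribution $p_i=|\bracket{\psi}{e_i}|^2$, whence $\dmax(\hat\Phi_d\|\lambda(\hat\Phi_d))=-\log\mathfrak f(\hat\Phi_d)=\rdmax(\hat\Phi_d)$. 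The only cosmetic differences are that you compute $2^{\dmax}=\bra{\psi}\sigma^{-1}\ket{\psi}$ exactly instead of bounding it by an operator inequality, and you fold the paper's two equality cases into one uniform statement; both handle the support/well-definedness issue the same way.
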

 \begin{proof}
 By Proposition \ref{lemma:exact_min}, to prove the statement, it suffices to show that
 \begin{equation} \label{eq:max<min}
 \dmax(\hat{\Phi}_d\|\lambda(\hat{\Phi}_d))
 \leq \dmin(\hat{\Phi}_d\|\lambda(\hat{\Phi}_d)).
 \end{equation}
Due to Eqs.~\eqref{eq:min_lem} and  \eqref{eq:minlam}, for pure minimum-free-fidelity state $\hat{\Phi}_d$, we have
\begin{equation}
\dmin(\hat{\Phi}_d\|\lambda(\hat{\Phi}_d))
=D(\hat{\Phi}_d\|\lambda(\hat{\Phi}_d)),
\end{equation}
which can be rewritten as
\begin{equation}
\log \Tr\{\hat{\Phi}_d \lambda(\hat{\Phi}_d)\} = \Tr \{ \hat{\Phi}_d \log \lambda(\hat{\Phi}_d)\}.
\end{equation}
Suppose that $\lambda(\hat{\Phi}_d)$ has a spectral decomposition $\lambda(\hat{\Phi}_d)=\sum^{K}_i p_i\ket{c_i}\bra{c_i}$ where
 $\{\ket{c_i}\}^K_{i=1}$ is the support, i.e.~the subset of orthonormal basis $\{\ket{c_i}\}^{d}_{i=1}$ such that $\sum^K_{i=1}p_i=1$, $p_i>0$ for any 
 $i\in\{1,..,K\}$. 
Let us denote $q_i=\bra{c_i}\hat{\Phi}_d\ket{c_i}\geq 0$. 
Since
$\dmax(\hat{\Phi}_d\|\lambda(\hat{\Phi}_d))$ exists, we have 
$\sum^K_{i=1}q_i=1$. Otherwise, there would exist $j\geq K+1 $ such that
$q_j>0$, which implies  that
\begin{equation}
0<q_j=\bra{c_j}\hat{\Phi}_d\ket{c_j} \leq 2^{\dmax(\hat{\Phi}_d\|\lambda(\hat{\Phi}_d))}\bra{c_j}\lambda(\hat{\Phi}_d)\ket{c_j}. 
\end{equation}
This is impossible since $\bra{c_j}\lambda(\hat{\Phi}_d)\ket{c_j} = 0$.
Therefore, we obtain
\begin{equation}\label{eq:log_t}
\log\sum^K_{i=1}p_iq_i
=\sum^K_{i=1}q_i\log p_i,
\end{equation}
where $\sum^K_{i=1}q_i=\sum^{K}_{i=1}p_i=1$.  
Due to the concavity of the logarithm function,
Eq.~\eqref{eq:log_t} is true iff 
either i) only one $q_i=1$, or ii) all $p_i$ are equal for $i\in \{ k| q_k \neq 0\}$. 

\emph{Case 1:}
 Only one $q_i=1$.  Without loss of generality, assume that $q_1=1$. Then $\hat{\Phi}_d=\ketbra{c_1}{c_1}$,  and
$\dmin(\hat{\Phi}_d\|\lambda(\hat{\Phi}_d))=-\log p_1$. 
Thus $\dmax(\hat{\Phi}_d\|\lambda(\hat{\Phi}_d))\leq \log \frac{1}{p_1}=\dmin(\hat{\Phi}_d\|\lambda(\hat{\Phi}_d))$. 

\emph{Case 2:}
All $p_i$ are equal for $i\in\{k| q_k\neq 0\}$. Without loss of generality, assume that $1\in \{k| q_k\neq 0\}$.
Then $\dmin(\hat{\Phi}_d\|\lambda(\hat{\Phi}_d))=-\log p_1$ and 
$\hat{\Phi}_d\leq \sum_{i\in \{k| q_k\neq 0\}}\ket{c_i}\bra{c_i}$. 
Thus we also have $\dmax(\hat{\Phi}_d\|\lambda(\hat{\Phi}_d))\leq \log \frac{1}{p_1}=\dmin(\hat{\Phi}_d\|\lambda(\hat{\Phi}_d))$.

Therefore, Eq.~(\ref{eq:max<min}) is confirmed and the statement follows.
 \end{proof}
}

Proposition \ref{lemma:exact_max} in particular leads to roughly matching bounds for the formation cost under commuting operations in corresponding theories. 
Also note that the max-relative entropy could in principle induce different orderings as compared to the relative entropy, in the sense that there exist state (or even pure state) $\rho$ and states $\sigma_1,\sigma_2$ such that $\dmax(\rho\|\sigma_1) > \dmax(\rho\|\sigma_2)$ but $D(\rho\|\sigma_1) < D(\rho\|\sigma_2)$. The following is an example provided by Mil\'{a}n Mosonyi.  Consider a maximally entangled state $\ket{\hat\Sigma}=\frac{1}{\sqrt{d}}\sum_{j=1}^{d}\ket{j}\ket{j}\in\mH_d\otimes\mH_d$, and for any density operator $\sigma\in\mD(\mH_d)$, let 
$\bar\sigma:=\frac{1}{d}I\otimes\sigma$. Then
\begin{align}
    \dmax(\hat\Sigma\|\bar\sigma) &= \log\|\bar\sigma^{-1/2}\hat{\Sigma}\bar\sigma^{-1/2}\| = \log\Tr\sigma^{-1},\\
    D(\hat\Sigma\|\bar\sigma) &= -\Tr\{\hat{\Sigma}\log\bar\sigma\} = \log{d}-\frac{1}{d}\Tr\log\sigma,
\end{align}
where $\|\cdot\|$ denotes the operator norm.   Now let $\vec\lambda^{(i)} = (\lambda^{(i)}_1,...,\lambda^{(i)}_d)$ denote the eigenvalues of $\sigma_i,i=1,2$.  Then
\begin{align}
    \dmax(\hat\Sigma\|\bar\sigma_1) > \dmax(\hat\Sigma\|\bar\sigma_2) &\Longleftrightarrow  \sum_{j=1}^d\frac{1}{\lambda^{(1)}_j} > \sum_{j=1}^d\frac{1}{\lambda^{(2)}_j}\quad\text{or}\quad H(\vec\lambda^{(1)})<H(\vec\lambda^{(2)}),\\
    D(\hat\Sigma\|\bar\sigma_1) < D(\hat\Sigma\|\bar\sigma_2) &\Longleftrightarrow  \prod_{j=1}^d \lambda^{(1)}_j > \prod_{j=1}^d \lambda^{(2)}_j\quad\text{or}\quad G(\vec\lambda^{(1)})>G(\vec\lambda^{(2)}),
\end{align}
where $H$ and $G$ denote the harmonic mean and the geometric mean, respectively.
Note that $H(\vec\lambda)\leq G(\vec\lambda)$.  An explicit example satisfying the above properties found with Matlab is
\begin{equation}
    \vec\lambda^{(1)} = (0.5296, 0.0228, 0.4476),\quad\quad\quad
    \vec\lambda^{(2)} = (0.0368, 0.1570, 0.8062).
\end{equation}
That is to say, the closest free state given by the relative entropy, i.e.~the image of the exact resource destroying map (even if it is unique), is not necessarily the closest in terms of the max-relative entropy (say, consider a theory where $\mF=\{\bar\sigma_1,\bar\sigma_2\}$).   But Proposition \ref{lemma:exact_max} indicates that this does not occur for the references states of most interest in this work, namely the golden states (associated with some theory, which causes the optimized modification coefficients to collapse).

For certain theories, Proposition \ref{lemma:exact_max} may also admit simpler proofs due to the specific structures of the theory. As an example, consider the theory of coherence.  Here, the golden states are the maximally coherent states $\ket{\hat{\Psi}_d} = \frac{1}{\sqrt{d}}\sum_{i=1}^d\ket{i}$  where $\{\ketbra{i}{i}\}$ is the reference basis, the set of free states $\mF = \mathcal{I}$ consists of density matrices that are diagonal in the reference basis (incoherent states), and the exact resource destroying map is the complete dephasing channel $\Delta$. Since $\ket{\hat{\Psi}_d}$ is pure, we have \cite{PhysRevLett.119.150405}
\begin{equation}
    \min_{\sigma\in\mathcal{I}}\dmax(\hat{\Psi}_d\|\sigma) =  \min_{\sigma\in\mathcal{I}}\log(\|\hat{\Psi}_d-\sigma\|_{\ell_1}+1) = \log{d} = \dmax(\hat{\Psi}_d\|\Delta({\hat{\Psi}_d})). \label{eq:merdm_coh}
\end{equation}
Therefore, we have $m_{\max,\Delta}(\hat{\Psi}_d) = m_{\max}(\hat{\Psi}_d)$.

\section{Resource theory of superposition and the constant trace condition} \label{app:ct}
Here we provide a simple example of resource theories that satisfy the Condition (CT), in addition to the theory of coherence, based on the recently formulated resource theory of superposition~\cite{Theurer2017}.   

The theory of superposition is a generalization of coherence theory, where the set of free states is taken to be $\mF={\rm conv}\big\{\dm{\phi}\,\big|\,\ket{\phi}\in \mathcal{V}\big\}$ where $\mathcal{V}$ is a set of pure basis states, which do not have to be orthogonal to each other, unlike the theory of coherence. 
We shall show that any theory of superposition defined on two-dimensional systems with two pure basis states, with the golden state as the reference state, satisfies Condition (CT). Namely, the golden state $\hat\Phi_2$ has the same overlap with all the free states: $\Tr\{\hat\Phi_2\delta\}={\rm constant},\ \forall \delta\in\mF$. 
To see this, geometrical consideration is helpful.
Let $\ket{\phi_a}$ and $\ket{\phi_b}$ be the basis states (so $\mF = {\rm conv}\{\dm{\phi_a},\dm{\phi_b}\}$) and consider the slice of the Bloch sphere which intersects $\ket{\phi_a}, \ket{\phi_b}$ and the center (see Fig.~\ref{fig:superposition_bloch}).
For two-dimensional systems, the fidelity and the trace distance are equivalent for two pure states, and the trace distance is proportional to the Euclidean distance on the Bloch sphere. 
Therefore, $\hat\Phi_2$, the pure state that has the minimum fidelity with the closest free state  (namely the golden state, since $\mF$ satisfies Condition (CH) by definition), is found to be located at the opposite side of $\ket{\phi_a}$ and $\ket{\phi_b}$ on this slice, with which it forms an isosceles triangle, as depicted in Fig.\,\ref{fig:superposition_bloch}.
It is easy to see that $\Tr\{\hat\Phi_2\dm{\phi_a}\}=\Tr\{\hat\Phi_2\dm{\phi_b}\}$, and this ensures $\Tr\{\hat\Phi_2\delta\}={\rm constant},\ \forall \delta\in\mF$ due to the linearity of the trace function. 
To construct a scalable theory, one may, for example, consider the multi-qubit extension where $\mF = {\rm conv}\{\dm{\phi_a}^{\otimes n},\dm{\phi_b}^{\otimes n}\}$ and $\hat\Phi_2^{\otimes n}$ is the reference state for any $n\in\mathbb{Z}_+$.  It can be directly seen that Condition (CT) is still met.
\begin{figure}[htbp]
    \centering
    \includegraphics[scale=0.5]{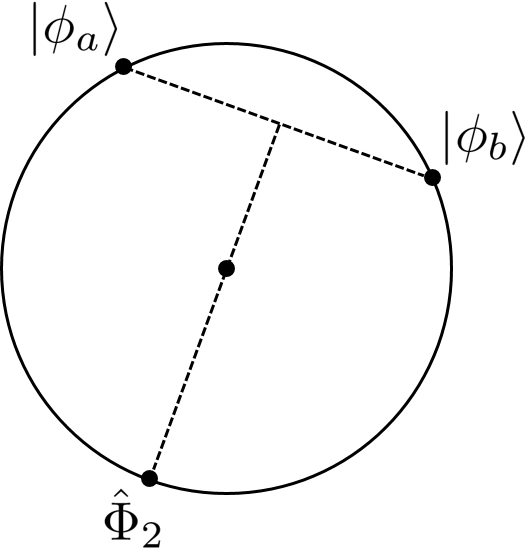}
    \caption{The slice of the Bloch sphere which intersects $\ket{\phi_a},\ket{\phi_b}$ and the center. The golden state $\hat\Phi_2$ locates at the opposite of $\ket{\phi_a},\ket{\phi_b}$ and form an isosceles triangle with $\ket{\phi_a},\ket{\phi_b}$.}
    \label{fig:superposition_bloch}
\end{figure}

\section{Classification of resource theories}\label{app:condition}

Here we provide a Venn diagram (Fig.~\ref{fig:condition}) that illustrates the classification of quantum resource theories according to a number of key properties that are most relevant to our work.  
\begin{figure}[htbp]
    \centering
    \includegraphics[width=0.8\textwidth]{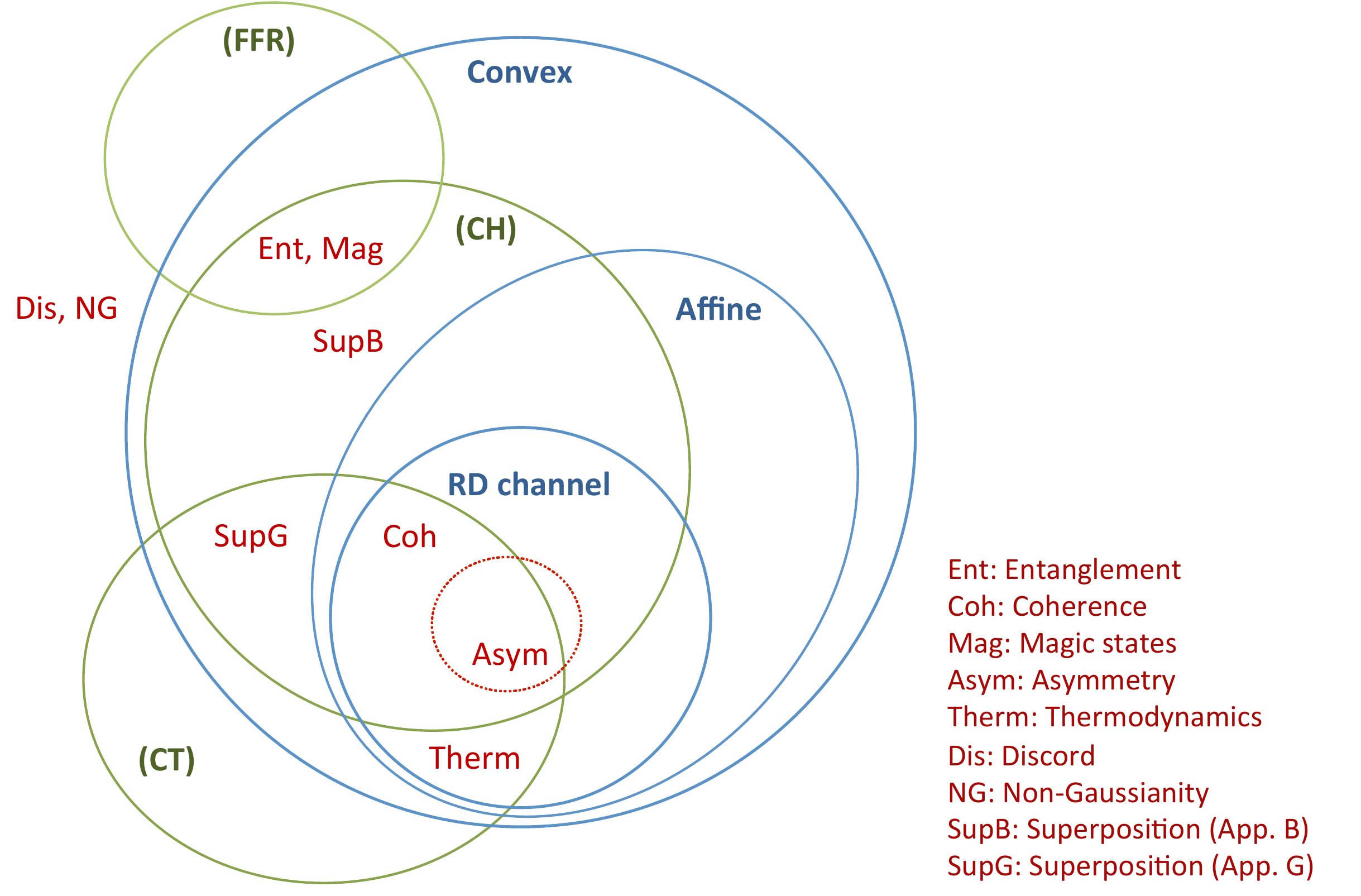}
   \caption{Classification of resource theories according to several basic properties (convexity, affinity, existence of RD channel; blue circles) as well as the three conditions introduced in the main text (Conditions (FFR), (CH), (CT); green circles), and where the important examples (red code names) live. SupB and SupG refer to the specific theories of superposition discussed in Appendix \ref{app:fr_affine} and \ref{app:ct} respectively. Note that whether a theory of asymmetry satisfies Condition (CT) depends on further details of the symmetry of interest, which is why we indicate it as a circle. For instance, when $U(1)$ group is considered, a set of free states defined on a single subsystem satisfies Condition (CT) whereas that defined on a composite system involving two subsystems does not ---  It is due to the fact that free states can be entangled within a degenerate subspace.}  
   \label{fig:condition}
\end{figure}

\section{One-shot formation cost by resource non-generating operations (part of Theorem 2)} \label{app:cost_ng}
In the next two sections, we break down Theorem 2 of the main text about one-shot formation into independent results that may rely on different assumptions, and present their proofs separately.   

Here, we present proofs of the bounds on the one-shot formation cost under resource non-generating operations $\barx$, which constitute the maximal set of free operations.  

The first group of bounds are given by modified versions of smooth max-relative entropy of resource $\rdmax^\epsilon$.  The following is a lower (optimality) bound:
\begin{prop}
Given target state $\rho$ and reference states $\{\phi_d\}$. Let $d_0 = \min\{d\in\mathbb{D}: \rdmax(\phi_d)=m_{\max}(\phi_{d})\log d \geq \rdmax^\epsilon(\rho)\}$.   For $\epsilon\geq 0$, 
\begin{equation}
    \Omega_{C,\barx}^\epsilon(\rho\leftarrow\{\phi_d\}) \geq \log d_0 \geq\frac{\rdmax^\epsilon(\rho)}{m_{\max}(\phi_{d_0})}.
\end{equation}
\end{prop}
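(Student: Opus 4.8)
The statement packages two inequalities: the rightmost one, $\log d_0 \geq \rdmax^\epsilon(\rho)/m_{\max}(\phi_{d_0})$, is essentially definitional, whereas the operational bound $\Omega_{C,\barx}^\epsilon(\rho\leftarrow\{\phi_d\}) \geq \log d_0$ is the substantive claim. The plan is to exploit the single structural fact already recorded in the Preliminaries---that $\rdmax$ is monotonically non-increasing under resource non-generating operations---and to match it against the definitions of the smoothed measure $\rdmax^\epsilon$ and of the formation cost, both of which are built on the \emph{same} fidelity ball $\eball(\rho)$.

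For the main inequality I would argue at the level of achievable dimensions. Suppose $d\in\mathbb{D}$ is such that the formation task succeeds, i.e.\ there exists $\mathcal{E}\in\barx$ with $\mathcal{E}(\phi_d)\in\eball(\rho)$. Writing $\rho' := \mathcal{E}(\phi_d)$, monotonicity of $\rdmax$ under the free operation $\mathcal{E}$ gives $\rdmax(\phi_d) \geq \rdmax(\rho')$. Since $\rho'\in\eball(\rho)$ and $\rdmax^\epsilon(\rho) = \min_{\rho''\in\eball(\rho)}\rdmax(\rho'')$ is obtained by \emph{minimizing} over exactly this ball, we have $\rdmax(\rho')\geq \rdmax^\epsilon(\rho)$. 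Chaining the two inequalities gives $\rdmax(\phi_d)\geq \rdmax^\epsilon(\rho)$, so every achievable $d$ lies in the set $\{d : \rdmax(\phi_d)\geq \rdmax^\epsilon(\rho)\}$ whose minimum is $d_0$ by definition. Hence the smallest achievable $d$ is at least $d_0$ (a subset inherits a larger minimum than its superset; if no $d$ is achievable the cost is $+\infty$ and the bound is vacuous), and taking logarithms yields $\Omega_{C,\barx}^\epsilon(\rho\leftarrow\{\phi_d\}) \geq \log d_0$.

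The remaining (rightmost) inequality is immediate from the definition of $d_0$: by construction $\rdmax(\phi_{d_0}) = m_{\max}(\phi_{d_0})\log d_0 \geq \rdmax^\epsilon(\rho)$, and dividing by $m_{\max}(\phi_{d_0})>0$ (which holds since the reference state $\phi_{d_0}$ is resourceful; if instead $\rho$ is free then $\rdmax^\epsilon(\rho)=0$ and the bound is trivial) gives $\log d_0 \geq \rdmax^\epsilon(\rho)/m_{\max}(\phi_{d_0})$.

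There is no serious analytic obstacle here; the proof is a clean composition of monotonicity with the definitions. The one point that deserves care---and the reason this bound takes such a tight form---is that the fidelity ball $\eball(\rho)$ appearing in the definition of $\Omega_{C,\barx}^\epsilon$ is \emph{identical} to the one used to smooth $\rdmax$, so the step $\rdmax(\rho')\geq \rdmax^\epsilon(\rho)$ needs no continuity or approximation argument. The only external input is the monotonicity of $\rdmax$ under $\barx$, which rests on the facts that resource non-generating operations preserve the set of free states and that $\dmax$ obeys the data-processing inequality; both are supplied in the Preliminaries, so I would simply invoke them rather than reprove them.
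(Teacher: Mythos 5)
Your proposal is correct and follows essentially the same route as the paper: the paper's proof likewise reduces to chaining $\rdmax^\epsilon(\rho)\leq\rdmax(\rho_\epsilon)\leq\rdmax(\phi_d)$ for any achievable $d$, merely unpacking the monotonicity of $\rdmax$ under $\barx$ (via data processing and $\mathcal{E}(\tilde\sigma)\in\mF$) where you cite it from the Preliminaries. The definitional handling of the rightmost inequality and of $d_0$ matches the paper as well.
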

\begin{proof}
Suppose $\phi_d$ is a reference state that achieves the formation task with free operation $\mathcal{E}$, i.e.~$\mathcal{E}(\phi_d) = \rho_\epsilon\in\eball(\rho)$   for some $\mathcal{E}\in\barx$. 
Let $\tilde{\sigma}\in \mF$ be the optimal free state that achieves  $\rdmax(\phi_d)$.   The following must hold:
\begin{align}
    \rdmax^\epsilon(\rho) \leq& \rdmax(\rho_\epsilon)\\ =& \min_{\delta\in \mF}\dmax(\rho_\epsilon\|\delta) \\ 
    \leq& \dmax(\rho_\epsilon\|\mathcal{E}(\tilde{\sigma})) \\
    =&  \dmax(\mathcal{E}(\phi_d)\|\mathcal{E}(\tilde{\sigma})) \\
    \leq& \dmax(\phi_d\|\tilde{\sigma}) = \rdmax(\phi_d) = m_{\max}(\phi_{d})\log{d},
\end{align}
where the third line follows from $\mathcal{E}(\sigma)\in \mF$, and the fifth line follows from the data processing inequality for the max-relative entropy \cite{qrenyi}.   So for any $d < d_0$ the possibility that $\phi_d$ achieves the formation is forbidden by the above inequality, that is, $\Omega_{C,\barx}^\epsilon(\rho\leftarrow\{\phi_d\}) \geq \log d_0$.  Also by the definition of $d_0$ we have $m_{\max}(\phi_{d_0})\log {d_0} \geq  \rdmax^\epsilon(\rho)$. So the claimed bound follows.
\end{proof}

We can also obtain the following general upper (achievability) bound:
\begin{prop}
Given target state $\rho$ and pure reference states $\{\Phi_d\}$.  Suppose the resource theory satisfies Condition (CT), i.e.~for any given $d\in\mathbb{D}$, $\Tr\{\Phi_d\sigma\}$ is constant for any $\sigma\in\mF$.  Let $d_0 = \min\{d\in\mathbb{D}:-\log\mathfrak{f}(\Phi_d) = m_f(\Phi_d)\log{d} \geq \rdmax^\epsilon(\rho)\}$.   For  $\epsilon\geq 0$,
\begin{equation}
    \Omega_{C,\barx}^\epsilon(\rho\leftarrow\{\phi_d\}) \leq \log d_0 < \frac{\rdmax^\epsilon(\rho)}{m_f(\Phi_{d_0^\downarrow})} + \log\frac{d_0}{d_0^\downarrow}.
\end{equation}
\end{prop}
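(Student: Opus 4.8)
The plan is to prove the two inequalities separately, dispatching the right-hand one first since it is purely combinatorial and relies only on the minimality built into $d_0$. By definition $d_0$ is the smallest valid dimension with $m_f(\Phi_{d_0})\log d_0 = -\log\mathfrak{f}(\Phi_{d_0})\geq\rdmax^\epsilon(\rho)$, so the preceding valid dimension $d_0^\downarrow$ must fail the inequality, giving $m_f(\Phi_{d_0^\downarrow})\log d_0^\downarrow < \rdmax^\epsilon(\rho)$. Dividing by $m_f(\Phi_{d_0^\downarrow})$ and adding $\log(d_0/d_0^\downarrow)$ to both sides immediately yields $\log d_0 < \rdmax^\epsilon(\rho)/m_f(\Phi_{d_0^\downarrow}) + \log(d_0/d_0^\downarrow)$.

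The substance lies in the achievability bound $\Omega_{C,\barx}^\epsilon(\rho\leftarrow\{\Phi_d\})\leq\log d_0$, for which I would exhibit an explicit operation in $\barx$ that forms $\rho$ up to error $\epsilon$ from the single reference state $\Phi_{d_0}$. First, let $\rho_\epsilon\in\eball(\rho)$ and $\sigma^*\in\mF$ achieve the smoothed measure, so that $\rdmax^\epsilon(\rho) = \dmax(\rho_\epsilon\|\sigma^*)$, equivalently $\rho_\epsilon\leq R\,\sigma^*$ with $R := 2^{\rdmax^\epsilon(\rho)}$. Because $\Phi_{d_0}$ is pure, $\mathfrak{f}(\Phi_{d_0}) = \max_{\sigma\in\mF}\Tr\{\Phi_{d_0}\sigma\}$, and Condition (CT) forces $\Tr\{\Phi_{d_0}\sigma\}$ to equal the common constant $c := \mathfrak{f}(\Phi_{d_0})$ for every $\sigma\in\mF$. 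The defining inequality of $d_0$ reads $-\log c\geq\rdmax^\epsilon(\rho)$, i.e. $cR\leq 1$, which together with $\rho_\epsilon\leq R\,\sigma^*$ gives $c\,\rho_\epsilon\leq cR\,\sigma^*\leq\sigma^*$; hence (assuming the nontrivial case $c<1$) the operator $\tau := (\sigma^* - c\,\rho_\epsilon)/(1-c)$ is a genuine density operator.

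I would then take the measure-and-prepare channel
\begin{equation*}
\mathcal{E}(\omega) := \Tr\{\Phi_{d_0}\omega\}\,\rho_\epsilon + \Tr\{(I-\Phi_{d_0})\omega\}\,\tau ,
\end{equation*}
which is manifestly CPTP. It satisfies $\mathcal{E}(\Phi_{d_0}) = \rho_\epsilon\in\eball(\rho)$, so the accuracy requirement is met. The crucial point is that $\mathcal{E}\in\barx$: since $\barx$ is exactly the set of channels sending $\mF$ into $\mF$, I only need $\mathcal{E}(\sigma)\in\mF$ for all free $\sigma$. Invoking Condition (CT), for any $\sigma\in\mF$ I obtain $\mathcal{E}(\sigma) = c\,\rho_\epsilon + (1-c)\tau = \sigma^*\in\mF$ by the very choice of $\tau$. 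Thus $\mathcal{E}$ is a legitimate free operation achieving the formation task with currency $\Phi_{d_0}$, which establishes $\Omega_{C,\barx}^\epsilon(\rho\leftarrow\{\Phi_d\})\leq\log d_0$ and completes the chain.

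The main obstacle is certifying $\mathcal{E}\in\barx$: without control over the image of the \emph{entire} free set one cannot verify resource-non-generation. This is precisely where Condition (CT) does the essential work, collapsing the $\sigma$-dependent overlap $\Tr\{\Phi_{d_0}\sigma\}$ to the single constant $c=\mathfrak{f}(\Phi_{d_0})$ so that all of $\mF$ is mapped to the single free state $\sigma^*$; the interplay between $cR\leq 1$ (from the definition of $d_0$) and $\rho_\epsilon\leq R\,\sigma^*$ (from the max-relative entropy) is exactly what guarantees the positivity of $\tau$. The boundary case $c=1$ is harmless, as it forces $\rdmax^\epsilon(\rho)=0$ and the task becomes trivial.
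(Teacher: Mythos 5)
Your proof is correct and follows essentially the same route as the paper's: the channel you construct, $\mathcal{E}(\omega)=\Tr\{\Phi_{d_0}\omega\}\rho_\epsilon+\Tr\{(I-\Phi_{d_0})\omega\}\tau$ with $\tau=(\sigma^*-c\rho_\epsilon)/(1-c)$, is exactly the paper's formation map written directly in measure-and-prepare form, with the same use of Condition (CT) to collapse $\mathcal{E}(\mF)$ to $\sigma^*$, the same positivity argument for $\tau$ from $cR\leq 1$, and the same minimality argument at $d_0^\downarrow$ for the numerical bound. The only (harmless) addition is your explicit treatment of the boundary case $c=1$, which the paper leaves implicit.
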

\begin{proof}
Let $\rho_\epsilon$ be the state in $\eball(\rho)$ and $\delta$ be the free state that achieves $\rdmax^\epsilon(\rho)$, so $\rho_\epsilon \leq 2^{\rdmax^\epsilon(\rho)}\delta$. 
Define the following map $\mathcal{E}$  on input state $\omega$:
\begin{align}
    \mathcal{E}(\omega)=&\frac{1}{1-d^{-m_f(\Phi_{d_0})}_0}\left(\Tr\{\Phi_{d_0}\omega\}-d^{-m_f(\Phi_{d_0})}_0\right)\rho_\epsilon + \frac{1}{1-d^{-m_f(\Phi_{d_0})}_0}(1-\Tr\{\Phi_{ d_0}\omega\})\delta \label{eq:cost NG CT}\\
=&\frac{1}{1-d^{-m_f(\Phi_{d_0})}_0}(1-\Tr\{\Phi_{d_0}\omega\})\left(\delta-d^{-m_f(\Phi_{d_0})}_0\rho_\epsilon\right)
+\Tr\{\Phi_{d_0}\omega\}\rho_{\epsilon}.\label{eq:cost NG CT 2}
\end{align}
By the definition of $d_0$, we have $\delta - d^{-m_f(\Phi_{d_0})}_0\rho_\epsilon \geq 0$. So $\mathcal{E}$ is a cptp map.  By Condition (CT), we have $\Tr\{\Phi_{d_0}\sigma\}  = d^{-m_f(\Phi_{d_0})}_0$ for any $\sigma\in \mF$, so it can be directly seen from Eq.~(\ref{eq:cost NG CT}) that $\mathcal{E}(\sigma) = \delta \in \mF, \forall \sigma\in\mF$, which implies that $\mathcal{E}\in\barx$.  Finally, it can be directly seen from Eq.~(\ref{eq:cost NG CT 2}) that $\mathcal{E}(\Phi_{d_0}) = \rho_\epsilon$, so $\log d_0$ is an achievable rate.  To obtain a bound in terms of $\rdmax^\epsilon(\rho)$, notice that $m_f(\Phi_{d_0^\downarrow})\log(d_0^\downarrow)<\rdmax^\epsilon(\rho)$ by the definition of $d_0$, and so the claimed bound follows.
\end{proof}

When using the golden states $\{\hat\Phi_d\}$ as the currency, some bounds can be simplified and unified due to the collapse of $m_f(\hat\Phi_d), m_{\min}(\hat\Phi_d), m_{\max}(\hat\Phi_d)$ to the golden modification coefficient $g_d$ (Proposition \ref{thm:collapse}).  In particular, we can establish two-way bounds which are expected to be rather tight when $d_0^\downarrow$ is taken to be close to $d_0$, indicating the one-shot formation cost is rather accurately given by the modified smooth max-relative entropy of resource, for the following case: 
\begin{cor}
Given target state $\rho$, and consider golden states $\{\hat\Phi_d\}$ as the reference states.   Suppose $\mF$ satisfies Condition (CH), i.e.~is formed by a convex hull of pure states, and the resource theory satisfies Condition (CT), i.e.~for any given $d\in\mathbb{D}$, $\Tr\{\Phi_d\sigma\}$ is constant for any $\sigma\in\mF$.  Let $d_0 = \min\{d\in\mathbb{D}: g_d\log{d} \geq \rdmax^\epsilon(\rho)\}$.   For $\epsilon\geq 0$,  
\begin{equation}
  \frac{\rdmax^\epsilon(\rho)}{g_{d_0}}  \leq \Omega_{C,\barx}^\epsilon(\rho\leftarrow\{\hat\Phi_d\})  < \frac{\rdmax^\epsilon(\rho)}{g_{d_0^\downarrow}} + \log\frac{d_0}{d_0^\downarrow}.
\end{equation}
\end{cor}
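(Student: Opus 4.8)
The plan is to assemble the two-sided estimate directly from the optimality (lower) and achievability (upper) bounds on $\Omega_{C,\barx}^\epsilon$ proven earlier in this section, specialized to the golden-state currency $\{\hat\Phi_d\}$, and then to use the collapse of the modification coefficients (Proposition \ref{thm:collapse}) to rewrite every coefficient in terms of the single golden coefficient $g_d$. The one point that requires care, and around which the whole argument turns, is that the threshold dimensions entering the two bounds are \emph{a priori} defined through different coefficients ($m_{\max}$ for the lower bound and $m_f$ for the upper bound), so I must verify that they both reduce to the single $d_0$ appearing in the statement.

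First I would handle the lower bound. Recalling that golden states are pure (Lemma \ref{lem:pure}), I apply the optimality bound with $\phi_d=\hat\Phi_d$; its threshold is $\min\{d\in\mathbb{D}: m_{\max}(\hat\Phi_d)\log d \geq \rdmax^\epsilon(\rho)\}$. By Proposition \ref{thm:collapse}, Condition (CH) forces $m_{\max}(\hat\Phi_d)=g_d$ for every $d$, so this threshold is exactly $d_0=\min\{d\in\mathbb{D}: g_d\log d \geq \rdmax^\epsilon(\rho)\}$ as defined in the statement, and the bound becomes $\Omega_{C,\barx}^\epsilon(\rho\leftarrow\{\hat\Phi_d\}) \geq \rdmax^\epsilon(\rho)/m_{\max}(\hat\Phi_{d_0}) = \rdmax^\epsilon(\rho)/g_{d_0}$.

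Next I would handle the upper bound, for which Condition (CT) is the extra hypothesis that legitimizes the achievability construction (the resource-non-generating CPTP map built from the overlap $\Tr\{\hat\Phi_{d_0}\omega\}$). Its threshold is $\min\{d\in\mathbb{D}: m_f(\hat\Phi_d)\log d \geq \rdmax^\epsilon(\rho)\}$; invoking the collapse once more, $m_f(\hat\Phi_d)=g_d$, so this threshold is again the very same $d_0$. The achievability bound then yields $\Omega_{C,\barx}^\epsilon(\rho\leftarrow\{\hat\Phi_d\}) < \rdmax^\epsilon(\rho)/m_f(\hat\Phi_{d_0^\downarrow}) + \log(d_0/d_0^\downarrow) = \rdmax^\epsilon(\rho)/g_{d_0^\downarrow} + \log(d_0/d_0^\downarrow)$. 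Combining the two displays gives the claimed inequality.

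I expect the only genuine obstacle to be confirming that the two separately-defined threshold dimensions are truly the same object: this is precisely where Condition (CH) enters, through the collapse $m_f=m_{\max}=g_d$ of Proposition \ref{thm:collapse}. Without that collapse the lower and upper bounds would reference distinct threshold dimensions and would not fuse into a clean two-sided estimate; everything else is a routine substitution into the previously established propositions, with Condition (CT) supplying the achievability side.
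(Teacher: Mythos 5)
Your proposal is correct and matches the paper's own (implicit) derivation: the corollary is obtained exactly by specializing the two preceding propositions to the pure golden states and using the collapse $m_f(\hat\Phi_d)=m_{\max}(\hat\Phi_d)=g_d$ from Proposition \ref{thm:collapse} to identify the two threshold dimensions with the single $d_0$ of the statement. Your emphasis on verifying that the thresholds coincide is precisely the role Condition (CH) plays here, so nothing is missing.
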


The bounds could also be given by modified versions of smooth free log-robustness when the free robustness measure is always well-defined.  The following is a lower bound:
\begin{prop}
Given target state $\rho$ and reference states $\{\phi_d\}$.   Suppose $\mF$ satisfies Condition (FFR), i.e.~all states have finite free robustness. 
Let $d_0 = \min\{d\in\mathbb{D}:LR(\phi_d)=m_{LR}(\phi_{d})\log d \geq LR^\epsilon(\rho)\}$.  For $\epsilon\geq 0$,
\begin{equation}
    \Omega_{C,\barx}^\epsilon(\rho\leftarrow\{\phi_d\}) \geq \log d_0 \geq \frac{LR^\epsilon(\rho)}{m_{LR}(\phi_{d_0})}.
\end{equation}
\end{prop}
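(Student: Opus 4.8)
The plan is to mirror the optimality bound for $\rdmax$ established just above, replacing the data-processing inequality for the max-relative entropy with the monotonicity of the free log-robustness under resource non-generating operations. Concretely, I would suppose that some reference state $\phi_d$ achieves the formation task, i.e.~that there exists $\mathcal{E}\in\barx$ with $\mathcal{E}(\phi_d)=\rho_\epsilon\in\eball(\rho)$, and aim for the chain $LR^\epsilon(\rho)\leq LR(\rho_\epsilon)=LR(\mathcal{E}(\phi_d))\leq LR(\phi_d)=m_{LR}(\phi_d)\log d$. Once this is in hand, no $d<d_0$ can achieve the formation, so $\Omega_{C,\barx}^\epsilon(\rho\leftarrow\{\phi_d\})\geq\log d_0$; and since $m_{LR}(\phi_{d_0})\log d_0=LR(\phi_{d_0})\geq LR^\epsilon(\rho)$ by the definition of $d_0$, rearranging yields $\log d_0\geq LR^\epsilon(\rho)/m_{LR}(\phi_{d_0})$, which is the claimed bound.

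The first inequality of the chain is immediate from $LR^\epsilon(\rho)=\min_{\rho'\in\eball(\rho)}LR(\rho')$ together with $\rho_\epsilon\in\eball(\rho)$, and the last equality is just the definition of $m_{LR}$. The substantive ingredient is the monotonicity $LR(\mathcal{E}(\phi_d))\leq LR(\phi_d)$. First I would record that any $\mathcal{E}\in\barx$ sends free states to free states: for $\sigma\in\mF$ we have $\lambda(\sigma)=\sigma$, so evaluating the defining identity $\lambda\circ\mathcal{E}\circ\lambda=\mathcal{E}\circ\lambda$ at $\sigma$ gives $\lambda(\mathcal{E}(\sigma))=\mathcal{E}(\sigma)$, and since the fixed points of the RD map $\lambda$ are exactly the free states, $\mathcal{E}(\sigma)\in\mF$. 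Under Condition (FFR) the optimal robustness decomposition $\phi_d=(1+s)\tau-s\sigma$ with $\tau,\sigma\in\mF$ and $s=R(\phi_d)$ exists; applying the linear CPTP map $\mathcal{E}$ gives $\mathcal{E}(\phi_d)=(1+s)\mathcal{E}(\tau)-s\mathcal{E}(\sigma)$ with $\mathcal{E}(\tau),\mathcal{E}(\sigma)\in\mF$, which is a valid robustness witness certifying $R(\mathcal{E}(\phi_d))\leq s=R(\phi_d)$ and hence $LR(\mathcal{E}(\phi_d))\leq LR(\phi_d)$.

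The hard part will be this monotonicity step, though the difficulty is modest. It rests on two points that deserve care: because $\barx$ is the maximal free operation class, the property ``maps free to free'' must be extracted from the RD-map identity rather than from an explicit Kraus characterization; and Condition (FFR) is what guarantees that a finite decomposition with genuinely free $\tau,\sigma$ exists and can be transported through the linear map $\mathcal{E}$. I would also note in passing that (FFR) ensures $LR^\epsilon(\rho)$ is finite, so that dividing by $m_{LR}(\phi_{d_0})$ in the final bound is meaningful.
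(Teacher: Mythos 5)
Your proof is correct and follows essentially the same route as the paper's: assume $\mathcal{E}(\phi_d)=\rho_\epsilon\in\eball(\rho)$ for some $\mathcal{E}\in\barx$, run the chain $LR^\epsilon(\rho)\leq LR(\rho_\epsilon)\leq LR(\phi_d)=m_{LR}(\phi_d)\log d$ via monotonicity of the free log-robustness, and read off the bound from the definition of $d_0$. The only difference is that the paper cites the monotonicity of $LR$ under resource non-generating operations from the literature, whereas you derive it inline (correctly) from the free-state-preservation of $\barx$ and the transported robustness decomposition; your explicit $\leq$ in the first step is in fact slightly more careful than the paper's stated equality.
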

\begin{proof}
Suppose $\phi_d$ is a reference state that achieves the formation task with free operation $\mathcal{E}$, i.e.~$\mathcal{E}(\phi_d) = \rho_\epsilon\in\eball(\rho)$  for some $\mathcal{E}\in\barx$. 
Then it must hold that 
\begin{equation}
    LR^\epsilon(\rho) = LR(\rho_\epsilon) = LR(\mathcal{E}(\phi_d)) \leq LR(\phi_d) = m_{LR}(\phi_{d})\log d,
\end{equation}
where the inequality follows from the monotonocity of free log-robustness \cite{regula}.   So for any $d < d_0$ the possibility that $\phi_d$ achieves the formation is forbidden by the above inequality, that is, $\Omega_{C,\barx}^\epsilon(\rho\leftarrow\{\phi_d\}) \geq \log d_0$.  Also by the definition of $d_0$ we have $m_{LR}(\phi_{d_0})\log {d_0} \geq  LR^\epsilon(\rho)$. So the claimed bound follows.
\end{proof}

We can also use a different formation map to obtain the following upper bound:
\begin{prop}
Given target state $\rho$ and pure reference states $\{\Phi_d\}$.
Suppose $\mF$ is convex, and satisfies Condition (FFR), i.e.~all states have finite free robustness.  
 Let $d_0 = \min\{d\in\mathbb{D}:-\log\mathfrak{f}(\Phi_d) =m_f(\Phi_d)\log d \geq LR^\epsilon(\rho)\}$.
For $\epsilon\geq 0$, 
\begin{equation}
    \Omega_{C,\barx}^\epsilon(\rho\leftarrow\{\phi_d\}) \leq \log d_0 < \frac{LR^\epsilon(\rho)}{m_f(\Phi_{d_0^\downarrow})} + \log\frac{d_0}{d_0^\downarrow}.
\end{equation}
\end{prop}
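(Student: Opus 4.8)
The plan is to exhibit an explicit resource non-generating channel that forms a near-copy of $\rho$ out of the single pure reference state $\Phi_{d_0}$, thereby certifying that $\log d_0$ is an achievable formation rate and then converting that into the stated inequality. First I would invoke Condition (FFR) to fix the ingredients: let $\rho_\epsilon\in\eball(\rho)$ be the smoothing minimizer attaining $LR^\epsilon(\rho)$, write $s:=R^\epsilon(\rho)$ so that $2^{LR^\epsilon(\rho)}=1+s$, and record the free-robustness decomposition $\frac{1}{1+s}\rho_\epsilon+\frac{s}{1+s}\sigma_0=\tau$ with $\sigma_0,\tau\in\mF$, which exists precisely because $\rho_\epsilon$ has finite free robustness.

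Next I would build the measure-and-prepare channel
\[
\mathcal{E}(\omega):=\Tr\{\Phi_{d_0}\omega\}\,\rho_\epsilon+\Tr\{(I-\Phi_{d_0})\omega\}\,\sigma_0,
\]
which is manifestly CPTP. Since $\Phi_{d_0}$ is pure, $\mathcal{E}(\Phi_{d_0})=\rho_\epsilon\in\eball(\rho)$, so the whole task reduces to verifying $\mathcal{E}\in\barx$, i.e.\ $\mathcal{E}(\sigma)\in\mF$ for every free $\sigma$.

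The crux --- and the place where the present convex, (FFR) hypotheses must do the work that Condition (CT) did in the $\mathfrak{D}_{\max}$ version --- is that the overlap $p_\sigma:=\Tr\{\Phi_{d_0}\sigma\}$ is no longer a constant, so one cannot simply choose the second prepared state to cancel the negative part of a single robustness decomposition. The key observation is that purity of $\Phi_{d_0}$ gives $\max_{\sigma\in\mF}p_\sigma=\mathfrak{f}(\Phi_{d_0})$, and the defining inequality $-\log\mathfrak{f}(\Phi_{d_0})\geq LR^\epsilon(\rho)$ forces $\mathfrak{f}(\Phi_{d_0})\leq\frac{1}{1+s}$; hence every $p_\sigma$ lies in $[0,\frac{1}{1+s}]$. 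Writing $\mathcal{E}(\sigma)=p_\sigma\rho_\epsilon+(1-p_\sigma)\sigma_0$, I would note that the segment $\{p\rho_\epsilon+(1-p)\sigma_0:p\in[0,\frac{1}{1+s}]\}$ has endpoints $\sigma_0\in\mF$ (at $p=0$) and $\tau\in\mF$ (at $p=\frac{1}{1+s}$), so convexity of $\mF$ places the entire segment --- in particular the point indexed by $p_\sigma$ --- inside $\mF$. This yields $\mathcal{E}(\sigma)\in\mF$, hence $\mathcal{E}\in\barx$, and therefore $\Omega_{C,\barx}^\epsilon(\rho\leftarrow\{\Phi_d\})\leq\log d_0$.

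Finally I would convert this into the claimed bound: minimality of $d_0$ means $d_0^\downarrow$ fails the defining condition, so $m_f(\Phi_{d_0^\downarrow})\log d_0^\downarrow<LR^\epsilon(\rho)$, i.e.\ $\log d_0^\downarrow<LR^\epsilon(\rho)/m_f(\Phi_{d_0^\downarrow})$; adding $\log(d_0/d_0^\downarrow)$ to both sides gives the stated strict inequality. I expect the only real subtlety to be the convexity-segment argument, namely verifying that the range of overlaps is capped exactly at $\frac{1}{1+s}$ so that the robustness decomposition supplies the far endpoint $\tau$; CPTP-ness of $\mathcal{E}$, the formation identity $\mathcal{E}(\Phi_{d_0})=\rho_\epsilon$, and the final arithmetic are all immediate.
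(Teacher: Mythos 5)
Your proposal is correct and follows essentially the same route as the paper's proof: the identical measure-and-prepare channel $\mathcal{E}(\omega)=\Tr\{\Phi_{d_0}\omega\}\rho_\epsilon+(1-\Tr\{\Phi_{d_0}\omega\})\sigma_0$, with freeness of $\mathcal{E}(\sigma)$ established by rewriting it as a convex combination of $\sigma_0$ and the robustness-decomposition state $\tau=\frac{1}{1+s}\rho_\epsilon+\frac{s}{1+s}\sigma_0\in\mF$, using $\Tr\{\Phi_{d_0}\sigma\}\le\mathfrak{f}(\Phi_{d_0})\le 2^{-LR^\epsilon(\rho)}$. Your explicit justification of why the overlap is capped at $\frac{1}{1+s}$ (via purity of $\Phi_{d_0}$ and the definition of $d_0$) is exactly the step the paper invokes more tersely, and the concluding arithmetic matches.
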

\begin{proof}
Let $\rho_\epsilon$ be the state in $\eball(\rho)$ and $\delta$ be the free state that achieves $R^\epsilon(\rho)$.
Let $\alpha = 2^{LR^\epsilon(\rho)} = 1+R^\epsilon(\rho)$. Then there exists $\delta \in \mF$ such that $\delta' = \frac{1}{\alpha}\rho_\epsilon + (1-\frac{1}{\alpha})\delta\in \mF$.
Define the following cptp map  on input state $\omega$:
\begin{align}
    \mathcal{E}(\omega)=& \Tr\{\Phi_{d_0}\omega\}\rho_\epsilon + (1-\Tr\{\Phi_{d_0}\omega\})\delta \label{eq:cost NG FFR}\\
=& \alpha\Tr\{\Phi_{d_0}\omega\}\delta' + (1-\alpha\Tr\{\Phi_{d_0}\omega\})\delta.
\end{align}
When $\omega\in \mF$, by the definition of $d_0$ we directly have $\Tr\{\Phi_{d_0}\omega\}\leq 2^{-LR^\epsilon(\rho)}$, so $\alpha\Tr\{\Phi_{d_0}\omega\}\leq 1$. Since $\delta,\delta'\in \mF$, we have $\mathcal{E}(\omega)\in \mF$ when $\omega\in \mF$ due to the convexity of $\mF$, which implies that $\mathcal{E}\in \barx$.
Finally, it can be directly verified that $\mathcal{E}(\Phi_{d_0}) = \rho_\epsilon$, so $\log d_0$ is an achievable rate.  To obtain a bound in terms of $LR^\epsilon(\rho)$, notice that $m_f(\Phi_{d_0^\downarrow})\log(d_0^\downarrow)<LR^\epsilon(\rho)$ by the definition of $d_0$, and so the claimed bound follows.
\end{proof}

Note again that the separate results may hold under particular assumptions on the resource theory and reference states.   In certain situations where the assumptions for different lower or upper bounds are met, one may compare the valid bounds and take the strongest one (same for the results below).  



\section{One-shot formation cost by commuting operations (part of Theorem 2)} \label{app:cost_comm}
In this section, we present proofs of the bounds on the one-shot formation cost under commuting operations with respect to resource destroying map $\lambda$, i.e.~$\mathscr{F}_{\lambda,{\rm Comm}}$ which is in general a more restricted set of free operations than $\barx$.
The bounds on the one-shot formation cost under $\mathscr{F}_{\lambda,{\rm Comm}}$ are given by modified versions of the smooth max-relative entropy between the states and its resource-destroyed version induced by $\lambda$, namely the $\lambda$-max-relative entropy of resource $\rdmaxl^\epsilon$.  We first present the most general lower bound without any restriction on $\lambda$ and then simplify the result in the case that $\lambda$ is exact:
\begin{prop}
Given target state $\rho$, resource destroying map $\lambda$, and reference states $\{\phi_d\}$. Let $d_0 = \min\{d\in\mathbb{D}:\mathfrak{D}_{\max,\lambda}
({\phi}_d) = m_{\max,\lambda}(\phi_{d})\log d \geq \rdmaxl^\epsilon(\rho)\}$.   For  $\epsilon\geq 0$,
\begin{equation}
    \Omega_{C,\mathscr{F}_{\lambda,{\rm Comm}}}^\epsilon(\rho\leftarrow\{\phi_d\}) \geq \log d_0 \geq\frac{\rdmaxl^\epsilon(\rho)}{m_{\max,\lambda}(\phi_{d_0})}.
\end{equation}

Now consider golden states $\{\hat{\Phi}_d\}$ as the reference states.  Suppose the set of free states $\mF$ satisfies Condition (CH), i.e.~is formed by a convex hull of pure states, and $\tilde\lambda$ is an exact resource destroying map. 
Let $d'_0 = \min\{d\in\mathbb{D}:\mathfrak{D}_{\max,\tilde\lambda}
(\hat{\Phi}_d) = \mathfrak{D}_{\max}
(\hat{\Phi}_d) = m_{\max}(\hat{\Phi}_d)\log d =g_d\log d \geq \mathfrak{D}_{\max,\tilde{\lambda}}^\epsilon(\rho)\}$. For $\epsilon\geq 0$,
\begin{equation}
    \Omega_{C,\mathscr{F}_{\tilde\lambda,{\rm Comm}}}^\epsilon(\rho\leftarrow\{\hat\Phi_d\}) \geq \log d'_0\geq\frac{\mathfrak{D}_{\max,\tilde{\lambda}}^\epsilon(\rho)}{m_{\max}(\hat{\Phi}_{d'_0})} = \frac{\mathfrak{D}_{\max,\tilde{\lambda}}^\epsilon(\rho)}{g_{d'_0}}.
\end{equation}
\end{prop}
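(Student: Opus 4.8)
The plan is to mirror the optimality-bound proof given above for resource non-generating operations, replacing the property ``$\mathcal{E}$ maps free states to free states'' by the defining commutation relation of $\mathscr{F}_{\lambda,{\rm Comm}}$. First I would suppose that a reference state $\phi_d$ achieves the $\epsilon$-formation task, i.e.~there is some $\mathcal{E}\in\mathscr{F}_{\lambda,{\rm Comm}}$ with $\mathcal{E}(\phi_d)=\rho_\epsilon\in\eball(\rho)$. Since $\rho_\epsilon$ lies in the $\epsilon$-ball around $\rho$ and $\rdmaxl^\epsilon(\rho)$ is by definition the \emph{minimum} of $\mathfrak{D}_{\max,\lambda}$ over that ball, this immediately gives $\rdmaxl^\epsilon(\rho)\leq\mathfrak{D}_{\max,\lambda}(\rho_\epsilon)=\dmax(\rho_\epsilon\|\lambda(\rho_\epsilon))$.

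The crucial step is to rewrite $\lambda(\rho_\epsilon)$ using commutation. Because $\mathcal{E}\in\mathscr{F}_{\lambda,{\rm Comm}}$ satisfies $\lambda\circ\mathcal{E}=\mathcal{E}\circ\lambda$, we have $\lambda(\rho_\epsilon)=\lambda(\mathcal{E}(\phi_d))=\mathcal{E}(\lambda(\phi_d))$. Substituting this and applying the data processing inequality for the max-relative entropy under the channel $\mathcal{E}$ yields the chain
\[
\rdmaxl^\epsilon(\rho)\leq\dmax(\mathcal{E}(\phi_d)\|\mathcal{E}(\lambda(\phi_d)))\leq\dmax(\phi_d\|\lambda(\phi_d))=\mathfrak{D}_{\max,\lambda}(\phi_d)=m_{\max,\lambda}(\phi_d)\log d.
\]
Hence no $d<d_0$ can achieve the task, so $\Omega_{C,\mathscr{F}_{\lambda,{\rm Comm}}}^\epsilon(\rho\leftarrow\{\phi_d\})\geq\log d_0$; the remaining inequality follows directly from the defining property $m_{\max,\lambda}(\phi_{d_0})\log d_0\geq\rdmaxl^\epsilon(\rho)$ of $d_0$.

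For the golden-state refinement I would specialize $\phi_d=\hat\Phi_d$ and import the collapse established in Theorem~\ref{thm:collapse_main}: when $\mF$ satisfies Condition (CH) and $\tilde\lambda$ is an exact resource destroying map, Proposition~\ref{lemma:exact_max} gives $m_{\max,\tilde\lambda}(\hat\Phi_d)=m_{\max}(\hat\Phi_d)=g_d$, equivalently $\mathfrak{D}_{\max,\tilde\lambda}(\hat\Phi_d)=\mathfrak{D}_{\max}(\hat\Phi_d)=g_d\log d$. Substituting these identities into the general bound replaces $m_{\max,\tilde\lambda}(\hat\Phi_{d'_0})$ by $g_{d'_0}$ and rewrites the threshold defining $d'_0$ in terms of $g_d\log d$, producing the stated inequalities verbatim.

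I expect no substantial obstacle: the argument is essentially the NG proof with the commutation identity playing the role that free-state preservation played before. The only points requiring care are getting the direction of the data processing inequality right (contractivity of $\dmax$ under the CPTP map $\mathcal{E}$) and correctly using that the smoothing of $\mathfrak{D}_{\max,\lambda}$ is a minimization, so that $\rho_\epsilon\in\eball(\rho)$ furnishes an \emph{upper} bound on $\rdmaxl^\epsilon(\rho)$. The refinement is then a one-line invocation of the already-proved collapse result.
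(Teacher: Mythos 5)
Your proposal is correct and follows essentially the same route as the paper's own proof: the same chain $\rdmaxl^\epsilon(\rho)\leq\mathfrak{D}_{\max,\lambda}(\rho_\epsilon)=\dmax(\mathcal{E}(\phi_d)\|\mathcal{E}(\lambda(\phi_d)))\leq\dmax(\phi_d\|\lambda(\phi_d))$ via the commutation identity and data processing, followed by the definition of $d_0$. The golden-state refinement is likewise handled in the paper exactly as you describe, as a one-line appeal to the collapse $m_{\max,\tilde\lambda}(\hat\Phi_d)=m_{\max}(\hat\Phi_d)=g_d$.
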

\begin{proof}
Suppose $\phi_d$ is a reference state that achieves the formation task with free operation $\mathcal{E}$, i.e.~$\mathcal{E}(\phi_d) = \rho_\epsilon\in\eball(\rho)$ for some $\mathcal{E}\in\mathscr{F}_{\lambda,{\rm Comm}}$. 
The following must hold:
\begin{align}
    \rdmaxl^\epsilon(\rho) \leq& \rdmaxl(\rho_\epsilon)\\ =& \dmax(\rho_\epsilon\|\lambda(\rho_\epsilon)) \\ 
    =& \dmax(\mathcal{E}(\phi_d)\|\lambda(\mathcal{E}(\phi_d))) \\
    =&  \dmax(\mathcal{E}(\phi_d)\|\mathcal{E}(\lambda(\phi_d)))  \\
    \leq& \dmax(\phi_d\|\lambda(\phi_d)) = \rdmaxl(\phi_d) = m_{\max,\lambda}(\phi_{d})\log{d},
\end{align}
where the fourth line follows from $\mathcal{E}\in \mathscr{F}_{\lambda,{\rm Comm}}$, i.e.~$\mathcal{E}$ commutes with $\lambda$, and the fifth line follows from the data processing inequality for the max-relative entropy \cite{qrenyi}.   So for any $d < d_0$ the possibility that $\phi_d$ achieves the formation is forbidden by the above inequality, that is, $\Omega_{C,\mathscr{F}_{\lambda,{\rm Comm}}}^\epsilon(\rho\leftarrow\{\phi_d\}) \geq \log d_0$.  Also by the definition of $d_0$ we have $m_{\max,\lambda}(\phi_{d_0})\log {d_0} \geq  \rdmaxl^\epsilon(\rho)$. So the first claimed bound follows.  

The second bound is a direct consequence of the collapse $m_{\max,\tilde\lambda} = m_{\max}$ handled by Proposition~\ref{lemma:exact_max}.
\end{proof}

We also obtain the following general upper bound:
\begin{prop}
Given target state $\rho$, resource destroying map $\lambda$, and pure reference states $\{\Phi_d\}$.  Suppose the  resource theory satisfies Condition (CT), i.e.~for any given $d\in\mathbb{D}$, $\Tr\{\Phi_d\sigma\}$ is constant for any $\sigma\in\mF$.  Let $d_0 = \min\{d\in\mathbb{D}: -\log\mathfrak{f}(\Phi_d) =m_f(\Phi_{d})\log d\geq \mathfrak{D}^\epsilon_{\max, \lambda}(\rho)\}$.  For $\epsilon\geq 0$,
 \begin{equation} 
     \Omega_{C,\mathscr{F}_{\lambda,{\rm Comm}}}^\epsilon(\rho\leftarrow\{\phi_d\})  \leq \log d_0 < \frac{\mathfrak{D}^\epsilon_{\max, \lambda}(\rho)}{m_f(\Phi_{d_0^\downarrow})} + \log\frac{d_0}{d_0^\downarrow}.
\end{equation}
\end{prop}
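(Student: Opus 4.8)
The plan is to mirror the Condition~(CT) achievability argument for $\barx$, but with the free ``target'' state pinned to the resource-destroyed input and with the additional burden of making the formation map commute with $\lambda$. First I would pick a smoothing minimizer $\rho_\epsilon\in\eball(\rho)$ so that $\rdmaxl^\epsilon(\rho)=\dmax(\rho_\epsilon\|\lambda(\rho_\epsilon))$, giving $\rho_\epsilon\le 2^{\rdmaxl^\epsilon(\rho)}\lambda(\rho_\epsilon)$. Setting $\delta:=\lambda(\rho_\epsilon)\in\mF$ and $\gamma:=\mathfrak{f}(\Phi_{d_0})=d_0^{-m_f(\Phi_{d_0})}$, the defining inequality $m_f(\Phi_{d_0})\log d_0\ge\rdmaxl^\epsilon(\rho)$ gives $\gamma\le 2^{-\rdmaxl^\epsilon(\rho)}$, so that $\gamma\rho_\epsilon\le 2^{-\rdmaxl^\epsilon(\rho)}\rho_\epsilon\le\delta$ and hence $\delta-\gamma\rho_\epsilon\ge 0$.

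Next I would introduce the measure-and-prepare channel
\begin{equation*}
\mathcal{E}(\omega)=\Tr\{\Phi_{d_0}\omega\}\,\rho_\epsilon+\Tr\{(I-\Phi_{d_0})\omega\}\,\frac{\delta-\gamma\rho_\epsilon}{1-\gamma},
\end{equation*}
associated with the two-outcome POVM $\{\Phi_{d_0},I-\Phi_{d_0}\}$ and the two preparations $\rho_\epsilon$ and $(\delta-\gamma\rho_\epsilon)/(1-\gamma)$. The latter is a legitimate state by the positivity just shown, so $\mathcal{E}$ is a valid CPTP map, and since $\Tr\{\Phi_{d_0}\Phi_{d_0}\}=1$ we get $\mathcal{E}(\Phi_{d_0})=\rho_\epsilon\in\eball(\rho)$. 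Thus $\log d_0$ is an achievable formation size, i.e.\ $\Omega^\epsilon_{C,\mathscr{F}_{\lambda,{\rm Comm}}}(\rho\leftarrow\{\Phi_d\})\le\log d_0$, and the claimed bound then follows from minimality of $d_0$: since $d_0^\downarrow$ violates the defining condition, $m_f(\Phi_{d_0^\downarrow})\log d_0^\downarrow<\rdmaxl^\epsilon(\rho)$, whence $\log d_0=\log d_0^\downarrow+\log(d_0/d_0^\downarrow)<\rdmaxl^\epsilon(\rho)/m_f(\Phi_{d_0^\downarrow})+\log(d_0/d_0^\downarrow)$.

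The step I expect to be the main obstacle is verifying $\mathcal{E}\in\mathscr{F}_{\lambda,{\rm Comm}}$, i.e.\ $\lambda\circ\mathcal{E}=\mathcal{E}\circ\lambda$, which is the genuinely new ingredient compared with the $\barx$ case. Condition~(CT) settles one side cleanly: $\lambda(\omega)$ is always free and $\Tr\{\Phi_{d_0}\sigma\}=\gamma$ for every $\sigma\in\mF$, so $\mathcal{E}\circ\lambda$ collapses to the constant output $\gamma\rho_\epsilon+(1-\gamma)(\delta-\gamma\rho_\epsilon)/(1-\gamma)=\delta$. To match this on the other side I would invoke that $\lambda$ is linear and idempotent (it fixes the free state $\delta=\lambda(\rho_\epsilon)$, so $\lambda\circ\lambda=\lambda$): distributing $\lambda$ across the convex combination defining $\mathcal{E}(\omega)$ and using $\lambda((\delta-\gamma\rho_\epsilon)/(1-\gamma))=(\lambda(\delta)-\gamma\lambda(\rho_\epsilon))/(1-\gamma)=\delta$ shows that every output of $\mathcal{E}$ is sent to $\delta$ as well, so $\lambda\circ\mathcal{E}=\delta=\mathcal{E}\circ\lambda$. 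The delicate point is exactly this use of linearity: for a nonlinear $\lambda$ one cannot split $\lambda$ over the mixture of $\rho_\epsilon$ and $(\delta-\gamma\rho_\epsilon)/(1-\gamma)$, and the argument would have to be confined to the linear (RD-channel) case or reworked.
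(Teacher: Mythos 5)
Your proposal is correct and takes essentially the same route as the paper: the same smoothing minimizer $\rho_\epsilon$ with $\delta=\lambda(\rho_\epsilon)$, the same measure-and-prepare map (your $\mathcal{E}$ is an algebraic rearrangement of the paper's), the same positivity check $\delta-\gamma\rho_\epsilon\geq 0$, the same use of Condition (CT) to evaluate $\mathcal{E}\circ\lambda$, and the same minimality-of-$d_0$ step for the final bound. The linearity caveat you flag in verifying $\lambda\circ\mathcal{E}=\mathcal{E}\circ\lambda$ is present (implicitly) in the paper's own computation as well, which distributes $\lambda$ over the mixture of $\rho_\epsilon$ and $\lambda(\rho_\epsilon)$ while citing only idempotence, so your explicit acknowledgment of it is a point in your write-up's favor rather than a gap relative to the paper.
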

\begin{proof}
Let $\rho_\epsilon$ be the state in $\mathcal{B}^\epsilon(\rho)$ that achieves $\rdmaxl^\epsilon(\rho)$, so
$\rho_{\epsilon}\leq 2^{\rdmaxl^\epsilon(\rho)}\lambda(\rho)$. Define the following map on input state $\omega$:
\begin{align}
\mathcal{E}(\omega)=&\frac{1}{1-d^{-m_f(\Phi_{d_0})}_0}\left(\Tr\{\Phi_{d_0}\omega\}-d^{-m_f(\Phi_{d_0})}_0\right)\rho_\epsilon + \frac{1}{1-d^{-m_f(\Phi_{d_0})}_0}(1-\Tr\{\Phi_{ d_0}\omega\})\lambda(\rho_{\epsilon})\\
=&\frac{1}{1-d^{-m_f(\Phi_{d_0})}_0}(1-\Tr\{\Phi_{d_0}\omega\})\left(\lambda(\rho_{\epsilon})-d^{-m_f(\Phi_{d_0})}_0\rho_\epsilon\right)
+\Tr\{\Phi_{d_0}\omega\}\rho_{\epsilon}\label{eq:f8}
\end{align}
By the definition of $d_0$, we have $\lambda(\rho_{\epsilon})-d^{-m_f(\Phi_{d_0})}_0\rho_\epsilon\geq 0$. So $\mathcal{E}$ is a cptp map. 
Now, notice that, on the one hand,
\begin{align}
\mathcal{E}(\lambda(\omega))=&\frac{1}{1-d^{-m_f(\Phi_{d_0})}_0}\left(\Tr\{\Phi_{d_0}\lambda(\omega)\}-d^{-m_f(\Phi_{d_0})}_0\right)\rho_\epsilon + \frac{1}{1-d^{-m_f(\Phi_{d_0})}_0}(1-\Tr\{\Phi_{ d_0}\lambda(\omega)\})\lambda(\rho_{\epsilon})\\
=&\lambda(\rho_\epsilon),
\end{align}
where we used $\Tr\{\Phi_{d_0}\lambda(\omega)\} = d^{-m_f(\Phi_{d_0})}_0$ due to Condition (CT).  On the other hand,
\begin{align}
\lambda(\mathcal{E}(\omega))=&\frac{1}{1-d^{-m_f(\Phi_{d_0})}_0}\left(\Tr\{\Phi_{d_0}\omega\}-d^{-m_f(\Phi_{d_0})}_0\right)\lambda(\rho_{\epsilon}) + \frac{1}{1-d^{-m_f(\Phi_{d_0})}_0}(1-\Tr\{\Phi_{ d_0}\omega\})\lambda(\rho_{\epsilon})\\
=&\lambda(\rho_{\epsilon}),
\end{align}
where we used the fact that $\lambda$ is idempotent.    That is, $\mathcal{E}(\lambda(\omega)) = \lambda(\mathcal{E}(\omega))$, i.e.~$\mathcal{E} \in  \mathscr{F}_{\lambda,{\rm Comm}}$. 
Finally, it can be directly seen from Eq.~(\ref{eq:f8}) that $\mathcal{E}(\Phi_{d_0}) = \rho_\epsilon$, so $\log d_0$ is an achievable rate.
 To obtain a bound in terms of $\rdmaxl^\epsilon(\rho)$, notice that $m_f(\Phi_{d_0^\downarrow})\log(d_0^\downarrow)<\mathfrak{D}^\epsilon_{\max, \lambda}(\rho)$ by the definition of $d_0$, and so the claimed bound follows.
\end{proof}

Again when considering golden states $\{\hat\Phi_d\}$ as the currency and exact resource destroying map $\tilde\lambda$, the collapse results  $m_f(\hat\Phi_d) = m_{\max}(\hat\Phi_d) = g_d$ due to Proposition~\ref{thm:collapse} and also $m_{\max}(\hat\Phi_d) = m_{\max,\tilde\lambda}(\hat\Phi_d)=g_d$  due to Proposition~\ref{lemma:exact_max} allow us to establish two-way bounds which are expected to be rather tight when $d_0^\downarrow$ is taken to be close to $d_0$ for the following case:
\begin{cor}
Given target state $\rho$, and consider golden states $\{\hat\Phi_d\}$ as the reference states. Suppose $\mF$ satisfies Condition (CH), i.e.~is formed by a convex hull of pure states, and $\tilde{\lambda}$ is an exact resource destroying map.  Let $d_0 = \min\{d\in\mathbb{D}: g_d\log{d} \geq \mathfrak{D}_{\max,\tilde{\lambda}}^\epsilon(\rho)\}$.  For $\epsilon\geq 0$,  
\begin{equation}
  \frac{\mathfrak{D}_{\max,\tilde{\lambda}}^\epsilon(\rho)}{g_{d_0}}  \leq \Omega_{C,\mathscr{F}_{\tilde\lambda,{\rm Comm}}}^\epsilon(\rho\leftarrow\{\hat\Phi_d\})  < \frac{\mathfrak{D}_{\max,\tilde{\lambda}}^\epsilon(\rho)}{g_{d_0^\downarrow}} + \log\frac{d_0}{d_0^\downarrow}.
\end{equation}
\end{cor}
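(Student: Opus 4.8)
The plan is to obtain this corollary as a direct specialization of the two general formation-cost bounds for commuting operations (the lower- and upper-bound propositions proved just above in this section) to the golden reference states $\{\hat\Phi_d\}$ and an exact resource destroying map $\tilde\lambda$, with every modification coefficient collapsed to the golden coefficient $g_d$ by the results of Theorem~1 (Propositions~\ref{thm:collapse} and~\ref{lemma:exact_max}). The only genuine work is to check that the threshold dimension $d_0$ defined in the corollary via $g_d\log d \geq \mathfrak{D}_{\max,\tilde\lambda}^\epsilon(\rho)$ coincides with the a priori distinct threshold dimensions appearing inside the two propositions; this is exactly what the simultaneous collapse $m_f(\hat\Phi_d)=m_{\max}(\hat\Phi_d)=m_{\max,\tilde\lambda}(\hat\Phi_d)=g_d$ secures.

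For the lower bound I would invoke the second (golden-state, exact-RD) half of the preceding lower-bound proposition. There the defining inequality $\mathfrak{D}_{\max,\tilde\lambda}(\hat\Phi_d)\geq \mathfrak{D}_{\max,\tilde\lambda}^\epsilon(\rho)$ is rewritten using $\mathfrak{D}_{\max,\tilde\lambda}(\hat\Phi_d)=\mathfrak{D}_{\max}(\hat\Phi_d)=g_d\log d$ (Proposition~\ref{lemma:exact_max} for the first equality, Proposition~\ref{thm:collapse} for the second), so its threshold $d_0'$ equals the corollary's $d_0$. The proposition then already delivers $\Omega_{C,\mathscr{F}_{\tilde\lambda,{\rm Comm}}}^\epsilon(\rho\leftarrow\{\hat\Phi_d\})\geq \mathfrak{D}_{\max,\tilde\lambda}^\epsilon(\rho)/m_{\max}(\hat\Phi_{d_0})=\mathfrak{D}_{\max,\tilde\lambda}^\epsilon(\rho)/g_{d_0}$, which is the left-hand inequality.

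For the upper bound I would apply the preceding achievability proposition with $\Phi_d=\hat\Phi_d$; as in its main-text counterpart, which lists Condition (CT) explicitly, this step invokes the commuting formation map and hence also requires Condition (CT). Its threshold is $\min\{d:m_f(\hat\Phi_d)\log d\geq \mathfrak{D}_{\max,\tilde\lambda}^\epsilon(\rho)\}$, and since $m_f(\hat\Phi_d)=g_d$ (Proposition~\ref{thm:collapse}) this again equals $d_0$. Substituting $m_f(\hat\Phi_{d_0^\downarrow})=g_{d_0^\downarrow}$ into the bound $\Omega<\mathfrak{D}_{\max,\tilde\lambda}^\epsilon(\rho)/m_f(\hat\Phi_{d_0^\downarrow})+\log(d_0/d_0^\downarrow)$ yields the right-hand inequality, and the two combine to the stated two-way bound.

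The main obstacle is conceptual rather than computational: everything hinges on the three coefficients $m_f$, $m_{\max}$, and $m_{\max,\tilde\lambda}$ collapsing to one and the same $g_d$ at the golden state, so that the lower bound (phrased through $m_{\max,\tilde\lambda}$) and the upper bound (phrased through $m_f$) are measured against the same currency scale and the same $d_0$. This simultaneous collapse is precisely the content of Propositions~\ref{thm:collapse} and~\ref{lemma:exact_max}, and it is here that exactness of $\tilde\lambda$ is essential --- for a non-exact RD map $m_{\max,\tilde\lambda}$ need not equal $g_d$, and the matching of the two bounds would break down.
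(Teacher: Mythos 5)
Your proposal is correct and follows exactly the paper's route: the corollary is obtained by specializing the two commuting-operation formation propositions to golden states, with the thresholds identified and the coefficients replaced by $g_d$ via the collapse results $m_f(\hat\Phi_d)=m_{\max}(\hat\Phi_d)=m_{\max,\tilde\lambda}(\hat\Phi_d)=g_d$ from Theorem~1. Your observation that Condition (CT) is needed for the achievability half (as the main-text version of the corollary states, though the appendix statement omits it) is also accurate.
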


\section{Root states and blind resource formation (Corollary 4 and additional results)} \label{app:root}
We first formally define the notion of root states, which represent the strongest notion of maximum resource for the corresponding Hilbert space:
\begin{defn}[Root state]
Let $\mL(\mH)$ be the set of linear operators acting on the Hilbert space $\mH$, and $\mT(\mH,\mH') = \{\Phi| \Phi:\mL(\mH) \rightarrow \mL(\mH')\}$ be the set of linear transformations that map the operators on the Hilbert space $\mH$ to the operators on the Hilbert space $\mH'$, and ${\mD}(\mH)$ be the set of density operators acting on $\mH$. 
A quantum state $\Psi_d\in{\mD}(\mH_d)$ is called a \emph{root state}, if for all states $\rho\in {\mD}(\mH_d)$, there exist a free operation $\mathcal{E}\in \barx \subset \mT(\mH_d,\mH_d)$ such that $\rho = \mathcal{E}(\Psi_d)$.
\end{defn}

By definition, a root state $\psi_d$ must achieve the maximum value of any resource monotone in dimension $d$.
So strictly speaking, the notion of root states is more restrictive than the max-resource states given by maximizing a certain set of resource monotones. 
Also, the maximum of any convex resource monotone is achieved by some pure state, which is a strong evidence that the root states can be pure in convex theories.


We now provide a proof for the following result on the existence of root states in certain cases, which was presented as Corollary 4 in the main text:
\begin{prop}
 For any $\mF(\mH_d)$ such that the maxima of $m\in\{m_f,m_{\min},m_{\max}\}$ coincide at some pure state $\hat\Phi_d$ (e.g.~$\mF(\mH_d)$ satisfying Condition (CH), i.e.~is formed by a convex hull of pure states), $\hat\Phi_d$ serves as a root state if $\mF(\mH_d)$ further satisfies either of the following: i) Condition (CT), i.e.~for any given $d\in\mathbb{D}$, $\Tr\{\Phi_d\sigma\}$ is constant for any $\sigma\in\mF$, ii) Condition (FFR), i.e.~all states have finite free robustness, and $m_{\max}(\Phi_d)=m_{LR}(\Phi_d)$ for any pure state $\Phi_d\in \mD(\mH_d)$. 
\end{prop}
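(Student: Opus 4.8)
The plan is to realize $\hat\Phi_d$ as a root state by reusing the achievability (upper-bound) constructions for the one-shot formation cost in Appendix~\ref{app:cost_ng}, specialized to exact formation ($\epsilon=0$) within the same space, i.e.~with reference dimension $d_0=d$. For an arbitrary target $\rho\in\mD(\mH_d)$ I will exhibit an explicit channel $\mathcal{E}\in\barx$ built from $\hat\Phi_d$ and an optimal free state, all acting $\mH_d\to\mH_d$, and check that $\mathcal{E}(\hat\Phi_d)=\rho$. The recurring ingredient is that, since $d$ is fixed, $\hat\Phi_d$ maximizing $m_{\max}$ means it maximizes $\rdmax$ over all of $\mD(\mH_d)$, so $\rdmax(\rho)\leq\rdmax(\hat\Phi_d)$ for every $\rho$.

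For case i) I would take $\mathcal{E}$ to be the Condition~(CT) formation map \eqref{eq:cost NG CT} with $\rho_\epsilon=\rho$ and $\delta\in\mF$ achieving $\rdmax(\rho)$ (so $\rho\leq 2^{\rdmax(\rho)}\delta$). The only point needing care is complete positivity, which reduces to $\delta-\mathfrak{f}(\hat\Phi_d)\rho\geq 0$; using the collapse $m_f(\hat\Phi_d)=m_{\max}(\hat\Phi_d)$ this reads $\rho\leq 2^{\rdmax(\hat\Phi_d)}\delta$, and it follows from $\rho\leq 2^{\rdmax(\rho)}\delta$ together with $\rdmax(\rho)\leq\rdmax(\hat\Phi_d)$. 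Condition~(CT) then gives $\mathcal{E}(\sigma)=\delta\in\mF$ for every $\sigma\in\mF$, hence $\mathcal{E}\in\barx$, and since $\Tr\{\hat\Phi_d\hat\Phi_d\}=1$ the map evaluates to $\mathcal{E}(\hat\Phi_d)=\rho$.

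For case ii) I would instead use the Condition~(FFR) formation map \eqref{eq:cost NG FFR} with $\delta\in\mF$ achieving $R(\rho)$ and $\alpha=1+R(\rho)$; it is manifestly a valid channel, and (by convexity of $\mF$) the free-preservation requirement $\mathcal{E}(\sigma)\in\mF$ reduces to $\max_{\sigma\in\mF}\Tr\{\hat\Phi_d\sigma\}=\mathfrak{f}(\hat\Phi_d)\leq 2^{-LR(\rho)}$, i.e.~to $LR(\rho)\leq LR(\hat\Phi_d)$. The main obstacle is promoting the given fact that $\hat\Phi_d$ maximizes $m_{\max}$ to the statement that $\hat\Phi_d$ maximizes $LR$ over $\mD(\mH_d)$. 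For this I would use that the free robustness $R$ is convex, so its maximum over the convex set $\mD(\mH_d)$ is attained at an extreme point, i.e.~a pure state $\Phi_d^\ast$, which also maximizes $LR$ since $LR$ is monotone in $R$. The hypothesis $m_{\max}=m_{LR}$ at pure states then yields $m_{LR}(\Phi_d^\ast)=m_{\max}(\Phi_d^\ast)\leq m_{\max}(\hat\Phi_d)=m_{LR}(\hat\Phi_d)$, whence $LR(\hat\Phi_d)=LR(\Phi_d^\ast)$ is the maximum. With $LR(\rho)\leq LR(\hat\Phi_d)$ secured, the map lies in $\barx$ and satisfies $\mathcal{E}(\hat\Phi_d)=\rho$, completing the argument.
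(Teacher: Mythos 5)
Your proposal is correct and follows essentially the same route as the paper's own proof: both cases reuse the exact ($\epsilon=0$) formation maps \eqref{eq:cost NG CT} and \eqref{eq:cost NG FFR} with reference dimension $d$, reduce the free-preservation/positivity requirement to $-\log\mathfrak{f}(\hat\Phi_d)\geq \mathfrak{D}_{\max}(\rho)$ (resp.~$\geq LR(\rho)$), and close the argument via the collapse $-\log\mathfrak{f}(\hat\Phi_d)=\mathfrak{D}_{\max}(\hat\Phi_d)$ together with convexity of the (log-)robustness to locate its maximum at a pure state. Your write-up is merely more explicit about the complete-positivity check in case i) and about why monotonicity of $\log(1+\cdot)$ lets the convexity of $R$ transfer to $LR$; no substantive difference.
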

\begin{proof}
 For the case where $\mF(\mH_d)$ satisfies Condition (CT), the map defined by Eq.~\eqref{eq:cost NG CT} with $\epsilon=0$ and $\Phi_{d_0}$ replaced by $\hat\Phi_{d'}$ is a cptp map that gives $\mathcal{E}(\hat\Phi_{d'})=\rho$ for any $\rho\in \mD(\mH_d)$, if $d'$ satisfies $-\log \mathfrak{f}(\hat\Phi_{d'})\geq \mathfrak{D}_{\max}(\rho)$. Since $-\log \mathfrak{f}(\hat\Phi_{d})=\mathfrak{D}_{\max}(\hat\Phi_d)\geq \mathfrak{D}_{\max}(\rho), \forall \rho\in \mD(\mH_d)$ by assumption, one can take $d'=d$, indicating that $\hat\Phi_d$ is a valid root state. 
 
 When $\mF(\mH_d)$ satisfies Condition (FFR) and $m_{\max}(\Phi_d)=m_{LR}(\Phi_d)$ for any pure state $\Phi_d\in \mD(\mH_d)$, one can instead consider the map defined by Eq.~\eqref{eq:cost NG FFR} with $\epsilon=0$  and $\Phi_{d_0}$ replaced by $\hat\Phi_{d'}$, which gives $\mathcal{E}(\hat\Phi_{d'})=\rho$ for any $\rho\in \mD(\mH_d)$ if $-\log \mathfrak{f}(\hat\Phi_{d'})\geq LR(\rho)$. 
 Since $-\log \mathfrak{f}(\hat\Phi_{d})=\mathfrak{D}_{\max}(\hat\Phi_d) = LR(\hat\Phi_d)\geq LR(\rho), \forall \rho\in \mD(\mH_d)$ where we used the assumption and the fact that $LR(\cdot)$ achieves the maximum at a pure state due to its convexity, $\hat\Phi_d$ again serves as a root state. 
\end{proof}

This result gives useful general conditions ensuring the existence of the strongest ``maximally resourceful state'',  including the theories of bipartite ($k$-)entanglement~\cite{nielsen}, ($k$-)coherence~\cite{PhysRevLett.113.140401, ringbauer_2018}, purity~\cite{hho}, and quantum thermodynamics.
It also implies that if there does not exist a root state, the two robustness measures, generalized robustness and free robustness, do not coincide for pure states in general. 
For instance, it has recently been shown that the theory of multipartite entanglement with respect to fully separable states does not allow for a root state~\cite{Contreras2019entanglement}.
Our result then immediately implies the interesting fact that the free and generalized robustnesses do not generally coincide at pure states in contrast to the case of bipartite entanglement~\cite{Steiner2003robustness,Harrow2003robustness}, providing an alternative proof to the one presented in Ref.~\cite{Contreras2019entanglement}.


\smallskip
The strong property of root states directly implies several simple features of resource manipulation when using them as the currency.
For example, it directly ensures that the size (dimension) of the optimal reference state is never larger than that of the target state in one-shot formation and distillation tasks, in which case they achieve the effect of resource ``dilution'' and ``concentration'' respectively.

Moreover, one may naturally consider the following relaxed definition of resource formation tasks without specifying the reference states, whose optimal rates are always achieved by root states:
\begin{defn}[One-shot blind resource formation]
Given target state $\rho$. The \emph{one-shot $\epsilon$-blind formation cost} of state $\rho$ under the set of free operations $\mathscr{F}$ is defined to be the minimum possible size of resource states needed to form an $\epsilon$-approximation $\rho$ by an operation in $\mathscr{F}$:
\begin{equation}
    \hat\Omega_{C,\mathscr{F}}^{\epsilon}(\rho) :=  \min\{\log d: \exists \sigma\in\mD(\mH_d), \exists \mathcal{E}\in \mathscr{F}, \mathcal{E}(\sigma)\in\mathcal{B}^\epsilon(\rho)\}.
\end{equation}
\end{defn}
\begin{prop}\label{lem:min_perfect}
The one-shot blind formation costs must be achieved by root states. In other words, the one-shot blind formation cost is equivalent to the one-shot formation cost with respect to any family of root states $\{\Psi_d\}$, i.e.~$\hat\Omega_{C,\mathscr{F}}^{\epsilon}(\rho) = \Omega_{C,\mathscr{F}}^{\epsilon}(\rho\leftarrow\{\Psi_d\})$.
\end{prop}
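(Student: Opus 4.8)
The plan is to prove the equality by establishing the two inequalities separately, working with $\mathscr{F}=\barx$, the maximal set of free operations in which the notion of root state is defined (with the dimension $d$ ranging over the admissible set $\mathbb{D}$ for which the root states $\Psi_d$ are supplied). The inequality $\hat\Omega_{C,\barx}^{\epsilon}(\rho)\leq\Omega_{C,\barx}^{\epsilon}(\rho\leftarrow\{\Psi_d\})$ is immediate from the definitions: any protocol witnessing the root-state formation cost, namely an operation $\mathcal{E}\in\barx$ with $\mathcal{E}(\Psi_d)\in\mathcal{B}^\epsilon(\rho)$, is a feasible point for the blind formation problem under the choice $\sigma=\Psi_d\in\mathcal{D}(\mH_d)$. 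Since the blind cost minimizes $\log d$ over a larger feasible set that permits arbitrary inputs $\sigma$, its optimal value can only be smaller or equal.

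The substance lies in the reverse inequality $\hat\Omega_{C,\barx}^{\epsilon}(\rho)\geq\Omega_{C,\barx}^{\epsilon}(\rho\leftarrow\{\Psi_d\})$. I would let $d^\star$ be a dimension achieving the blind formation cost, so that there exist $\sigma\in\mathcal{D}(\mH_{d^\star})$ and $\mathcal{E}\in\barx$ with $\mathcal{E}(\sigma)\in\mathcal{B}^\epsilon(\rho)$. Invoking the defining property of the root state $\Psi_{d^\star}\in\mathcal{D}(\mH_{d^\star})$, the input $\sigma$ can itself be produced from $\Psi_{d^\star}$ by a free operation, i.e.~there is $\mathcal{F}_\sigma\in\barx$ with $\mathcal{F}_\sigma(\Psi_{d^\star})=\sigma$. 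Composing, the map $\mathcal{E}\circ\mathcal{F}_\sigma$ sends $\Psi_{d^\star}$ to $\mathcal{E}(\sigma)\in\mathcal{B}^\epsilon(\rho)$, so $\Psi_{d^\star}$ realizes the formation of $\rho$ at dimension $d^\star$, yielding $\Omega_{C,\barx}^{\epsilon}(\rho\leftarrow\{\Psi_d\})\leq\log d^\star=\hat\Omega_{C,\barx}^{\epsilon}(\rho)$.

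The single point demanding care, and hence the main (if mild) obstacle, is to confirm that the composite $\mathcal{E}\circ\mathcal{F}_\sigma$ is itself an admissible free operation lying in $\barx$; otherwise it would not count as a valid formation protocol. I would secure this through the closure of $\barx$ under composition. Using the algebraic characterization $\barx=\{\mathcal{E}\mid\lambda\circ\mathcal{E}\circ\lambda=\mathcal{E}\circ\lambda\}$, a short computation shows that for $\mathcal{E}_1,\mathcal{E}_2\in\barx$ one has $\lambda\circ(\mathcal{E}_2\circ\mathcal{E}_1)\circ\lambda=\lambda\circ\mathcal{E}_2\circ\lambda\circ\mathcal{E}_1\circ\lambda=\mathcal{E}_2\circ\lambda\circ\mathcal{E}_1\circ\lambda=\mathcal{E}_2\circ\mathcal{E}_1\circ\lambda$, whence $\mathcal{E}_2\circ\mathcal{E}_1\in\barx$; equivalently, since $\barx$ is exactly the set of maps sending free states to free states, composition manifestly preserves membership. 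Applying this with $\mathcal{E}_1=\mathcal{F}_\sigma$ and $\mathcal{E}_2=\mathcal{E}$ completes the reverse inequality and hence the proof. The same argument goes through for any operation class closed under composition that contains the root-state-preparing maps $\mathcal{F}_\sigma$; for strictly smaller classes such as the commuting operations, root states are generally unavailable, which is precisely why we restrict to the maximal set $\barx$.
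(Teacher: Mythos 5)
Your proof is correct and follows essentially the same route as the paper's: the trivial inequality from the enlarged feasible set, and the reverse inequality by composing the root-state-preparing free operation with the optimal formation map, using closure of the free operations under composition. Your explicit verification of that closure via the algebraic characterization of $\barx$ is a welcome detail the paper only cites.
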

\begin{proof}
Suppose some state $\xi_d\in\mD(\mH_d)$ achieves the minimum cost.  By the definition of root states, there exists some $\mathcal{E}'\in \mathscr{F}$ such that $\mathcal{E}'(\Psi_d)=\xi_d$. Free operations should remain free under composition \cite{LiuWinter2018}. So $\Psi_d$ also achieves the minimum cost with free operation $\mathcal{E}\circ\mathcal{E}'$.
\end{proof}


\section{One-shot distillation yield by resource non-generating operations (Theorem 5 and additional results)}
\label{app:yield_ng}
In this section, we present proofs of the bounds on the one-shot distillation yield (the standard version with error tolerance on the target state) under resource non-generating operations $\barx$, that constitute Theorem 5 of the main text.  Here the bounds are given by modified versions of the hypothesis testing relative entropy (operator-smoothing of min-relative entropy) of resource $\mathfrak{D}_{H}^\epsilon$.  
The following is an upper bound:



\begin{prop}
Given primitive state $\rho$ and pure reference states $\{\Phi_d\}$.  Let $d_0 = \max\{d\in\mathbb{D}: -\log\mathfrak{f}(\Phi_d) = m_{f}(\Phi_d)\log d \leq \mathfrak{D}_H^\epsilon(\rho)\}$. 
 For $\epsilon\geq 0$, 
  \begin{equation}
   \Omega_{D,\mathbb{\barx}}^{\epsilon}(\rho\rightarrow\{\Phi_d\}) \leq \log d_0 \leq \frac{\mathfrak{D}_{H}^\epsilon(\rho)}{m_f(\Phi_{d_0})}.  \label{eq:opsm} 
 \end{equation}
\end{prop}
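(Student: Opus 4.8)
The plan is to establish the first inequality $\Omega_{D,\barx}^{\epsilon}(\rho\rightarrow\{\Phi_d\})\leq\log d_0$ by showing that \emph{every} distillable dimension $d$ belongs to the set $\{d\in\mathbb{D}:-\log\mathfrak{f}(\Phi_d)\leq\mathfrak{D}_H^\epsilon(\rho)\}$, so that it is bounded above by the maximal element $d_0$ of that set. The second inequality $\log d_0\leq\mathfrak{D}_H^\epsilon(\rho)/m_f(\Phi_{d_0})$ is then immediate, since the defining property of $d_0$ reads $m_f(\Phi_{d_0})\log d_0=-\log\mathfrak{f}(\Phi_{d_0})\leq\mathfrak{D}_H^\epsilon(\rho)$; dividing by $m_f(\Phi_{d_0})>0$ yields the claim. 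Hence the whole task reduces to the implication ``$d$ achievable $\Rightarrow -\log\mathfrak{f}(\Phi_d)\leq\mathfrak{D}_H^\epsilon(\rho)$''.

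To prove this implication, I would fix an achievable $d$, i.e.\ a map $\mathcal{E}\in\barx$ with $\omega:=\mathcal{E}(\rho)\in\eball(\Phi_d)$, so that $f(\omega,\Phi_d)\geq 1-\epsilon$. The first ingredient is the monotonicity of $\mathfrak{D}_H^\epsilon$ under resource non-generating operations: letting $\sigma^*\in\mF$ achieve $\mathfrak{D}_H^\epsilon(\rho)=\min_{\sigma\in\mF}D_H^\epsilon(\rho\|\sigma)$, the map $\mathcal{E}$ sends $\sigma^*$ to a free state $\mathcal{E}(\sigma^*)\in\mF$, whence the data-processing inequality for the hypothesis testing relative entropy \cite{hypothesis} gives $\mathfrak{D}_H^\epsilon(\omega)\leq D_H^\epsilon(\omega\|\mathcal{E}(\sigma^*))\leq D_H^\epsilon(\rho\|\sigma^*)=\mathfrak{D}_H^\epsilon(\rho)$.

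The second, conceptually central, ingredient is a matching lower bound $\mathfrak{D}_H^\epsilon(\omega)\geq-\log\mathfrak{f}(\Phi_d)$. Here I would exploit that $\Phi_d$ is pure, so $f(\omega,\Phi_d)=\Tr\{\Phi_d\omega\}$ and the projector $P=\Phi_d$ is an admissible test operator in the definition of $D_H^\epsilon$, satisfying $0\leq P\leq I$ and $\Tr\{P\omega\}\geq 1-\epsilon$. Inserting this particular $P$ into the maximization gives, for every $\tau\in\mF$, the bound $D_H^\epsilon(\omega\|\tau)\geq-\log\Tr\{\Phi_d\tau\}$; minimizing over $\tau\in\mF$ then yields $\mathfrak{D}_H^\epsilon(\omega)\geq-\log\max_{\tau\in\mF}\Tr\{\Phi_d\tau\}=-\log\mathfrak{f}(\Phi_d)$, where the final equality again uses purity of $\Phi_d$ to identify the maximal overlap with the free fidelity. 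Chaining the two ingredients gives $-\log\mathfrak{f}(\Phi_d)\leq\mathfrak{D}_H^\epsilon(\rho)$, placing $d$ in the set that defines $d_0$ and closing the argument.

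The applications of purity and of data processing are routine and I would simply cite them. The step that needs genuine care is the monotonicity argument: one must push the \emph{optimal free state for $\rho$} forward through $\mathcal{E}$ (which stays free precisely because $\mathcal{E}\in\barx$) rather than attempting to pull back $\omega$'s own optimizer, and invoke data processing in the correct direction. The main, though still mild, subtlety I anticipate is ensuring the smoothing conventions match: it is exactly the operator-smoothed form of $D_H^\epsilon$ (as opposed to a state-smoothed variant) that makes $P=\Phi_d$ an admissible test at the same $\epsilon$, so this lower bound would not go through verbatim for a different smoothing.
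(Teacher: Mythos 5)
Your proposal is correct and follows essentially the same route as the paper's proof: the upper bound on achievable $d$ comes from (i) monotonicity of $\mathfrak{D}_H^\epsilon$ under $\barx$, proved by pushing the optimal free state for $\rho$ forward through $\mathcal{E}$ and applying data processing, and (ii) the lower bound $\mathfrak{D}_H^\epsilon(\mathcal{E}(\rho))\geq-\log\mathfrak{f}(\Phi_d)$ obtained by inserting the feasible test $P=\Phi_d$. Your step (ii) phrases the paper's max--min inequality as ``fix $\tau$, plug in $P=\Phi_d$, then minimize,'' which is the identical computation, and the final division by $m_f(\Phi_{d_0})$ matches the paper's conclusion.
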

\begin{proof}
First, notice that $\mathfrak{D}_H^\epsilon$ is monotone under any free operation $\mathcal{E}\in\barx$. Let $\tilde\delta$ be the optimal free state that achieves $\mathfrak{D}_H^\epsilon(\rho)$. The following must hold:
\begin{equation}
     \mathfrak{D}_H^\epsilon(\rho) = \min_{\delta\in \mF}D_H^\epsilon(\rho\|\delta)= D_H^\epsilon(\rho\|\tilde\delta)
    \geq D_H^\epsilon(\mathcal{E}(\rho)\|\mathcal{E}(\tilde\delta))
    \geq \min_{\delta'\in \mF}D_H^\epsilon(\mathcal{E}(\rho)\|\delta') = \mathfrak{D}_H^\epsilon(\mathcal{E}(\rho)),
\end{equation}
where the first inequality follows from the data processing inequality for the hypothesis testing relative entropy \cite{hypothesis}.
Now suppose $\Phi_d$ is a distillable reference state (up to $\epsilon$ error) by free operation ${\mathcal{E}}$, meaning that ${\mathcal{E}}(\rho)\in\mathcal{B}^\epsilon(\Phi_d)$, i.e.~$\Tr\{\Phi_d{\mathcal{E}}(\rho)\}\geq 1-\epsilon$.   So we have 
\begin{align}
    \mathfrak{D}_H^\epsilon({\mathcal{E}}(\rho)) 
    =& \min_{\delta\in \mF}\max_{0\leq P\leq I, \Tr\{P\mathcal{E}(\rho)\}\geq 1-\epsilon}(-\log\Tr\{P\delta\})\\
    \geq& \max_{0\leq P\leq I, \Tr\{P\mathcal{E}(\rho)\}\geq 1-\epsilon}\min_{\delta\in \mF}(-\log\Tr\{P\delta\})\\
    \geq& \min_{\delta\in \mF}(-\log\Tr\{\Phi_d\delta\}) = -\log\mathfrak{f}(\Phi_d),
\end{align}
where the second line follows from the max-min inequality, and the third line follows from the fact that $\Phi_d$ satisfies $0\leq\Phi_d\leq I$ and $\Tr\{\Phi_d{\mathcal{E}}(\rho)\}\geq 1-\epsilon$.
So for any $d>d_0$ the possibility that $\Phi_d$ can be distilled is forbidden by the above inequality, that is, $\Omega_{D,\mathbb{\barx}}^{\epsilon}(\rho\rightarrow\{\Phi_d\}) \leq \log d_0$.  Also by the definition of $d_0$ we have $m_f(\Phi_{d_0})\log d_0 \leq \mathfrak{D}_{H}^\epsilon(\rho)$. So the claimed bound follows.
\end{proof}

We can obtain two general lower bounds by considering two different types of distillation maps.  The first one goes as follows:
\begin{prop}\label{prop:fr}
Given primitive state $\rho$ and reference states $\{\phi_d\}$.
Suppose the resource theory satisfies Condition (FFR), i.e.~all states have finite free robustness.
Let $d_0 = \max\{d\in\mathbb{D}: LR(\phi_d) = m_{LR}(\phi_d)\log d \leq \mathfrak{D}_H^\epsilon(\rho)\}$. 
 For $\epsilon\geq 0$,
 \begin{equation}
     \Omega_{D,\mathbb{\barx}}^{\epsilon}(\rho\rightarrow\{\phi_d\}) \geq \log d_0 > \frac{\mathfrak{D}_{H}^\epsilon(\rho)}{m_{LR}(\phi_{d_0^{\uparrow}})}-\log{\frac{d_0^{\uparrow}}{d_0}}.  \label{eq:opsmlr} 
 \end{equation}
\end{prop}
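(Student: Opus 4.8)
The plan is to prove the bound by explicit achievability: I will build a single resource non-generating operation $\mathcal{E}\in\barx$ that sends $\rho$ into the $\epsilon$-ball of the target $\phi_{d_0}$, where $d_0$ is the largest admissible dimension. Granting such an $\mathcal{E}$, the rate $\log d_0$ is achievable, so $\Omega_{D,\barx}^\epsilon(\rho\to\{\phi_d\})\geq\log d_0$. The remaining strict inequality is pure bookkeeping from the maximality of $d_0$: the next dimension $d_0^\uparrow$ violates the defining constraint, so $LR(\phi_{d_0^\uparrow})=m_{LR}(\phi_{d_0^\uparrow})\log d_0^\uparrow>\mathfrak{D}_H^\epsilon(\rho)$, whence $\log d_0=\log d_0^\uparrow-\log(d_0^\uparrow/d_0)>\mathfrak{D}_H^\epsilon(\rho)/m_{LR}(\phi_{d_0^\uparrow})-\log(d_0^\uparrow/d_0)$.

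For the construction I would first extract a good test operator. Starting from $\mathfrak{D}_H^\epsilon(\rho)=\min_{\delta\in\mF}\max_{0\leq P\leq I,\,\Tr\{P\rho\}\geq 1-\epsilon}(-\log\Tr\{P\delta\})$ and observing that $\Tr\{P\delta\}$ is bilinear on the product of the convex compact constraint set for $P$ and the convex closed (hence compact) set $\mF$, Sion's minimax theorem gives $2^{-\mathfrak{D}_H^\epsilon(\rho)}=\min_P\max_{\delta\in\mF}\Tr\{P\delta\}$. This produces a \emph{single} test $P^\ast$ with $\Tr\{P^\ast\rho\}\geq 1-\epsilon$ and, crucially, $\Tr\{P^\ast\delta\}\leq 2^{-\mathfrak{D}_H^\epsilon(\rho)}$ for \emph{all} $\delta\in\mF$ simultaneously. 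Next I invoke Condition (FFR): the target has finite free robustness $R:=R(\phi_{d_0})$, so there are free states $\sigma_{d_0},\tilde\sigma\in\mF(\mH_{d_0})$ with $\tilde\sigma=\frac{1}{1+R}\phi_{d_0}+\frac{R}{1+R}\sigma_{d_0}$; moreover, by the definition of $d_0$ we have $LR(\phi_{d_0})\leq\mathfrak{D}_H^\epsilon(\rho)$, i.e.\ $1+R\leq 2^{\mathfrak{D}_H^\epsilon(\rho)}$.

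I then define the measure-and-prepare channel $\mathcal{E}(\omega):=\Tr\{P^\ast\omega\}\,\phi_{d_0}+\Tr\{(I-P^\ast)\omega\}\,\sigma_{d_0}$, which is manifestly CPTP. The heart of the argument is to check $\mathcal{E}\in\barx$, i.e.\ $\mathcal{E}(\sigma)\in\mF$ for every free $\sigma$. For such $\sigma$, writing $p:=\Tr\{P^\ast\sigma\}$, the uniform test bound gives $p\leq 2^{-\mathfrak{D}_H^\epsilon(\rho)}\leq 1/(1+R)$, so $a:=p(1+R)\in[0,1]$; a one-line rearrangement then yields $\mathcal{E}(\sigma)=p\,\phi_{d_0}+(1-p)\sigma_{d_0}=a\,\tilde\sigma+(1-a)\sigma_{d_0}$, a convex combination of free states, hence free. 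For the output accuracy, $\Tr\{P^\ast\rho\}\geq 1-\epsilon$ together with joint concavity of the root fidelity (comparing $\mathcal{E}(\rho)=p\phi_{d_0}+(1-p)\sigma_{d_0}$ with $\phi_{d_0}=p\phi_{d_0}+(1-p)\phi_{d_0}$) gives $\sqrt{f(\mathcal{E}(\rho),\phi_{d_0})}\geq p\geq 1-\epsilon$, placing $\mathcal{E}(\rho)$ in $\eball(\phi_{d_0})$; for pure reference states this sharpens to the exact estimate $f(\mathcal{E}(\rho),\phi_{d_0})=\Tr\{\phi_{d_0}\mathcal{E}(\rho)\}\geq 1-\epsilon$.

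The main obstacle is precisely the freeness verification: it is here that Condition (FFR) and convexity of $\mF$ are indispensable, since the robustness decomposition is what lets the ``failure'' output $\sigma_{d_0}$ absorb the resourceful target $\phi_{d_0}$ and keep every free input free. A secondary but genuine subtlety is securing the single test $P^\ast$ that beats all free states at once via the minimax step; controlling only the optimal $\delta$ would be too weak, since the freeness computation must hold for every free $\sigma$. The output-fidelity conversion is routine by comparison, being exact in the pure-reference case.
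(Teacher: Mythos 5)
Your construction is essentially the paper's own proof: the same measure-and-prepare channel $\mathcal{E}(\omega)=\Tr\{P\omega\}\phi_{d_0}+(1-\Tr\{P\omega\})\delta_{d_0}$ with the second output taken from the free-robustness decomposition of $\phi_{d_0}$, the same freeness check via $\Tr\{P\tau\}\leq 2^{-\mathfrak{D}_H^\epsilon(\rho)}\leq 2^{-LR(\phi_{d_0})}$ for all free $\tau$, and the same bookkeeping from the maximality of $d_0$. In fact you are more careful than the paper on one point: the paper simply declares $\mathfrak{D}_H^\epsilon(\rho)=\min_{\tau\in\mF}(-\log\Tr\{P\tau\})$ for ``the'' optimal $P$, whereas you correctly identify that obtaining a single test beating all free states simultaneously requires exchanging the min and max, justified by Sion's theorem on the bilinear form $\Tr\{P\delta\}$ over convex compact sets.

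The one place you should tighten the argument is the output-fidelity step for \emph{mixed} reference states (the proposition allows general $\{\phi_d\}$). Joint concavity of the root fidelity gives $\sqrt{f(\mathcal{E}(\rho),\phi_{d_0})}\geq p$, hence only $f\geq p^2\geq(1-\epsilon)^2$, which does not place $\mathcal{E}(\rho)$ in $\eball(\phi_{d_0})$ since $(1-\epsilon)^2<1-\epsilon$. The correct estimate $f(\mathcal{E}(\rho),\phi_{d_0})\geq p\geq 1-\epsilon$ does hold, but by a different one-liner: from $\sqrt{\phi_{d_0}}\,\mathcal{E}(\rho)\,\sqrt{\phi_{d_0}}\geq p\,\phi_{d_0}^2$ and operator monotonicity of the square root one gets $\Tr\sqrt{\sqrt{\phi_{d_0}}\,\mathcal{E}(\rho)\,\sqrt{\phi_{d_0}}}\geq\sqrt{p}$, whence $f\geq p$. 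With that substitution (your pure-reference computation already handles the pure case exactly), the proof is complete and matches the paper's.
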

\begin{proof}
Suppose $\rho\in\mD(\mH_d)$.  
For any dimension $d$, According to the definition of $R(\phi_d)$, there exists a free state $\delta_d\in \mF$ such that 
$\frac{1}{1+R(\phi_d)}\phi_d+\frac{R(\phi_d)}{1+R(\phi_d)}\delta_d\in \mF$.
Let $0\leq P\leq I$ be the optimal operator that achieves
$ \mathfrak{D}_H^\epsilon(\rho)$. Define cptp map $\mathcal{E}:\mL(\mH_d)\rightarrow\mL(\mH'_{d_0})$  on input state $\omega\in\mD(\mH_d)$ of the following form:
\begin{equation}\label{eq:freeO_1}
\mathcal{E}(\omega)
=\Tr\{P\omega\}\phi_{d_0}
+(1-\Tr\{P\omega\})\delta_{d_0}.
\end{equation}

On the one hand, 
it is easy to see that 
$\mathcal{E}(\omega)\in \mF$ if and only if
$\frac{1-\Tr\{P\omega\}}{\Tr\{P\omega\}}\geq R(\phi_{d_0})$, that is, 
\begin{equation}
\Tr\{P\omega\}
\leq \frac{1}{1+R(\phi_{d_0})}=2^{-LR(\phi_{d_0})}. \label{eq:fcond}
\end{equation}
By definition, we have
$ \mathfrak{D}_H^\epsilon(\rho)=\min_{\tau\in \mF}(-\log\Tr\{P\tau\})$, which implies that
$\Tr\{P\tau\}\leq 2^{-\mathfrak{D}_H^\epsilon(\rho)}$ for any free state $\tau\in \mF$.
By the definition of $d_0$, we have $LR(\phi_{d_0})\leq \mathfrak{D}_H^\epsilon(\rho)$, so 
$\Tr\{P\tau\}\leq 2^{-LR(\phi_{d_0})}$, satisfying Eq.~(\ref{eq:fcond}), which indicates that $\mathcal{E}(\tau)\in \mF, \forall\tau\in\mF$.  Therefore, $\mathcal{E}\in\barx$. 

On the other hand, since
$\Tr\{P\rho\}\geq 1-\epsilon$, it can be verified that $f(\mathcal{E}(\rho), \phi_{d_0})\geq 1-\epsilon$.  That is, $\mathcal{E}$ achieves the distillation task with rate $\log{d_0}$.
To obtain a bound in terms of 
$\mathfrak{D}_{H}^\epsilon(\rho)$, notice that $m_{LR}(\Phi_{d_0^{\uparrow}})\log d_0^{\uparrow}>\mathfrak{D}_{H}^\epsilon(\rho)$ by the definition of $d_0$, and so the claimed bound follows.
\end{proof}

The second distillation method makes use of a generalized notion of ``isotropic states'' originally introduced in the context of entanglement theory \cite{PhysRevA.59.4206}:
\begin{prop}\label{prop:depol1}
Given primitive state $\rho$ and pure reference states $\{\Phi_d\}$.  
Suppose $\mF$ is convex and contains the maximally mixed state.
The depolarizing map with noise parameter $p$ is defined to be $\mathcal{N}_p(\sigma) := (1-p)\sigma + pI/d$ (where $\sigma\in\mD(\mH_d)$) \footnote{Note that the noise parameter $p$ here does not necessarily represent a probability: in the following we may consider $p>1$, where the depolarizing map induces a \emph{pseudomixture} of original state and the maximally mixed state, for mathematical purposes}. Let $\mathbb{D}' = \{d\in\mathbb{D}: \mathcal{N}_{\frac{d}{d-1}}(\Phi_d)\in\mF\}$ (if $\mathbb{D}'$ is empty then the result does not apply), and let $d_0 = \max\{d\in\mathbb{D}': D_{\min}(\Phi_d\|\mathcal{N}_{\tilde p_d}(\Phi_d)) = m_{\min,(\mathcal{N}_{\tilde p_d})}\log d \leq \mathfrak{D}_H^\epsilon(\rho)\}$  (the brackets in the subscript of $m$ signify that $\mathcal{N}_{\tilde p}$ is not precisely a resource destroying map), where $\tilde{p}_d = \min\{p: \mathcal{N}_{p}(\Phi_d)\in\mF\}\in[0,1]$.  For $\epsilon\geq 0$,
 \begin{equation}
     \Omega_{D,\mathbb{\barx}}^{\epsilon}(\rho\rightarrow\{\Phi_d\}) \geq \log d_0 > \frac{\mathfrak{D}_{H}^\epsilon(\rho)}{m_{\min,(\mathcal{N}_{\tilde p_{d_0^{\uparrow}}})}(\Phi_{d_0^{\uparrow}})} - \log{\frac{d_0^{\uparrow}}{d_0}}.  \label{eq:opsm2} 
 \end{equation}
\end{prop}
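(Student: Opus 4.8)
The plan is to prove achievability by explicitly constructing a resource non-generating operation that distills $\rho$ into an $\epsilon$-approximation of $\Phi_{d_0}$, closely paralleling the robustness-based argument of Proposition~\ref{prop:fr} but replacing the robustness decomposition with the depolarizing (pseudomixture) structure. First I would select, exactly as in Proposition~\ref{prop:fr}, an optimal operator $0\leq P\leq I$ with $\Tr\{P\rho\}\geq 1-\epsilon$ and $\mathfrak{D}_H^\epsilon(\rho)=\min_{\tau\in\mF}(-\log\Tr\{P\tau\})$, so that $\Tr\{P\tau\}\leq 2^{-\mathfrak{D}_H^\epsilon(\rho)}$ for every free $\tau$. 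I would then define a measure-and-prepare channel $\mathcal{E}(\omega)=\Tr\{P\omega\}\,\Phi_{d_0}+(1-\Tr\{P\omega\})\,\xi_{d_0}$, outputting the target $\Phi_{d_0}$ on ``success'' and a fixed free state $\xi_{d_0}$ on ``failure.''

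The crucial step is the choice of $\xi_{d_0}$. I would take $\xi_{d_0}:=(I-\Phi_{d_0})/(d_0-1)$, the maximally mixed state on the orthogonal complement of $\Phi_{d_0}$; note that this equals $\mathcal{N}_{d_0/(d_0-1)}(\Phi_{d_0})$, the depolarizing pseudomixture at $p=d_0/(d_0-1)$, which is free precisely because $d_0\in\mathbb{D}'$. Two features make this choice work. First, orthogonality gives $f(\mathcal{E}(\rho),\Phi_{d_0})=\Tr\{\Phi_{d_0}\mathcal{E}(\rho)\}=\Tr\{P\rho\}\geq 1-\epsilon$, meeting the distillation fidelity. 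Second, the generic output $q\,\Phi_{d_0}+(1-q)\,\xi_{d_0}$ can be rewritten as $\mathcal{N}_{d_0(1-q)/(d_0-1)}(\Phi_{d_0})$; since both endpoints $\mathcal{N}_{\tilde{p}_{d_0}}(\Phi_{d_0})$ and $\mathcal{N}_{d_0/(d_0-1)}(\Phi_{d_0})$ are free, convexity of $\mF$ makes this output free exactly when the depolarizing parameter is at least $\tilde{p}_{d_0}$, i.e.\ when $q\leq 1-\tilde{p}_{d_0}(d_0-1)/d_0$. A direct computation using $\Pi_{\Phi_{d_0}}=\Phi_{d_0}$ gives $\dmin(\Phi_{d_0}\|\mathcal{N}_{\tilde{p}_{d_0}}(\Phi_{d_0}))=-\log\Tr\{\Phi_{d_0}\mathcal{N}_{\tilde{p}_{d_0}}(\Phi_{d_0})\}=-\log\!\big(1-\tilde{p}_{d_0}(d_0-1)/d_0\big)$, so the free-preserving threshold is precisely $q\leq 2^{-\dmin(\Phi_{d_0}\|\mathcal{N}_{\tilde{p}_{d_0}}(\Phi_{d_0}))}$.

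With this threshold in hand, free-preservation follows from the defining inequality of $d_0$: for any free $\omega$, $\Tr\{P\omega\}\leq 2^{-\mathfrak{D}_H^\epsilon(\rho)}\leq 2^{-\dmin(\Phi_{d_0}\|\mathcal{N}_{\tilde{p}_{d_0}}(\Phi_{d_0}))}$, so $\mathcal{E}(\omega)\in\mF$ and hence $\mathcal{E}\in\barx$. Combined with the fidelity bound this shows $\log d_0$ is achievable, giving $\Omega_{D,\barx}^{\epsilon}(\rho\rightarrow\{\Phi_d\})\geq\log d_0$. Finally I would convert to the modification-coefficient form exactly as in Proposition~\ref{prop:fr}: by maximality of $d_0$ the inequality fails at $d_0^{\uparrow}$, i.e.\ $m_{\min,(\mathcal{N}_{\tilde{p}_{d_0^{\uparrow}}})}(\Phi_{d_0^{\uparrow}})\log d_0^{\uparrow}=\dmin(\Phi_{d_0^{\uparrow}}\|\mathcal{N}_{\tilde{p}_{d_0^{\uparrow}}}(\Phi_{d_0^{\uparrow}}))>\mathfrak{D}_H^\epsilon(\rho)$, whence $\log d_0=\log d_0^{\uparrow}-\log(d_0^{\uparrow}/d_0)>\mathfrak{D}_H^\epsilon(\rho)/m_{\min,(\mathcal{N}_{\tilde{p}_{d_0^{\uparrow}}})}(\Phi_{d_0^{\uparrow}})-\log(d_0^{\uparrow}/d_0)$, as claimed.

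The main obstacle is identifying the correct failure state rather than executing the calculations. The naive choice $\xi_{d_0}=I/d_0$ also yields a valid free operation, but its free-preserving threshold is only $q\leq 1-\tilde{p}_{d_0}$, which is strictly smaller than $2^{-\dmin(\cdots)}=1-\tilde{p}_{d_0}(d_0-1)/d_0$ and hence produces a weaker, suboptimal rate. Recognizing that the orthogonal maximally mixed state---available as a free pseudomixture \emph{exactly} under the hypothesis $d_0\in\mathbb{D}'$---is what sharpens the threshold to coincide with the min-relative entropy expression is the key insight; everything else is a routine adaptation of Proposition~\ref{prop:fr}.
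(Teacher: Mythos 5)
Your proposal is correct and takes essentially the same route as the paper's proof: the identical measure-and-prepare map with failure state $(I-\Phi_{d_0})/(d_0-1)=\mathcal{N}_{d_0/(d_0-1)}(\Phi_{d_0})$, the same convex decomposition between $\mathcal{N}_{\tilde p_{d_0}}(\Phi_{d_0})$ and $\mathcal{N}_{d_0/(d_0-1)}(\Phi_{d_0})$ to establish free-preservation from $\Tr\{P\tau\}\leq 2^{-\mathfrak{D}_H^\epsilon(\rho)}$, and the same conversion to the modification-coefficient form via maximality of $d_0$. The only difference is presentational (you phrase the threshold directly as $q\leq 2^{-\dmin(\Phi_{d_0}\|\mathcal{N}_{\tilde p_{d_0}}(\Phi_{d_0}))}$ rather than as $q(\tau)\geq\tilde p_{d_0}$), which changes nothing of substance.
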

\begin{proof}
Suppose $\rho\in\mD(\mH_d)$.
Let $0\leq P\leq I$ be the optimal operator that achieves
$ \mathfrak{D}_H^\epsilon(\rho)$. Define cptp map $\mathcal{E}:\mL(\mH_d)\rightarrow\mL(\mH'_{d_0})$  on input state $\omega\in\mD(\mH_d)$ of the following form:
\begin{align}
\mathcal{E}(\omega)
&=\Tr\{P\omega\}\Phi_{d_0}
+(1-\Tr\{P\omega\})\frac{I-\Phi_{d_0}}{d_0-1}\label{eq:dist2_1}\\
&=\left(1- \frac{d_0(1- \Tr\{P\omega\})}{d_0-1}\right)\Phi_{d_0} + \frac{d_0(1- \Tr\{P\omega\})}{d_0-1}\frac{I}{d_0} \label{eq:dist2_2}\\
&= (1-q(\omega))\Phi_{d_0} + q(\omega) \frac{I}{d_0},  \label{eq:dist2_3}
\end{align}
where $q(\omega) = \frac{d_0(1- \Tr\{P\omega\})}{d_0-1} \in [0,\frac{d_0}{d_0-1}]$, that is, $\mathcal{E}(\omega)$ is a pseudomixture of $\Phi_{d_0}$ and the maximally mixed state $I/d_0$, and is positive semidefinite.

On the one hand, by noticing that $\Tr\{\Phi_{d_0}(I-\Phi_{d_0})\}=0$ and plugging it into Eq.~(\ref{eq:dist2_1}), we obtain
\begin{equation}
    \Tr\{\Phi_{d_0}\mathcal{E}(\omega)\} = \Tr\{P\omega\}\Tr\Phi_{d_0}
+(1-\Tr\{P\omega\})\frac{\Tr\{\Phi_{d_0}(I-\Phi_{d_0})\}}{d_0-1}= \Tr\{P\omega\}.\label{eq:tre}
\end{equation}
By the definition of $P$ we have
$ \mathfrak{D}_H^\epsilon(\rho)=\min_{\tau\in \mF}(-\log\Tr\{P\tau\})$, which implies that
$\Tr\{P\tau\}\leq 2^{-\mathfrak{D}_H^\epsilon(\rho)}$ for any free state $\tau\in \mF$.
Therefore, by Eq.~(\ref{eq:tre}), $\Tr\{\Phi_{d_0}\mathcal{E}(\tau)\} \leq 2^{-\mathfrak{D}_H^\epsilon(\rho)}, \forall\tau\in \mF$.  Now notice that $\Tr\{\Phi_{d_0}\mathcal{N}_{\tilde p_{d_0}}(\Phi_{d_0})\} \geq 2^{-\mathfrak{D}_H^\epsilon(\rho)} $ where $\mathcal{N}_{\tilde p_{d_0}}(\Phi_{d_0})\in\mF$ by assumption.   That is,
\begin{equation}
    \Tr\{\Phi_{d_0}\mathcal{E}(\tau)\} = \Tr\{\Phi_{d_0}\mathcal{N}_{q(\tau)}(\Phi_{d_0})\} \leq\Tr\{\Phi_{d_0}\mathcal{N}_{\tilde p_{d_0}}(\Phi_{d_0})\}, \label{eq:tr_comp}
\end{equation}
where the equality follows from Eq.~(\ref{eq:dist2_3}) and the definition of $q$.
It can be easily verified that
\begin{equation}
    \Tr\{\Phi_{d_0}\mathcal{N}_{p}(\Phi_{d_0})\} = 1-p+\frac{p}{d_0},
\end{equation}
so by Eq.~(\ref{eq:tr_comp}) we have $q(\tau)\geq \tilde p_{d_0}$.  
Now since $\tilde p_{d_0}\leq q(\tau)\leq\frac{d_0}{d_0-1}, \forall\tau\in\mF$, we have the following convex decomposition of $\mathcal{E}(\tau)$: 
\begin{equation}
    \mathcal{E}(\tau) = \mathcal{N}_{q(\tau)}(\Phi_{d_0}) = \frac{d_0 - (d_0-1)q(\tau)}{d_0-(d_0-1)\tilde p_{d_0}}\mathcal{N}_{\tilde p_{d_0}}(\Phi_{d_0}) +  \frac{(d_0-1)(q(\tau)-\tilde p_{d_0})}{d_0-(d_0-1)\tilde p_{d_0}}\ \mathcal{N}_{\frac{d_0}{d_0-1}}(\Phi_{d_0}),  \label{eq:conv_comb1}
\end{equation}
where $\mathcal{N}_{\tilde p_{d_0}}(\Phi_{d_0}), \mathcal{N}_{\frac{d_0}{d_0-1}}(\Phi_{d_0}) \in \mF$ by assumption. Therefore $\mathcal{E}(\tau)\in\mF, \forall\tau\in\mF$ by the convexity of $\mF$, namely $\mathcal{E} \in \barx$.



On the other hand, since $\Tr\{P\rho\}\geq 1-\epsilon$, we have $f(\mathcal{E}(\rho),\Phi_{d_0})\geq 1-\epsilon$.  That is, $\mathcal{E}$ achieves the distillation task with rate $\log{d_0}$.
To obtain a bound in terms of 
$\mathfrak{D}_{H}^\epsilon(\rho)$, notice that $m_{\min,(\mathcal{N}_{\tilde p_{d_0^{\uparrow}}})}\log d_0^{\uparrow}>\mathfrak{D}_{H}^\epsilon(\rho)$ by the definition of $d_0$, and so the claimed bound follows.
\end{proof}



\section{One-shot distillation yield by commuting operations (Theorem 6 and additional results)}\label{app:yield_comm}

In this section, we present proofs of the bounds on the one-shot distillation yield (the standard version with error tolerance on the target state) under commuting operations with respect to resource destroying map $\lambda$, i.e.~$\mathscr{F}_{\lambda,{\rm Comm}}$.   We first prove an upper bound when $\lambda$ is a resource destroying channel, namely Theorem 6 of the main text:

\begin{prop}
  Given primitive state $\rho$,  resource destroying channel $\Lambda$,  and pure reference states $\{\Phi_d\}$.   Let $d_0 = \max\{d\in\mathbb{D}: \mathfrak{f}_\Lambda(\Phi_d) = d^{-m_{f,\Lambda}(\Phi_d)} \geq 2^{-\mathfrak{D}_{H,\Lambda}^\epsilon(\rho)}-2\sqrt{\epsilon}\}$. 
 For $\epsilon\geq 0$,
  \begin{equation}
   \Omega_{D,\mathscr{F}_{\Lambda,{\rm Comm}}}^{\epsilon}(\rho\rightarrow\{\Phi_d\}) \leq \log d_0 \leq \frac{-\log(2^{-\mathfrak{D}_{H,\Lambda}^\epsilon(\rho)}-2\sqrt{\epsilon})}{m_{f,\Lambda}(\Phi_{d_0})}.  \label{eq:opsm2_comm} 
 \end{equation}
\end{prop}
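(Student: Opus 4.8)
The goal is an upper bound on the one-shot distillation yield under commuting operations $\mathscr{F}_{\Lambda,{\rm Comm}}$, where $\Lambda$ is a resource destroying \emph{channel}. The plan is to mirror the structure of the resource-non-generating case (the proposition proving \eqref{eq:opsm}), but to replace the global free-fidelity monotone $\mathfrak{D}_H^\epsilon$ with its $\Lambda$-relative counterpart $\mathfrak{D}_{H,\Lambda}^\epsilon(\rho) = D_H^\epsilon(\rho\|\Lambda(\rho))$, which is a monotone under $\mathscr{F}_{\Lambda,{\rm Comm}}$ precisely because $\Lambda$ commutes with every $\mathcal{E}\in\mathscr{F}_{\Lambda,{\rm Comm}}$. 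The first step is therefore to establish the data-processing chain: if $\mathcal{E}$ commutes with $\Lambda$, then $D_H^\epsilon(\mathcal{E}(\rho)\|\Lambda(\mathcal{E}(\rho))) = D_H^\epsilon(\mathcal{E}(\rho)\|\mathcal{E}(\Lambda(\rho))) \leq D_H^\epsilon(\rho\|\Lambda(\rho)) = \mathfrak{D}_{H,\Lambda}^\epsilon(\rho)$, using the commutation in the first equality and the data processing inequality for the hypothesis testing relative entropy in the inequality.

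\textbf{The overlap-fidelity bridge.} Suppose $\Phi_d$ is distillable up to error $\epsilon$, i.e. there is $\mathcal{E}\in\mathscr{F}_{\Lambda,{\rm Comm}}$ with $f(\mathcal{E}(\rho),\Phi_d)\geq 1-\epsilon$. Writing $\rho' := \mathcal{E}(\rho)$, I want to lower bound $\mathfrak{D}_{H,\Lambda}^\epsilon(\rho') = D_H^\epsilon(\rho'\|\Lambda(\rho'))$ by something involving $\mathfrak{f}_\Lambda(\Phi_d) = f(\Phi_d,\Lambda(\Phi_d))$. The natural route is: take the test operator $P=\Phi_d$ (which is feasible since $\Tr\{\Phi_d\rho'\}\geq f(\rho',\Phi_d)\geq 1-\epsilon$ for pure $\Phi_d$), giving $D_H^\epsilon(\rho'\|\Lambda(\rho'))\geq -\log\Tr\{\Phi_d\Lambda(\rho')\}$. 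The quantity $\Tr\{\Phi_d\Lambda(\rho')\}$ must then be related to $\Tr\{\Phi_d\Lambda(\Phi_d)\} = \mathfrak{f}_\Lambda(\Phi_d)$ (the last equality holding since $\Phi_d$ is pure). Because $\Lambda$ is a channel (hence continuous/contractive under the purified distance), closeness of $\rho'$ to $\Phi_d$ forces $\Lambda(\rho')$ close to $\Lambda(\Phi_d)$; a Fuchs--van de Graaf / gentle-measurement type estimate should yield $|\Tr\{\Phi_d\Lambda(\rho')\} - \Tr\{\Phi_d\Lambda(\Phi_d)\}| \leq 2\sqrt{\epsilon}$, producing $\Tr\{\Phi_d\Lambda(\rho')\}\geq \mathfrak{f}_\Lambda(\Phi_d) - 2\sqrt{\epsilon}$. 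Combining with the test-operator bound gives $2^{-\mathfrak{D}_{H,\Lambda}^\epsilon(\rho)}\geq \mathfrak{f}_\Lambda(\Phi_d)-2\sqrt{\epsilon}$, i.e. $\mathfrak{f}_\Lambda(\Phi_d)\leq 2^{-\mathfrak{D}_{H,\Lambda}^\epsilon(\rho)}+2\sqrt{\epsilon}$ — wait, the inequality direction must be tracked carefully; the definition of $d_0$ retains those states with $\mathfrak{f}_\Lambda(\Phi_d)\geq 2^{-\mathfrak{D}_{H,\Lambda}^\epsilon(\rho)}-2\sqrt{\epsilon}$, and distillability should force $\mathfrak{f}_\Lambda(\Phi_d)$ to be large (close to $1$), so any distillable $d$ satisfies the $d_0$-defining inequality, yielding $\Omega_{D,\mathscr{F}_{\Lambda,{\rm Comm}}}^\epsilon(\rho\rightarrow\{\Phi_d\})\leq\log d_0$.

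\textbf{Converting to the modification-coefficient form.} Once $\log d_0$ is the yield bound, I rewrite it using $\mathfrak{f}_\Lambda(\Phi_{d_0}) = d_0^{-m_{f,\Lambda}(\Phi_{d_0})}$ (by definition of the $\Lambda$-fidelity modification coefficient). The defining inequality for $d_0$ reads $d_0^{-m_{f,\Lambda}(\Phi_{d_0})}\geq 2^{-\mathfrak{D}_{H,\Lambda}^\epsilon(\rho)}-2\sqrt{\epsilon}$, which upon taking logarithms and rearranging gives $m_{f,\Lambda}(\Phi_{d_0})\log d_0\leq -\log\big(2^{-\mathfrak{D}_{H,\Lambda}^\epsilon(\rho)}-2\sqrt{\epsilon}\big)$, i.e. $\log d_0\leq \dfrac{-\log(2^{-\mathfrak{D}_{H,\Lambda}^\epsilon(\rho)}-2\sqrt{\epsilon})}{m_{f,\Lambda}(\Phi_{d_0})}$, which is exactly \eqref{eq:opsm2_comm}.

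\textbf{The main obstacle.} The delicate step is the continuity estimate $|\Tr\{\Phi_d\Lambda(\rho')\} - \mathfrak{f}_\Lambda(\Phi_d)|\leq 2\sqrt{\epsilon}$, which is the sole source of the $2\sqrt{\epsilon}$ correction that distinguishes this bound from the clean RD-map case. I would derive it by first passing from fidelity to trace distance via $\tfrac12\|\rho'-\Phi_d\|_1\leq\sqrt{1-f(\rho',\Phi_d)}\leq\sqrt{\epsilon}$, then using monotonicity of the trace norm under the channel $\Lambda$ to get $\tfrac12\|\Lambda(\rho')-\Lambda(\Phi_d)\|_1\leq\sqrt{\epsilon}$, and finally bounding $|\Tr\{\Phi_d(\Lambda(\rho')-\Lambda(\Phi_d))\}|\leq\|\Phi_d\|_\infty\|\Lambda(\rho')-\Lambda(\Phi_d)\|_1\leq 2\sqrt{\epsilon}$ since $\|\Phi_d\|_\infty=1$. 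The fact that $\Lambda$ is a genuine channel (rather than a nonlinear RD map) is essential here, since nonlinear maps need not be trace-norm contractive — this is precisely why the statement restricts to RD \emph{channels} and why only an upper bound (not a matching lower bound) is claimed for commuting operations.
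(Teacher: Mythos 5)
Your proposal is correct and follows essentially the same route as the paper's proof: monotonicity of $D_H^\epsilon(\cdot\|\Lambda(\cdot))$ under $\mathscr{F}_{\Lambda,{\rm Comm}}$ via commutation plus data processing, the feasible test operator $P=\Phi_d$, the $2\sqrt{\epsilon}$ continuity estimate from channel contractivity, Fuchs--van de Graaf, and H\"older, and the same final rearrangement using $\mathfrak{f}_\Lambda(\Phi_{d_0})=d_0^{-m_{f,\Lambda}(\Phi_{d_0})}$. The only (immaterial) difference is the order of operations in the continuity step --- you apply Fuchs--van de Graaf to $\rho'$ and $\Phi_d$ and then contract the trace norm under $\Lambda$, whereas the paper contracts the purified distance under $\Lambda$ first and then applies Fuchs--van de Graaf to the outputs; both yield the identical bound.
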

\begin{proof}
Suppose $\Phi_d$ is a distillable reference state by free operation ${\mathcal{E}}\in\mathscr{F}_{\Lambda,{\rm Comm}}$, meaning that ${\mathcal{E}}(\rho)\in\mathcal{B}^\epsilon(\Phi_d)$, i.e.~$\Tr\{\Phi_d{\mathcal{E}}(\rho)\}\geq 1-\epsilon$.   So 
\begin{align}
    \mathfrak{D}_{H,\Lambda}^\epsilon(\rho) =& D_H^\epsilon(\rho\|\Lambda(\rho)) 
    \\\geq& D_H^\epsilon(\mathcal{E}(\rho)\|\mathcal{E}(\Lambda(\rho))) \\
    =& D_H^\epsilon(\mathcal{E}(\rho)\|\Lambda(\mathcal{E}(\rho)))\\
    =& \max_{0\leq P\leq I, \Tr\{P\mathcal{E}(\rho)\}\geq 1-\epsilon}(-\log\Tr\{P\Lambda(\mathcal{E}(\rho))\}) \\
    \geq& -\log\Tr\{\Phi_d\Lambda(\mathcal{E}(\rho))\},
\end{align}
where the second line follows from the data processing inequality for the hypothesis testing relative entropy \cite{hypothesis}, the third line follows from $\mathcal{E}\in\mathscr{F}_{\Lambda,{\rm Comm}}$, and the fifth line follows from the fact that $\Phi_d$ satisfies $0\leq\Phi_d\leq I$ and $\Tr\{\Phi_d{\mathcal{E}}(\rho)\}\geq 1-\epsilon$.
If $\Lambda$ is a channel, then we have $\Lambda(\mathcal{E}(\rho)) \in \mathcal{B}^\epsilon(\Lambda(\Phi_d))$ due to the data processing inequality for the purified distance $P(\rho,\sigma) = \sqrt{1-f(\rho,\sigma)}$  \cite{marco_thesis}: $P(\Lambda(\rho),\Lambda(\sigma))\leq P(\rho,\sigma)$, so $f(\Lambda(\Phi_d),\Lambda({\mathcal{E}}(\rho)))\geq f(\Phi_d,{\mathcal{E}}(\rho))\geq 1-\epsilon$.  Then we have 
\begin{equation}
    |\Tr\{\Phi_d\left(\Lambda(\mathcal{E}(\rho))-\Lambda(\Phi_d)\right)\}| \leq \|\Lambda(\mathcal{E}(\rho))-\Lambda(\Phi_d)\|_1 \leq 2\sqrt{1-f(\Lambda(\Phi_d),\Lambda({\mathcal{E}}(\rho)))} \leq 2\sqrt{\epsilon},
\end{equation}
where the first inequality is due to the submultiplicativity of the trace norm and $\|\Phi_d\|_1=1$, and the second inequality follows from the Fuchs-van de Graaf inequality (relations between trace norm and fidelity) \cite{Fuchs-vandeGraaf}.
Therefore,
\begin{equation}
  2^{-\mathfrak{D}_{H,\Lambda}^\epsilon(\rho)} \leq \Tr\{\Phi_d\Lambda(\mathcal{E}(\rho))\} \leq \Tr\{\Phi_d\Lambda(\Phi_d)\} + 2\sqrt{\epsilon} = \mathfrak{f}_\Lambda(\Phi_d)+2\sqrt{\epsilon}.
\end{equation}
So for any $d>d_0$ the possibility that $\Phi_d$ can be distilled is forbidden by the above inequality, that is, $\Omega_{D,\mathscr{F}_{\Lambda,{\rm Comm}}}^{\epsilon}(\rho\rightarrow\{\Phi_d\}) \leq \log d_0$.  Also by the definition of $d_0$ we have $d_0^{-m_{f,\Lambda}(\Phi_{d_0})} \geq 2^{-\mathfrak{D}_{H,\lambda}^\epsilon(\rho)}-2\sqrt{\epsilon}$. So the claimed bound follows. 
\end{proof}

To find general lower bounds, we now consider an adjusted notion of commuting operations.  To this end, we define a relaxed version of resource destroying maps which no longer enforces the nonresource-fixing property, and the corresponding commuting operations:
\begin{defn}[Pseudo-resource destroying map, pseudo-commuting operations]
A map $\lambda$ from states to states is a \emph{pseudo-resource destroying map} if it maps all states to free states, i.e.~$\lambda(\rho)\in\mF,\forall\rho$.   Then the set of \emph{pseudo-commuting operations} is defined to be $\mathscr{F}_{\langle\lambda\rangle,{\rm Comm}}:=\{\mathcal{E}|\lambda\circ\mathcal{E} = \mathcal{E}\circ\lambda\}$ (the angle brackets signify that $\lambda$ is a pseudo-resource destroying map).
\end{defn}
The importance of the compromised nonresource-fixing property is mainly two-fold: i) It guarantees that the commuting operations are free (resource non-generating); ii) The simple distance measure induced by the resource destroying map is directly a faithful resource monotone under commuting operations.  Can these issues still be handled?
Regarding the first point, note that $\mathscr{F}_{\langle\lambda\rangle,{\rm Comm}}\subset\barx$ still holds when $\lambda$ is a surjection onto $\mF$ (which is the case for standard resource destroying maps, due to the nonresource-fixing property), i.e.~every free state is an image of $\lambda$, because $\mathcal{E}\circ\lambda(\rho) = \lambda\circ\mathcal{E}(\rho)\in\mF$ and $\lambda(\rho)$ covers $\mF$.  However, this does not necessarily hold when the image of $\lambda$ is a proper subset of $\mF$. So in general we let the set of free operations be $\bar{\mathscr{F}}_{\langle\lambda\rangle,{\rm Comm}} = \mathscr{F}_{\langle\lambda\rangle,{\rm Comm}}\cap\barx$.   
Regarding the second one, the problem is that the distance measure $\delta_{\lambda}(\rho)=\delta(\rho,\lambda(\rho))$ is no longer a faithful resource measure, that is, $\delta_{\lambda}(\sigma)>0$ for $\sigma\in\mF$, which is implausible. To resolve this, we define the following variant by simply setting the value to zero for free states:
\begin{equation}
    \bar\delta_{\lambda}(\rho):=
    \left\{\begin{array}{ll}{\delta(\rho,\lambda(\rho))} & {\text { if } \rho\notin\mF } \\ {0} & {\text { if } \rho\in\mF}\end{array}\right.,
\end{equation}
where $\delta$ is a contractive distance measure.    It can be easily seen that $\bar\delta_{\lambda}$ still serves as a simple resource monotone:
\begin{prop}
Let $\lambda$ be a pseudo-resource destroying map.
Then $\bar\delta_{\lambda}(\rho)$ is monotone non-increasing under pseudo-commuting resource non-generating operations, i.e.
\begin{equation}
    \bar\delta_{\lambda}(\mathcal{E}(\rho)) \leq \bar\delta_{\lambda}(\rho), ~ \forall\mathcal{E}\in{\bar{\mathscr{F}}_{\langle\lambda\rangle,{\rm Comm}}= \mathscr{F}_{\langle\lambda\rangle,{\rm Comm}}\cap\barx}.
\end{equation}
\end{prop}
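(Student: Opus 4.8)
The plan is to prove monotonicity by a case analysis on whether the input $\rho$ and its image $\mathcal{E}(\rho)$ lie in $\mF$, since the definition of $\bar\delta_{\lambda}$ branches on exactly this membership. Throughout I would use two properties of any $\mathcal{E}\in\bar{\mathscr{F}}_{\langle\lambda\rangle,{\rm Comm}}$: first, that $\mathcal{E}$ is resource non-generating (because $\mathcal{E}\in\barx$), so $\mathcal{E}(\sigma)\in\mF$ whenever $\sigma\in\mF$; and second, that $\mathcal{E}$ pseudo-commutes with $\lambda$, i.e.~$\lambda\circ\mathcal{E}=\mathcal{E}\circ\lambda$.

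The central case is $\rho\notin\mF$ and $\mathcal{E}(\rho)\notin\mF$, where both sides of the claimed inequality are given by the first branch of the definition. Here I would write
\begin{equation}
\bar\delta_{\lambda}(\mathcal{E}(\rho)) = \delta(\mathcal{E}(\rho),\lambda(\mathcal{E}(\rho))) = \delta(\mathcal{E}(\rho),\mathcal{E}(\lambda(\rho))) \leq \delta(\rho,\lambda(\rho)) = \bar\delta_{\lambda}(\rho),
\end{equation}
where the second equality uses the pseudo-commuting property $\lambda\circ\mathcal{E}=\mathcal{E}\circ\lambda$ and the single inequality is the contractivity of $\delta$ under the channel $\mathcal{E}$. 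This is structurally identical to the monotonicity argument for genuine resource destroying maps, and the contractivity of $\delta$ does all the work.

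The remaining cases handle the branches where $\bar\delta_{\lambda}$ vanishes. If $\rho\in\mF$, then $\mathcal{E}(\rho)\in\mF$ by the resource non-generating property, so both sides equal $0$. If $\rho\notin\mF$ but $\mathcal{E}(\rho)\in\mF$, the left-hand side is $0$ while the right-hand side is $\delta(\rho,\lambda(\rho))\geq 0$, so the inequality holds trivially. I do not expect a genuine obstacle here; the only subtlety is bookkeeping, namely ensuring the zero-branch of the modified definition is invoked precisely when $\rho$ or $\mathcal{E}(\rho)$ is free. This is exactly the regime in which the unmodified $\delta_{\lambda}$ would fail to be faithful (since $\lambda$ need not fix free states), and the redefinition to $0$ is what keeps monotonicity intact. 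The role of intersecting with $\barx$ is to supply $\mathcal{E}(\rho)\in\mF$ in the case $\rho\in\mF$, which could otherwise fail for a pseudo-commuting map whose $\lambda$ has image strictly inside $\mF$; this is why the resource non-generating condition must be imposed separately rather than inferred from pseudo-commutation alone.
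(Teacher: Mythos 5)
Your proof is correct and follows essentially the same route as the paper's: the same three-way case analysis, with the pseudo-commutation plus contractivity of $\delta$ handling the non-free-to-non-free case and the resource non-generating property of $\mathcal{E}\in\barx$ disposing of the cases where $\bar\delta_\lambda$ vanishes. No gaps.
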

\begin{proof}
If $\rho\in\mF$, then  $\mathcal{E}(\rho)\in\mF$ since $\mathcal{E}\in\barx$, so $\bar\delta_{\lambda}(\mathcal{E}(\rho)) = \bar\delta_{\lambda}(\rho) =0$ by definition.
Otherwise if $\rho\notin\mF$, then
\begin{equation}
    \bar\delta_{\lambda}(\mathcal{E}(\rho))=
    \left\{\begin{array}{ll}{\delta(\mathcal{E}(\rho),\mathcal{E}(\lambda(\rho)))\leq \delta(\rho,\lambda(\rho)) = \bar\delta_{\lambda}(\rho)} & {\text { if } \mathcal{E}(\rho)\notin\mF } \\ {0\leq \bar\delta_{\lambda}(\rho)} & {\text { if } \mathcal{E}(\rho)\in\mF}\end{array}\right.,
\end{equation}
where we used $\lambda\circ\mathcal{E} = \mathcal{E}\circ\lambda$ and data processing inequality for the first case.
\end{proof}

Now we show that the map defined in Eq.~(\ref{eq:dist2_3}) at least commutes with the depolarizing map if $\Tr P=1$, which leads to the following lower bound for pseudo-commuting operations with respect to depolarizing-type pseudo-resource destroying maps:
\begin{prop}\label{prop:depol_comm}
Given primitive state $\rho$, a pseudo-resource destroying map $\lambda_\mathcal{N}$ given by depolarizing map of any degree, and pure reference states $\{\Phi_d\}$.   
Suppose $\mF$ is convex and contains the maximally mixed state,
and that the optimal operator $P$ that achieves $\mathfrak{D}_H^\epsilon(\rho)$ satisfies $\Tr P = 1$. Let $\mathbb{D}' = \{d\in\mathbb{D}: \mathcal{N}_{\frac{d}{d-1}}(\Phi_d)\in\mF\}$ (if $\mathbb{D}'$ is empty then the result does not apply), and let $d_0 = \max\{d\in\mathbb{D}': D_{\min}(\Phi_d\|\mathcal{N}_{\tilde p_d}(\Phi_d)) = m_{\min,(\mathcal{N}_{\tilde p_d})}\log d \leq \mathfrak{D}_H^\epsilon(\rho)\}$  (the brackets in the subscript of $m$ signify that $\mathcal{N}_{\tilde p}$ is not precisely a resource destroying map), where $\tilde{p}_d = \min\{p: \mathcal{N}_{p}(\Phi_d)\in\mF\}\in[0,1]$.  For $\epsilon\geq 0$,
 \begin{equation}
     \Omega_{D,\bar{\mathscr{F}}_{\langle\lambda_\mathcal{N}\rangle,{\rm Comm}}}^{\epsilon}(\rho\rightarrow\{\phi_d\}) \geq \log d_0 > \frac{\mathfrak{D}_{H}^\epsilon(\rho)}{m_{\min,(\mathcal{N}_{\tilde p_d})}(\Phi_{d_0^{\uparrow}})} - \log{\frac{d_0^{\uparrow}}{d_0}}.  \label{eq:depol_comm} 
 \end{equation}
\end{prop}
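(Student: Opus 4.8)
The plan is to recycle the distillation map $\mathcal{E}$ already constructed in \eqref{eq:dist2_3} for the proof of Proposition~\ref{prop:depol1}, and to verify only the single extra property that distinguishes $\bar{\mathscr{F}}_{\langle\lambda_\mathcal{N}\rangle,{\rm Comm}}=\mathscr{F}_{\langle\lambda_\mathcal{N}\rangle,{\rm Comm}}\cap\barx$ from $\barx$, namely that $\mathcal{E}$ commutes with the pseudo-resource destroying depolarizing map $\lambda_\mathcal{N}$. Everything else is inherited verbatim from Proposition~\ref{prop:depol1}: the identical argument shows that $\mathcal{E}$ is a legitimate channel, that $\mathcal{E}(\tau)\in\mF$ for every free $\tau$ (hence $\mathcal{E}\in\barx$) through the same convex decomposition, that $f(\mathcal{E}(\rho),\Phi_{d_0})\geq 1-\epsilon$ because $\Tr\{\Phi_{d_0}\mathcal{E}(\rho)\}=\Tr\{P\rho\}\geq 1-\epsilon$ with $\Phi_{d_0}$ pure, and that the quantitative bound in terms of $\mathfrak{D}_H^\epsilon(\rho)$ follows from the definition of $d_0$. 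So the whole task reduces to establishing the commutation identity $\lambda_\mathcal{N}\circ\mathcal{E}=\mathcal{E}\circ\lambda_\mathcal{N}$.

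For the commutation I would exploit the fact that the output of $\mathcal{E}$ is always a depolarized copy of the reference state: by \eqref{eq:dist2_3} one has $\mathcal{E}(\omega)=(1-q(\omega))\Phi_{d_0}+q(\omega)I/d_0$ with $q(\omega)$ an affine function of $\Tr\{P\omega\}$. Since depolarizing maps acting on $\Phi_{d_0}$ compose into a single depolarizing map, both $\lambda_\mathcal{N}\circ\mathcal{E}(\omega)$ and $\mathcal{E}\circ\lambda_\mathcal{N}(\omega)$ are again of the form $\alpha\,\Phi_{d_0}+(1-\alpha)I/d_0$, so the operator identity collapses to matching the single scalar weight on each side. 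Each weight is affine in $\Tr\{P\omega\}$, the slopes agree once the same depolarizing degree is used on input and output, and the only place the two sides can differ is the constant term produced by the extra $p\,I/d$ that $\lambda_\mathcal{N}$ injects on the input before it is fed into $q$. The trace normalization $\Tr P=1$ assumed in the hypothesis is exactly the quantity controlling that residual constant, and I would carry out this short affine computation to confirm the two constants cancel.

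I expect this weight-matching step to be the main obstacle, as it is the only point where the hypothesis $\Tr P=1$ is genuinely used and it requires some care: the input state $\omega$ and the distilled output $\Phi_{d_0}$ live in different dimensions, so one must track faithfully how $\lambda_\mathcal{N}$ rescales $\Tr\{P\omega\}$ through its $I/d$ term before $\mathcal{E}$ re-injects it. Once $\lambda_\mathcal{N}\circ\mathcal{E}=\mathcal{E}\circ\lambda_\mathcal{N}$ is secured we have $\mathcal{E}\in\mathscr{F}_{\langle\lambda_\mathcal{N}\rangle,{\rm Comm}}$, and combined with $\mathcal{E}\in\barx$ this gives $\mathcal{E}\in\bar{\mathscr{F}}_{\langle\lambda_\mathcal{N}\rangle,{\rm Comm}}$; the achievability $f(\mathcal{E}(\rho),\Phi_{d_0})\geq 1-\epsilon$ then yields $\Omega_{D,\bar{\mathscr{F}}_{\langle\lambda_\mathcal{N}\rangle,{\rm Comm}}}^{\epsilon}(\rho\rightarrow\{\Phi_d\})\geq\log d_0$, and the final strict inequality follows from the defining property of $d_0$ precisely as in Proposition~\ref{prop:depol1}.
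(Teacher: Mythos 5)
Your proposal is correct and follows essentially the same route as the paper: reuse the isotropic-state distillation map of Proposition~\ref{prop:depol1} (which is already shown there to lie in $\barx$ and to achieve rate $\log d_0$), and verify the commutation $\lambda_\mathcal{N}\circ\mathcal{E}=\mathcal{E}\circ\lambda_\mathcal{N}$ by writing both sides as $\alpha\,\Phi_{d_0}+(1-\alpha)I/d_0$ with $\alpha$ affine in $\Tr\{P\omega\}$, where the slopes match automatically and the constant terms match exactly when $\Tr P=1$ --- which is precisely the computation the paper carries out. The one step you defer (the explicit weight-matching algebra) is exactly the paper's displayed calculation of $q(\mathcal{N}_p(\omega))=\frac{d_0-\Tr P}{d_0-1}p+q(\omega)-pq(\omega)$, so nothing of substance is missing.
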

\begin{proof}
Consider the distillation map $\mathcal{E}$ defined by Eqs.~(\ref{eq:dist2_1})--(\ref{eq:dist2_3}).  Suppose $\lambda_\mathcal{N}$ is the depolarizing map of degree $p$, i.e.~$\lambda_\mathcal{N}=\mathcal{N}_p$.
Simply compare the actions of $\mathcal{N}_p\circ\mathcal{E}$ and $\mathcal{E}\circ\mathcal{N}_p$ on any state $\omega$.
On the one hand,
\begin{align}
    \mathcal{N}_p(\mathcal{E}(\omega)) &= (1-p)\mathcal{E}(\omega) + p\frac{I}{d_0}\\
    &= (1-p-q(\omega) + pq(\omega))\Phi_{d_0} + (p+q(\omega)-pq(\omega))\frac{I}{d_0},
\end{align}
where $q(\omega) = \frac{d_0(1- \Tr\{P\omega\})}{d_0-1}$.
On the other hand, notice that 
\begin{equation}
    \Tr\{P\mathcal{N}_p(\omega)\} = (1-p)\Tr\{P\omega\} + \frac{p}{d_0}\Tr P,
\end{equation}
and thus we obtain
\begin{align}
   q(\mathcal{N}_p(\omega)) &= \frac{d_0}{d_0-1}(1-\Tr\{P\mathcal{N}_p(\omega)\})\\
    &= \frac{d_0}{d_0-1}\left(1-(1-p)\Tr\{P\omega\} - \frac{p}{d_0}\Tr P\right)  \\
    &= \frac{d_0-\Tr P}{d_0-1}p+q(\omega) - pq(\omega).
\end{align}
Therefore
\begin{align}
    \mathcal{E}(\mathcal{N}_p(\omega)) &= (1-q(\mathcal{N}_p(\omega)))\Phi_{d_0} + q(\mathcal{N}_p(\omega))\frac{I}{d_0} \\
    &= \left(1- \frac{d_0-\Tr P}{d_0-1}p - q(\omega) + pq(\omega)\right)\Phi_{d_0}
    + \left(\frac{d_0-\Tr P}{d_0-1}p+q(\omega) - pq(\omega)\right)\frac{I}{d_0}.
\end{align}
It can be seen that, when  $\Tr P=1$, $\mathcal{E}(\mathcal{N}_p(\omega)) = \mathcal{N}_p(\mathcal{E}(\omega))$, that is, $\mathcal{E}\in\mathscr{F}_{\langle\lambda_\mathcal{N}\rangle,{\rm Comm}}$.  
It is already shown in Proposition~\ref{prop:depol1} that $\mathcal{E}\in\barx$.
Therefore, $\mathcal{E}\in\bar{\mathscr{F}}_{\langle\lambda_\mathcal{N}\rangle,{\rm Comm}}$.
Finally, recall that $\mathcal{E}$ achieves the distillation task with rate $\log{d_0}$, which gives rise to the same bound as in Proposition~\ref{prop:depol1}.
\end{proof}

\section{One-shot distillation yield with error tolerance on the input}\label{app:input}

In this section, we consider a variant of the standard one-shot distillation tasks that accounts for noise or error on the primitive state.   In this case we are able to establish a number of bounds in slightly different forms.

The formal definition of the corresponding optimal rate is defined as follows:
\begin{defn}[One-shot resource distillation, error tolerance on the input]
Given primitive state $\rho$ and reference states $\{\phi_d\}$. The \emph{one-shot $\underline\epsilon$-distillation yield} under the set of free operations $\mathscr{F}$ is defined to be the maximum possible size of output reference state distillable from an $\epsilon$-approximation of $\rho$ by an operation in $\mathscr{F}$:
\begin{equation}
    \Omega_{D,\mathscr{F}}^{\underline{\epsilon}}(\rho,\{\phi_d\}) := \log\max\{d\in\mathbb{D}: \exists \mathcal{E}\in \mathscr{F}, \exists\rho'\in\mathcal{B}^\epsilon(\rho), \mathcal{E}(\rho')=\phi_d\}.
\end{equation}
\end{defn}

\smallskip

First, observe that this distillation task places a stronger error constraint than the standard output error model, in the sense that the $\underline{\epsilon}$-distillation yield is upper-bounded by the $\epsilon$-distillation yield:
\begin{prop}
Given primitive state $\rho$ and reference states $\{\phi_d\}$. For any free operations $\mathscr{F}$ and $\epsilon\geq 0$, it holds that
\begin{equation}
   \Omega_{D,\mathscr{F}}^{\underline\epsilon}(\rho\rightarrow\{\phi_d\}) 
   \leq
   \Omega_{D,\mathscr{F}}^{\epsilon}(\rho\rightarrow\{\phi_d\}).
\end{equation}
\end{prop}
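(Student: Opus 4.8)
The plan is to show that every dimension $d$ that is achievable in the input-error model is also achievable in the standard output-error model; since both optimal rates are logarithms of the maximum taken over the respective sets of achievable dimensions, and the logarithm is monotone, the claimed inequality follows at once from this set inclusion.

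Concretely, I would fix a dimension $d$ that is achievable for the $\underline{\epsilon}$-distillation task. By definition this means there exist a free operation $\mathcal{E}\in\mathscr{F}$ and a state $\rho'\in\mathcal{B}^\epsilon(\rho)$ such that $\mathcal{E}(\rho')=\phi_d$ \emph{exactly}. The key idea is to reuse the very same operation $\mathcal{E}$ as the witness for the output-error task, and simply verify that it maps $\rho$ into the $\epsilon$-ball of $\phi_d$.

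To do so, I would invoke the data processing inequality for the Uhlmann fidelity (equivalently, the monotonicity of the purified distance $P(\rho,\sigma)=\sqrt{1-f(\rho,\sigma)}$ under quantum channels, already used in the main text). Since $\mathcal{E}(\rho')=\phi_d$ and $f(\rho',\rho)\geq 1-\epsilon$, contractivity of fidelity under the channel $\mathcal{E}$ gives
\begin{equation}
    f(\mathcal{E}(\rho),\phi_d) = f(\mathcal{E}(\rho),\mathcal{E}(\rho')) \geq f(\rho,\rho') \geq 1-\epsilon,
\end{equation}
so that $\mathcal{E}(\rho)\in\mathcal{B}^\epsilon(\phi_d)$. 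Hence the same $d$ is achievable for the $\epsilon$-distillation task, witnessed by the same operation $\mathcal{E}$. This establishes the inclusion of achievable dimensions and therefore the bound.

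I do not expect any genuine obstacle here: beyond unwinding the two definitions, the single ingredient is the contractivity of fidelity under free operations, which are in particular CPTP maps. The only point worth stating carefully is that one keeps the \emph{same} free operation when passing from the input-error witness to the output-error witness, so that no further assumption on $\mathscr{F}$ (such as closure under composition) is required.
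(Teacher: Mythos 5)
Your proposal is correct and matches the paper's own proof essentially verbatim: both take the witness $\mathcal{E}$ from the input-error task and apply the data processing inequality for the fidelity (equivalently the purified distance) to conclude $\mathcal{E}(\rho)\in\mathcal{B}^\epsilon(\phi_d)$. No differences worth noting.
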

\begin{proof}
Suppose $\mathcal{E}\in\mathscr{F}$ achieves the $\underline\epsilon$-distillation task with rate $\log d$, i.e.~$\mathcal{E}(\rho_\epsilon) = \phi_d$ for some $\rho_\epsilon\in\mathcal{B}^\epsilon(\rho)$.
Then by the data processing inequality for the purified distance $P(\rho,\sigma) = \sqrt{1-f(\rho,\sigma)}$  \cite{marco_thesis},
we have $P(\phi_d,\mathcal{E}(\rho))=P(\mathcal{E}(\rho_\epsilon),\mathcal{E}(\rho))\leq P(\rho_\epsilon,\rho)$ or equivalently $f(\phi_d,\mathcal{E}(\rho))\geq f(\rho_\epsilon,\rho)$, so $\mathcal{E}(\rho) \in \mathcal{B}^\epsilon(\phi_d)$.  In other words, the existence of a protocol that achieves $\underline\epsilon$-distillation with rate $\log d$ guarantees $\epsilon$-distillation with at least the same rate.
\end{proof}

A direct corollary is that all the upper bounds for $\Omega_{D,\mathscr{F}}^{\epsilon}(\rho\rightarrow\{\phi_d\})$ derived in Appendix~\ref{app:yield_comm} are also upper bounds for $\Omega_{D,\mathscr{F}}^{\underline\epsilon}(\rho\rightarrow\{\phi_d\}) $.

In this case, we are also able to prove general upper bounds in terms of both state- and operator-smoothing. 
We note that the state-smoothed min-relative entropy $D_{\min}^\epsilon$ exhibits the stringent property that when the non-smoothed version is zero, it remains zero for sufficiently small $\epsilon$.

For the resource non-generating operations $\barx$ we obtain the following:
\begin{prop}
 Given primitive state $\rho$ and reference states $\{\phi_d\}$. Let $d_0 = \max\{d\in\mathbb{D}: \mathfrak{D}_{\min}(\phi_d) = m_{\min}(\phi_d)\log d \leq \mathfrak{D}_{\min}^\epsilon(\rho)\}, d'_0 = \max\{d\in\mathbb{D}: \mathfrak{D}_{\min}(\phi_d) = m_{\min}(\phi_d)\log d \leq \mathfrak{D}_{H}^\epsilon(\rho)\}$. 
 For $\epsilon\geq 0$,
  \begin{eqnarray}
\Omega_{D,\mathbb{\barx}}^{\underline{\epsilon}}(\rho\rightarrow\{\phi_d\}) &\leq& \log d_0 \leq \frac{\mathfrak{D}_{\mathrm{min}}^\epsilon(\rho)}{m_{\min}(\phi_{d_0})}, \label{eq:statesm} \\
  \Omega_{D,\mathbb{\barx}}^{\underline{\epsilon}}(\rho\rightarrow\{\phi_d\}) &\leq& \log d'_0 \leq \frac{\mathfrak{D}_{H}^\epsilon(\rho)}{m_{\min}(\phi_{d'_0})}. \label{eq:statesm2} 
 \end{eqnarray}
\end{prop}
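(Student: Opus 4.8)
The plan is to combine the monotonicity of the min-relative entropy of resource $\mathfrak{D}_{\min}$ under resource non-generating operations with two comparison inequalities, one relating $\mathfrak{D}_{\min}$ evaluated at a smoothed input state to the state-smoothed measure $\mathfrak{D}_{\min}^\epsilon(\rho)$, and one relating it to the operator-smoothed measure $\mathfrak{D}_H^\epsilon(\rho)$. First I would suppose $\phi_d$ is distillable with error tolerance on the input, i.e.~there exist $\mathcal{E}\in\barx$ and $\rho'\in\mathcal{B}^\epsilon(\rho)$ with $\mathcal{E}(\rho')=\phi_d$. Since $\barx$ consists of free operations and $\mathfrak{D}_{\min}$ is monotonically non-increasing under free operations (data processing for $D_{\min}$ together with the minimization over the closed set $\mF$), we get at once $\mathfrak{D}_{\min}(\phi_d)=\mathfrak{D}_{\min}(\mathcal{E}(\rho'))\leq\mathfrak{D}_{\min}(\rho')$. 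The whole task then reduces to bounding $\mathfrak{D}_{\min}(\rho')$ by the smoothed resource measures of $\rho$ itself.

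For the bound \eqref{eq:statesm}, I would let $\sigma^\ast\in\mF$ achieve $\mathfrak{D}_{\min}^\epsilon(\rho)=\min_{\sigma\in\mF}\max_{\rho''\in\mathcal{B}^\epsilon(\rho)}D_{\min}(\rho''\|\sigma)$ and observe, using $\rho'\in\mathcal{B}^\epsilon(\rho)$, the routine min–max ordering $\mathfrak{D}_{\min}(\rho')\leq D_{\min}(\rho'\|\sigma^\ast)\leq\max_{\rho''\in\mathcal{B}^\epsilon(\rho)}D_{\min}(\rho''\|\sigma^\ast)=\mathfrak{D}_{\min}^\epsilon(\rho)$. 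Chaining with the monotonicity step gives $\mathfrak{D}_{\min}(\phi_d)\leq\mathfrak{D}_{\min}^\epsilon(\rho)$, so by the definition of $d_0$ any distillable $\phi_d$ must have $d\leq d_0$, yielding $\Omega_{D,\barx}^{\underline\epsilon}(\rho\rightarrow\{\phi_d\})\leq\log d_0$; the remaining inequality in \eqref{eq:statesm} is then just the defining property $m_{\min}(\phi_{d_0})\log d_0\leq\mathfrak{D}_{\min}^\epsilon(\rho)$ of $d_0$.

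For \eqref{eq:statesm2}, the key step is the comparison $\mathfrak{D}_{\min}(\rho')\leq\mathfrak{D}_H^\epsilon(\rho)$. Letting $\sigma^\ast$ now achieve $\mathfrak{D}_H^\epsilon(\rho)=\min_{\sigma\in\mF}D_H^\epsilon(\rho\|\sigma)$, it suffices to show $D_{\min}(\rho'\|\sigma^\ast)\leq D_H^\epsilon(\rho\|\sigma^\ast)$. Since $D_{\min}(\rho'\|\sigma^\ast)=-\log\Tr\{\Pi_{\rho'}\sigma^\ast\}$, the support projector $\Pi_{\rho'}$ is a feasible test operator in the optimization defining $D_H^\epsilon(\rho\|\sigma^\ast)$ as soon as $\Tr\{\Pi_{\rho'}\rho\}\geq 1-\epsilon$. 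I would establish this through the lemma $\Tr\{\Pi_{\rho'}\rho\}\geq f(\rho',\rho)$: writing $\sqrt{\rho'}=\Pi_{\rho'}\sqrt{\rho'}$ and applying the Hölder inequality $\|AB\|_1\leq\|A\|_2\|B\|_2$ gives $\sqrt{f(\rho',\rho)}=\norm{\sqrt{\rho}\sqrt{\rho'}}_1=\norm{\sqrt{\rho}\,\Pi_{\rho'}\sqrt{\rho'}}_1\leq\norm{\sqrt{\rho}\,\Pi_{\rho'}}_2\,\norm{\sqrt{\rho'}}_2=\sqrt{\Tr\{\Pi_{\rho'}\rho\}}$, where $\norm{\sqrt{\rho'}}_2=1$. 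As $\rho'\in\mathcal{B}^\epsilon(\rho)$ gives $f(\rho',\rho)\geq 1-\epsilon$, feasibility follows, hence $D_{\min}(\rho'\|\sigma^\ast)\leq D_H^\epsilon(\rho\|\sigma^\ast)=\mathfrak{D}_H^\epsilon(\rho)$, and the conclusion $\Omega_{D,\barx}^{\underline\epsilon}(\rho\rightarrow\{\phi_d\})\leq\log d'_0\leq\mathfrak{D}_H^\epsilon(\rho)/m_{\min}(\phi_{d'_0})$ follows from the definition of $d'_0$ exactly as before.

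I expect the support-projector/fidelity lemma of the last paragraph to be the only non-routine ingredient; the rest reduces to monotonicity of $\mathfrak{D}_{\min}$ and elementary min–max orderings. The overall structure mirrors the upper bounds proved for the standard output-error model in Appendix~\ref{app:yield_ng}, the essential difference being that with the error placed on the input one may smooth the \emph{input} state $\rho$, which is precisely what makes both $\mathfrak{D}_{\min}^\epsilon(\rho)$ and $\mathfrak{D}_H^\epsilon(\rho)$ available as valid upper bounds here.
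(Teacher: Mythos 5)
Your proof is correct and follows essentially the same route as the paper's: monotonicity of $\mathfrak{D}_{\min}$ under $\barx$ (data processing for $D_{\min}$ plus the fact that $\mathcal{E}$ maps free states to free states), combined with the observation that smoothing the input lets one bound $\mathfrak{D}_{\min}(\rho')$ for $\rho'\in\mathcal{B}^\epsilon(\rho)$ by $\mathfrak{D}_{\min}^\epsilon(\rho)$ and $\mathfrak{D}_H^\epsilon(\rho)$. The only difference is that you explicitly verify the feasibility of $\Pi_{\rho'}$ in the hypothesis-testing optimization via $\Tr\{\Pi_{\rho'}\rho\}\geq f(\rho',\rho)$, a (correct) detail the paper compresses into ``the same argument applies to operator-smoothing.''
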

\begin{proof}
Suppose $\rho_\epsilon$ is the state in $\mathcal{B}^\epsilon(\rho)$ that achieves the distillation with free operation $\mathcal{E}$, i.e.~$\mathcal{E}(\rho_\epsilon) = \phi_d$ for some $\mathcal{E}\in\barx$.  
Let $\tilde\delta\in \mF$ be the optimal free state that achieves $\mathfrak{D}_{\mathrm{min}}(\rho_\epsilon)$.
The following must hold:
\begin{align}
    \mathfrak{D}_{\mathrm{min}}^\epsilon(\rho)\geq& \mathfrak{D}_{\mathrm{min}}(\rho_\epsilon)\\
    =&   D_{\mathrm{min}}(\rho_\epsilon\|\tilde\delta) \\
    \geq& D_{\mathrm{min}}(\phi_d\|\mathcal{E}(\tilde\delta)) \\
    \geq& \min_{\delta'\in \mF} D_{\mathrm{min}}(\phi_d\|\delta')  = \mathfrak{D}_{\min}(\phi_d),
\end{align}
where the third line follows from the data processing inequality for the non-sandwiched R\'enyi-$\alpha$ relative entropy in the regime $\alpha\in[0,2]$ which covers $D_{\mathrm{min}}$ \cite{LIEB1973267,uhlmann1977,PETZ198657}, and the fourth line follows from $\mathcal{E}\in\barx$.   
So for any $d>d_0$ the possibility that $\phi_d$ can be distilled is forbidden by the above inequality, that is, $\Omega_{D,\barx}^{\underline\epsilon}(\rho\rightarrow\{\phi_d\}) \leq \log d_0$.  Also by the definition of $d_0$ we have $m_{\min}(\Phi_{d_0})\log d_0 \leq \mathfrak{D}_{\mathrm{min}}^\epsilon(\rho)$. So the claimed bound follows.  The same argument applies to operator-smoothing. 
\end{proof}
For commuting operations $\mathscr{F}_{\lambda,{\rm Comm}}$, we first prove the most general upper bounds without any restriction on $\lambda$, and then simplify the result in the case that $\lambda$ is exact:
\begin{prop}
Given primitive state $\rho$, any resource destroying map $\lambda$, and reference states $\{\phi_d\}$.   Let $d_0 = \max\{d\in\mathbb{D}: \mathfrak{D}_{\min,\lambda}(\phi_d) = m_{\min,\lambda}(\phi_d)\log d \leq \mathfrak{D}_{\min,\lambda}^\epsilon(\rho)\}, d'_0 = \max\{d\in\mathbb{D}: \mathfrak{D}_{\min,\lambda}(\phi_d) = m_{\min,\lambda}(\phi_d)\log d \leq \mathfrak{D}_{H,\lambda}^\epsilon(\rho)\}$. 
 For $\epsilon\geq 0$,
   \begin{eqnarray}
   \Omega_{D,\mathscr{F}_{\lambda,{\rm Comm}}}^{\underline\epsilon}(\rho\rightarrow\{\phi_d\}) &\leq& \log d_0 \leq \frac{\mathfrak{D}_{\min,\lambda}^\epsilon(\rho)}{m_{\min,\lambda}(\phi_{d_0})}.  \label{eq:opsmx} \\
   \Omega_{D,\mathscr{F}_{\lambda,{\rm Comm}}}^{\underline\epsilon}(\rho\rightarrow\{\phi_d\}) &\leq& \log d'_0 \leq \frac{\mathfrak{D}_{H,\lambda}^\epsilon(\rho)}{m_{\min,\lambda}(\phi_{d'_0})}.  \label{eq:opsmx2} 
 \end{eqnarray}
 
{Now consider golden states $\{\hat{\Phi}_d\}$ as the reference states.  Suppose the set of free states $\mF$ satisfies Condition (CH), i.e.~is formed by a convex hull of pure states, and $\tilde\lambda$ is an exact resource destroying map.  Let $d''_0 = \min\{d\in\mathbb{D}:\mathfrak{D}_{\min,\tilde\lambda}
(\hat{\Phi}_d) = \mathfrak{D}_{\min}
(\hat{\Phi}_d) = m_{\min}(\hat{\Phi}_d)\log d =g_d\log d\leq \mathfrak{D}_{\min,\tilde{\lambda}}^\epsilon(\rho)\}, 
d'''_0 = \min\{d\in\mathbb{D}:\mathfrak{D}_{\min,\tilde\lambda}
(\hat{\Phi}_d) = \mathfrak{D}_{\min}
(\hat{\Phi}_d) = m_{\min}(\hat{\Phi}_d)\log d =g_d\log d \leq \mathfrak{D}_{H,\tilde{\lambda}}^\epsilon(\rho)\}$. For $\epsilon\geq 0$,
\begin{eqnarray}
    \Omega_{D,\mathscr{F}_{\tilde\lambda,{\rm Comm}}}^{\underline\epsilon}(\rho\rightarrow\{\hat\Phi_d\}) \leq \log d''_0\leq\frac{\mathfrak{D}_{\min,\tilde{\lambda}}^\epsilon(\rho)}{m_{\min}(\hat{\Phi}_{d''_0})} = \frac{\mathfrak{D}_{\min,\tilde{\lambda}}^\epsilon(\rho)}{g_{d''_0}}, \\
   \Omega_{D,\mathscr{F}_{\tilde\lambda,{\rm Comm}}}^{\underline\epsilon}(\rho\rightarrow\{\hat\Phi_d\}) \leq \log d'''_0\leq\frac{\mathfrak{D}_{H,\tilde{\lambda}}^\epsilon(\rho)}{m_{\min}(\hat{\Phi}_{d'''_0})} = \frac{\mathfrak{D}_{H,\tilde{\lambda}}^\epsilon(\rho)}{g_{d'''_0}}.
\end{eqnarray}
}
\end{prop}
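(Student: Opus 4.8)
The plan is to prove each of the four displayed inequalities as an optimality (converse) statement of the same shape: if a reference state is $\underline\epsilon$-distillable, its $\lambda$-min-relative entropy of resource is dominated by the relevant smoothed resource measure of $\rho$, so that the cutoff dimensions $d_0,d'_0,d''_0,d'''_0$ bound the yield and the closed forms follow at once from their definitions (e.g. $m_{\min,\lambda}(\phi_{d_0})\log d_0=\mathfrak{D}_{\min,\lambda}(\phi_{d_0})\leq\mathfrak{D}_{\min,\lambda}^\epsilon(\rho)$). The two engines are the data-processing inequality (DPI) for the divergence at hand and the defining commutation relation $\lambda\circ\mathcal{E}=\mathcal{E}\circ\lambda$ of $\mathscr{F}_{\lambda,{\rm Comm}}$, combined with the feature special to error tolerance on the input: the distillation is \emph{exact}, $\mathcal{E}(\rho_\epsilon)=\phi_d$ for some $\rho_\epsilon\in\mathcal{B}^\epsilon(\rho)$, so that $\lambda(\phi_d)=\lambda(\mathcal{E}(\rho_\epsilon))=\mathcal{E}(\lambda(\rho_\epsilon))$ holds as an identity rather than an approximation.

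I would first settle the state-smoothing bound. Taking $\mathcal{E}\in\mathscr{F}_{\lambda,{\rm Comm}}$ with $\mathcal{E}(\rho_\epsilon)=\phi_d$, DPI for $D_{\min}$ and the commutation relation give $\mathfrak{D}_{\min,\lambda}(\phi_d)=D_{\min}(\mathcal{E}(\rho_\epsilon)\|\mathcal{E}(\lambda(\rho_\epsilon)))\leq D_{\min}(\rho_\epsilon\|\lambda(\rho_\epsilon))=\mathfrak{D}_{\min,\lambda}(\rho_\epsilon)\leq\mathfrak{D}_{\min,\lambda}^\epsilon(\rho)$, the last step because $\rho_\epsilon\in\mathcal{B}^\epsilon(\rho)$ and $\mathfrak{D}_{\min,\lambda}^\epsilon$ maximizes over that ball. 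Hence any $d$ with $\mathfrak{D}_{\min,\lambda}(\phi_d)>\mathfrak{D}_{\min,\lambda}^\epsilon(\rho)$ is non-distillable, so the yield is at most $\log d_0$. The golden-state state-smoothing bound then follows by substituting $m_{\min,\tilde\lambda}(\hat\Phi_d)=m_{\min}(\hat\Phi_d)=g_d$, which is exactly the collapse established in Propositions~\ref{thm:collapse} and~\ref{lemma:exact_min}.

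For the operator-smoothing bounds I would record the comparison lemma that for any $\rho'\in\mathcal{B}^\epsilon(\rho)$ and any $\sigma$ one has $D_{\min}(\rho'\|\sigma)\leq D_H^\epsilon(\rho\|\sigma)$: the support projector $P=\Pi_{\rho'}$ is feasible for $D_H^\epsilon(\rho\|\sigma)$ since $\Tr\{\Pi_{\rho'}\rho\}\geq f(\rho',\rho)\geq 1-\epsilon$ (fidelity is monotone under the two-outcome measurement $\{\Pi_{\rho'},I-\Pi_{\rho'}\}$), and its objective value equals $D_{\min}(\rho'\|\sigma)$. Using this with $\sigma=\lambda(\rho)$, then DPI for $D_{\min}$ and the commutation relation, produces $\mathfrak{D}_{H,\lambda}^\epsilon(\rho)\geq D_{\min}(\rho_\epsilon\|\lambda(\rho))\geq D_{\min}(\phi_d\|\lambda(\mathcal{E}(\rho)))\geq\mathfrak{D}_{\min}(\phi_d)$, where the final step uses $\lambda(\mathcal{E}(\rho))\in\mF$. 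For golden references $\hat\Phi_d$ with an exact RD map $\tilde\lambda$, Propositions~\ref{lemma:exact_min} and~\ref{lemma:exact_max} force $\mathfrak{D}_{\min,\tilde\lambda}(\hat\Phi_d)=\mathfrak{D}_{\min}(\hat\Phi_d)=g_d\log d$, so the chain lands precisely on $g_d\log d$ and yields the two golden-state bounds ($d''_0$ and $d'''_0$) after substituting $m_{\min}(\hat\Phi_d)=g_d$.

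The main obstacle is the fully general operator-smoothing bound, which is stated with $\mathfrak{D}_{\min,\lambda}(\phi_d)$ rather than the smaller $\mathfrak{D}_{\min}(\phi_d)$ that the clean chain above delivers. Sharpening $\mathfrak{D}_{\min}(\phi_d)$ to $\mathfrak{D}_{\min,\lambda}(\phi_d)$ requires comparing $\lambda$ at the two distinct but $\epsilon$-close states $\mathcal{E}(\rho)$ and $\phi_d=\mathcal{E}(\rho_\epsilon)$ — equivalently $\lambda(\rho)$ versus $\lambda(\rho_\epsilon)$ — and a general RD map need not be continuous, so one cannot pass from $\Tr\{\Pi_{\phi_d}\lambda(\mathcal{E}(\rho))\}$ to $\Tr\{\Pi_{\phi_d}\lambda(\phi_d)\}$. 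This is exactly where the golden-state collapse is indispensable: it makes $\mathfrak{D}_{\min}$ and $\mathfrak{D}_{\min,\tilde\lambda}$ coincide and so removes the obstruction, explaining why the sharp operator-smoothing statement is clean for golden references but delicate for arbitrary $\lambda$. By contrast the state-smoothing bound is unconditional, because its smoothing ball absorbs $\rho_\epsilon$ directly and $\lambda$ is thereby only ever evaluated at a single state.
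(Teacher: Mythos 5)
Your handling of the state-smoothing bound and of the two golden-state bounds is essentially identical to the paper's proof in Appendix N: the chain $\mathfrak{D}_{\min,\lambda}^\epsilon(\rho)\geq D_{\min}(\rho_\epsilon\|\lambda(\rho_\epsilon))\geq D_{\min}(\mathcal{E}(\rho_\epsilon)\|\mathcal{E}(\lambda(\rho_\epsilon)))=D_{\min}(\phi_d\|\lambda(\phi_d))$ is exactly what the paper writes (using DPI for the Petz--R\'enyi entropy at $\alpha=0$ and the commutation relation), and the reduction of the $d''_0$ and $d'''_0$ bounds to the collapse $\mathfrak{D}_{\min,\tilde\lambda}(\hat\Phi_d)=\mathfrak{D}_{\min}(\hat\Phi_d)=g_d\log d$ is precisely the paper's one-line appeal to its Proposition on exact RD maps. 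Your comparison lemma $D_{\min}(\rho'\|\sigma)\leq D_H^\epsilon(\rho\|\sigma)$ for $\rho'\in\mathcal{B}^\epsilon(\rho)$, via the feasible projector $P=\Pi_{\rho'}$ and $\Tr\{\Pi_{\rho'}\rho\}\geq f(\rho',\rho)$, is also correct and is the right bridge between the two smoothings.

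The obstruction you identify for the fully general operator-smoothing bound (the $d'_0$ display) is genuine, and you should be aware that the paper does not actually resolve it: its proof disposes of that case with the single sentence ``the same argument applies to operator-smoothing,'' which fails at the very first step. Since $\mathfrak{D}_{H,\lambda}^\epsilon(\rho)=D_H^\epsilon(\rho\|\lambda(\rho))$ pins the second argument to $\lambda(\rho)$, while the state-smoothed chain needs it to be $\lambda(\rho_\epsilon)$, one would need to relate $\Tr\{P\lambda(\rho)\}$ to $\Tr\{P\lambda(\rho_\epsilon)\}$ for a general (possibly nonlinear, discontinuous) RD map, and no such relation holds. Your chain consequently delivers only $\mathfrak{D}_{\min}(\phi_d)\leq\mathfrak{D}_{H,\lambda}^\epsilon(\rho)$, which is strictly weaker than the stated $\mathfrak{D}_{\min,\lambda}(\phi_d)\leq\mathfrak{D}_{H,\lambda}^\epsilon(\rho)$ because $\mathfrak{D}_{\min}\leq\mathfrak{D}_{\min,\lambda}$, so the resulting cutoff dimension can exceed $d'_0$. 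Closing the gap requires extra structure --- either the golden-state collapse (which is exactly why the $d'''_0$ bound goes through) or $\lambda$ being a channel, in which case a DPI-plus-continuity correction with a $2\sqrt{\epsilon}$ penalty of the kind used in Theorem 6 becomes available. In short: you have correctly proved the first, third and fourth displayed inequalities by the paper's own route, and the remaining one is a gap in the paper's proof rather than a defect of your argument.
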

\begin{proof}
Suppose $\rho_\epsilon$ is the state in $\mathcal{B}^\epsilon(\rho)$ that achieves the distillation with free operation $\mathcal{E}$, i.e.~$\mathcal{E}(\rho_\epsilon) = \Phi_d$ for some $\mathcal{E}\in\barx$. 
The following must hold:
\begin{align}
    \mathfrak{D}_{\mathrm{min},\lambda}^\epsilon(\rho) 
   \geq& D_{\mathrm{min}}(\rho_\epsilon\|\lambda(\rho_\epsilon)) \\
    \geq& D_{\mathrm{min}}(\mathcal{E}(\rho_\epsilon)\|\mathcal{E}(\lambda(\rho_\epsilon))) \\ =& D_{\mathrm{min}}(\mathcal{E}(\rho_\epsilon)\|\lambda(\mathcal{E}(\rho_\epsilon))) \\
    =& D_{\mathrm{min}}(\phi_d\|\lambda(\phi_d))  =  \mathfrak{D}_{\mathrm{min},\lambda}(\phi_d),
\end{align}
where the second line follows from the data processing inequality for the non-sandwiched R\'enyi-$\alpha$ relative entropy in the regime $\alpha\in[0,2]$ which covers $D_{\mathrm{min}}$ \cite{LIEB1973267,uhlmann1977,PETZ198657}, and the third line follows from $\mathcal{E}\in\mathscr{F}_{\lambda,{\rm Comm}}$.
So for any $d>d_0$ the possibility that $\phi_d$ can be distilled is forbidden by the above inequality, that is, $\Omega_{D,\mathscr{F}_{\lambda,{\rm Comm}}}^{\underline\epsilon}(\rho\rightarrow\{\phi_d\}) \leq \log d_0$.  Also by the definition of $d_0$ we have $m_{\min,\lambda}(\phi_{d_0})\log d_0 \leq \mathfrak{D}_{\mathrm{min},\lambda}^\epsilon(\rho)$. So the claimed bound follows.   The same argument applies to operator-smoothing. 

The last pair of bounds are a direct consequence of the collapse $m_{\min,\tilde\lambda} = m_{\min}$ handled by Proposition~\ref{lemma:exact_min}.
\end{proof}

Next, we modify the two distillation methods used in Appendix~\ref{app:yield_comm} to obtain lower bounds for the input error model.  The first one goes as follows:
\begin{prop}
Given \emph{pure} primitive state $\Psi$ and reference states $\{\phi_d\}$.
Suppose the resource theory satisfies Condition (FFR), i.e.~all states have finite free robustness.
Let $d_0 = \max\{d\in\mathbb{D}: 2^{-LR(\phi_d)} = d^{-m_{LR}(\phi_d)} \geq 2^{-\mathfrak{D}_{\min}(\Psi)} + 2\sqrt{\epsilon}\}$. 
 For $\epsilon\geq 0$,
 \begin{equation}
     \Omega_{D,\mathbb{\barx}}^{\underline\epsilon}(\Psi\rightarrow\{\phi_d\}) \geq \log d_0 > \frac{-\log\left(2^{-\mathfrak{D}_{\min}(\Psi)}+2\sqrt{\epsilon}\right)}{m_{LR}(\Phi_{d_0^{\uparrow}})}-\log{\frac{d_0^{\uparrow}}{d_0}}.  \label{eq:opsmlr2} 
 \end{equation}
\end{prop}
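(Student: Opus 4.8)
The plan is to prove the claim by achievability: for $d=d_0$ I will exhibit a resource non-generating operation $\mathcal{E}\in\barx$ together with an input approximation $\rho'\in\eball(\Psi)$ satisfying $\mathcal{E}(\rho')=\phi_{d_0}$ \emph{exactly}, and then read off the quantitative bound from the maximality of $d_0$ exactly as in Proposition~\ref{prop:fr}. Two elementary facts set this up. Since $\Psi$ is pure, $\Pi_\Psi=\Psi$ and $\mathfrak{D}_{\min}(\Psi)=-\log\max_{\sigma\in\mF}\Tr\{\Psi\sigma\}=-\log\mathfrak{f}(\Psi)$, so that $2^{-\mathfrak{D}_{\min}(\Psi)}=\mathfrak{f}(\Psi)$. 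By Condition (FFR) the free robustness $R(\phi_{d_0})$ is finite, so writing $r:=2^{-LR(\phi_{d_0})}=1/(1+R(\phi_{d_0}))$ there exists a free state $\delta_{d_0}\in\mF$ with $r\,\phi_{d_0}+(1-r)\,\delta_{d_0}\in\mF$.

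Next I would pick any pure state $\rho'=\dm{\psi'}\in\eball(\Psi)$ (taking $\rho'=\Psi$ already suffices) and define the measure-and-prepare channel
\begin{equation*}
\mathcal{E}(\omega)=\Tr\{\rho'\omega\}\,\phi_{d_0}+(1-\Tr\{\rho'\omega\})\,\delta_{d_0},
\end{equation*}
which is manifestly CPTP. Because $\rho'$ is pure, $\Tr\{\rho'\rho'\}=1$, and hence $\mathcal{E}(\rho')=\phi_{d_0}$ on the nose; this is precisely the point where purity of the primitive state is used. The only remaining task is therefore to verify $\mathcal{E}\in\barx$, i.e.\ that $\mathcal{E}$ maps every free state back into $\mF$.

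This freeness check is the crux, and the step that produces the $2\sqrt\epsilon$. For any $\tau\in\mF$ the acceptance probability $\Tr\{\rho'\tau\}$ must be transferred from $\rho'$ to $\Psi$ via Fuchs--van de Graaf: from $f(\rho',\Psi)\ge 1-\epsilon$ one gets $\tfrac12\|\rho'-\Psi\|_1\le\sqrt{1-f(\rho',\Psi)}\le\sqrt\epsilon$, whence $\Tr\{\rho'\tau\}\le\Tr\{\Psi\tau\}+\|\rho'-\Psi\|_1\le\mathfrak{f}(\Psi)+2\sqrt\epsilon=2^{-\mathfrak{D}_{\min}(\Psi)}+2\sqrt\epsilon\le r$, the last inequality being the defining property of $d_0$. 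Setting $t:=\Tr\{\rho'\tau\}\le r$, I can then write
\begin{equation*}
\mathcal{E}(\tau)=\frac{t}{r}\big(r\,\phi_{d_0}+(1-r)\,\delta_{d_0}\big)+\Big(1-\frac{t}{r}\Big)\delta_{d_0},
\end{equation*}
a convex combination of the two free states identified above, so $\mathcal{E}(\tau)\in\mF$ by convexity of $\mF$ (exactly as in the proof of Proposition~\ref{prop:fr}). Hence $\mathcal{E}\in\barx$, and since $\rho'\in\eball(\Psi)$ with $\mathcal{E}(\rho')=\phi_{d_0}$, the rate $\log d_0$ is achievable, giving $\Omega_{D,\barx}^{\underline\epsilon}(\Psi\rightarrow\{\phi_d\})\ge\log d_0$.

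Finally, the explicit bound follows from maximality of $d_0$: the dimension $d_0^{\uparrow}$ violates the defining inequality, $2^{-LR(\phi_{d_0^{\uparrow}})}<2^{-\mathfrak{D}_{\min}(\Psi)}+2\sqrt\epsilon$, i.e.\ $m_{LR}(\phi_{d_0^{\uparrow}})\log d_0^{\uparrow}=LR(\phi_{d_0^{\uparrow}})>-\log\!\big(2^{-\mathfrak{D}_{\min}(\Psi)}+2\sqrt\epsilon\big)$; dividing by $m_{LR}(\phi_{d_0^{\uparrow}})$ and subtracting $\log(d_0^{\uparrow}/d_0)$ yields \eqref{eq:opsmlr2}. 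I expect the main obstacle to be the freeness verification rather than the bookkeeping: one must choose the measurement rank-one so that a nearby pure $\rho'$ is accepted with certainty (forcing the output to be \emph{exactly} $\phi_{d_0}$) while simultaneously keeping $\max_{\tau\in\mF}\Tr\{\rho'\tau\}$ below $r$, and it is the tension between these two requirements that is resolved by purity of $\Psi$ together with the Fuchs--van de Graaf estimate.
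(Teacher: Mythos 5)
Your proposal is correct and follows essentially the same route as the paper's proof: the same measure-and-prepare map built from a rank-one measurement on a pure state in $\eball(\Psi)$ together with the robustness witness state $\delta_{d_0}$, the same Fuchs--van de Graaf continuity estimate producing the $2\sqrt{\epsilon}$ term, and the same bookkeeping from the maximality of $d_0$. The only cosmetic difference is that you verify freeness via the explicit convex decomposition into $r\,\phi_{d_0}+(1-r)\,\delta_{d_0}$ and $\delta_{d_0}$, whereas the paper states the equivalent condition $(1-t)/t\geq R(\phi_{d_0})$.
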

\begin{proof}
Suppose $\Psi\in\mD(\mH_d)$.    
For any dimension $d$, According to the definition of $R(\phi_d)$, there exists a free state $\delta_d\in \mF$ such that 
$\frac{1}{1+R(\phi_d)}\phi_d+\frac{R(\phi_d)}{1+R(\phi_d)}\delta_d\in \mF$.
Now let $\Psi_\epsilon\in\mathcal{B}^\epsilon(\Psi)$ be a pure state that satisfies the input error constraint (which always exists for any $\epsilon$ since $\Psi$ itself is a pure state). Define cptp map $\mathcal{E}:\mL(\mH_d)\rightarrow\mL(\mH'_{d_0})$  on input state $\omega$ of the following form:
\begin{equation}\label{eq:freeO_2}
\mathcal{E}(\omega)
=\Tr\{\Psi_\epsilon\omega\}\phi_{d_0}
+(1-\Tr\{\Psi_\epsilon\omega\})\delta_{d_0}.
\end{equation}

On the one hand, as in the proof of Proposition~\ref{prop:fr} one can verify that $\mathcal{E}(\omega)\in \mF$ if and only if
$\frac{1-\Tr\{\Psi_\epsilon\omega\}}{\Tr\{\Psi_\epsilon\omega\}}\geq R(\phi_{d_0})$, that is, 
\begin{equation}
LR(\phi_{d_0}) \leq -\log\Tr\{\Psi_\epsilon\omega\}. \label{eq:fcond2}
\end{equation}
Notice the continuity property
\begin{equation}
    |2^{-\dmin(\Psi_\epsilon\|\omega)} - 2^{-\dmin(\Psi\|\omega)}| = |\Tr(\Psi_\epsilon-\Psi)\omega| \leq \|(\Psi_\epsilon-\Psi)\omega\|_1 \leq
    \|\Psi_\epsilon-\Psi\|_1 \leq
    2\sqrt{1-f(\Psi_\epsilon,\Psi)}\leq 2\sqrt{\epsilon},   \label{eq:continuity}
\end{equation}
where the second inequality is due to the submultiplicativity of the trace norm and $\|\omega\|_1=1$, and the third inequality follows from the Fuchs-van de Graaf inequality  \cite{Fuchs-vandeGraaf}.  Then, for any free state $\tau\in\mF$, we have
\begin{equation}
    2^{-LR(\phi_{d_0})} \geq 2^{-\mathfrak{D}_{\min}(\Psi)} + 2\sqrt{\epsilon} \geq 2^{-\dmin(\Psi\|\tau)}+ 2\sqrt{\epsilon} \geq 2^{-\dmin(\Psi_\epsilon\|\tau)} = 2^{\log\Tr\{\Psi_\epsilon\tau\}}, 
\end{equation}
where the first inequality follows from the definition of $d_0$, and the third inequality follows from Eq.~(\ref{eq:continuity}).  So Eq.~(\ref{eq:fcond2}) is satisfied, which indicates that $\mathcal{E}(\tau)\in \mF, \forall\tau\in\mF$.   Therefore, $\mathcal{E}\in\barx$.

On the other hand, it can be directly seen that $\mathcal{E}(\Psi_\epsilon) = \phi_{d_0}$, so that $\mathcal{E}$ achieves the distillation task with rate $\log{d_0}$.
To obtain a bound in terms of 
$\mathfrak{D}_{\min}(\Psi)$, notice that ${d_0^{\uparrow}}^{-m_{LR}(\Phi_{d_0^{\uparrow}})}<2^{-\mathfrak{D}_{\min}(\Psi)}+2\sqrt{\epsilon}$ by the definition of $d_0$. That is, $m_{LR}(\Phi_{d_0^{\uparrow}})\log d_0^{\uparrow}>-\log\left(2^{-\mathfrak{D}_{\min}(\Psi)}+2\sqrt{\epsilon}\right)$, from which the claimed bound follows.
\end{proof}

The second one based on isotropic states goes as follows:

\begin{prop}
Given \emph{pure} primitive state $\Psi$ and pure reference states $\{\Phi_d\}$.
Suppose $\mF$ is convex and contains the maximally state.
 Let $\mathbb{D}' = \{d\in\mathbb{D}: \mathcal{N}_{\frac{d}{d-1}}(\Phi_d)\in\mF\}$ (if $\mathbb{D}'$ is empty then the result does not apply), and let $d_0 = \max\{d\in\mathbb{D}': 2^{-D_{\min}(\Phi_d\|\mathcal{N}_{\tilde p_d}(\Phi_d))} = d^{-m_{\min,(\mathcal{N}_{\tilde p_d})}} \geq 2^{-\mathfrak{D}_{\min}(\Psi)}+2\sqrt{\epsilon}\}$  (the brackets in the subscript of $m$ signify that $\mathcal{N}_{\tilde p}$ is not precisely a resource destroying map), where $\tilde{p}_d = \min\{p: \mathcal{N}_{p}(\Phi_d)\in\mF\}\in[0,1]$.  For $\epsilon\geq 0$,
 \begin{equation}
     \Omega_{D,\mathbb{\barx}}^{\underline\epsilon}(\Psi\rightarrow\{\Phi_d\}) \geq \log d_0 > \frac{-\log\left(2^{-\mathfrak{D}_{\min}(\Psi)}+2\sqrt{\epsilon}\right)}{m_{\min,(\mathcal{N}_{\tilde p_{d_0^{\uparrow}}})}(\Phi_{d_0^{\uparrow}})} - \log{\frac{d_0^{\uparrow}}{d_0}}.  \label{eq:opsm2_2} 
 \end{equation}
\end{prop}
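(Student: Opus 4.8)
The plan is to mirror the construction of Proposition~\ref{prop:depol1}, which handles the standard (output-error) distillation yield through the isotropic/depolarizing distillation map, but to adapt it to the input-error model by exploiting the purity of $\Psi$ exactly as in the free-robustness input-error bound (Eqs.~\eqref{eq:freeO_2} and \eqref{eq:continuity}). Concretely, I would first pick a pure state $\Psi_\epsilon\in\mathcal{B}^\epsilon(\Psi)$ meeting the input-error constraint (which exists for every $\epsilon$ since $\Psi$ is itself pure), and then define the map of the form \eqref{eq:dist2_1}--\eqref{eq:dist2_3} with the optimal operator $P$ replaced by the rank-one projector $\Psi_\epsilon$, namely $\mathcal{E}(\omega)=\Tr\{\Psi_\epsilon\omega\}\Phi_{d_0}+(1-\Tr\{\Psi_\epsilon\omega\})\frac{I-\Phi_{d_0}}{d_0-1}$. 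As before this is a pseudomixture $\mathcal{N}_{q(\omega)}(\Phi_{d_0})$ of $\Phi_{d_0}$ and $I/d_0$ with $q(\omega)=\frac{d_0(1-\Tr\{\Psi_\epsilon\omega\})}{d_0-1}$, hence positive semidefinite.

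The bulk of the argument is to verify $\mathcal{E}\in\barx$. Following Proposition~\ref{prop:depol1}, I would use $\Tr\{\Phi_{d_0}(I-\Phi_{d_0})\}=0$ to obtain $\Tr\{\Phi_{d_0}\mathcal{E}(\tau)\}=\Tr\{\Psi_\epsilon\tau\}$ for any $\tau\in\mF$. The genuinely new ingredient is to control this overlap via the continuity estimate \eqref{eq:continuity}: since $\Psi$ is pure, $\Tr\{\Psi_\epsilon\tau\}=2^{-\dmin(\Psi_\epsilon\|\tau)}\le 2^{-\dmin(\Psi\|\tau)}+2\sqrt{\epsilon}\le 2^{-\mathfrak{D}_{\min}(\Psi)}+2\sqrt{\epsilon}$, where the last step uses $\mathfrak{D}_{\min}(\Psi)=\min_{\sigma\in\mF}\dmin(\Psi\|\sigma)\le\dmin(\Psi\|\tau)$. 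By the definition of $d_0$ this is bounded by $2^{-D_{\min}(\Phi_{d_0}\|\mathcal{N}_{\tilde p_{d_0}}(\Phi_{d_0}))}=\Tr\{\Phi_{d_0}\mathcal{N}_{\tilde p_{d_0}}(\Phi_{d_0})\}=1-\tilde p_{d_0}+\tilde p_{d_0}/d_0$, and since $\Tr\{\Phi_{d_0}\mathcal{N}_{p}(\Phi_{d_0})\}=1-p+p/d_0$ is monotonically decreasing in $p$ for $d_0>1$, this forces $q(\tau)\ge\tilde p_{d_0}$. Combined with $q(\tau)\le\frac{d_0}{d_0-1}$, I can then write $\mathcal{E}(\tau)=\mathcal{N}_{q(\tau)}(\Phi_{d_0})$ as the convex combination \eqref{eq:conv_comb1} of the free states $\mathcal{N}_{\tilde p_{d_0}}(\Phi_{d_0})$ and $\mathcal{N}_{\frac{d_0}{d_0-1}}(\Phi_{d_0})$ (the latter free by the assumption $d_0\in\mathbb{D}'$), so convexity of $\mF$ gives $\mathcal{E}(\tau)\in\mF$ and hence $\mathcal{E}\in\barx$.

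Purity then makes achievability immediate: $\Tr\{\Psi_\epsilon\Psi_\epsilon\}=1$ yields $\mathcal{E}(\Psi_\epsilon)=\Phi_{d_0}$ \emph{exactly}, so $\mathcal{E}$ distills $\Phi_{d_0}$ from the admissible input $\Psi_\epsilon\in\mathcal{B}^\epsilon(\Psi)$ with zero output error, giving $\Omega_{D,\barx}^{\underline\epsilon}(\Psi\rightarrow\{\Phi_d\})\ge\log d_0$. The closed-form bound follows exactly as in Proposition~\ref{prop:depol1}: by maximality of $d_0$ the value $d_0^\uparrow$ violates the defining inequality, i.e.\ ${d_0^\uparrow}^{-m_{\min,(\mathcal{N}_{\tilde p_{d_0^\uparrow}})}(\Phi_{d_0^\uparrow})}<2^{-\mathfrak{D}_{\min}(\Psi)}+2\sqrt{\epsilon}$, which rearranges to the stated inequality. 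I expect the free-preservation step to be the main obstacle: one must ensure that the $2\sqrt{\epsilon}$ slack introduced by replacing $\Psi$ with the nearby pure state $\Psi_\epsilon$ is exactly the quantity absorbed into the definition of $d_0$, so that the overlap bound still certifies $q(\tau)\ge\tilde p_{d_0}$ and the convex decomposition \eqref{eq:conv_comb1} remains valid; the achievability and the final rearrangement are then routine.
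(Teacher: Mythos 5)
Your proposal is correct and follows essentially the same route as the paper's own proof: the same isotropic-state distillation map with $P$ replaced by a pure $\Psi_\epsilon\in\mathcal{B}^\epsilon(\Psi)$, the same continuity estimate $|\Tr\{(\Psi_\epsilon-\Psi)\tau\}|\leq 2\sqrt{\epsilon}$ to certify $q(\tau)\geq\tilde p_{d_0}$, the same convex decomposition into $\mathcal{N}_{\tilde p_{d_0}}(\Phi_{d_0})$ and $\mathcal{N}_{\frac{d_0}{d_0-1}}(\Phi_{d_0})$, and the same exact achievability $\mathcal{E}(\Psi_\epsilon)=\Phi_{d_0}$. No gaps.
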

\begin{proof}
Suppose $\Psi\in\mD(\mH_d)$.
Let $\Psi_\epsilon\in\mathcal{B}^\epsilon(\Psi)$ be a pure state that satisfies the input error constraint (which always exists for any $\epsilon$ since $\Psi$ itself is a pure state). Define cptp map $\mathcal{E}:\mL(\mH_d)\rightarrow\mL(\mH'_{d_0})$  on input state $\omega\in\mD(\mH_d)$ of the following form (by simply substituting operator $P$ by $\Psi_\epsilon$ in the distillation map for Proposition~\ref{prop:depol1}):
\begin{align}
\mathcal{E}(\omega)
&=\Tr\{\Psi_\epsilon\omega\}\Phi_{d_0}
+(1-\Tr\{\Psi_\epsilon\omega\})\frac{I-\Phi_{d_0}}{d_0-1}\label{eq:dist3_1}\\
&=\left(1- \frac{d_0(1- \Tr\{\Psi_\epsilon\omega\})}{d_0-1}\right)\Phi_{d_0} + \frac{d_0(1- \Tr\{\Psi_\epsilon\omega\})}{d_0-1}\frac{I}{d_0} \label{eq:dist3_2}\\
&= (1-q(\omega))\Phi_{d_0} + q(\omega) \frac{I}{d_0},  \label{eq:dist3_3}
\end{align}
where $q(\omega) = \frac{d_0(1- \Tr\{\Psi_\epsilon\omega\})}{d_0-1} \in [0,\frac{d_0}{d_0-1}]$, that is, $\mathcal{E}(\omega)$ is a pseudomixture of $\Phi_{d_0}$ and the maximally mixed state $I/d_0$, and is positive semidefinite.

On the one hand, again by noticing that $\Tr\{\Phi_{d_0}(I-\Phi_{d_0})\}=0$ and plugging it into Eq.~(\ref{eq:dist3_1}), we obtain
\begin{equation}
    \Tr\{\Phi_{d_0}\mathcal{E}(\omega)\} =
    \Tr\{\Psi_\epsilon\omega\}\Tr\Phi_{d_0}
+(1-\Tr\{\Psi_\epsilon\omega\})\frac{\Tr\{\Phi_{d_0}(I-\Phi_{d_0})\}}{d_0-1}=
\Tr\{\Psi_\epsilon\omega\}.\label{eq:tre2}
\end{equation}
By the same continuity property Eq.~(\ref{eq:continuity}), for any free state $\tau\in\mF$ we have 
\begin{equation}
    2^{-\dmin(\Psi\|\tau)} \geq  2^{-\dmin(\Psi_\epsilon\|\tau)}-2\sqrt{\epsilon}  = \Tr\{\Psi_\epsilon\tau\} -2\sqrt{\epsilon}.
\end{equation}
Therefore, by Eq.~(\ref{eq:tre2}), $\Tr\{\Phi_{d_0}\mathcal{E}(\tau)\} \leq 2^{-\mathfrak{D}_{\min}(\Psi)}+2\sqrt{\epsilon}, \forall\tau\in \mF$.  Notice that $\Tr\{\Phi_{d_0}\mathcal{N}_{\tilde p_{d_0}}(\Phi_{d_0})\} \geq 2^{-\mathfrak{D}_{\min}(\Psi)}+2\sqrt{\epsilon} $ where $\mathcal{N}_{\tilde p_{d_0}}(\Phi_{d_0})\in\mF$ by assumption, so we have
\begin{equation}
    \Tr\{\Phi_{d_0}\mathcal{E}(\tau)\} = \Tr\{\Phi_{d_0}\mathcal{N}_{q(\tau)}(\Phi_{d_0})\} \leq\Tr\{\Phi_{d_0}\mathcal{N}_{\tilde p_{d_0}}(\Phi_{d_0})\}, \label{eq:tr_comp2}
\end{equation}
where the equality follows from Eq.~(\ref{eq:dist3_3}) and the definition of $q$.
The rest of the proof goes similarly as Proposition~\ref{prop:depol1}.
 Eq.~(\ref{eq:tr_comp}) implies that $\tilde p_{d_0}\leq q(\tau)\leq\frac{d_0}{d_0-1}, \forall\tau\in\mF$, and therefore we have the following convex decomposition of $\mathcal{E}(\tau)$: 
\begin{equation}
    \mathcal{E}(\tau) = \mathcal{N}_{q(\tau)}(\Phi_{d_0}) = \frac{d_0 - (d_0-1)q(\tau)}{d_0-(d_0-1)\tilde p_{d_0}}\mathcal{N}_{\tilde p_{d_0}}(\Phi_{d_0}) +  \frac{(d_0-1)(q(\tau)-\tilde p_{d_0})}{d_0-(d_0-1)\tilde p_{d_0}}\ \mathcal{N}_{\frac{d_0}{d_0-1}}(\Phi_{d_0}),  \label{eq:conv_comb_2}
\end{equation}
where $\mathcal{N}_{\tilde p_{d_0}}(\Phi_{d_0}), \mathcal{N}_{\frac{d_0}{d_0-1}}(\Phi_{d_0}) \in \mF$ by assumption. Therefore $\mathcal{E}(\tau)\in\mF, \forall\tau\in\mF$ by the convexity of $\mF$, namely $\mathcal{E} \in \barx$.



It can be directly seen that $\mathcal{E}(\Psi_\epsilon) = \Phi_{d_0}$, so that $\mathcal{E}$ achieves the distillation task with rate $\log{d_0}$.
To obtain a bound in terms of 
$\mathfrak{D}_{\min}(\Psi)$, notice that
${d_0^{\uparrow}}^{-m_{\min,(\mathcal{N}_{\tilde p_{d_0^{\uparrow}}})}}<2^{-\mathfrak{D}_{\min}(\Psi)}+2\sqrt{\epsilon}$ by the definition of $d_0$. That is, $m_{\min,(\mathcal{N}_{\tilde p_{d_0^{\uparrow}}})}\log d_0^{\uparrow}>-\log\left(2^{-\mathfrak{D}_{\min}(\Psi)}+2\sqrt{\epsilon}\right)$, from which the claimed bound follows.
\end{proof}

Following the same argument as in the proof of Proposition~\ref{prop:depol_comm}, we see that the distillation map defined by Eqs.~(\ref{eq:dist3_1})--(\ref{eq:dist3_3}) commutes with the depolarizing map, which leads to the same bound for commuting operations.   Here, notice that $\Psi_\epsilon$ is a pure state so we no longer need to further impose the unit trace assumption.
\begin{prop}
Given pure primitive state $\Psi$, a pseudo-resource destroying map $\lambda_\mathcal{N}$ given by depolarizing map of any degree, and pure reference states $\{\Phi_d\}$.   
Suppose $\mF$ is convex and contains the maximally mixed state.
Let $\mathbb{D}' = \{d\in\mathbb{D}: \mathcal{N}_{\frac{d}{d-1}}(\Phi_d)\in\mF\}$, and let $d_0 = \max\{d\in\mathbb{D}': 2^{-D_{\min}(\Phi_d\|\mathcal{N}_{\tilde p_d}(\Phi_d))} = d^{-m_{\min,(\mathcal{N}_{\tilde p_d})}} \geq 2^{-\mathfrak{D}_{\min}(\Psi)}+2\sqrt{\epsilon}\}$  (the brackets in the subscript of $m$ signify that $\mathcal{N}_{\tilde p}$ is not precisely a resource destroying map), where $\tilde{p}_d = \min\{p: \mathcal{N}_{p}(\Phi_d)\in\mF\}\in[0,1]$.  For $\epsilon\geq 0$,
 \begin{equation}
     \Omega_{D,\bar{\mathscr{F}}_{\langle\lambda_\mathcal{N}\rangle,{\rm Comm}}}^{\underline\epsilon}(\Psi\rightarrow\{\Phi_d\}) \geq
     \frac{-\log\left(2^{-\mathfrak{D}_{\min}(\Psi)}+2\sqrt{\epsilon}\right)}{m_{\min,(\mathcal{N}_{\tilde p_{d_0^{\uparrow}}})}(\Phi_{d_0^{\uparrow}})} - \log{\frac{d_0^{\uparrow}}{d_0}}.\label{eq:depol_comm2} 
 \end{equation}
\end{prop}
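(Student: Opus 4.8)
The plan is to recycle the distillation map already constructed for the input-error isotropic bound --- the map of Eqs.~(\ref{eq:dist3_1})--(\ref{eq:dist3_3}) --- and merely certify that it additionally lies in $\mathscr{F}_{\langle\lambda_\mathcal{N}\rangle,{\rm Comm}}$, so that the very same achievability argument delivers the claimed bound for the smaller class $\bar{\mathscr{F}}_{\langle\lambda_\mathcal{N}\rangle,{\rm Comm}}=\mathscr{F}_{\langle\lambda_\mathcal{N}\rangle,{\rm Comm}}\cap\barx$. Concretely, I would fix a pure $\Psi_\epsilon\in\mathcal{B}^\epsilon(\Psi)$ meeting the input-error constraint and take $\mathcal{E}$ to be the map of Eqs.~(\ref{eq:dist3_1})--(\ref{eq:dist3_3}) with weight $q(\omega)=\frac{d_0(1-\Tr\{\Psi_\epsilon\omega\})}{d_0-1}$. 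From the preceding proposition I may treat as already established that $\mathcal{E}\in\barx$ and that $\mathcal{E}(\Psi_\epsilon)=\Phi_{d_0}$, so $\mathcal{E}$ achieves the distillation task at rate $\log d_0$; the only thing left is commutation with the depolarizing map.

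Next I would repeat the commutation computation of Proposition~\ref{prop:depol_comm} verbatim, now with the projector $P$ replaced by $\Psi_\epsilon$. Writing $\lambda_\mathcal{N}=\mathcal{N}_p$ for the degree-$p$ depolarizing map, one finds on the one hand $\mathcal{N}_p(\mathcal{E}(\omega))=(1-q'(\omega))\Phi_{d_0}+q'(\omega)\frac{I}{d_0}$ with $q'(\omega)=p+q(\omega)-pq(\omega)$, and on the other hand $\mathcal{E}(\mathcal{N}_p(\omega))=(1-q''(\omega))\Phi_{d_0}+q''(\omega)\frac{I}{d_0}$ with $q''(\omega)=\frac{d_0-\Tr\Psi_\epsilon}{d_0-1}\,p+q(\omega)-pq(\omega)$, the trace entering only through $\Tr\{\Psi_\epsilon\mathcal{N}_p(\omega)\}=(1-p)\Tr\{\Psi_\epsilon\omega\}+\frac{p}{d_0}\Tr\Psi_\epsilon$. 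The two effective weights coincide exactly when $\Tr\Psi_\epsilon=1$.

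The crucial --- and essentially cost-free --- observation is that $\Psi_\epsilon$ is a pure state, so $\Tr\Psi_\epsilon=1$ holds automatically, for every depolarizing degree $p$ at once. Hence $q'=q''$, giving $\mathcal{N}_p\circ\mathcal{E}=\mathcal{E}\circ\mathcal{N}_p$, i.e.\ $\mathcal{E}\in\mathscr{F}_{\langle\lambda_\mathcal{N}\rangle,{\rm Comm}}$; together with $\mathcal{E}\in\barx$ this places $\mathcal{E}$ in $\bar{\mathscr{F}}_{\langle\lambda_\mathcal{N}\rangle,{\rm Comm}}$. Since the achievable rate $\log d_0$ and the definition of $d_0$ are identical to those of the input-error isotropic proposition, the final inequality in terms of $m_{\min,(\mathcal{N}_{\tilde p_{d_0^{\uparrow}}})}(\Phi_{d_0^{\uparrow}})$ follows immediately from the same rearrangement of the $d_0$-condition. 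I do not expect a genuine obstacle here: the single point demanding care is that commutation must hold for \emph{all} depolarizing degrees simultaneously (so that ``any degree'' is legitimate), which the above computation confirms the instant the unit-trace condition --- automatic by purity --- is in force; everything else is inherited directly from the two propositions being combined.
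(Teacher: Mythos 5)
Your proposal is correct and follows exactly the route the paper takes: reuse the distillation map of Eqs.~(\ref{eq:dist3_1})--(\ref{eq:dist3_3}), rerun the commutation computation of the earlier depolarizing-commuting proposition with $P$ replaced by $\Psi_\epsilon$, and observe that purity forces $\Tr\Psi_\epsilon=1$ so the unit-trace hypothesis is automatic and commutation holds for every depolarizing degree. Nothing is missing.
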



\end{document}